\newcommand{\f}[1]{\bm{\mathrm{#1}}} 
\newcommand{\bb}{\mathbb} 
\newcommand{\wh}{\widehat}
\newcommand{\wt}{\widetilde}
\newcommand{\ii}{\mathrm{i}}
\newcommand{\col}{\mathrel{\vcenter{\baselineskip0.75ex \lineskiplimit0pt \hbox{.}\hbox{.}}}}
\newcommand*{\deq}{\mathrel{\vcenter{\baselineskip0.65ex \lineskiplimit0pt \hbox{.}\hbox{.}}}=}
\renewcommand{\le}{\leqslant}
\renewcommand{\leq}{\leqslant}
\renewcommand{\ge}{\geqslant}
\renewcommand{\geq}{\geqslant}
\newcommand{\ceil}[1]  {\lceil  {#1} \rceil}
\newcommand{\ind}[1]{\f 1 (#1)}
\renewcommand{\epsilon}{\varepsilon}
\renewcommand{\P}{\mathbb{P}}
\newcommand{\E}{\mathbb{E}}
\newcommand{\R}{\mathbb{R}}
\newcommand{\C}{\mathbb{C}}
\newcommand{\N}{\mathbb{N}}
\newcommand{\Z}{\mathbb{Z}}
\newcommand{\p}[1]{({#1})}
\newcommand{\pb}[1]{\bigl({#1}\bigr)}
\newcommand{\pB}[1]{\Bigl({#1}\Bigr)}
\newcommand{\pbb}[1]{\biggl({#1}\biggr)}
\newcommand{\pBB}[1]{\Biggl({#1}\Biggr)}
\newcommand{\qB}[1]{\Bigl[{#1}\Bigr]}
\newcommand{\qBB}[1]{\Biggl[{#1}\Biggr]}
\newcommand{\hbb}[1]{\biggl\{{#1}\biggr\}}
\newcommand{\hBB}[1]{\Biggl\{{#1}\Biggr\}}
\newcommand{\abs}[1]{\lvert #1 \rvert}
\newcommand{\absb}[1]{\bigl\lvert #1 \bigr\rvert}
\newcommand{\absB}[1]{\Bigl\lvert #1 \Bigr\rvert}
\newcommand{\absbb}[1]{\biggl\lvert #1 \biggr\rvert}
\newcommand{\absBB}[1]{\Biggl\lvert #1 \Biggr\rvert}
\DeclareMathOperator{\tr}{Tr}
\DeclareMathOperator{\var}{Var}
\DeclareMathOperator{\im}{Im}
\theoremstyle{plain} 
\newtheorem{theorem}{Theorem}[section]
\newtheorem*{theorem*}{Theorem}
\newtheorem{lemma}[theorem]{Lemma}
\newtheorem*{lemma*}{Lemma}
\newtheorem{corollary}[theorem]{Corollary}
\newtheorem*{corollary*}{Corollary}
\newtheorem{proposition}[theorem]{Proposition}
\newtheorem*{proposition*}{Proposition}
\theoremstyle{definition} 
\newtheorem{definition}[theorem]{Definition}
\newtheorem*{definition*}{Definition}
\newtheorem{example}[theorem]{Example}
\newtheorem*{example*}{Example}
\newtheorem*{remark*}{Remark}
\newtheorem*{remarks*}{Remarks}
\newtheorem*{convention*}{Convention}
\newcommand{\nc}{\normalcolor}
\newcommand{\fn}{{\mathfrak n}}
\newcommand{\bD}{ {\bf  D}}
\newcommand{\fa}{{\frak a}} 
\newcommand{\bl}{{\boldsymbol \lambda}}
\newcommand{\ttau}{\vartheta}
\newcommand{\CC}{{\mathbb C }}
\newcommand{\RR}{{\mathbb R }}
\newcommand{\NN}{{\mathbb N}}
\newcommand{\bt}{{\boldsymbol \theta}}
\newcommand{\beqa}{\begin{eqnarray}}
\newcommand{\eeqa}{\end{eqnarray}}
\newcommand{\e}{\varepsilon}
\newcommand{\pt}{\partial}
\newcommand{\rd}{{\rm d}}
\newcommand{\bR}{{\mathbb R}}
\newcommand{\non}{\nonumber}
\newcommand{\br}{{\bf{r}}}
\newcommand{\bx}{{\bf{x}}}
\newcommand{\by}{{\bf{y}}}
\newcommand{\bu}{{\bf{u}}}
\newcommand{\bv}{{\bf{v}}}
\newcommand{\bw}{{\bf{w}}}
\newcommand{\bz}{{\bf {z}}}
\newcommand{\bh}{{\bf{h}}}
\newcommand{\bS}{\bf{S}}
\newcommand{\bla}{\mbox{\boldmath $\lambda$}}
\newcommand{\al}{\alpha}
\newcommand{\be}{\begin{equation}}
\newcommand{\ee}{\end{equation}}
\newcommand{\la}{\lambda}
\newcommand{\om}{{\omega}}
\newcommand{\si}{\sigma}
\renewcommand{\th}{\theta}
\newcommand{\cL}{{\mathscr L}}
\newcommand{\cE}{{\mathcal E}}
\newcommand{\cG}{{\mathcal G}}
\newcommand{\cK}{{\mathcal K}}
\newcommand{\cA}{{\mathcal A}}
\newcommand{\cR}{{\mathcal R}}
\newcommand{\cB}{{\mathcal B}}
\newcommand{\cW}{{\mathcal W}}
\newcommand{\cS}{{\mathcal S}}
\newcommand{\cH}{{\mathcal H}}
\newcommand{\ov}{\overline}
\numberwithin{equation}{section}
\numberwithin{theorem}{section}
\numberwithin{figure}{section}
\title{Universality for random matrices and log-gases \\ Lecture Notes for Current Developments in
Mathematics, 2012}
\author{
L\'aszl\'o Erd\H os\thanks{Partially supported
by SFB-TR 12 Grant of the German Research Council}
 \\\\
Institute of Mathematics, University of Munich, \\
Theresienstr. 39, D-80333 Munich, Germany \\ lerdos@math.lmu.de 
}
\begin{document}
\date{Dec 4, 2012}

\maketitle

\begin{abstract}

Eugene Wigner's revolutionary vision predicted that the energy levels of
large complex quantum systems exhibit a universal behavior: the statistics
of energy gaps depend only on the basic symmetry type of the model. 
These universal statistics  show strong correlations in the
form of level repulsion and they seem to represent a new paradigm
of point processes that  are characteristically different
from the Poisson statistics of independent points.  

Simplified models of
Wigner's thesis have recently become mathematically
accessible. For mean field models represented by large random matrices
with independent entries,
the  celebrated Wigner-Dyson-Gaudin-Mehta  (WDGM) conjecture 
asserts that the local eigenvalue 
statistics  are universal.
For invariant matrix models, the eigenvalue distributions are given 
by a log-gas with  potential $V$ and inverse temperature $\beta = 1, 2, 4$.
corresponding to  the orthogonal, unitary and symplectic ensembles. 
For $\beta \not  \in \{1, 2, 4\}$, there is no natural random matrix  ensemble  behind this model,
but the analogue of the WDGM conjecture asserts that  
the local  statistics are independent of $V$.  

In these lecture notes  we review the recent solution to these conjectures for both  
 invariant and non-invariant ensembles. We will discuss two different notions of
universality in the
sense of (i) local correlation functions and (ii) gap distributions.
 We will  demonstrate that the  local ergodicity 
of the  Dyson Brownian motion   is   the intrinsic  mechanism behind the 
universality. In particular,  we review the solution of Dyson's conjecture on the local relaxation time 
of the   Dyson Brownian motion. Additionally, the gap distribution requires
a De Giorgi-Nash-Moser type H\"older regularity analysis 
for a discrete parabolic equation with random coefficients.
  Related questions  such as the local version of Wigner's semicircle law
and delocalization of eigenvectors 
will also be discussed. We will also explain how these results can be
extended beyond the mean field models, especially to random band matrices.
\end{abstract}

{\bf AMS Subject Classification (2010):} 15B52, 82B44

\medskip

{\it Keywords:}  $\beta$-ensemble, local semicircle law,
Dyson Brownian motion. De Giorgi-Nash-Moser theory.

\setcounter{tocdepth}{4}

\newpage

\tableofcontents

\section{Introduction}

\subsection{The pioneering vision of Wigner}

\begin{minipage}[c]{5.5in}
{\it ``Perhaps I am now too courageous when I try to guess the distribution of the distances between
successive levels (of energies of heavy nuclei).   Theoretically, the 
situation is quite simple if one attacks the problem  in a 
simpleminded fashion.  The question is simply what are the 
 distances  of the characteristic values of  a symmetric
matrix with random coefficients."  }
\end{minipage}

\medskip 
\centerline{\qquad\qquad\qquad\qquad\qquad\qquad Eugene Wigner on 
the Wigner surmise, 1956 }   

\bigskip

Large complex systems often exhibit
remarkably simple universal patterns as the number of degrees of freedom
increases. The simplest example is the central limit theorem:
the fluctuation of the sums of independent  random scalars, irrespective of their distributions,
follows the Gaussian distribution. 
The other cornerstone of  probability theory identifies the
Poisson point process as 
the universal limit of many independent point-like events
in space or time.  These mathematical descriptions assume that
the original system has independent (or at least weakly dependent) constituents.
What if independence is not a realistic approximation and strong correlations
need to be modelled? Is there a universality for strongly correlated
models?

At first sight this seems an impossible task. While independence is a unique 
concept, correlations come in many different forms;
a-priori there is no reason to believe that they all behave similarly.
 Nevertheless  they do, according to the  pioneering vision of 
Wigner \cite{W} at least if they originate from certain
 physical systems and if the ``right'' question is asked.
 The actual correlated system
he studied was the energy levels of heavy nuclei. Looking at
spectral measurement data, it is obvious that the eigenvalue
density (or density of states, as it is called in physics)
heavily depends on the system. But Wigner asked a different
question: what about the distribution of the rescaled energy gaps?
 He discovered that the difference of consecutive energy levels,
after rescaling with the local density, shows a surprisingly universal
behavior. He even predicted a universal law, given by the simple
formula (called the {\it Wigner surmise}), 
\be\label{surmise}
   \P\Big( \wt E_j-\wt E_{j-1} 
= s+\rd s\Big) \approx 
\frac{\pi s}{2}
\exp\big( -  \frac{\pi}{4} s^2\big)\rd s,
\ee
where $\wt E_j = \varrho E_j$ denote the rescaling
of the actual energy levels $E_j$ by the density of states $\varrho$ near
the energy $E_j$.
This law is characteristically different from the gap
distribution of the Poisson process which is the exponential distribution, $e^{-s}\rd s$.
The prefactor $s$ in \eqref{surmise} indicates a {\it level repulsion}
 for the point process $\wt E_j$,
i.e. the eigenvalues are strongly correlated.

Comparing measurement data from various experiments, Wigner's pioneering vision
was that the energy gap distribution \eqref{surmise} 
of complicated quantum systems is essentially universal;
it depends only  on the basic symmetries of model
(such as time-reversal invariance). This thesis 
has never been rigorously proved for any realistic 
physical system but experimental data and extensive 
numerics leave no doubt on its correctness (see \cite{M} for an overview).

Wigner not only predicted universality in complicated systems,
but he also discovered a remarkably simple mathematical model for 
this new phenomenon: the eigenvalues of large random matrices.  
For practical purposes,  Hamilton operators of quantum models are often
approximated by large matrices 
that are obtained from some type of discretization 
of the original continuous model. These matrices
have specific forms dictated by physical  rules.
 Wigner's bold step was to neglect 
all details and consider the simplest random matrix
whose entries are independent and identically distributed.
The only physical property he retained
was the basic symmetry class of the system; time reversal physical
models were modelled by real symmetric matrices,
while systems without time reversal symmetry (e.g.
with magnetic fields) were modelled by complex
Hermitian matrices. As far as the
gap statistics are concerned, this simple-minded model
reproduced the behavior of the complex quantum systems!
The universal behavior extends to the joint statistics
of several consecutive gaps which are essentially equivalent
to the local correlation functions of the point process $\wt E_j$.
From mathematical point of view,  a universal 
strongly correlated point process was found. The natural
representatives of these universality classes
are the random matrices with independent identically distributed
Gaussian entries.
These are called the {\it Gaussian orthogonal ensemble (GOE)}
and the {\it Gaussian unitary ensemble (GUE)} in case of
real symmetric and complex Hermitian matrices, respectively.

 Since Wigner's discovery random matrix statistics are found 
everywhere in physics and beyond, wherever nontrivial correlations  prevail.
 Among many other applications,  random matrix theory (RMT) is present in chaotic quantum systems
in physics, in principal component analysis in statistics,
in communication theory and even in number theory. In particular,
the zeros of the Riemann zeta function on the critical line are
expected to follow RMT statistics due to a spectacular result of Montgomery \cite{Mont}.

In retrospect, Wigner's idea should have received even more attention.
For centuries, the primary territory of probability theory was to model uncorrelated
or weakly correlated systems.  The surprising ubiquity of random matrix statistics is
a strong evidence that it plays a similar fundamental role for correlated
systems as Gaussian distribution and Poisson point process play for
uncorrelated systems. RMT seems to provide essentially
 the only universal and generally   computable pattern  for complicated
 correlated systems.

In fact, a few years after Wigner's seminal paper \cite{W},
Gaudin \cite{Gau} has discovered another remarkable
property of this new point process:  the correlation
functions have a determinantal structure, at least
if the distributions of the matrix elements are Gaussian.
The algebraic identities within the determinantal form
 opened up the route to calculations and to obtain explicit formulas
for local correlation functions.
In particular, the gap distribution for the complex
Hermitian case is given by a Fredholm determinant
involving Hermite polynomials. In fact,  Hermite polynomials were 
first  introduced in the context of random matrices by Mehta and Gaudin \cite{MG} earlier.
 Dyson and Mehta \cite{M2, Dy1, Dy2}
have later extended this exact calculation 
to correlation functions and to other symmetry classes.
When compared with the exact formula,
the Wigner surmise \eqref{surmise}, based upon a simple $2\times 2$
matrix model, turned out to be quite accurate.
While the determinantal structure is present only in
Gaussian Wigner matrices, the paradigm of local universality
predicts that the formulas for the local eigenvalue statistics
obtained in the Gaussian case hold for general distributions as well.

\subsection{Physical models}

The ultimate mathematical goal is to prove Wigner's vision for 
a very large class of realistic quantum mechanical models.
This is extremely hard, since the local statistics involve
tracking individual eigenvalues in the bulk spectrum.
Wigner's original model, the energy levels of heavy nuclei,
is a strongly interacting many-body quantum system. The
rigorous analysis of such model with the required
precision is beyond the reach of current mathematics.

A much simpler question is to neglect all interactions and to study the
natural one-body quantum model, the  Schr\"odinger
operator $-\Delta +V$ with a potential $V$ on $\R^d$.
The complexity comes from assuming that 
$V$ is generic in some sense, in particular to exclude
models with additional symmetries that may lead to 
non-universal eigenvalue correlations. Two well-studied
examples are (i) the random Schr\"odinger operators where
$V=V(x)$ is a random field with a short range correlation,
and (ii) quantum chaos models, where $V$ is generic but fixed
and the statistical ensemble is generated by sampling
the spectrum in small spectral windows at high energies
(an alternative formulation uses the semiclassical limit).

Unfortunately, there are essentially no
rigorous results on local spectral universality  even  
in these one-body models. 
Random Schr\"odinger operators are conjectured to
exhibit a metal-insulator transition that
was  discovered by Anderson \cite{A}.
The high disorder regime is relatively well understood since the seminal
work of Fr\"ohlich and Spencer \cite{FS} (an alternative proof is given by
Aizenman and Molchanov \cite{AM}). However, in this regime the eigenfunctions are localized
and thus eigenfunctions belonging to  neighboring eigenvalues are typically spatially
separated, hence uncorrelated. Therefore, due to localization, the system does not
have sufficient correlation to  fall into the RMT universality class;
in fact  the local eigenvalue statistics  follow the Poisson
process \cite{Min}. In contrast, in the low disorder regime, starting from three 
spatial dimension and away from the spectral edges, the eigenfunctions
are conjectured to be delocalized ({\it extended states conjecture}).
Spatially overlapping eigenfunctions introduce correlations
among eigenvalues and it is expected that the local statistics are given by RMT.
In the theoretical physics literature, the existence of the delocalized regime and its
RMT statistics are considered as facts, supported both by non-rigorous arguments
and numerics. One of the most intriguing approach is via supersymmetric (SUSY) functional integrals
that remarkably reproduce all formulas obtained by the determinantal calculations
in much more general setup but in a non-rigorous way due to neglecting highly oscillatory
terms.  The rigorous mathematics seriously lags behind these developments;
even the existence of the delocalized regime is not proven, let alone detailed spectral statistics.

Judged from the horizons of theoretical physics,
rigorous mathematics does not fare much better in the  quantum chaos models either. 
The grand vision is that the quantization of an integrable classical
Hamiltonian system exhibits Poisson eigenvalue statistics and a
chaotic classical system gives rise to RMT statistics \cite{BGS,BT}.
While Poisson statistics have been shown to emerge some specific integrable models
\cite{Si, RS, Mar},
there is no rigorous result on the RMT statistics. Recently there has  been a 
remarkable mathematical progress in {\it  quantum unique ergodicity (QUE)}
that predicts that {\it all} eigenfunctions of chaotic
systems are uniformly distributed all over the space, at least in some 
macroscopic sense. For arithmetic domains QUE has been proved in  \cite{Li}.
For general manifolds much less is known, but a  lower bound on
the topological entropy of the support of the limiting densities
of eigenfunctions excludes that eigenfunctions are supported only 
on a periodic orbit \cite{An}. Very roughly, QUE can
be considered as the analogue of the extended states for random Schr\"odinger
operators.  Theoretically, the overlap of eigenfunctions should again lead to correlations
between neighboring eigenvalues, but their direct quantitative analysis would require
a much more precise understanding of the eigenfunctions.

\subsection{Random matrix ensembles}

In these lectures we consider even simpler models to test Wigner's universality
hypothesis, namely the random
matrix ensemble itself. The main goal is to show that their eigenvalues follow the local statistics
 of the Gaussian Wigner matrices which have earlier been computed explicitly
by Dyson, Gaudin and Mehta. The statement that the local eigenvalue 
statistics  is independent of the law of the matrix elements
 is generally referred to as the {\it universality conjecture 
 of  random matrices}  and we will call it the {\it Wigner-Dyson-Gaudin-Mehta conjecture}.
 It was first formulated in Mehta's treatise on
random matrices \cite{M} in 1967 and has remained a  key question
in the subject ever since. The goal of these lecture notes 
is to review the recent progress that has led 
to the proof of this conjecture and we sketch some important ideas.
We will, however, not be able to present all aspects of random matrices
and  we refer
the reader to recent comprehensive books  \cite{De1, DG1, AGZ}.

\subsubsection{Wigner ensembles}\label{sec:wigg}

To make the problem simpler, 
we restrict ourselves to either real symmetric or complex Hermitian matrices 
so that the eigenvalues are real.  The standard model 
consists of   $N\times N$ square  matrices 
$H= (h_{ij})$ with matrix elements having mean zero and variance $1/N$, i.e., 
\be
   \E\, h_{ij} = 0,   \quad \E |h_{ij}|^2= \frac{1}{N}
 \qquad i,j =1,2,\ldots, N.
\label{centered}
\ee
The matrix elements  $h_{ij}$,   $i,j=1, \ldots, N$,  are 
real or  complex  independent random variables 
subject to 
the symmetry constraint $h_{ij}= \ov h_{ji}$.   These ensembles of random matrices 
are called {\it (standard) Wigner matrices.} We will always consider
the limit 
as the matrix size goes to infinity, i.e., $N\to\infty$.
Every quantity related to $H$ depends on $N$, so we should have
used the notation $H^{(N)}$ and $h_{ij}^{(N)}$, etc., but
for simplicity we will omit  $N$    in the notation.

In Section~\ref{sec:lsc} we will also consider
generalizations of these ensembles, where we allow the matrix elements $h_{ij}$
to have different distributions (but retaining independence). The main motivation
is to depart from the mean-field character of the standard Wigner matrices,
where the quantum transition amplitudes $h_{ij}$ between any two sites $i, j$ 
have the same statistics.
The most prominent example is the random band matrix ensemble (see Example~\ref{ex})
that naturally interpolates between standard Wigner matrices and 
random Schr\"odinger operators with a short range hopping mechanism
(see \cite{Spe} for an overview).

The first rigorous result about the spectrum of a random matrix of this type
is the famous {\it Wigner semicircle law}
 \cite{W} which
  states that the  empirical density of the eigenvalues, $\lambda_1, \lambda_2, \ldots,
\lambda_N$, under the normalization \eqref{centered}, is given by 
\be\label{sc}
\varrho_N (x) := \frac{1}{N}\sum_{j=1}^N \delta(x-\lambda_j)\rightharpoonup
\varrho_{sc} (x) := \frac 1 { 2 \pi} \sqrt {(4 - x^2)_+}
\ee
in the weak limit as $N \to \infty$. The limit density is
independent  of the details of the distribution of $h_{ij}$.

The Wigner surmise \eqref{surmise} is a much finer problem since it concerns
individual eigenvalues and not only their  behavior on macroscopic scale.
To understand it, we introduce correlation functions.
If  $p_N(\lambda_1, \lambda_2, \ldots , \lambda_N)$ denotes the
joint probability density of the  (unordered) eigenvalues, then
the $n$-point correlation functions (marginals) are defined by
\be
  p^{(n)}_N(\la_1, \la_2, \ldots, \la_n):
 = \int_{\bR^{N-n}} p_N( \la_1, \ldots,\la_n, \la_{n+1},
\ldots \la_N) \rd\la_{n+1} \ldots \rd\la_{N}.
\label{pk}
\ee
 To keep this introduction simple, 
we state the corresponding results in terms of the
eigenvalue correlation functions for Hermitian 
$N\times N $  matrices.
In the Gaussian case (GUE) the joint probability
density of the eigenvalues can be expressed explicitly as
\be
    p_N(\lambda_1, \lambda_2, \ldots , \lambda_N) 
  = \mbox{const.} \prod_{i<j} (\lambda_i-\lambda_j)^2 \prod_{j=1}^N
   e^{- \frac{1}{2}N  \lambda_j^2},
\label{expli28}
\ee
where the normalization constant  can
be computed explicitly.
The Vandermonde determinant structure allows one to compute
the $k$-point  correlation functions in the large $N$ limit via 
Hermite polynomials that are the
orthogonal polynomials with respect to the Gaussian weight function.

The result of Dyson, Gaudin and Mehta  asserts that  for 
any fixed energy $E$ in the bulk of the spectrum, i.e., $|E| < 2$,  
the small scale behavior of $ p^{(n)}_N$ is given explicitly by 
\be
  \frac{1}{[\varrho_{sc}(E)]^n}
 p_N^{(n)}\Big( E+ \frac{\al_1}{N\varrho_{sc} (E)}, E + \frac{\al_2}{N\varrho_{sc}(E)},
 \ldots ,E+ \frac{\al_n}{N\varrho_{sc}(E)}\Big)  \to 
\det \big( K(\al_i - \al_j)\big)_{i,j=1}^n
\label{sine}
\ee
where $K$ is the  celebrated sine kernel
\begin{align}
K(x,y)
  =  \frac{\sin  \pi (x-y)}{\pi(x-y)}.
\end{align}
Note that the limit in \eqref{sine} is independent of the energy $E$
as long as it lies in the bulk of the spectrum. 
 The rescaling by a factor $N^{-1}$ of the correlation functions in \eqref{sine}
corresponds to the typical distance between consecutive eigenvalues 
and we will refer to  the law under  such scaling
 as  {\it local  statistics}. Note that the correlation functions do not factorize, i.e.
the eigenvalues are strongly correlated despite that the matrix elements
are independent. 
Similar but  more complicated  formulas were  obtained 
for symmetric matrices and also for the self-dual quaternion random
matrices which is the  third symmetry class of 
random matrix ensembles.

The convergence in \eqref{sine} holds
for each fixed $|E|<2$ and uniformly in $(\al_1, \ldots , \al_n)$
in any compact subset of $\bR^n$. Fix now $k$ compact subsets $A_1, \ldots A_k$
in $\bR$. From \eqref{sine} one can compute the
distribution of the number $n_j$ of the rescaled eigenvalues $\wt \lambda_\al: =  
N(\lambda_\al-E)\varrho_{sc}(E)$ in  $A_j$ around a fixed energy $|E|<2$.
The limit of the joint probabilities
\be\label{PA}
  \P\Big( \#\big\{ \wt\lambda_\al \in A_j\big\}=n_j, \; j=1,2,\ldots ,k\Big)
\ee
is given as derivatives of a Fredholm determinant involving the sine kernel. 
Clearly \eqref{PA} gives a complete local description of the rescaled eigenvalues
as a point process around a fixed energy $E$. In particular it describes the distribution of the 
eigenvalue gap that contains a {\it fixed energy $E$}. However, \eqref{PA}
does not determine the distribution of the gap with a {\it fixed label,}
e.g. the gap $\lambda_{N/2+1}-\lambda_{N/2}$. Only the cumulative statistics
of many consecutive gaps can be deduced, see \cite{De1} for a precise formulation.
The slight discrepancy between the statements at fixed energy and  with fixed label 
leads to involved technical complications.

\subsubsection{Invariant ensembles}

The explicit formula \eqref{expli28} is special for Gaussian Wigner matrices;
if $h_{ij}$ are independent but non-Gaussian, then no analogous explicit formula is
known for the joint probability density. Gaussian Wigner matrices have this special property because their
distribution is {\it invariant} under base transformation. The derivation
of \eqref{expli28} relies on the fact that in the diagonalization $H=U\Lambda U^*$
of $H$, where $\Lambda$ is diagonal and $U$ is unitary, the distributions
of $U$ and $\Lambda$ decouple. The Gaussian measure of $h_{ij}$ with the
normalization \eqref{centered} can also be expressed as 
\be\label{expl3}
\exp\Big(-\frac{1}{2}N\tr H^2\Big)\rd H = \exp\Big(-\frac{1}{2}N\tr \Lambda^2\Big)\rd (U\Lambda U^*),
\ee
where $\rd H$ is the Lebesgue measure on hermitian matrices.
The Vandermonde determinant in \eqref{expli28} originates from the integrating the
 Jacobian $\rd (U\Lambda U^*)/\rd \Lambda$ over the unitary group.
Similar argument holds for real symmetric matrices with orthogonal conjugations, the only 
difference is the exponent 2 of the Vandermonde determinant becomes 1.
The exponent is 4 for
 the third symmetry class of Wigner matrices, the self-dual quaternion
matrices with symmetry group being the symplectic matrices ({\it Gaussian symplectic ensemble, GSE}).

 Starting from \eqref{expli28}, there are
two natural generalizations of Gaussian Wigner matrices. One direction
is the Wigner matrices with non-Gaussian but independent entries that we have
already introduced in Section~\ref{sec:wigg}.
Another direction is to consider a more general real function  $V(H)$ of $H$  instead of the quadratic
$H^2$ in \eqref{expl3}. Since invariance still holds, 
$\tr V(H)= \tr V(U\Lambda U^*) = \tr V(\Lambda)$, the
same argument gives \eqref{expli28}, with $V(\lambda_i)$ instead of $\lambda_j^2/2$,
for the correlation functions of $\exp(-N\tr V(H))$. These
are called {\it invariant ensembles} with potential $V$. Their matrix elements
are in general correlated except in the Gaussian case.

Invariant ensembles in all three symmetry classes can be given simultaneously
by the probability measure
$$
 Z^{-1} e^{- \frac{1}{2}N \beta  { \rm Tr} V(H)} \rd H,
$$ 
where $N$ is the size of the matrix $H$, 
 $V$ is a real valued potential and $Z=Z_N$ is the normalization constant.  
The positive parameter $\beta$ is determined by the symmetry
class, its value is 1, 2 or 4, for real symmetric, complex hermitian and
self-dual quaternion matrices, respectively. The Lebesgue 
measure $\rd H$ is understood over the matrices in the same class.
The probability distribution of the eigenvalues $\bla=(\lambda_1, \ldots , \lambda_N)$
is given by the explicit formula (c.f. \eqref{expli28})
\begin{equation}\label{01}
\mu^{(N)}_{\beta, V}(\bla)\rd \bla \sim e^{- \beta N \cH(\lambda)} \rd\bla
 \quad \mbox{with Hamiltonian} \quad
\cH(\bla) :=   \sum_{k=1}^N  \frac{1}{2}V(\lambda_k)- 
\frac{1}{N} \sum_{1\leq i<j\leq N}\log (\lambda_j-\lambda_i).
\end{equation}
The key structural ingredient
 of this formula, the logarithmic interaction that gives
rise to the the Vandermonde determinant, 
is the same as in the Gaussian case, \eqref{expli28}. Thus all previous computations, 
developed for the Gaussian
case, can be carried out for $\beta=1, 2, 4$,
provided that the Gaussian weight function  for the orthogonal polynomials 
is replaced with the function $e^{- \beta V(x) /2}$. 
The analysis of the correlation functions depends 
critically  on the the asymptotic properties of the 
 corresponding orthogonal polynomials.

While the asymptotics of the Hermite polynomial for the Gaussian case are well-known,
the extension of  the necessary analysis to  a general potential is a 
demanding task;  important  progress was  made since  the late 1990's by 
Fokas-Its-Kitaev \cite{FIK},   Bleher-Its \cite{BI}, Deift {\it et. al.}
\cite{De1,  DKMVZ1, DKMVZ2},  Pastur-Shcherbina \cite{PS:97, PS} 
and more recently by
Lubinsky \cite{Lub}. { These results concern the simpler $\beta=2$ case.}
   For $\beta =1, 4$, the universality was established only quite recently 
for analytic $V$ with additional assumptions \cite{DG, DG1, KS, Sch} 
using earlier ideas of Widom  \cite{Wid}. 
 The final outcome  of these sophisticated  analyses is that  universality holds for the 
measure \eqref{01} in the sense 
that the short scale behavior of the correlation functions is independent of the potential
 $V$ (with appropriate assumptions) 
provided that  $\beta$ is one of the classical values,
 i.e., $\beta \in \{ 1,2,4\}$, that corresponds
to an underlying matrix ensemble.

 Notwithstanding  matrix ensembles or orthogonal polynomials,
the measure \eqref{01} on $N$ points $\lambda_1, \ldots , \lambda_N$
 is perfectly well defined for any $\beta>0$.
It can be interpreted as the Gibbs measure for a system of
particles with external potential $\frac{1}{2}V$ and with a logarithmic interaction (log-gas)
 at inverse temperature $\beta$. From this point of view $\beta$ is a continuous
parameter and the classical values $\beta=1,2,4$ play apparently no distinguished role.
 It is therefore natural 
to extend the universality problem  to all non-classical $\beta$ but
 the orthogonal polynomial methods are difficult to apply for this case.
  For any $\beta > 0$ the local statistics for the Gaussian case $V(x)=x^2/2$ 
is given by a point process, denoted by $\mbox{Sine}_\beta$. It can be obtained
from a rescaling of the $\mbox{Airy}_\beta$ process 
as $\lim_{a\to\infty} \sqrt{a}(\mbox{Airy}_\beta + a) = \mbox{Sine}_\beta$.
The Airy process itself is the low lying eigenvalues of 
the one dimensional Schr\"odinger operator $-\frac{\rd^2}{\rd x^2} + x + \frac{2}{\sqrt{\beta}}b'_x$
on the positive half line, where $b_x'$ is the white noise. 
The relation between Gaussian random matrices and random Schr\"odinger operators
is  derived from  a tridiagonal matrix representation
 \cite{DumEde}.  Another convenient representation of
the $ \mbox{Sine}_\beta$ process is given by the
  ``Brownian carousel''  \cite{RRV, VV}.

Beyond random matrices, the log-gas can also be viewed
as the only interacting particle model with a scale-invariant
interaction and with a single relevant parameter, the inverse
temperature $\beta$. It is believed to be
the canonical model for strongly correlated systems and thus
to play a similarly fundamental role in probability theory
as the Poisson process or the Brownian motion.
Nevertheless, we still
have very little information about its properties. 
Unlike the universality problem that is inherently analytical,
many properties of the log-gas are destined, at the first sight, to be
revelead by  smart algebraic identities. Despite many trials
by physicists and mathematicians,
 the log-gas with 
a general  $\beta$ seems to defy all algebraic attempts.
We do not really understand why
the algebraic approach is suitable for $\beta=2$, and to a lesser extent for $\beta=1,4$,
but it fails for any other $\beta$, while 
from an analytical point of view there is no difference
between various values of $\beta$.  To understand this
fascinating ensemble, a main goal  is to develop
general analytical methods that work for any $\beta$.

\subsection{Universality of the local statistics: the main results}

All universality results reviewed in the previous sections
rely  on some version of the explicit formula \eqref{01}
that is not available for Wigner matrices with non-Gaussian matrix elements. The only result 
prior 2009 towards universality for Wigner matrices
was the proof of Johansson \cite{J} (extended by Ben Arous-P\'ech\'e \cite{BP})
for complex Hermitian Wigner matrices with a  substantial Gaussian
component. The hermiticity is necessary, since the proof
still relies on an algebraic formula, a  modification of
 the Harish-Chandra/Itzykson/Zuber integral observed first 
by Br\'ezin and Hikami in this context \cite{BH}.

To indicate the restrictions imposed by the usage of
explicit formulas, we note that previous methods were not suitable
to deal even with very small perturbations of the Gaussian Wigner case.
For example, universality was already not known if only a few matrix
elements of $H$ had a distribution different from Gaussian.

Given this background, the main challenge a few years ago was
to develop a new approach to universality that does not rely
on any algebraic identity. We believe that the genuine reason behind
Wigner's universality is of analytic nature. Algebraic computations
may be used to obtain explicit formulas for the most convenient representative
of a universality class (typically the Gaussian case), but only
analytical methods have the power to deal with the general case.
In light of the two main classes of random matrix ensembles, we
set the following two main problems.

\medskip
\noindent
{\it Problem 1}: Prove the Wigner-Dyson-Gaudin-Mehta conjecture, i.e.
the universality for Wigner matrices
 with a general distribution for the matrix elements.

\medskip 
\noindent 
{\it Problem 2}:  Prove the universality of  the local statistics
for the log-gas \eqref{01} for all $\beta > 0$.

\medskip

We were able to  solve  Problem 1
 for a very general
class of distributions. As for  Problem 2, we solved it for the case of
  real analytic   potentials $V$
 assuming that the equilibrium measure is supported on a single interval,
which, in particular, holds for any convex potential. We will give
a historical overview of related results in Section~\ref{sec:hist}.

The original universality conjectures, as formulated in Mehta's book \cite{M},
do not specify the type of convergence in \eqref{sine}. 
We focus on two types of results for both problems. First we
show that universality holds in the sense that local correlation functions
around an energy $E$ converge weakly if  $E$ is averaged on a small
interval of size $N^{-1+\e}$. Second, we prove the universality of the
joint distribution of consecutive gaps with fixed labels. 

We note that universality of the {\it cumulative} statistics of $N^\e$ gaps
directly follows from the weak convergence of the correlation functions
but our result on a single gap requires a quite different approach.
{F}rom the point of view of Wigner's original vision on the ubiquity
of the random matrix statistics in seemingly disparate ensembles and physical systems,
the issue of cumulative gap statistics versus single gap statistics is
minuscule. Our main reason of pursuing the single gap
universality is less for the result itself; more importantly, we
develop new methods to analyze the structure of the log-gases,
which seem to represent the universal statistics of strongly
correlated systems. In the next two sections we state the results precisely.

\subsubsection{Generalized Wigner matrices}

Our main results hold for a larger class of ensembles than the standard Wigner matrices, which
we will call {\it generalized Wigner matrices}.

\begin{definition}\label{def:genwig}(\cite{EYY})
 The real symmetric or complex Hermitian
matrix ensemble $H$ with centred and independent matrix elements $h_{ij}=\ov{h}_{ji}$, $i\le j$,
is called generalized Wigner matrix if 
the following  assumptions hold on the variances of the matrix
elements $s_{ij}= \E |h_{ij}|^2$:
\begin{description}
\item[(A)] For any $j$ fixed
\be
   \sum_{i=1}^N s_{ij} = 1 \, .
\label{sum}
\ee

\item[(B)]   There exist two positive constants, $C_1$ and $C_2$,
independent of $N$ such that
\be\label{1.3C}
\frac{C_1}{N} \le s_{ij}\leq \frac{C_2}{N}.
\ee
\end{description} 
\end{definition}

The result on the correlation functions is the following theorem:

\begin{theorem}[Wigner-Dyson-Gaudin-Mehta conjecture for averaged correlation functions]\cite[Theorem 7.2]{EKYY2}
 \label{bulkWigner}
Suppose that $H = (h_{ij})$ is a complex Hermitian (respectively, real symmetric) generalized Wigner matrix.
 Suppose that for some constants $\e>0$, $C>0$,  
\begin{equation} \label{4+e}
\E \left | \sqrt N  h_{ij}  \right | ^{4 + \e}  \;\leq\; C.
\end{equation}
Let $n \in \N$ and $O : \R^n \to \R$ be compactly supported and continuous.
Let $E$ satisfy $|E|<2$ and let $\xi > 0$. 
Then for any sequence $b_N$ satisfying
 $N^{-1 + \xi} \leq b_N 
\leq \left | { |E|- 2}  \right |/2 $ we have
\begin{multline}\label{avg}
\lim_{N \to \infty} \int_{E - b_N}^{E + b_N} \frac{\rd x}{2 b_N} \int_{\bR^n} \rd \alpha_1 \cdots
 \rd \alpha_n\, O(\alpha_1, 
\dots, \alpha_n) 
\\ 
{}\times{}  \frac{1}{\varrho_{sc}(E)^n} \left ( {p_{N}^{(n)} - p_{{\rm G}, N}^{(n)}} \right ) 
 \left ( {x +
\frac{\alpha_1}{N\varrho_{sc}(E)}, \dots, x + \frac{\alpha_n}{N\varrho_{sc}(E)}}\right )  \;=\; 0\,.
\end{multline}
Here $\varrho_{sc}$ is the semicircle law
 defined in \eqref{sc}, $p_N^{(n)}$ is the $n$-point correlation function
 of the eigenvalue 
distribution of $H$ \eqref{pk}, and $p^{(n)}_{{\rm G},N}$ is the $n$-point correlation function
of an $N \times N$ GUE (respectively, GOE) matrix.
\end{theorem}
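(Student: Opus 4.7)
The plan is to implement the \emph{three-step strategy} developed for proving bulk universality of Wigner-type ensembles. Roughly, Step~1 establishes rigidity of the eigenvalues on the optimal scale via a local semicircle law; Step~2 proves universality after adding a tiny Gaussian perturbation by exploiting the local relaxation of Dyson Brownian motion (DBM); and Step~3 removes the Gaussian perturbation via a Green-function comparison argument. The averaging in energy over a window of size $b_N\geq N^{-1+\xi}$ matches exactly the scale on which the DBM equilibrates, so no additional de-averaging step is needed.

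First, I would prove a strong local semicircle law for generalized Wigner matrices: with overwhelming probability, the Stieltjes transform $m_N(z)=\frac{1}{N}\sum_j(\lambda_j-z)^{-1}$ satisfies $|m_N(z)-m_{sc}(z)|\leq (N\eta)^{-1}N^\e$ uniformly for $\eta=\im z\geq N^{-1+\e}$, where the self-consistent equation is vectorial because of the non-constant variance profile $s_{ij}$. Conditions (A)--(B) on the variance matrix ensure that the stability of this equation is controlled by the semicircle equation, and hypothesis \eqref{4+e} is used to bound individual matrix entries after truncation. A standard consequence is eigenvalue rigidity: $|\lambda_k-\gamma_k|\leq N^{-1+\e}$ in the bulk, where $\gamma_k$ is the semicircle classical location of the $k$-th eigenvalue.

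Second, consider the interpolating ensemble $H_t:=e^{-t/2}H+(1-e^{-t})^{1/2}V$, with $V$ an independent GOE/GUE matrix, so that the eigenvalues of $H_t$ evolve under DBM with equilibrium $\mu^{(N)}_{\beta,V}$ given by \eqref{01} for $V(x)=x^2/2$. Using the rigidity from Step~1 as an a priori input, I would estimate the Dirichlet form of the density of the DBM relative to a \emph{local equilibrium} measure obtained by freezing the eigenvalues outside a small window; a logarithmic Sobolev inequality with respect to this convexified reference measure then yields entropy decay on the time scale $t\gtrsim N^{-1+\e}$, which is enough to conclude that for any smooth compactly supported $O$ the averaged correlation functions of $H_t$ and the GOE/GUE agree in the limit. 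I expect this Step~2 to be the main technical obstacle, because the convexity of the reference Hamiltonian is only barely sufficient, and the non-constant variance profile forces one to work with the general semicircle-law rigidity rather than with explicit formulas.

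Third, I would match $H_t$ with the original $H$ by a Green-function comparison (Lindeberg swapping): replacing the entries of $H$ one by one with the corresponding entries of $H_t$, and expanding the resolvent to fourth order, one shows that
\begin{equation}
\E F\bigl(\tfrac{1}{N}\im\tr\, G(z_1)\cdots\tfrac{1}{N}\im\tr\, G(z_n)\bigr)
\end{equation}
is almost unchanged, for test functions $F$ and spectral parameters $z_j=E+i\eta$ with $\eta\sim N^{-1-\e}$; because the first two moments of the entries of $H$ and $H_t$ agree automatically, and the third and fourth moments produce error terms that are absorbed using the local law and the smallness of $t$. The final step is to express the averaged correlation functions in \eqref{avg} as integrals of products of $\im G(x+i\eta)$ against $O$ (the small imaginary part is controlled by the rigidity estimate), and to combine Steps~2 and~3 to transfer the GOE/GUE result to $H$. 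Assumption \eqref{4+e} enters precisely at the truncation used in Step~1 and in the error estimates of the swap in Step~3.
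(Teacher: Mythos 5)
Your proposal follows precisely the three-step strategy that the paper uses: (1) local semicircle law and rigidity, (2) DBM relaxation to prove universality for Gaussian-divisible ensembles, (3) Green-function comparison to remove the Gaussian component. This is the correct architecture and you have the right roles for hypothesis \eqref{4+e} (truncation in Step~1, moment control in Step~3).

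There is, however, a genuine gap in how you glue Steps~2 and~3 together, precisely for the regime $b_N$ close to $N^{-1+\xi}$. Your claim that ``the averaging in energy over a window of size $b_N\geq N^{-1+\xi}$ matches exactly the scale on which the DBM equilibrates, so no additional de-averaging step is needed'' hides a tension rather than resolving it. The Dirichlet-form estimate \eqref{GG} gives an error of order $N^{\e}\sqrt{N^2Q/(|J|t)}$ with $|J|\sim Nb_N$ and $Q\lesssim N^{-1-2\fa}$ (rigidity yields $\fa$ arbitrarily close to $1/2$); forcing this to vanish when $b_N\sim N^{-1+\xi}$ requires $t\gtrsim N^{-\xi/8}$ (Theorem~2.3 of \cite{EYYrigi}). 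But the OU flow moves the third and fourth moments by $O(t)$, so at $t\sim N^{-\xi/8}$ the moment mismatch \eqref{4} in the Green-function comparison theorem fails badly: one would need $|\E v^3_{ij}(t)-\E v^3_{ij}(0)|\lesssim N^{-1/2-\delta}$, which $N^{-\xi/8}$ does not satisfy for small $\xi$. You cannot therefore compare $H$ with $H_t$ directly. The paper resolves this by constructing an auxiliary Wigner ensemble $\wh H_0$ whose time-evolved entries $\wh H_t$ match the \emph{original} $H_0$ exactly to third order and nearly to fourth order even for $t\sim N^{-\xi/8}$ (Lemma~3.4 of \cite{EYY2}); Step~2 is then applied to $\wh H_t$ and Step~3 compares $\wh H_t$ with $H_0$. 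Without this intermediary, your argument proves the theorem only for $b_N$ bounded below by a fixed constant (or, more precisely, for $b_N$ decaying slowly enough that $t$ can be taken as small as $N^{-1+\e}$).

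A smaller point: in Step~2 you describe the reference measure as obtained ``by freezing the eigenvalues outside a small window.'' The mechanism the paper actually uses here is different: it adds an explicit single-site quadratic potential $W(\bx)=\sum_j(x_j-\gamma_j)^2/(2\tau)$ to the Hamiltonian, and the rigidity bound \eqref{assum3} controls the cost of this artificial convexification. Conditioning on external eigenvalues is the device used later (Sections~4 and~5) for $\beta$-ensembles and single-gap universality, but is not needed for the averaged correlation functions in Theorem~\ref{bulkWigner}. Both schemes exploit the same improved convexity, so this is a stylistic rather than a substantive difference.
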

The condition \eqref{1.3C} can be  relaxed, see Corollary 8.3 \cite{EKY3}. For example, 
the lower bound can be changed to  $N^{-9/8+\e}$. Alternatively,
the upper bound $C_2N^{-1}$ can replaced with $N^{-1+\e_n}$ for some  $\e_n>0$.  For band matrices,
the upper and lower bounds can be simultaneously relaxed.

We remark that for the complex Hermitian case the convergence of
the correlation functions can be strengthened to a
convergence at each fixed energy, i.e. 
for any fixed $|E|<2$ we have  that
\be\label{pointwise}
 \int_{\bR^n} \rd \alpha_1 \cdots
 \rd \alpha_n\, O(\alpha_1, 
\dots, \alpha_n)
 \frac{1}{\varrho_{sc}(E)^n} \left ( {p_{N}^{(n)} - p_{{\rm G}, N}^{(n)}} \right ) 
 \left ( {E +
\frac{\alpha_1}{N\varrho_{sc}(E)}, \dots,
 E + \frac{\alpha_n}{N\varrho_{sc}(E)}}\right )  \;=\; 0\,.
\ee
The main ideas leading to the results \eqref{avg} and \eqref{pointwise} have been 
developed in a series of papers. We will give a short overview of the key methods in
Section~\ref{sec:str1} and of the related results in Section~\ref{sec:hist}.

\medskip

The second result on generalized Wigner matrices
asserts that the local gap statistics in the bulk of the spectrum
are universal for any general Wigner matrix, in particular
they coincide with those of the Gaussian case.
To formulate the statement,
we need to introduce the notation  $\gamma_{j}$
for the $j$-th quantile of the semicircle density, i.e. $\gamma_j=\gamma_j^{(N)}$ is defined by
\be\label{defgamma}
   \frac{j}{N} = \int_{-2}^{\gamma_{j}}\varrho_{sc}(x) \rd x.
\ee
We also introduce the notation $\llbracket A, B\rrbracket : = \{ A, A+1, \ldots, B\}$
for any integers $A<B$.

\begin{theorem}  [Gap universality for Wigner matrices]\cite[Theorem 2.2]{EYsinglegap}
 \label{thm:sg} Let $H$ be a generalized real symmetric or complex Hermitian
Wigner matrix with subexponentially decaying matrix elements, i.e. we assume that
 \be\label{subexp}
 \P \big( \sqrt{N}|h_{ij}|\ge x\big) \le C_0\exp (-x^\vartheta)
\ee
holds for any $x>0$ with some $C_0, \vartheta$ positive constants.
 Fix a positive number $\al>0$,
an integer $n\in \N$ and a smooth, compactly supported function
$O:\bR^n\to \bR$. There exists an $\e>0$ and $C>0$,   depending only on $C_0, \vartheta$, $\alpha$ and  $O$
 such that 
\be\label{EEO}
   \Big| \big[\E -\E^{\mu}\big] O\big( N(x_j-x_{j+1}), N(x_j-x_{j+2}), \ldots , N(x_j - x_{j+n})\big)\Big|
    \le CN^{-\e}, 
\ee
for any $j\in \llbracket \al N, (1-\al)N\rrbracket$ and
for any sufficiently large  $N\ge N_0$, where $N_0$ depends
on all parameters of the model, as well as on $n$ and $\alpha$.  
Here $\E$ and $\E^\mu$ denotes the expectation with respect to the
Wigner ensemble $H$ and the Gaussian
equilibrium measure (see \eqref{expli28}
for the Hermitian case), respectively.

More generally, for any $k, m \in \llbracket \al N, (1-\al)N\rrbracket$ we have
\begin{align} \label{EEO1}
   \Big| \E  O\big( & (N\varrho_k) (x_k-x_{k+1}),   (N\varrho_k)(x_k-x_{k+2}), \ldots , (N\varrho_k)(x_k - x_{k+n})\big) \\
& -\E^{\mu} O\big( (N\varrho_m)(x_m-x_{m+1}), (N\varrho_m)(x_m-x_{m+2}), \ldots , (N\varrho_m)(x_m - x_{m+n})\big)\Big|
    \le CN^{-\e},  \non
\end{align}
where the local density $\varrho_k$ is defined by
$\varrho_k : =\varrho_{sc}(\gamma_k)$.
\end{theorem}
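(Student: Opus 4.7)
The plan is to combine three ingredients: (i) optimal eigenvalue rigidity for generalized Wigner matrices, (ii) a short-time Green function comparison that reduces the problem to a Gaussian-divisible ensemble, and (iii) a De~Giorgi--Nash--Moser type H\"older regularity analysis for the linearised Dyson Brownian motion governing the gaps. Ingredient (i) comes from the local semicircle law for the matrix class introduced in Definition~\ref{def:genwig}, which yields the rigidity estimate $|x_j - \gamma_j| \le N^{-1+\xi}$ with overwhelming probability, uniformly in $j \in \llbracket \al N, (1-\al) N \rrbracket$; the subexponential decay assumption \eqref{subexp} is what makes this and all other estimates hold with probability at least $1 - N^{-D}$ for any $D>0$.

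For (ii), I would run the Ornstein--Uhlenbeck flow $H_t = e^{-t/2} H + \sqrt{1 - e^{-t}}\, U$ with $U$ an independent GOE (respectively GUE) matrix, and choose $t = N^{-1+\delta}$ for some small $\delta > 0$. A resolvent / four-moment comparison at the level of individual eigenvalue statistics, crucially using the rigidity from (i), should show that the left-hand sides of \eqref{EEO} and \eqref{EEO1} change only by $O(N^{-c})$ when $H$ is replaced by $H_t$. This reduces the problem to proving single-gap universality for the Gaussian-divisible matrix $H_t$, i.e.\ for an ensemble that has already been regularised by a short Dyson Brownian motion.

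For (iii) I would couple, via shared Brownian increments, the DBM $\bla(t)$ started from the eigenvalues of $H$ with an independent DBM $\bla'(t)$ started from a pure GOE/GUE. Interpolating the initial conditions by a one-parameter family $\bla^\al(0)$, $\al \in [0,1]$, and setting $u_j(t) = \partial_\al \lambda_j^\al(t)$, the vector $u$ satisfies the linearised DBM
\[
 \partial_t u_j(t) \;=\; \sum_{i \ne j} B_{ij}(t)\,\big(u_i(t) - u_j(t)\big),\where B_{ij}(t) \;=\; \frac{1}{N\big(\lambda_i(t) - \lambda_j(t)\big)^2}.
\]
By rigidity, $B_{ij}(t)$ behaves on average like the kernel of a discretised non-local operator of order one, so this is a discrete parabolic equation with random coefficients that is regularising on short time scales. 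The heart of the argument is a De~Giorgi--Nash--Moser H\"older estimate showing that $j \mapsto u_j(t)$ is H\"older continuous on mesoscopic scales, uniformly for $t \ge N^{-1+\delta}$. This H\"older continuity forces the single-gap distribution at label $j$ to depend on the initial data only through the local density $\varrho_{sc}(\gamma_j)$; comparing with the Gaussian initial condition then yields \eqref{EEO1}, while \eqref{EEO} is the special case $k = m$ in which the local density factor cancels.

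The main obstacle is step (iii). The coefficients $B_{ij}(t)$ are long-range, random, and become singular on the rare event that two neighbouring eigenvalues nearly collide, so one needs a strong level-repulsion bound to control these singularities, combined with a Moser iteration adapted to this discrete, non-local, stochastic setting. The key analytic input will be a discrete Nash-type $L^1 \to L^2$ smoothing estimate at mesoscopic scales whose constants are insensitive to the randomness of $B_{ij}$; together with an energy inequality this yields a quantitative H\"older exponent $\e > 0$, which, propagated back through (ii) and (i), completes the proof of \eqref{EEO} and \eqref{EEO1}.
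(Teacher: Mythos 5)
Your proposal is correct in spirit but takes a genuinely different route from the paper, so a comparison is in order. The paper does \emph{not} couple two global Dyson Brownian motions and linearise the eigenvalue flow in the initial data. Instead it works with the \emph{local conditioned measures} $\mu_\by$ from \eqref{eq:muyde}: one fixes all but $\cK\sim N^{k}$ consecutive eigenvalues, proves rigidity and level repulsion for this conditional log-gas (Theorems~\ref{thm:omrig}, \ref{lr2}), and then controls the gap covariance $\langle h_0 ; O(x_p-x_{p+1})\rangle_\om$ by the Helffer--Sj\"ostrand random-walk representation (Proposition~\ref{prop:repp}). The De Giorgi--Nash--Moser estimate (Theorem~\ref{holderg}) is then applied to the resulting discrete parabolic equation $\partial_s \bw = -\cA(s)\bw$ whose random coefficients are driven by the reversible dynamics of the local measure. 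Passage back to the original Wigner ensemble goes through the DBM/Dirichlet-form comparison with $f_{t,\by}\mu_\by$ plus the Green function comparison theorem (Section~\ref{sec:loctoglob}). Your route replaces the local measure and the Helffer--Sj\"ostrand identity by a direct coupling of two $N$-dimensional DBMs through shared Brownian increments, interpolating the initial condition and differentiating in the interpolation parameter. This is cleaner conceptually for Wigner matrices and avoids the machinery of conditioned measures and their boundary-condition bookkeeping, and it is essentially the homogenization strategy that subsequently became standard in the field; the paper's local-measure formulation has the advantage of treating the $\beta$-ensemble and Wigner cases in one framework, since for general $\beta$ there is no matrix model to couple. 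The core analytic content --- a H\"older regularity theorem for a discrete nonlocal heat equation with random, possibly singular coefficients $B_{ij}(t)=(x_i(t)-x_j(t))^{-2}$, controlled only in an averaged sense via rigidity and level repulsion --- is the same in both approaches.

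Two small points to tighten in your sketch. First, your linearised equation drops the confining drift of the OU flow: differentiating $-\lambda_j/2$ in the initial-condition parameter produces a $-u_j/2$ term, the analogue of the paper's potential $\cW_j$ in \eqref{61}; this term is what yields the $L^1$ mass decay and keeps the low modes under control, and it should not be omitted. Second, the statement that ``H\"older continuity forces the single-gap distribution to depend only on the local density'' should be made quantitative: the gap discrepancy between the two coupled flows is $\int_0^1 (u_{j+1}^\al(t)-u_j^\al(t))\,\rd\al$, and the point is that discrete H\"older regularity of $j\mapsto u_j$ on mesoscopic scales gives $|u_{j+1}-u_j|\lesssim N^{-1-\e}$, which after rescaling by $N$ is exactly the $N^{-\e}$ gain required in \eqref{EEO}.
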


As it was already mentioned, the gap universality with a certain local averaging,
i.e.  for the cumulative statistics of $N^{\e}$ 
consecutive gaps, follows directly from the universality of the correlation functions,
Theorem~\ref{bulkWigner}.  The gap distribution for  Gaussian random matrices,
with a local averaging,  can then be explicitly
expressed via a Fredholm determinant, 
see \cite{De1, DG, DG1}. 
The first result for a single gap, i.e. without local averaging,  
was only achieved recently  in  the special 
case of the Gaussian unitary ensemble (GUE) in \cite{Taogap}, which statement  then  easily  implies
 the same results for  complex Hermitian Wigner  matrices satisfying  the  
  four moment matching condition.

\subsubsection{Log-gases}

In the case of  invariant ensembles, it is well-known that for 
 $V$ satisfying certain mild conditions the sequence
of one-point correlation functions, or
 densities, associated with  $\mu=\mu^{(N)}$ from \eqref{01} has a limit
as $N\to\infty$
and the limiting  equilibrium density  $\varrho_V(s)$
can be obtained as the unique minimizer  of the
functional
$$
I(\nu)=
\int_\bR V(t) \nu(t)\rd t-
\int_\bR \int_\bR \log|t-s| \nu(s) \nu(t) \rd t \rd s.
$$
We assume that 
$\varrho=\varrho_{V}$ is supported on a single compact interval, $[A,B]$ and $\varrho\in C^2(A,B)$. 
Moreover, we assume that  $V$ is {\it regular} in the sense that $\varrho$ is strictly positive on $(A, B)$
and vanishes as a square root at the endpoints, see (1.4) of \cite{BEY2}.
It is known that these condition are satisfied if, for example, $V$ is strictly convex.
In this case $\varrho_V$ satisfies the equation 
\be
   \frac{1}{2}V'(t) = \int_\bR \frac{\varrho_V(s)\rd s}{t-s}
\label{equilibrium}
\ee
for any $t\in(A,B)$.  For the Gaussian case, $V(x)=x^2/2$, the equilibrium density
is given by the semicircle law, $\varrho_V=\varrho_{sc}$, see \eqref{sc}.

The following result was proven in Corollary  2.2 of \cite{BEY}
for convex potential $V$ and it was generalized in Theorem 1.2 of \cite{BEY2} 
for the non-convex case.

\begin{theorem}[Bulk universality of $\beta$-ensemble] \label{bulkbeta}
 Assume
$V$ is  real analytic   with 
$\inf_{x\in\R}V''(x) > -\infty$. 
Let $\beta> 0$.
 Consider the $\beta$-ensemble $\mu=\mu_{\beta, V}^{(N)}$ given in \eqref{01} 
and let
 $p_N^{(n)}$ denote the $n$-point correlation functions
of $\mu$, defined analogously to \eqref{pk}.
 For the Gaussian case, $V(x) =x^2/2$, the correlation
functions are denoted by $p_{G,N}^{(n)}$.
Let $E\in (A,B)$ lie in the interior of the support of $\varrho$
 and
similarly let $E'\in (-2,2)$ be inside the support of $\varrho_{sc}$. 
Let $O:\R^n\to \R$ be a smooth, compactly supported function.
Then for $b_N = N^{-1+\xi}$ with any $0<\xi \le 1/2$ we have
\begin{align}
\lim_{N \to \infty} \int  & \rd \alpha_1 \cdots \rd \alpha_n\, O(\alpha_1, 
\dots, \alpha_n) \Bigg [ 
   \int_{E - b_N}^{E + b_N} \frac{\rd x}{2 b_N}  \frac{1}{ \varrho (E)^n  }  p_{N}^{(n)}   \Big  ( x +
\frac{\alpha_1}{N\varrho(E)}, \dots,   x + \frac{\alpha_n}{N\varrho(E)}  \Big  ) \\ \nonumber
&
 -   \int_{E' - b_N}^{E' + b_N} \frac{\rd x}{2b_N}  \frac{1}{\varrho_{sc}(E')^n} p_{{\rm G}, N}^{(n)}      \Big  ( x +
\frac{\alpha_1}{N\varrho_{sc}(E')}, \dots,   x + \frac{\alpha_n}{N\varrho_{sc}(E')}  \Big  ) \Bigg ]
 \;=\; 0\,,
\end{align}
 i.e. the appropriately normalized  correlation functions  of  the
measure $\mu_{\beta, V}^{(N)}$ at the level $E$ in the bulk of the limiting density 
 asymptotically coincide with those
of the Gaussian case. In particular,  they are independent of the value of $E$.
\end{theorem}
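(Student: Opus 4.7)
The goal is to compare the rescaled $n$-point correlation functions of $\mu^{(N)}_{\beta,V}$, averaged over a window of size $b_N = N^{-1+\xi}$ around $E$, with the analogous quantity for the Gaussian $\beta$-ensemble at $E'$. My strategy proceeds in three layers: optimal rigidity, local universality via a conditional Gibbs analysis, and transfer back to the global measure via a short-time Dyson Brownian motion (DBM) comparison.

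First I would establish optimal rigidity for the $\beta$-ensemble: show that with very high $\mu^{(N)}_{\beta,V}$-probability, $|x_k - \gamma_k^{(V)}| \le N^{-1+\xi}$ uniformly in $k \in \llbracket \alpha N, (1-\alpha)N\rrbracket$, where $\gamma_k^{(V)}$ is the $(k/N)$-quantile of $\varrho_V$. This is the log-gas analogue of the local semicircle law and follows from a combination of the loop/Schwinger--Dyson equations associated with \eqref{01}, the characterization \eqref{equilibrium} of $\varrho_V$, and concentration estimates exploiting the semiconvexity $\inf V'' > -\infty$ together with the strictly convex logarithmic interaction. Real analyticity of $V$ and the one-cut/regularity hypothesis enter here to ensure square-root behavior of $\varrho_V$ at the endpoints $A, B$, which controls the rigidity near, but not at, the edge.

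Second I would carry out a \emph{local Gibbs uniqueness} step. Pick a mesoscopic window $I_K = \llbracket L-K, L+K\rrbracket$ around the index $L$ with $\gamma_L^{(V)} \approx E$ and $K = N^{\kappa}$ for small $\kappa > 0$. Condition $\mu^{(N)}_{\beta,V}$ on the outside configuration $\mathbf y = (x_j : j \notin I_K)$; the conditional law of $(x_j : j \in I_K)$ is again a log-gas on a small interval, with effective external potential
\[
V^{\mathrm{eff}}_{\mathbf y}(x) \;=\; V(x) \;-\; \frac{2}{N}\sum_{j\notin I_K}\log|x-y_j|.
\]
Perform the same construction for the Gaussian reference, obtaining $\mu^{(N)}_{\beta,G,\mathbf y'}$ near $E'$. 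The aim is to show that for configurations $\mathbf y, \mathbf y'$ typical under their respective global measures, the rescaled local correlation functions of these two conditional measures agree in the $N\to\infty$ limit. This is done by running a DBM internal to $I_K$ (with $\mathbf y$ frozen), for which the conditional measure is stationary. The Hessian of the local Hamiltonian is bounded below by the logarithmic repulsion, so the resulting generator has a spectral gap of order $K$ on the microscopic scale; an $L^2$-energy estimate on the relative density $q_t = f_t/\psi$, combined with the logarithmic Sobolev inequality, forces relaxation of local correlation functions on a time scale $t \sim N^{-1+\varepsilon}$ — much shorter than the global relaxation time. The rigidity from step one supplies the uniform control on $\mathbf y$ needed to make these Hessian and LSI estimates uniform.

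Finally I would reassemble: integrating the local universality over the external configurations (which only requires weak continuity of $O$, since outside particles are controlled by rigidity up to scale $N^{-1+\xi}$) yields the desired cancellation in the averaged correlation functions of \eqref{avg}-type, and the $b_N = N^{-1+\xi}$ average in $E$ absorbs the $O(N^{-\varepsilon})$ error between the DBM-evolved measure and $\mu^{(N)}_{\beta,V}$ itself. The main obstacle is the local uniqueness step under only semi-convex $V$: without $V'' \ge 0$ one cannot invoke Bakry--Émery directly on the global measure, and the convexification must be done at the level of the conditional measure, where the logarithmic repulsion restores strict convexity on the relevant microscopic scale — getting the quantitative constants to match the required precision $t \sim N^{-1+\varepsilon}$ is the technically delicate point and is where analyticity and the single-cut regularity of $\varrho_V$ are used most crucially.
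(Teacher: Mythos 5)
Your two-layer strategy (optimal rigidity, then a local Gibbs uniqueness argument via conditioning on exterior particles and comparing conditional measures) is exactly the paper's Step 1 plus Step 2, and the rigidity layer matches Theorem~\ref{thm:accuracy}. However, the local uniqueness step as you have sketched it does not close.

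The central difficulty you underestimate is the relaxation time of the conditional dynamics. You assert that the log-gas repulsion in the conditional Hamiltonian yields a spectral gap of order $K$ at the microscopic scale, so that gap statistics relax on a time $t\sim N^{-1+\e}$. But the repulsion improves the Hessian only in the difference directions $v_i-v_j$; on the constant mode the curvature of $\cH_\bt$ (see \eqref{241}) is governed by the external potential and the conditioning, and the relaxation time $\tau_\sigma$ of the conditioned Gaussian measure in fact grows with $K$. The paper notes explicitly that with the bare Dirichlet form estimate ``there is no good regime for the choice of $K$'': the relative contribution of the singular edge terms to $D(\mu_\by\mid\sigma_\bt)$ decreases as $K$ grows, while $\tau_\sigma$ grows simultaneously, and these never reconcile. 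The argument is saved only by the local relaxation measure device: one adds an auxiliary strongly convex quadratic $\frac{1}{2\tau}(x_j-\gamma_j)^2$ to both conditional Hamiltonians, which uniformly shrinks the relaxation time, and then controls the cost of that modification via rigidity, exactly as in the Wigner case. Your remark that ``the logarithmic repulsion restores strict convexity on the relevant microscopic scale'' is not quantitatively adequate to close the argument.

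Two further pieces are missing. First, the conditional measures $\mu_\by$ near $E$ and the Gaussian reference $\sigma_\bt$ near $E'$ live on different configuration intervals $J_\by$ and $J_\bt$; before any Dirichlet form inequality can be applied they must be brought onto the same interval by translation and dilation, and the effect of this rescaling on $V_\by$ must be controlled using that the local equilibrium densities match to leading order. Second, the Dirichlet form $D(\mu_\by\mid\sigma_\bt)$ involves the terms $T^k_j = (x_j-y_k)^{-1} - (x_j-\gamma_k)^{-1}$, which are singular for $|j-k|\lesssim N^\e$ near the edge of the window and are not controlled by the rigidity bound $|y_k-\gamma_k|\lesssim N^{-1+\e}$ alone; the paper handles these by a separate total-variation/entropy argument replacing the few nearest boundary points $y_k$ by their classical locations so that the singular contributions cancel. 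Finally, your closing claim that the $b_N$ average in $E$ absorbs ``the $O(N^{-\e})$ error between the DBM-evolved measure and $\mu^{(N)}_{\beta,V}$'' imports the Wigner-matrix structure incorrectly: for the $\beta$-ensemble there is no externally imposed DBM flow on the global measure. The $E$-average is needed only to pass from the averaged gap statistics produced by the local comparison to averaged correlation functions.
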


For the corresponding theorem on the single gap we need to define the classical location  of the $j$-th particle 
$\gamma_{j,V}$ by
\be\label{defgammagen}
   \frac{j}{N} = \int_A^{\gamma_{j,V}}\varrho_V(x) \rd x,
\ee
similarly to the quantiles $\gamma_j$ of the semicircle law, see \eqref{defgamma}.
We set
\be\label{rhov}
   \varrho_j^V := \varrho_V(\gamma_{j,V}),\qquad
\mbox{and}\qquad \varrho_j : = \varrho_{sc} (\gamma_{j})
\ee
to be the limiting densities at the  classical  location of the $j$-th particle.
 Our main theorem on the 
$\beta$-ensembles is the following. 

\begin{theorem}  [Gap universality for $\beta$-ensembles]\cite[Theorem 2.3]{EYsinglegap}
 \label{thm:beta} Let   $\beta \ge 1$   and $V$ be a real analytic potential
with $\inf V''> -\infty$,  such that $\varrho_V$ is  supported on a single compact interval,
 $[A,B]$, $\varrho_V\in C^2(A,B)$, and that  $V$ is regular.
 Fix a positive number $\al>0$,
an integer $n\in \N$ and a smooth, compactly supported function
$O:\bR^n\to \bR$.  Let $\mu=\mu_V = \mu_{\beta, V}^{(N)}$ be given by \eqref{01}
and let $\mu_G$ denote the same measure for the Gaussian case, $V(x)=\frac{1}{2}x^2$.
Then there exist an $\e>0$, depending only on $\alpha, \beta$ 
and  the potential $V$,
and a constant $C$ depending on $O$ such that
\begin{align}\label{betaeq}
 \Bigg| \E^{\mu_V} & O\Big( (N\varrho_k^V) (x_k-x_{k+1}), (N\varrho_k^V) (x_{k}-x_{k+2}), \ldots , 
(N\varrho_k^V) (x_k - x_{k+n})\Big) \\ \nonumber
  &  - \E^{\mu_{G}} 
  O\Big( (N\varrho_m) (x_m-x_{m+1}), (N\varrho_m) (x_{m}-x_{m+2}), \ldots , 
(N\varrho_m)(x_m - x_{m+n})\Big)
 \Bigg| 
  \le CN^{-\e}
\end{align}
for any $k,m\in \llbracket \al N, (1-\al)N\rrbracket$ and
for any sufficiently large   $N\ge N_0$, where $N_0$ depends
on $V$, $\beta$, as well as on $n$ and $\alpha$.
In particular, the distribution of the rescaled gaps
 w.r.t. $\mu_V$ 
does not depend on the index $k$ in the bulk.
\end{theorem}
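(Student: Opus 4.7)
The plan is to establish gap universality for $\mu_V$ by interpolating to the Gaussian reference measure via Dyson Brownian motion and exploiting a De Giorgi--Nash--Moser type H\"older regularity for the associated discrete parabolic equation, combined with the averaged bulk universality of Theorem~\ref{bulkbeta}. The central challenge is that Theorem~\ref{bulkbeta} only controls correlation functions after an $N^{-1+\xi}$ average in the energy, whereas \eqref{betaeq} concerns a single gap with a fixed label; we need a genuine regularity mechanism to turn an averaged statement into a pointwise one.

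First I would invoke rigidity: under the regularity and analyticity assumptions on $V$, the particles satisfy $|x_k - \gamma_{k,V}| \le N^{-1+\xi}$ with overwhelming probability for $k \in \llbracket \alpha N, (1-\alpha)N\rrbracket$, which provides the a priori estimates needed in what follows and localizes the problem to configurations close to the classical locations. Next, consider the $\beta$-DBM for which $\mu_V$ is stationary. Couple two stationary trajectories $\b x(t)$ and $\b x'(t)$ under the same Brownian motions, and set $u_k(t) := N\varrho_k^V(x_k(t)-x_k'(t))$. A direct computation shows that $u$ solves the linear discrete parabolic equation
\begin{equation*}
   \partial_t u_k(t) \;=\; \sum_{\ell\neq k} B_{k\ell}(t)\,\bigl(u_\ell(t) - u_k(t)\bigr),
   \qquad B_{k\ell}(t) \;=\; \frac{\beta}{2N\,\bigl(x_k(t)-x_\ell(t)\bigr)^2},
\end{equation*}
where, thanks to rigidity along both trajectories, the coefficients $B_{k\ell}(t)$ are uniformly comparable to $|k-\ell|^{-2}$ in the bulk.

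The heart of the argument is then a De Giorgi--Nash--Moser H\"older regularity estimate for this discrete, long-range, random-coefficient parabolic equation. The conclusion I aim for is that after time $t_0 = N^{-1+\delta}$, the map $k \mapsto u_k(t_0)$ is H\"older continuous on the mesoscopic scale $k \mapsto k + N^{\xi'}$, with a decay factor $N^{-\epsilon}$ in the difference. Applied to two copies of $\mu_V$ this says: the gap observable $F_k(\b x) := O\bigl((N\varrho_k^V)(x_k-x_{k+1}),\ldots,(N\varrho_k^V)(x_k-x_{k+n})\bigr)$ satisfies $|\E^{\mu_V} F_k - \E^{\mu_V} F_{k'}| \le CN^{-\epsilon}$ for $|k-k'| \le N^{\xi'}$. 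This is the step I expect to be the main obstacle: the equation is on a random graph, its coefficients are random and only uniformly elliptic after the rigidity input, and one must implement the Nash or De Giorgi iteration (energy estimates, Sobolev inequality, decay of oscillation) in this discrete long-range setting while keeping track of the correct scaling in $N$.

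Finally I would combine the regularity output with Theorem~\ref{bulkbeta}. Averaging the pointwise statement over $k\in \llbracket k_0, k_0+N^{\xi'}\rrbracket$ is equivalent to an energy average of the correlation functions, which by Theorem~\ref{bulkbeta} asymptotically matches the same averaged gap statistics for the Gaussian $\mu_G$ at an index $m$ chosen so that $\varrho_m = \varrho_{k_0}^V$. Applying the same H\"older regularity to $\mu_G$ removes the averaging on the Gaussian side. Chaining these comparisons yields \eqref{betaeq} for arbitrary bulk indices $k,m$, with a loss of $N^{-\epsilon}$ for some $\epsilon > 0$ depending on $\alpha$, $\beta$ and $V$.
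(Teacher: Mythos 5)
Your proposal shares a key tool with the paper --- a De Giorgi--Nash--Moser H\"older estimate for a discrete, long-range parabolic equation whose coefficients $(x_i-x_j)^{-2}$ are controlled by rigidity and level repulsion --- and the equation you write down is essentially the paper's \eqref{ve2}. But the way you plan to use it does not close, and the paper's route is genuinely different. In the paper, Theorem~\ref{bulkbeta} is not used at all in the proof of Theorem~\ref{thm:beta}. Instead, one conditions on a window of $\cK$ consecutive particles to form the local measure $\mu_\by$ of \eqref{muyext} and compares $\mu_\by$ with potential $V$ to $\wt\mu_{\wt\by}$ with the Gaussian potential on the same configuration interval (Theorem~\ref{thm:local}). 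Differentiating along an interpolating family of local log-gases reduces the comparison to a covariance bound $|\langle h_0; O(x_p-x_{p+1})\rangle_\omega|\le K^{-\e}$, where $h_0=\sum_i(V_\by(x_i)-\wt V_{\wt\by}(x_i))$ is the external-potential difference (Theorem~\ref{cor}). This covariance is then written, by Helffer--Sj\"ostrand (Proposition~\ref{prop:repp}), as a time integral involving the solution $\bw$ of $\partial_s\bw=-\cA(s)\bw$ with the \emph{specific} initial data $\bw(0)=\nabla h_0$. Crucially, $|\partial_j h_0|\lesssim K^\xi/d(x_j)$ is small away from the window boundary, and the DGNM H\"older estimate (Theorem~\ref{holderg}), together with finite speed of propagation and $L^1\to L^\infty$ decay, converts this initial-data smallness into a bound on $w_p-w_{p+1}$ for bulk indices $p$, which is exactly what feeds into the covariance formula. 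The global statement then follows from conditioning rigidity to verify \eqref{Ex} and matching the two configuration intervals by scaling.

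The genuine gap in your plan is the step from ``$k\mapsto u_k(t_0)$ is H\"older'' to ``$|\E^{\mu_V}F_k-\E^{\mu_V}F_{k'}|\le CN^{-\e}$''. You never specify the initial condition $\bx(0)-\bx'(0)$ for the coupled difference, and if both $\bx$ and $\bx'$ are stationary for $\mu_V$ then $F_k(\bx(t_0))$ and $F_k(\bx'(t_0))$ have identical laws, so DGNM regularity of the \emph{coupled difference} $u_k$ in $k$ is a statement about the coupling, not about how the law of the $k$-th gap depends on $k$. The mechanism only produces an estimate once the initial data for the parabolic equation is chosen to encode a perturbation localized far from $p$ --- in the paper, $\nabla h_0$ plays this role, and its smallness in the bulk is what drives the final bound --- and the output must be fed into a covariance or Duhamel-type representation rather than a synchronization argument. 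Without this, the intermediate ``H\"older-in-index'' claim, which is the load-bearing step in your combination with Theorem~\ref{bulkbeta}, is not established.
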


We point out that Theorem~\ref{bulkbeta} holds for any $\beta>0$, but Theorem~\ref{thm:beta}
requires $\beta\ge 1$. Most likely this is only a technical restriction 
related to a certain condition in the De Giorgi-Nash-Moser regularity theory
that is the backbone of our proof.

\subsection{Some remarks on the general strategy and on related results}\label{sec:genrem}

\subsubsection{Strategy for the universality of correlation functions}\label{sec:str1}

The proof  of Theorem \ref{bulkWigner} 
   consists of  the following three steps,
discussed in Sections~\ref{sec:lsc}, \ref{sec:DBM} and \ref{sec:4mom}, respectively.
 This  three-step  strategy was first introduced in \cite{EPRSY}.

\bigskip 
\noindent
{\bf Step 1.}  {\it Local semicircle law and delocalization of eigenvectors:}
It states that the density of eigenvalues
is  given by
the semicircle law not only as a weak limit on macroscopic scales \eqref{sc}, but
also in a strong sense and down to short scales containing 
only $N^\e$ eigenvalues for all $\e> 0$.  
This will imply the  {\it rigidity of eigenvalues}, 
 i.e., that   the eigenvalues are near their classical location
  in the sense to be made 
clear in Section~\ref{sec:DBM}. 
We also obtain precise estimates on the matrix elements of the Green function
 which in particular imply complete delocalization of eigenvectors.

\bigskip 
\noindent
{\bf Step 2.}
{\it Universality for
Gaussian divisible ensembles:}  The Gaussian divisible ensembles are complex or real 
Hermitian matrices of the form 
$$
 H_t= e^{-t/2} H_0+ \sqrt {1 - e^{-t}} U,
$$
 where $H_0$ is a Wigner matrix and
$U$ is an independent GUE/GOE matrix. The parametrization of  $H_t$ reflects that $H_t$ is most
 conveniently obtained by an Ornstein-Uhlenbeck  process.
  There are two methods  and both methods imply 
 the bulk  universality of $H_t$ for $t = N^{-\tau}$ for the entire  range of $0< \tau<1$ with
  different estimates.

\begin{description}
\item[2a.] {\it Proposition 3.1 of \cite{EPRSY} which uses an extension of Johansson's formula \cite{J}.}
\item[2b.] {\it  Local ergodicity of the Dyson Brownian motion (DBM).}
\end{description}

The approach in 2a yields a slightly stronger estimate 
(no local averaging in the energy)
than  the approach in 2b,  but it works only in the complex Hermitian case.  
In these notes, we will focus on 2b.  As time evolves,
the eigenvalues of $H_t$ evolve according to a system of stochastic differential equations,
the Dyson Brownian motion. The distribution
of the eigenvalues of $H_t$ will be written as $f_t\mu$, where $\mu$ is the equilibrium 
measure \eqref{expli28}.  We will study the evolution equation $\pt_t f_t=\cL f_t$,
where $\cL$ is the generator to the Dirichlet form $\int |\nabla f|^2\rd\mu$.
 As time goes to infinity, $f_t$ converges to constant, i.e. 
to equilibrium. The key technical question is the speed to local equilibrium.

\medskip 
\noindent
{\bf Step 3.}  {\it Approximation by Gaussian divisible ensembles:}
 It is a simple 
 density argument  in the space
of matrix ensembles which 
shows that for any probability distribution of the matrix
 elements there exists a Gaussian divisible distribution
with a small Gaussian component, as in Step 2, such 
that the two  associated Wigner ensembles
have asymptotically identical local eigenvalue statistics. 
 The first implementation  of this approximation scheme was via  a reverse heat flow 
argument  \cite{EPRSY}; it was later replaced by  the
  {\it Green function comparison theorem} \cite{EYY} that was
motivated by the four moment matching condition of \cite{TV}.

\medskip

\noindent
The proof of Theorem \ref{bulkbeta}  consists of  the following two steps
that will be presented in Sections~\ref{beta} and \ref{sec:loceq}.

\medskip
\noindent
{\bf Step 1.  Rigidity of eigenvalues.}  
This establishes that the location of the eigenvalues  are not too far
from their classical locations  $\gamma_{j,V}$ determined by the equilibrium density $\varrho_V$, see
\eqref{defgammagen}. At this stage the analyticity of $V$ is necessary
since we make use of the loop equation from Johansson \cite{Joh} and Shcherbina \cite{Sch}.

\medskip
\noindent
{\bf Step 2.  Uniqueness of local Gibbs measures with logarithmic interactions.}
With the precision of eigenvalue location estimates from the Step 1 as an input,
the eigenvalue spacing distributions are
 shown to be  given by the corresponding 
Gaussian ones. (We will take the uniqueness of the spacing
distributions as our definition of the uniqueness of Gibbs state.)

\medskip 

There are several similarities and differences  between the proofs of Theorem~\ref{bulkWigner} and \ref{bulkbeta}.
Both start with rigidity estimates on eigenvalues and then
 establish that the local spacing distributions are the same
 as in the Gaussian cases. 
The Gaussian divisible ensembles, which play a  key role in our theory for noninvariant ensembles, 
are  completely absent for invariant ensembles.
The key connection between the two methods, however,  
is the usage of DBM (or its analogue) in the Steps 2. In Section~\ref{sec:DBM}, we will 
first  present this idea.

\subsubsection{Strategy for gap universality}

The proofs of Theorems~\ref{thm:sg} and \ref{thm:beta} require several new ideas. 
The focus is to analyze the local conditional measures $\mu_\by$ and $f_{t,\by}\mu_\by$
 instead of the equilibrium measure $\mu$ and the DBM evolved measure $f_t\mu$.
They are obtained by fixing all but $\cK$ consecutive points, denoted by $\by$.
The local measures are Gibbs measures on $\cK$ points, denoted by $\bx$, that are confined
to an interval $J=J_\by$ determined by the boundary points of $\by$.
The external potential, $V_\by$, of the local measure contains not only the external potential $V$ from $\mu$,
but also the interactions between $\bx$ and $\by$. 

The first step is again to establish rigidity, but this time with respect to 
the conditional measures $\mu_\by$ and  $(f_t\mu)_\by=:f_{t,\by}\mu_\by$,
at least for most boundary conditions $\by$.
Due to the logarithmic interactions, $V_\by$ is not
analytic any more and the loop equation is not available, but the rigidity information can
still be extracted from the rigidity with respect to the global measure with some
additional arguments.

In the second step, which is the key part of the argument, we establish the universality of the gap distribution
w.r.t $\mu_\by$ by interpolating between $\mu_\by$ and $\mu_{\wt\by}$ with two different boundary conditions $\by$ and $\wt\by$.
This amounts to estimating the correlation between a gap observable, say $O(x_i-x_{i+1})$, and
$V_\by-V_{\wt\by}$. The correlation between particles in log-gases  decay
only logarithmically, i.e.  extremely 
slowly:
\be\label{corrr}
  \frac{\langle x_i ; x_j \rangle}{\sqrt{\langle x_i ; x_i \rangle\langle x_j ; x_j \rangle}} \sim \frac{1}{\log |i-j|}
\ee
at least if $i,j$ are far from the boundaries. Here $\langle\cdot \; ; \; \cdot \rangle$ denotes the
covariance with respect to $\mu_\by$.  The key observation is that correlation between a {\it gap} $x_i-x_{i+1}$
and a particle $x_j$
decays much faster
\be
  \frac{\langle x_i -x_{i+1} ; x_j \rangle}{\sqrt{\langle x_i -x_{i+1}; x_i-x_{i+1}
 \rangle\langle x_j ; x_j \rangle}} \sim \frac{1}{|i-j|},
\label{gappart}\ee
because it is essentially the derivative in $i$ of \eqref{corrr}. The decay of the gap-gap correlation
is even faster. 

While the formulas  \eqref{corrr}--\eqref{gappart} are plausible, their rigorous proof
is extremely difficult due to the very strong correlations in $\mu_\by$. We are able
to prove a much weaker version of \eqref{gappart}, practically a decay of
order $|i-j|^{-\e}$ for some small $\e>0$, which is sufficient for our purposes.
Even the proof of this weaker decay requires quite heavy tools.

We start with a classical observation by Helffer and Sj\"ostrand
\cite{HS} that the
covariance of any two observables $f,g$ with respect to
 a Gibbs measure $\mu = \exp(-\cH(\bx))\rd \bx$ can
be expressed as
\be\label{HS}
   \langle f(\bx); g(\bx)\rangle_\mu = \int_0^\infty \langle \bh_t(\bx), \nabla g(\bx)\rangle_\mu \rd t,
  \qquad \partial_t \bh_t = -(\cL + \cH'')\bh_t, \quad \bh_0= \nabla f, 
\ee
where $\cL\ge0$ is the generator to the Dirichlet form $\int |\nabla f|^2\rd\mu$
and $\cH''$ is the Hessian of the Hamiltonian.
The generator $\cL$
in the heat equation in \eqref{HS} creates a time dependent random environment
$\bx (t)$ that makes the matrix entries $(x_i-x_j)^{-2}$ of $\cH''$  time dependent. 
The solution $h_t$ to the equation in \eqref{HS} can  be thus
 represented as a random walk in a time dependent random environment, where the
jump rate from site $i$ to $j$ is given by  $(x_i(t)-x_j(t))^{-2}$
at time $t$. On large scales and for typical realizations of $\bx(t)$,
this jump rate is close to a discretization of the $\sqrt{-\Delta}$ operator.
A  discrete version of Di Giorgi-Nash-Moser partial regularity theory 
\cite{C} then guarantees
that the neighboring components of $\bh_t$ are close, which renders the
covariance $\langle \bh_t(\bx), \nabla g(\bx)\rangle_\mu$ small, assuming that
$g$ is a function of $x_i-x_{i+1}$. In more general terms, the 
correlation decay \eqref{gappart} with $|i-j|^{-\e}$ is equivalent to 
the H\"older regularity a discrete parabolic PDE with random coefficients. 
This approach has a considerable potential to study
log-gases since it connects the problem with one of the deepest
phenomena in PDE.

Finally, in the third step, we pass the information on the universality of the gap w.r.t. 
local measures to the global ones. For the invariant ensemble this step is fairly straighforward,
while for the Wigner ensemble we need to use an approximation step similar to Step 3
in Section~\ref{sec:str1}.

\subsubsection{Historical remarks}\label{sec:hist}

The method of the proof of Theorem \ref{bulkWigner} is extremely general
and the result holds for a much larger class of 
 matrix ensembles with independent
entries. Adjacency matrices of the Erd{\H o}s-R\'enyi graphs
are also covered as long as the matrix is not too sparse,
namely more than $N^{2/3}$ entries of each row are non-zero
on average \cite{EKYY1, EKYY2}.
Although Theorem \ref{bulkWigner} in its current form was proved in \cite{EKYY2},
the key ideas have been developed through several important 
steps in \cite{EPRSY, ESY4, EYY, EYY2, EYYrigi}. 
 In particular, the Wigner-Dyson-Gaudin-Mehta (WDGM)  conjecture
for complex Hermitian matrices  in the form of \eqref{pointwise} was first proved in Theorem 1.1 of \cite{EPRSY}.
This result holds  whenever the distributions of the matrix elements are smooth.   The smoothness requirement 
for \eqref{pointwise} was  
 partially removed in \cite{TV} and completely removed in 
\cite{ERSTVY} but only in the averaged convergence sense \eqref{avg}.
 For a general distribution \eqref{pointwise} was proved in Theorem 5 in \cite{TV5}. 
Although the proof in \cite{TV5} took   a  slightly different path, 
this generalization  is an immediate  corollary  of  previous 
results \cite{EY}.
These arguments are restricted to the complex Hermitian case since they still use some explicit formula.

The  WDGM conjecture 
for real symmetric matrices in the averaged form of \eqref{avg} was resolved  in \cite{ESY4}
(a special case, under a restrictive third moment matching condition, was treated in \cite{TV}).
 In \cite{ESY4},  
a novel idea based on Dyson Brownian motion was introduced. The most difficult case, 
the real symmetric Bernoulli matrices, was solved in \cite{EYY2}, where a ``Fluctuation Averaging 
 Lemma" (Theorem \ref{thm: averaging} of the current paper)  
exploiting cancellation of matrix elements of the Green function was first introduced.
A more detailed historical review on  Theorem \ref{bulkWigner}  was given in Section 11 of \cite{EYBull}.

For $\beta=2$, Theorem~\ref{bulkbeta} was proved for very general potentials, the best results
for $\beta=1,4$ \cite{DG, KS, Sch} are still restricted to analytic $V$ with additional conditions. 
Prior to Theorem~\ref{bulkbeta} there was no result for general $\beta$, except for the Gaussian case \cite{VV}.

Given the historical importance of the Wigner surmise, it is somewhat surprising
that single gap universality did not receive much attention until very recently. This
is  probably because our understanding of the Wigner-Dyson-Gaudin-Mehta
 universality became sufficiently sophisticated only in the last few years
to realize the subtle difference between {\it fixed energy} and {\it fixed label}
universality. In fact,  even the GUE case was not known
until the very recent paper by Tao \cite{Taogap}. In this work, the complex  Hermitian Wigner
case was also covered under the condition that the distribution
matches that of the GUE to fourth order. Theorem~\ref{thm:sg} is considerable
more general, as it applies to any symmetry classes and does not require moment matching.
Finally, the single gap universality of the invariant
ensembles has not been considered before Theorem~\ref{thm:beta}.

\subsubsection{What will not be discussed}

In these lecture notes we focus on the four universality results, Theorem~\ref{bulkWigner}--\ref{thm:beta},
and the necessary background material. There are many related questions on random matrix universality and
several of them can be studied with the methods we present here. Here we just list them
and give a few relevant references.

\begin{itemize}
\item  Edge universality for Wigner matrices.
See  Section 9 of \cite{EYBull} for a summary and also the recent paper \cite{LY} that
gives the  the optimal moment condition.
\item Universality of eigenvectors. See \cite{KY}.
\item Universality for sample covariance matrices. See \cite{ESYY, PY1, PY2}.
\item Sparse matrices and adjancency matrices of Erd{\H o}s-R\'enyi graphs. See Section 10  of \cite{EYBull}.
\end{itemize}

\subsubsection{Structure of the lecture notes}

A large part of presentation in these lecture notes is borrowed from other papers and
reviews written on the subject \cite{Etucs, EYBull, EKY3} and sometimes whole paragraphs
of the original articles are verbatim taken over. The overlap is especially large
with the review paper \cite{EYBull}; Sections~\ref{sec:DBM}--\ref{sec:loceq} on the Dyson Brownian
motion, the Green function comparison theorem and on the analysis of the $\beta$-ensemble are repeated
 without much changes.
The local semicircle law (Section~\ref{sec:lsc}) 
is presented here more generally than in \cite{EYBull}, following the recent paper \cite{EKY3}. 
For pedagogical reasons,  we will give  the proof in a simplified form
in Section~\ref{sec:nogap} and we only comment on the general proof in Section~\ref{sec:withgap}.
Sections~\ref{sec:mat}--\ref{sec:appl}  cover new results on random band matrices based upon the recent work \cite{EKYY3}. 
Section~\ref{sec:sg} presents an extensive outline of  the proofs of Theorems~\ref{thm:sg} and
\ref{thm:beta} on the single gap universality following the very recent paper \cite{EYsinglegap}.

\smallskip

We will use the convention that $C$ and $c$ denote generic positive constants whose
actual values are irrelevant and may change from line to line. For two $N$-dependent
quantities $A_N$ and $B_N$ we use the notation $A_N\asymp B_N$ to express that
$c\le A_N/B_N\le C$.
\medskip

{\it Acknowledgement.}
 The results in these lecture notes were  obtained in collaboration
with Horng-Tzer Yau, Benjamin Schlein, Jun Yin, Antti Knowles and Paul Bourgade and in some work, also with
Jose Ramirez and Sandrine P\'eche.
 This article  reports the joint progress 
with these authors.

\section{Local semicircle law for general Wigner-type matrices}\label{sec:lsc}

\subsection{Setup and the main results}

Let $(h_{ij} \, : \, i \leq j)$ be a family of independent, complex-valued random variables satisfying
$\E h_{ij} = 0$  and $h_{ii} \in \R$ for all $i$.
For $i > j$ we define $h_{ij} \deq \bar h_{ji}$, and denote by $H = (h_{ij})_{i,j = 1}^N$
 the $N \times N$ matrix with entries $h_{ij}$. By definition, $H$ is Hermitian: $H = H^*$.
 (Note that this setup also includes the case of a real symmetric matrix $H$.)
Such ensembles will be called {\it general Wigner-type matrices}. Note that we allow for the matrix elements
having different distributions. This class of matrices is a natural generalization of the
{\it standard real symmetric Wigner matrices} for which $h_{ij}\in R$ are identical distributed,
and the {\it standard complex Hermitian Wigner matrices} for which the off-diagonal
elements $h_{ij}\in \C$ are identically distributed and the diagonal elements
$h_{ii}\in \R$ have their own, but still identical distribution.

The fundamental data of the model is the $N\times N$ matrix of variances $S=(s_{ij})$, where
$$
   s_{ij} : = \E \, |h_{ij}|^2.
$$
We introduce the parameter $M \;\deq\; \big[\max_{i, j} s_{ij}\big]^{-1}$
that expresses the maximal size of $s_{ij}$:
\begin{equation} \label{s leq W}
s_{ij} \;\leq\; M^{-1}
\end{equation}
for all $i$ and $j$. We regard $N$ as the fundamental parameter and $M=M_N$ as a function of $N$.
\begin{equation} \label{lower bound on W}
N^\delta \;\leq\; M \;\leq\; N
\end{equation}
for some fixed $\delta > 0$. 
 We assume that $S$ is (doubly) stochastic:
\begin{equation} \label{S is stochastic}
\sum_j s_{ij} \;=\; 1
\end{equation}
for all $i$. 
For standard Wigner matrices, $h_{ij}$ are identically distributied, hence
 $s_{ij}=\frac{1}{N}$ and  $M=N$. In this presentation,  we allow for the matrix elements
having different distributions but independence (up to the Hermitian symmetry)
is always assumed.

\begin{example}\label{ex}  Random band matrices are characterized by
translation invariant variances of the form
\be\label{bandvar}
   s_{ij} = \frac{1}{W} f\Big(\frac{|i-j|_N}{W}\Big)
\ee
where $f$ is a smooth, symmetric probability density on $\bR$, $W$ is a large parameter,
called the {\it band width}, 
and $|i-j|_N$ denotes the periodic distance on the discrete torus $\bb T$ of length $N$.
The generalization in higher spatial dimensions is straighforward, in this case
the rows and columns of $H$ are labelled by a discrete $d$ dimensional torus ${\bb T}_L^d$ of length $L$
with $N=L^d$. 
\end{example}

For convenience we assume that the normalized entries 
\be\label{def:zeta}
\zeta_{ij}: = s_{ij}^{-1/2}h_{ij}
\ee
 have a polynomial
decay of arbitrary high degree, i.e. for all $p \in \N$ there is a constant $\mu_p$ such that
\begin{equation} \label{polydecay}
\E \abs{\zeta_{ij}}^p \;\leq\; \mu_p
\end{equation}
for all $N$, $i$, and $j$. We make this assumption to streamline notation, but in fact, 
 our results hold, with the same proof, provided \eqref{polydecay} is valid for some large but fixed $p$.
 If we  strengthen it  to uniform subexponential decay, \eqref{subexp},
then certain estimates will become stronger.
In this paper we  work with \eqref{polydecay}  for simplicity, but we remark that
most of our previous work used \eqref{subexp}.

Throughout the following we use a spectral parameter $z \in \C$ satisfying $\im z > 0$. We shall use the notation
\begin{equation*}
z \;=\; E + \ii \eta\
\end{equation*}
without further comment. The eigenvalues of $H$ in the $N\to \infty$ limit are distributed
by the celebrated Wigner semicircle law,
\be
\varrho(x)= \varrho_{sc}(x) \;\deq\; \frac{1}{2 \pi} \sqrt{(4 - x^2)_+}\,.
\ee
and its Stieltjes transform at spectral parameter $z$ is defined by
\begin{equation} \label{definition of msc}
m(z) \;\deq\; \int_\bR \frac{\varrho(x)}{x - z} \, \rd x\,.
\end{equation}
To avoid confusion, we remark that $m$ was denoted by $m_{sc}$ and $\varrho$ by $\varrho_{sc}$ in most of our previous papers.
In this section we drop the subscript referring to ``semicircle''.
 It is well known that the Stieltjes transform $m$ is the unique solution of
\begin{equation} \label{identity for msc}
m(z) + \frac{1}{m(z)} + z \;=\; 0
\end{equation}
with $\im m(z) > 0$ for $\im z > 0$. Thus we have
\begin{equation} \label{explicit m}
m(z) \;=\; \frac{-z + \sqrt{z^2 - 4}}{2}.
\end{equation}

We define the \emph{resolvent} of $H$ through
\begin{equation*}
G(z) \;\deq\; (H - z)^{-1}\,,
\end{equation*}
and denote its entries by $G_{ij}(z)$.
The Stieltjes transform of the empirical spectral measure 
$$
\varrho_N(\rd x)=\frac{1}{N}\sum_\al \delta(\lambda_\al-x)\rd x
$$
for the eigenvalues $\lambda_1\le \lambda_2\le\ldots \le \lambda_N$ of $H$ is
\begin{equation}\label{mNdef}
m_N(z) \;\deq\; \int_\bR \frac{\varrho_N(\rd x)}{x - z} = \frac{1}{N} \tr G(z)\,.
\end{equation}

An important parameter of the model is 
\begin{equation}\label{def:rho}
\Gamma(z) \; \deq \; \Big\| \frac{1}{1-m^2(z) S} \Big\|_{\infty\to\infty}.
\end{equation}
Note that $S$, being a stochastic matrix, satisfies $-1\le S\le 1$, and 1 is
an eigenvalue with eigenvector ${\bf e} = N^{-1/2}(1,1,\ldots 1)$, $S{\bf e} = {\bf e}$.
We assume that $1$ is simple for convenience.
Another important parameter is
\begin{equation}\label{def:rhohat}
\wt\Gamma(z) \; \deq \; \Bigg\| \frac{1}{1-m^2(z) S} \bigg|_{{\bf e}^\perp }\Bigg\|_{\infty\to\infty},
\end{equation}
i.e.\ the norm of $1-m^2S$ restricted to the subspace orthogonal to the constants.
Clearly $\wt\Gamma \le \Gamma$.

For standard Wigner matrices we easily obtain that
\be\label{Gammawigner}
   \Gamma(z) = \frac{1}{|1-m^2(z)|} \asymp \frac{1}{\sqrt{ \kappa_E +\eta}}, 
\qquad \wt\Gamma(z) =1,
\ee
where $\kappa_E:= \big| |E|-2\big|$ denotes the distance of $E$ to the
spectral edges.
For generalized Wigner matrices (Definition~\ref{def:genwig})
essentially the same relations hold:
\be\label{Gammawigner1}
   \Gamma(z) = \frac{1}{|1-m^2(z)|} \asymp \frac{1}{\sqrt{ \kappa_E +\eta}}, 
\qquad \wt\Gamma(z) \asymp 1.
\ee

The following definition introduces a notion of a high-probability bound that is suited for our purposes.

\begin{definition}[Stochastic domination]\label{def:stocdom}
Let
\begin{equation*}
X = \pb{X^{(N)}(u) \, : \, N \in \N, u \in U^{(N)}} \,, \qquad
Y = \pb{Y^{(N)}(u) \, : \, N \in \N, u \in U^{(N)}}
\end{equation*}
be two families of nonnegative random variables, where $U^{(N)}$ is a possibly $N$-dependent parameter set. 
We say that $X$ is \emph{stochastically dominated by $Y$, uniformly in $u$,} if for all (small)
 $\e > 0$ and (large) $D > 0$ we have
\begin{equation*}
\sup_{u \in U^{(N)}} \P \qB{X^{(N)}(u) > N^\e Y^{(N)}(u)} \;\leq\; N^{-D}
\end{equation*}
for large enough $N\ge N_0(\e, D)$. Unless stated otherwise, 
throughout this paper the stochastic 
domination will always be uniform in all parameters apart from the parameter $\delta$ in \eqref{lower bound on W}
 and the sequence of constants $\mu_p$ in \eqref{polydecay}; thus, $N_0(\e, D)$ also depends on $\delta$ and $\mu_p$.
If $X$ is stochastically dominated by $Y$, uniformly in $u$, we use the notation $X \prec Y$. Moreover, if
 for some complex family $X$ we have $\abs{X} \prec Y$ we also write $X = O_\prec(Y)$.
\end{definition}

For example, using Chebyshev's inequality and \eqref{polydecay} one easily finds that 
\begin{equation}\label{hsmallerW}
h_{ij} \;\prec\; (s_{ij})^{1/2} \;\prec\; M^{-1/2}, 
\end{equation}
uniformly in $i$ and $j$, so that we may also write $h_{ij} = O_\prec((s_{ij})^{1/2})$.
An easy exercise shows that
the relation $\prec$ satisfies the familiar algebraic rules of order relations, e.g.
such relations can be added and multiplied.
The definition of $\prec$ with the polynomial factors $N^{-\e}$ and $N^{-D}$ are taylored 
for the assumption \eqref{polydecay}. We remark that if \eqref{subexp} is assumed, a stronger
form of stochastic domination can be introduced but we will not pursue this direction here.

Since 
$$
  \lim_{\eta\to 0^+} \im m(E+i\eta) = \pi \, \varrho(E),
$$
the convergence of the Stieltjes transform $m_N(z)$ to $m(z)$ as $N\to \infty$ will show that the
empirical local density of the eigenvalues around the energy $E$ in a window of 
size $\eta$ converges to the semicircle law $\varrho(E)$. Therefore the key
task is to control $m_N(z)$ for small $\eta=\im z$. 

 We now define
the lower threshold for $\eta$ that depends on the energy  $E \in [-10,10]$
\begin{equation} \label{def wt eta E}
\wt \eta_E \;\deq\; \min \hBB{\eta \;\col\; \frac{1}{M \eta} \leq
 \min \hbb{\frac{M^{-\gamma}}{\wt \Gamma(z)^3} \,,\, 
\frac{M^{-2 \gamma}}{\wt \Gamma(z)^4 \im m(z)}} \text{ for all }
 z \in [E + \ii \eta, E + 10 \ii] }\,.
\end{equation}
Here $\gamma>0$ is a parameter that can be chosen arbitrarily small;
for all practical purposes the reader can neglect it.
For generalized Wigner matrices,  $M\asymp N$, from \eqref{Gammawigner1} we have 
$$
   \wt\eta_E \le CN^{-1+2\gamma},
$$
i.e. we will get the local semicircle law on the smallest possible
 scale $\eta\gg N^{-1}$, modulo a 
polynomial correction with an arbitrary small exponent.
We remark that if we assume subexponential decay \eqref{subexp} instead
of the polynomial decay \eqref{polydecay}, then the small
polynomial correction can be replaced with a logarithmic correction factor.

Finally we define  our fundamental control parameter
\begin{equation}\label{def:Pi}
\Pi(z) \;\deq\; \sqrt{\frac{\im m(z)}{M \eta}} + \frac{1}{M \eta}\,.
\end{equation}

We can now state the main result of this section, which in this form
appeared in \cite{EKY3}.

\begin{theorem}[Local semicircle law] \label{thm: with gap}
Uniformly  in the energy $|E|\le 10$ and  $\eta\in [\wt\eta_E, 10]$
we have the bounds
\begin{equation}\label{Gijest 2}
\absb{G_{ij}(z) - \delta_{ij} m(z)} \;\prec\; \Pi(z)= \sqrt{\frac{\im m(z)}{M \eta}} + \frac{1}{M \eta}, \qquad z=E+\ii \eta
\end{equation}
uniformly in $i,j$,  as well as
\begin{equation}\label{m-mest 2}
\absb{m_N(z) - m(z)} \;\prec\; \frac{1}{M \eta}.
\end{equation}
\end{theorem}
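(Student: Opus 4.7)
The plan is to prove Theorem \ref{thm: with gap} via the standard Schur-complement/self-consistent equation approach, combined with a continuity (bootstrap) argument in the spectral parameter $\eta$. The novel features specific to general Wigner-type matrices are: (i) the stability of the self-consistent equation is controlled by the matrix norms $\Gamma(z)$ and $\wt\Gamma(z)$ rather than by the scalar $|1-m^2|^{-1}$ that suffices for standard Wigner matrices; and (ii) the improved bound $|m_N-m|\prec (M\eta)^{-1}$ requires a fluctuation averaging lemma exploiting cancellation in $\frac{1}{N}\sum_i(G_{ii}-m)$.

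First, I would write the Schur complement identity for diagonal entries,
\begin{equation*}
\frac{1}{G_{ii}(z)} \;=\; h_{ii} - z - \sum_{k,l\neq i} h_{ik}\,G^{(i)}_{kl}(z)\,h_{li},
\end{equation*}
where $G^{(i)}$ denotes the resolvent of the minor obtained by removing the $i$-th row and column. The quadratic form decomposes as a deterministic part $\sum_k s_{ik} G^{(i)}_{kk}$ plus a fluctuation $Z_i$. By the large deviation estimate for quadratic forms in independent variables and the Ward identity $\sum_k |G_{ik}|^2 = \eta^{-1}\im G_{ii}$, one obtains $Z_i\prec (\im m_N^{(i)}/(M\eta))^{1/2}$, provided a crude a priori bound on $G_{kk}$ is available. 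The minor resolvent identity $G^{(i)}_{kk}=G_{kk}-G_{ki}G_{ik}/G_{ii}$ then lets me replace $G^{(i)}_{kk}$ by $G_{kk}$ up to $O_\prec(\Pi^2)$. Writing $v_i := G_{ii}-m$ and using $m+m^{-1}+z=0$, the Schur identity rearranges to the linearized self-consistent equation
\begin{equation*}
[(1-m^2 S)v]_i \;=\; m^2\, \cE_i + m^2\, [Sv]_i \cdot v_i\,/\,(\text{stuff}) + \text{smaller},
\end{equation*}
where $\cE_i$ collects $Z_i$, $h_{ii}$ and the off-diagonal corrections; schematically $|\cE_i|\prec\Pi$. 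Off-diagonal entries $G_{ij}$ are treated by the identity $G_{ij} = -G_{ii}G^{(i)}_{jj}\sum_{k,l}h_{ik}G^{(ij)}_{kl}h_{lj}$ and similar large-deviation bounds, which reduces their control to the diagonal analysis.

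Next, I would invert the linear operator. Applying $\|(1-m^2S)^{-1}\|_{\infty\to\infty}=\Gamma(z)$ yields $\|v\|_\infty\prec \Gamma\,\Pi + \Gamma\,\|v\|_\infty^2$; a standard bootstrap then gives $\|v\|_\infty\prec\Pi$ as long as $\Gamma\cdot\Pi\ll 1$, which is precisely the regime defined by the threshold $\wt\eta_E$. Since the vector $\cE=(\cE_i)$ has an approximately centered mean (the leading $\mathbf e$-component cancels after averaging), the improved bound on $m_N-m = \frac{1}{N}\sum_i v_i$ uses $\wt\Gamma$ in place of $\Gamma$ and, more importantly, the Fluctuation Averaging Lemma of Erd\H os-Yau-Yin: for independent centered entries, cross-terms in $\bigl|\frac{1}{N}\sum_i Z_i\bigr|^{2p}$ vanish in expectation, yielding
\begin{equation*}
\absBB{\frac{1}{N}\sum_i \cE_i}\;\prec\;\Pi^2\;\asymp\;\frac{\im m}{M\eta} + \frac{1}{(M\eta)^2},
\end{equation*}
a square-root improvement over the pointwise bound. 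Combined with $|\im m|\le C(M\eta)\cdot\Pi^2$ in the bulk this gives $|m_N-m|\prec(M\eta)^{-1}$.

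Finally, I would close the argument by continuity in $\eta$. At $\eta=10$, trivial bounds $\|G\|\le \eta^{-1}$ yield $|G_{ij}-\delta_{ij}m|\le C$, so the a priori hypothesis needed to run the large-deviation step holds. Decreasing $\eta$ continuously down to $\wt\eta_E$, the bound $\|v\|_\infty\prec\Pi$ cannot jump above $\Pi$ without passing through a regime where the quadratic self-improvement would contradict continuity; this is the usual ``gap in the forbidden regime'' argument. The main obstacle, in my view, is the Fluctuation Averaging Lemma: obtaining the $(M\eta)^{-1}$ rate requires a careful combinatorial analysis of high moments of $\frac{1}{N}\sum_i\cE_i$ that crucially uses independence of the $h_{ij}$ and the structure of $\wt\Gamma$; extending it to the general variance profile (not just $s_{ij}=1/N$) and near the spectral edge is where most of the technical work lies.
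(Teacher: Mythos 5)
Your proposal follows the paper's strategy quite closely: Schur complement and large deviation estimates, the linearized vector self-consistent equation $\bv=m^2 S\bv+\cE$ inverted via $\Gamma=\|(1-m^2S)^{-1}\|_{\infty\to\infty}$, a continuity-in-$\eta$ bootstrap starting from the trivial bound at $\eta=10$, and a fluctuation averaging step to upgrade $|m_N-m|$ from $\Pi$ to $(M\eta)^{-1}$. This is exactly the scheme the paper carries out in Propositions~\ref{prop: rough bound 1}, \ref{prop: optimal simple} and \ref{lem: avg Upsilon}, and your identification of the fluctuation averaging lemma as the hard technical point is correct.

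There is, however, a gap in how you handle the threshold. You write that the bootstrap closes ``as long as $\Gamma\cdot\Pi\ll 1$, which is precisely the regime defined by the threshold $\wt\eta_E$.'' This is not so: the condition $\Gamma\Pi\ll 1$ defines the larger threshold $\eta_E$ of \eqref{def eta E}, not $\wt\eta_E$ of \eqref{def wt eta E}. Near the spectral edges these differ dramatically (for generalized Wigner, $\Gamma\asymp(\kappa_E+\eta)^{-1/2}$ while $\wt\Gamma\asymp1$), so your sketch as written only proves the theorem on the smaller domain $\eta\ge\eta_E$. Reaching $\wt\eta_E$ requires a further structural step that your remark about the ``leading $\mathbf e$-component canceling'' gestures at but does not actually carry out: the self-consistent equation must be solved separately on $\ker(1-S)=\mathrm{span}(\mathbf e)$ and on $\mathbf e^\perp$. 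On $\mathbf e^\perp$, the norm $\wt\Gamma$ controls $v_i-[v]$ (and here the fluctuation averaging applied to $\sum_k s_{ik}(v_k-[v])$ enters), but the $\mathbf e$-component cannot simply be inverted since $(1-m^2)^{-1}$ is large near the edge; instead one obtains a scalar \emph{quadratic} equation $(1-m^2)[v]-m^{-1}[v]^2=\text{error}$ for $\Theta=|[v]|$, whose spurious root $[v]\approx(1-m^2)/m^3$ must be ruled out by a separate continuity argument, running the dichotomy for $\Lambda$ and $\Theta$ in tandem. Without this decomposition and the quadratic analysis, the bound does not extend to $\eta\ge\wt\eta_E$.
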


We point out two remarkable features of these bounds.
The error term for the resolvent entries behaves essentially as $(M\eta)^{-1/2}$,
with an improvement near the edges where $\im m$ vanishes. 
 The error bound for the Stieltjes
transform, i.e. for the average of the diagonal resolvent entries, is one order better,
$(M\eta)^{-1}$, but without improvement near the edge. 

Various local semicircle laws have a long history.
For standard Wigner matrices (i.e. $M=N$ and $\wt\Gamma =1$),
 the optimal threshold for the smallest possible $\eta\gg 1/N$ has first been achieved
in \cite{ESY2} in the bulk after an intermediate result on scale $\eta\gg N^{-2/3}$ in \cite{ESY1}.
The first effective result near the edge was given in \cite{ESY3}. 
The optimal power $(N\eta)^{-1}$ for $m_N-m$ in the bulk has first been obtained in
\cite{EYY2} where the first version of the {\it fluctuation averaging mechanism} has appeared.
The optimal behavior near the edge was first derived in \cite{EYYrigi}.
The case of $M\ll N$ has first been studied in \cite{EYY} where the threshold $\eta\gg 1/M$ in the bulk 
spectrum, $|E|<2$, has been
achieved.  The optimal power $(M\eta)^{-1}$ is proved in \cite{EKY3}, where the technique
of \cite{EYY} was combined with the fluctuation averaging mechanism.
 The edge behavior, i.e. the
deterioration of the threshold $\wt\eta_E$ near the edge has also been extensively studied in \cite{EKY3}
and it led to the power ${-3}$ of $\wt \Gamma$ in the definition of $\wt\eta_E$,
but it is yet unclear whether this power is optimal.

In Section~\ref{sec:three} we demonstrate that all proofs of the local semicircle
law rely on some version of a self-consistent equation. At the beginning this was a scalar
equation for $m_N$. The  self-consistent vector equation for $v_i = G_{ii}-m$ (see Section~\ref{sec:vec}) 
first appeared in  \cite{EYY}. This allowed us
to deviate from the identical distributions for $h_{ij}$ and opened up the route
to estimates on individual resolvent matrix elements. Finally, the
self-consistent matrix equation for $\E |G_{xy}|^2$ first appeared in \cite{EKYY3}
and it yielded diffusion profile for the resolvent (Section~\ref{sec:appl}).

\medskip

We now list a few consequences of the local semicircle law.
It is an elementary property of the Stieltjes transform that
once $m_N \approx m$ is established for all spectral parameter $z$ with $\im z\ge \wt\eta_E$, then
 $\varrho_N$ and $\varrho$ coincide on scales larger than $\wt\eta_E$ 
near the energy $E$. This  means that
$\int f\varrho_N \to\int f\varrho$ for test functions on scale $\wt\eta_E$
i.e. $|f'|\ll 1/\wt\eta_E$.  In particular, if $m_N(z)\to m(z)$
uniformly on the half planes $\{ z \, : \; \im z\ge \e\}$ for
any fixed $\e$, then $\varrho_N$ converges to $\varrho$ weakly.
The bound \eqref{m-mest 2} asserts much more: it identifies $\varrho_N$
with $\varrho$ on scales of order $1/M$. In the standard Wigner case, $M=N$,
it is basically the optimal scale since below scale $1/N$ the empirical measure
$\varrho_N$ strongly fluctuates due to individual eigenvalues. 

Once the local density is identified, we can deduce results 
on the location of individual eigenvalues and on the counting function.
Here we formulate the corresponding statements only for the simpler case when $s_{ij}$ is comparable with
$1/N$ as they were stated in \cite{EYYrigi} (apart from the fact that in \cite{EYYrigi}
a subexponential decay was assumed).
  The precise results in the general case are somewhat more complicated and they 
can be found in \cite{EKY3}. 

 Let $\gamma_\al =\gamma_{\al,N}$ denote  the location of the $\al$-th point
under the semicircle law, i.e., $\gamma_\al$ is defined by
\be\label{def:gamma}
 N \int_{-\infty}^{\gamma_\al} \varrho(x) \rd x = \al, \qquad 1\leq \al\le N. 
\ee
We will call $\gamma_\al$ the {\it classical location} of the $\al$-th point.
Furthermore, for any real energy $E$, let
$$
   \fn_N(E) := \frac{1}{N}\, \# \big\{ \lambda_\al\le E\big\}, \qquad
n(E): = \int_{-\infty}^E \varrho(x)\rd x
$$
be the {\it empirical counting function} of the eigenvalues and its classical counterpart.

\begin{corollary}[Rigidity of eigenvalues and limit of the counting function]\cite[Theorem 2.2]{EYYrigi}
\label{cor:rig} For generalized Wigner matrices (Definition~\ref{def:genwig}) we have
\be\label{rigwig}
   |\lambda_\al- \gamma_\al|\prec \frac{1}{N^{2/3} \wh\al^{1/3}}, \qquad \wh\al: = \min\{ \al, N+1-\al\},
\ee
 uniformly in $\al\in\{1, 2,\ldots, N\}$.
Furthermore,
\be\label{count}
   |\fn_N(E)-n(E)|\prec \frac{1}{N}.
\ee
uniformly in $E\in \bR$.
\end{corollary}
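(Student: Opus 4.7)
The plan is to derive both \eqref{rigwig} and \eqref{count} from Theorem~\ref{thm: with gap} via Helffer--Sj\"ostrand functional calculus: first obtain the counting function bound, then extract rigidity of individual eigenvalues. For generalized Wigner matrices one has $M \asymp N$ and $\wt\Gamma \asymp 1$ by \eqref{Gammawigner1}, so \eqref{m-mest 2} reduces to
\[
|m_N(z) - m(z)| \prec \frac{1}{N\eta}, \qquad z = E + \ii\eta,
\]
uniformly for $|E| \le 10$ and $\eta \ge N^{-1+\gamma}$, any fixed $\gamma > 0$.

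\emph{Step 1 (Stieltjes transform to counting function).} I would use the standard Helffer--Sj\"ostrand representation
\[
\int f\, \rd(\varrho_N - \varrho) \;=\; \frac{1}{\pi}\int_{\C} \pt_{\bar z}\wt f(z)\, (m_N(z) - m(z))\, \rd A(z),
\]
with $\wt f(x + \ii y) = (f(x) + \ii y f'(x))\chi(y)$ an almost-analytic extension and $\chi$ a smooth cutoff of $[-1, 1]$. Apply this to a smoothing $f_{E,\eta_0}$ of $\ind{\lambda \le E}$ at transition scale $\eta_0 := N^{-1+\e}$, supported in $[-10, E + \eta_0]$. Splitting the $y$-integration into $|y| \ge 1$ (trivial, using decay of $\pt_{\bar z}\wt f$), $\eta_0 \le |y| \le 1$ (direct from $|m_N - m| \prec (N|y|)^{-1}$ together with $|\pt_{\bar z}\wt f| = O(|y||f''|)$), and $|y| \le \eta_0$ (integration by parts in $y$, using monotonicity of $y\,\im m_N(x + \ii y)$ to reduce to the previous region at $y = \eta_0$), each piece contributes a bound of order $N^{-1}$. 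The smoothing error is controlled by the number of eigenvalues within $\eta_0$ of $E$, which by \eqref{m-mest 2} is $\prec N\eta_0\, \im m(E + \ii\eta_0) \prec N^\e$. Hence $|\fn_N(E) - n(E)| \prec 1/N$ at each fixed $E$, and a union bound on an $N^{-C}$-net in $E$ combined with monotonicity of $\fn_N$ and $n$ upgrades this to the uniform bound \eqref{count}.

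\emph{Step 2 (counting function to rigidity).} Apply \eqref{count} at the random energy $E = \lambda_\al$: since $\fn_N(\lambda_\al) = \al/N = n(\gamma_\al)$ by definition, this gives $|n(\lambda_\al) - n(\gamma_\al)| \prec 1/N$. A preliminary application of \eqref{count}, combined with \eqref{Gijest 2} used to exclude eigenvalues more than $N^{-2/3+\gamma}$ past the spectral edges $\pm 2$, ensures that $\lambda_\al$ lies in a neighborhood of $\gamma_\al$ where $n$ can be linearized. Because $\gamma_\al$ sits at distance $\asymp (\wh\al/N)^{2/3}$ from the nearest endpoint $\pm 2$, one has $\varrho(\gamma_\al) = n'(\gamma_\al) \asymp (\wh\al/N)^{1/3}$, and inverting yields
\[
|\lambda_\al - \gamma_\al| \;\prec\; \frac{1}{N\,\varrho(\gamma_\al)} \;\asymp\; \frac{1}{N^{2/3}\,\wh\al^{1/3}},
\]
which is \eqref{rigwig}.

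\emph{Main obstacle.} The most delicate point is the contribution from the region $|y| \le \eta_0$ in the Helffer--Sj\"ostrand integral, where \eqref{m-mest 2} is not directly applicable; closing this at the optimal $1/N$ scale requires a careful integration by parts together with the spectral-theorem monotonicity of $y\,\im m_N(x + \ii y)$, and hinges on the choice $N\eta_0 \gg 1$. A secondary subtlety is recovering the sharp $\wh\al^{1/3}$ factor in \eqref{rigwig} near the edge, which relies on the a priori bound on extreme eigenvalues (an easy byproduct of Theorem~\ref{thm: with gap} at the edge) needed to justify linearizing $n$ in the regime $\wh\al = O(1)$.
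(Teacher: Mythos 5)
Your proposal takes exactly the route the paper indicates in the paragraph following the corollary: Helffer--Sj\"ostrand applied to smoothed spectral indicators to pass from the local semicircle law \eqref{m-mest 2} to the counting function bound \eqref{count}, then an inversion of $n$ near $\gamma_\al$, with the square-root edge behavior of $\varrho$ producing the $N^{-2/3}\wh\al^{-1/3}$ scaling. The paper delegates the technical details of the HS estimate to \cite{ERSY}, and your three-region split in $y$ (with the monotonicity of $y \mapsto y\,\im m_N(x+\ii y)$ controlling the $|y|\le \eta_0$ regime, and the a priori edge bound justifying the linearization for $\wh\al = O(1)$) is the standard way those details are closed, so the approach matches.
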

We remark that under the stronger decay condition \eqref{subexp} 
instead of \eqref{polydecay}, the $N^\e$ factors implicitly present in the
notation $\prec$ can be improved to logarithmic factors, see \cite{EYYrigi}.

Corollary~\ref{cor:rig}
is a simple consequence of the Helffer-Sj\"ostrand formula which 
  translates information on the  Stieltjes
transform of the empirical measure first to the counting function and
then to the locations of eigenvalues. 
The formula yields the representation 
\be
   f(\lambda) =\frac{1}{2\pi}\int_{\bR^2}
\frac{\partial_{\bar z} \wt f(x+iy)}{\lambda-x-iy} \rd x  \rd y 
=\frac{1}{2\pi}\int_{\bR^2}
\frac{iy f''(x)\chi(y) +i(f(x) + iyf'(x))\chi'(y) }{\lambda-x-iy} \rd x 
\rd y
\label{1*}
\ee
for any real valued $C^2$ function $f$
on $\R$, where $\chi(y)$ is any smooth cutoff function with bounded
derivatives and   
 supported in $[-1,1]$ with $\chi(y)=1$ for  $|y|\leq 1/2$. 
In the applications, $f$ will be a smoothed version of the
characteristic functions of spectral intervals so that $\sum_j f(\lambda_j)$
 counts eigenvalues in that interval. From \eqref{1*} we have
$$
   \frac{1}{N}\sum_j f(\lambda_j) = \frac{1}{2\pi}\int_{\bR^2}
\Big(iy f''(x)\chi(y) +i(f(x) + iyf'(x))\chi'(y) \Big)  m_N(x+iy) \rd x 
\rd y,
$$
and then $m_N$ can be approximated by $m(x+iy)$.
The details of the argument can be found in  \cite{ERSY}.
Once \eqref{count} is established, it is an elementary argument to
translate it into the rigidity of the eigenvalues \eqref{rigwig}. The
powers $2/3$ and $1/3$  in \eqref{rigwig} stem from the fact that $\varrho(x)$ has
a square root singularity near the spectral edges $x\approx \pm 2$,
therefore $n(x) \sim (x+2)^{3/2}_+$ for $x$ near $-2$.

Although Wigner's semicircle law and its local version only  concerns $m_N = \frac{1}{N}\tr G$, 
we remark that the resolvent matrix elements, $G_{ii}$ and $G_{ij}$ also carry important information.
For example a good bound on  $G_{ii}$ implies delocalization of the eigenvectors.
Indeed, by the spectral decomposition, we have
$$
    \im G_{ii}(z) = \eta \sum_\al \frac{|u_\al(i)|^2}{(\lambda_\al-E)^2 + \eta^2},
$$
where $\bu_\al = (u_\al(1), \ldots  u_\al(N))$ is the (normalized) eigenvector
belonging to the eigenvalue $\lambda_\al$. Choosing the energy $E$ in the $\eta$-vicinity of $\lambda_\al$, we
obtain $|u_\al(i)|^2 \le \eta  \im G_{ii}(z)$. Therefore, if $|G_{ii}(z)|\le C$ can be shown uniformly for
any $z$ with $\im z\ge \eta(N)$ for some $N$-dependent threshold $\eta(N)$, then we conclude
$$
 \max_\al \|\bu_\al\|_\infty^2 \le C\eta(N).
$$
In the Wigner case, the threshold $\eta(N)$ is almost $1/N$, thus we obtain the complete delocalization
of the eigenvectors:

\begin{corollary}
For the $\ell^2$-normalized eigenvectors $\bu_\al$, $\al=1,2\ldots, N$,
of the standard Wigner matrix, we have
\be\label{delocbound}
    \| \bu_\al\|_\infty \prec N^{-1/2}.
\ee
\end{corollary}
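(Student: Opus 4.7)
The plan is essentially the one sketched in the paragraph preceding the statement, combined with the local semicircle law (Theorem~\ref{thm: with gap}) specialized to the standard Wigner case, where $M=N$ and $\wt\Gamma \asymp 1$. The starting point is the spectral decomposition of the imaginary part of a diagonal resolvent entry: for $z=E+\ii\eta$ with $\eta>0$,
\begin{equation*}
\im G_{ii}(z) \;=\; \eta \sum_{\al=1}^N \frac{|u_\al(i)|^2}{(\lambda_\al-E)^2+\eta^2}.
\end{equation*}
Retaining only a single term in this nonnegative sum, one gets the pointwise bound $|u_\al(i)|^2 \le \eta(1+\eta^{-2}(\lambda_\al-E)^2)\,\im G_{ii}(z)$, so whenever $|E-\lambda_\al|\le \eta$ we obtain $|u_\al(i)|^2 \le 2\eta\,\im G_{ii}(z)$.

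Next, I would invoke Theorem~\ref{thm: with gap} for standard Wigner matrices. By \eqref{Gammawigner} and the discussion following \eqref{def wt eta E}, for $M=N$ any scale $\eta\ge N^{-1+2\gamma}$ with an arbitrarily small $\gamma>0$ lies above the threshold $\wt\eta_E$ for every $|E|\le 10$. The bound \eqref{Gijest 2} then yields
\begin{equation*}
|G_{ii}(z)-m(z)| \;\prec\; \sqrt{\frac{\im m(z)}{N\eta}}+\frac{1}{N\eta} \;\prec\; 1,
\end{equation*}
so $\im G_{ii}(z)\prec 1$ uniformly in $i$ and in $|E|\le 10$. Fixing $\eta=N^{-1+\gamma}$ this gives $|u_\al(i)|^2 \prec N^{-1+\gamma}$ for a single, deterministic choice of $E$ that happens to be within $\eta$ of $\lambda_\al$.

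To make the bound simultaneous in $\al$ and in $i$, I would introduce a deterministic grid of energies $\{E_k\}$ in $[-10,10]$ with spacing $\eta/2$, so that every random $\lambda_\al$ lies within $\eta$ of some $E_k$. Since the grid has only $O(\eta^{-1})=O(N^{1-\gamma})$ points, the polynomial nature of the error probability in Definition~\ref{def:stocdom} lets us take a union bound over all $k$ and all $i$ to conclude $\sup_{k,i}\im G_{ii}(E_k+\ii\eta)\prec 1$. Combined with $|u_\al(i)|^2\le 2\eta\,\im G_{ii}(E_k+\ii\eta)$ for a suitable $k=k(\al)$, this gives $\max_{\al,i}|u_\al(i)|^2\prec \eta = N^{-1+\gamma}$; since $\gamma>0$ is arbitrary the definition of $\prec$ absorbs the $N^{\gamma}$ factor and yields the claimed bound $\|\bu_\al\|_\infty\prec N^{-1/2}$. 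None of the steps presents a real obstacle; the only place requiring a little care is ensuring that the union bound is performed at the level of a deterministic grid before appealing to rigidity, so that we do not condition on the random locations of the eigenvalues.
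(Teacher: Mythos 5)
Your proof is correct and takes the same route as the paper's sketch preceding the corollary: spectral decomposition of $\im G_{ii}$, the bound $|u_\al(i)|^2 \le 2\eta\,\im G_{ii}$ for any $E$ within $\eta$ of $\lambda_\al$, and the local semicircle law at scale $\eta=N^{-1+\gamma}$ to control $\im G_{ii}$. The deterministic grid/union-bound step, which you add and the paper elides, is exactly the right device for converting the pointwise stochastic-domination of Theorem~\ref{thm: with gap} into a bound simultaneous in $i$, $\al$, and the random locations $\lambda_\al$; for completeness one should also observe (from rigidity, or the local law near the edge) that with overwhelming probability all eigenvalues lie in $[-10,10]$, so that the grid indeed covers every $\lambda_\al$.
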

This result was first proven in \cite{ESY2} without bounding $G_{ii}$. In contrast to the argument
in \cite{ESY2}, the 
proof  via $G_{ii}$ can also be easily  extended to a general class of Wigner-type
 matrices. For example, an elementary argument shows that if there are two
positive constants $c$ and $C$ such that $c\le Ns_{ij}\le C$ for all $i,j$, then
$\wt\Gamma(z)\le C$ uniformly in $z$, thus $\wt\eta_E\le CN^{-1+\gamma}$ and
\eqref{delocbound} holds.

\subsection{Tools}

In this subsection we collect some basic definitions and facts.

\begin{definition}[Minors] \label{def: minors}
For $ {\bb T} \subset \{1, \dots, N\}$ we define $H^{( {\bb T})}$ by
\begin{equation*}
(H^{( {\bb T})})_{ij} \;\deq\; \ind{i \notin  {\bb T}} \ind{j \notin  {\bb T}} h_{ij}\,.
\end{equation*}
Moreover, we define the resolvent of $H^{( {\bb T})}$ and its normalized trace through
\begin{equation*}
G^{( {\bb T})}_{ij}(z) \;\deq\;  (H^{( {\bb T})} - z)^{-1}_{ij}\,, \qquad
m^{(\bb T)}(z) \;\deq\; \frac{1}{N}\tr G^{(\bb T)}(z).
\end{equation*}
We also set
\begin{equation*}
\sum_i^{( {\bb T})} \;\deq\; \sum_{i \; : \; i \notin  {\bb T}}\,.
\end{equation*}
\end{definition}

\begin{definition}[Partial expectation and independence] \label{definition: P Q}
Let $X \equiv X(H)$ be a random variable. For $i \in \{1, \dots, N\}$ define the operations $P_i$ and $Q_i$ through
\begin{equation*}
P_i X \;\deq\; \E(X | H^{(i)}) \,, \qquad Q_i X \;\deq\; X - P_i X\,.
\end{equation*}
We call $P_i$ \emph{partial expectation} in the index $i$.
Moreover, we say that $X$ is \emph{independent of ${\bb T} \subset \{1, \dots, N\}$} if $X = P_i X$ for all $i \in {\bb T}$.
\end{definition}

We shall frequently make use of Schur's well-known complement formula, which we write as
\begin{equation} \label{schur}
\frac{1}{G_{ii}^{({\bb T})}} \;=\; h_{ii} - z - \sum_{k,l}^{({\bb T} i)} h_{ik} G_{kl}^{({\bb T} i)} h_{li}\,,
\end{equation}
where $i \notin {\bb T} \subset \{1, \dots, N\}$.

The following resolvent identities form the backbone of all of our calculations. 
The idea behind them is that a resolvent  matrix element $G_{ij}$
depends strongly on the $i$-th and $j$-th columns of $H$, but weakly on all other columns. 
The first identity determines how to make
a resolvent matrix element $G_{ij}$ independent of an additional index $k \neq i,j$.
The second identity expresses the dependence of a resolvent matrix element $G_{ij}$ on
 the matrix elements in the $i$-th or in the $j$-th column of $H$.
We added a third identity that relates sums of off-diagonal resolvent entries with a diagonal one.
 The proofs are elementary.

\begin{lemma}[Resolvent identities] \label{lemma: res id}
For any real or complex Hermitian matrix $H$ and ${\bb T} \subset \{1, \dots, N\}$ the following identities hold.
If $i,j,k \notin {\bb T}$ and $i,j \neq k$ then
\begin{equation} \label{resolvent expansion type 1}
G_{ij}^{({\bb T})} \;=\; G_{ij}^{({\bb T}k)} + \frac{G_{ik}^{({\bb T})} G_{kj}^{({\bb T})}}{G_{kk}^{({\bb T})}}\, ,
\qquad \frac{1}{G_{ii}^{(\bb T)}} \;=\; \frac{1}{G_{ii}^{(\bb Tk)}} 
- \frac{G_{ik}^{(\bb T)} G_{ki}^{(\bb T)}}{G_{ii}^{(\bb T)} G_{ii}^{(\bb Tk)} G_{kk}^{(\bb T)}}\,.
\end{equation}
If $i,j \notin {\bb T}$ satisfy $i \neq j$ then
\begin{equation} \label{resolvent expansion type 2}
G_{ij}^{({\bb T})} \;=\; - G_{ii}^{({\bb T})} \sum_{k}^{({\bb T}i)} h_{ik} G_{kj}^{({\bb T}i)} \;=\; - G_{jj}^{({\bb T})} \sum_k^{({\bb T}j)} G_{ik}^{({\bb T}j)} h_{k j}\,. 
\end{equation}
Moreover, we have 
\be\label{ward}
    \sum_j |G_{ij}^{({\bb T})}|^2 = \frac{1}{\eta}\im G_{ii}^{({\bb T})},
\ee
which is sometimes called the Ward identity.
\end{lemma}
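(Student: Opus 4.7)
The plan is to derive all three identities from pure linear algebra applied to the defining relation $(H^{(\bb T)} - z) G^{(\bb T)} = I$; no probabilistic input is needed, and in particular one may reduce to the case $\bb T = \emptyset$, since $H^{(\bb T)}$ is itself a Hermitian matrix and the resolvent $G^{(\bb T)}$ is its resolvent. The main issue is purely bookkeeping; there is no real analytic obstacle here.

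For the two identities in \eqref{resolvent expansion type 1}, I would use block-matrix inversion (the Schur complement). Writing $H^{(\bb T)}$ in block form by separating out row and column $k$, the Schur complement formula expresses $G^{(\bb T)}$ entry by entry in terms of $G^{(\bb T k)}$ and the $k$-th row/column of $H^{(\bb T)}$. Alternatively, and more cleanly, one can verify the first identity by multiplying both sides by $(H^{(\bb T)} - z)$ on the left: the vector on the right hand side, after multiplication, reproduces $e_j$ using that $G^{(\bb T k)}$ inverts $H^{(\bb T k)} - z$ on the orthogonal complement of $e_k$, and that the Schur formula \eqref{schur} identifies the $k$-row normalization with $1/G_{kk}^{(\bb T)}$. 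The second identity in \eqref{resolvent expansion type 1} is then obtained by specializing the first one to $i=j$ and dividing.

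For \eqref{resolvent expansion type 2}, I would apply the matrix equation $(H^{(\bb T)} - z) G^{(\bb T)} = I$ at the $(i,j)$ entry with $i\ne j$ to obtain
\begin{equation*}
(h_{ii} - z)\, G_{ij}^{(\bb T)} + \sum_{k\ne i}^{(\bb T)} h_{ik}\, G_{kj}^{(\bb T)} \;=\; 0\,.
\end{equation*}
Then I would invoke the first identity of \eqref{resolvent expansion type 1} to rewrite $G_{kj}^{(\bb T)} = G_{kj}^{(\bb T i)} + G_{ki}^{(\bb T)} G_{ij}^{(\bb T)}/G_{ii}^{(\bb T)}$ for $k\ne i$. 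Collecting the $G_{ij}^{(\bb T)}$ terms, the coefficient becomes $h_{ii} - z + (G_{ii}^{(\bb T)})^{-1}\sum_{k\ne i}^{(\bb T)} h_{ik} G_{ki}^{(\bb T)}$, which by the $(i,i)$ entry of $(H^{(\bb T)}-z)G^{(\bb T)}=I$ equals $1/G_{ii}^{(\bb T)}$; this is just the Schur complement formula \eqref{schur} in disguise. Solving for $G_{ij}^{(\bb T)}$ gives the first equality in \eqref{resolvent expansion type 2}, and the second follows by the analogous computation starting from $G^{(\bb T)}(H^{(\bb T)} - z) = I$.

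For the Ward identity \eqref{ward}, observe that $G(z)^* = G(\bar z)$, so $\sum_j |G_{ij}^{(\bb T)}|^2 = (G^{(\bb T)}(z) G^{(\bb T)}(\bar z))_{ii}$. A one-line partial fraction in operator form,
\begin{equation*}
G^{(\bb T)}(z)\, G^{(\bb T)}(\bar z) \;=\; \frac{G^{(\bb T)}(z) - G^{(\bb T)}(\bar z)}{z - \bar z} \;=\; \frac{\im G^{(\bb T)}(z)}{\eta}\,,
\end{equation*}
then gives the result at the $(i,i)$ entry. The only point requiring care is to make sure this manipulation is valid in the finite-dimensional setting, which is automatic since $H^{(\bb T)}$ is a bounded self-adjoint matrix and $z \ne \bar z$ are both off its spectrum.
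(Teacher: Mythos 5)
Your proof is correct. The paper states this lemma without proof, remarking only that ``the proofs are elementary,'' and your argument---Schur-complement block inversion for \eqref{resolvent expansion type 1}, the entry-wise relation $(H^{(\bb T)}-z)G^{(\bb T)}=I$ combined with \eqref{resolvent expansion type 1} for \eqref{resolvent expansion type 2}, and the first resolvent formula $G(z)-G(\bar z)=(z-\bar z)G(z)G(\bar z)$ for the Ward identity---is the standard linear-algebra verification the authors have in mind. Note that the block inversion you describe is even more economical than you let on: reading off the $(k,j)$ entry of the Schur-complement formula gives $G_{kj}^{(\bb T)} = -G_{kk}^{(\bb T)}\sum_l^{(\bb T k)} h_{kl}G_{lj}^{(\bb T k)}$, so \eqref{schur}, \eqref{resolvent expansion type 1}, and \eqref{resolvent expansion type 2} all drop out of the same block decomposition; your separate derivation of \eqref{resolvent expansion type 2} via \eqref{resolvent expansion type 1} and the $(i,i)$ resolvent equation is a correct alternative but is not needed.

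One small caveat worth flagging: your ``alternative'' route for the first identity---multiplying both sides by $H^{(\bb T)}-z$---is not self-contained as you describe it. Writing the proposed right-hand side as a column vector $w = G^{(\bb T k)}e_j + \bigl(G_{kj}^{(\bb T)}/G_{kk}^{(\bb T)}\bigr)G^{(\bb T)}e_k$ and applying $H^{(\bb T)}-z$ yields $e_j$ plus the residual $\bigl(\sum_l^{(\bb T k)} h_{kl}G_{lj}^{(\bb T k)} + G_{kj}^{(\bb T)}/G_{kk}^{(\bb T)}\bigr)e_k$, and seeing that this coefficient vanishes is precisely the case $(i,j)\to(k,j)$ of \eqref{resolvent expansion type 2}, not a consequence of \eqref{schur} alone. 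This does not affect the validity of your proof, since the block-matrix derivation you give first is complete in itself, but the ``alternative'' phrasing quietly presupposes one of the identities under proof.
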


Finally, in order to estimate large sums of independent random variables as in \eqref{schur} and 
\eqref{resolvent expansion type 2}, we will need a large deviation estimate for linear
and quadratic functionals of independent
random variables:

\begin{theorem}[Large deviation bounds] \label{thm: LDE}

Let $\pb{X_i^{(N)}}$ and $\pb{Y_i^{(N)}}$ be independent families of random variables and 
$\pb{a_{ij}^{(N)}}$ and $\pb{b_i^{(N)}}$ be deterministic; here $N \in \N$ and $i,j = 1, \dots, N$. Suppose
 that all entries $X_i^{(N)}$ and $Y_i^{(N)}$ are independent and satisfy
\begin{equation} \label{cond on X}
\E X \;=\; 0\,, \qquad \E \abs{X}^2 \;=\; 1\,, \qquad \p{\E \abs{X}^p}^{1/p} \;\leq\; \mu_p
\end{equation}
for all $p \in \N$ and some constants $\mu_p$.
 Then we have the bounds
\begin{align} \label{lde 1}
\sum_i b_i X_i &\;\prec\; \pbb{\sum_i \abs{b_i}^2}^{1/2}\,,
\\ \label{lde 2}
\sum_{i,j} a_{ij} X_i Y_j &\;\prec\; \pbb{\sum_{i,j} \abs{a_{ij}}^2}^{1/2}\,,
\\ \label{lde 3}
\sum_{i \neq j} a_{ij} X_i X_j &\;\prec\; \pbb{\sum_{i \neq j} \abs{a_{ij}}^2}^{1/2}\,.
\end{align}
\end{theorem}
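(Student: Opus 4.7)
The strategy is to reduce all three estimates to a high-moment calculation. By Markov's inequality, if for every $p \in \N$ there is a constant $C_p$, depending only on $p$ and on the moment sequence $(\mu_q)_{q \le 2p}$, such that $\E |Z|^{2p} \le C_p A^{2p}$ (where $Z$ denotes the random quantity on the left-hand side and $A$ the deterministic quantity on the right-hand side of the relevant bound), then for any $\varepsilon, D > 0$
$$
\P\bigl(|Z| > N^\varepsilon A\bigr) \;\le\; \frac{\E|Z|^{2p}}{N^{2p\varepsilon} A^{2p}} \;\le\; C_p\, N^{-2p\varepsilon},
$$
and the conclusion $Z \prec A$ follows by taking $p = \lceil D/(2\varepsilon)\rceil$. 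Crucially, such a moment bound is uniform in $N$ and depends on the data $(a_{ij}), (b_i)$ only through $A$, producing the uniformity built into Definition~\ref{def:stocdom}. Hence it suffices to establish the three moment bounds
$$
\E\Bigl|\sum_i b_i X_i\Bigr|^{2p} \le C_p \Bigl(\sum_i |b_i|^2\Bigr)^p, \quad
\E\Bigl|\sum_{i,j} a_{ij}X_i Y_j\Bigr|^{2p} \le C_p \Bigl(\sum_{i,j}|a_{ij}|^2\Bigr)^p, \quad
\E\Bigl|\sum_{i\ne j} a_{ij}X_i X_j\Bigr|^{2p} \le C_p \Bigl(\sum_{i\ne j}|a_{ij}|^2\Bigr)^p.
$$

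For the linear bound \eqref{lde 1} I will expand the $2p$-th moment as
$$
\E\Bigl|\sum_i b_i X_i\Bigr|^{2p} \;=\; \sum_{\mathbf i,\mathbf j} b_{i_1}\cdots b_{i_p}\bar b_{j_1}\cdots\bar b_{j_p}\; \E\bigl[X_{i_1}\cdots X_{i_p}\bar X_{j_1}\cdots \bar X_{j_p}\bigr],
$$
and use independence together with $\E X_i = 0$ to see that the expectation factors over equivalence classes of coinciding indices in the $2p$-tuple $(i_1,\ldots,i_p,j_1,\ldots,j_p)$ and vanishes unless every class has cardinality at least $2$. Bounding each per-class factor by $\mu_{2p}$ via \eqref{cond on X}, and using the elementary inequality $\sum_i|b_i|^k \le (\sum_i|b_i|^2)^{k/2}$ for $k\ge 2$ together with a standard count of the set partitions of $\{1,\ldots,2p\}$ into blocks of size $\ge 2$, the desired bound $C_p(\sum |b_i|^2)^p$ follows.

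The bounds \eqref{lde 2} and \eqref{lde 3} are proved by the same moment/graphical method, where to each surviving term of the expansion one associates a multigraph on the index set whose edges record the factors of $a_{ij}$ and whose vertex degrees record the $X$- and $Y$-multiplicities. Independence and the mean-zero condition again force every vertex to have degree $\ge 2$; the bipartite structure in \eqref{lde 2} leads smoothly to the bound $C_p(\sum|a_{ij}|^2)^p$. The main obstacle lies in \eqref{lde 3}: here the multigraphs are arbitrary rather than bipartite, and one must exploit the exclusion $i\ne j$ (ruling out self-loops) to avoid a spurious diagonal contribution of the form $(\sum_i a_{ii})^p$, which the right-hand side does not control. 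The cleanest way is to show that every admissible graph $G$ on vertex set $V(G) \subset \{1,\ldots,N\}$, with $2p$ edges and minimum degree $2$, contributes at most $C_p \mu_{2p}^{|V(G)|}\prod_e |a_e|$, and then to pair up the edges so that the unrestricted sums over vertex labels telescope into Hilbert--Schmidt norms $\sum_{i\ne j}|a_{ij}|^2$; the absence of self-loops is precisely what allows this pairing argument to close. Once this combinatorial estimate is secured, the Markov reduction of the first paragraph delivers all three conclusions.
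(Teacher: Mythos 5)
Your overall plan coincides with the paper's: bound the $2p$-th moment of the left-hand side by $C_p$ times the $2p$-th power of the right-hand side, uniformly in $N$, then apply Markov/Chebyshev to extract the stochastic domination. Where you diverge is in \emph{how} the moment bounds are produced. The paper obtains \eqref{lde 1} by invoking the Marcinkiewicz--Zygmund martingale inequality (a packaged form of Rosenthal), and then reduces \eqref{lde 2} and \eqref{lde 3} to \eqref{lde 1} by a decoupling argument: one conditions on the $X$-variables to view the bilinear form as a linear form in $Y$ with random coefficients, applies \eqref{lde 1}, and for \eqref{lde 3} first replaces one copy of the $X$'s by an independent copy via a standard decoupling inequality. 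You instead expand the $2p$-th moment directly and run a partition/graph count. For the linear form this is just a re-derivation of Rosenthal, and for the bilinear form \eqref{lde 2} the bipartite graph bookkeeping works out cleanly, so those parts are fine.

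The place where your route is genuinely harder than the paper's is exactly where you flag it: \eqref{lde 3}. The combinatorial kernel of your argument is the claim that for every loopless multigraph $G$ with $2p$ edges and minimum degree $\ge 2$, the label sum satisfies
$$
\sum_{\iota : V(G)\to[N]} \ \prod_{\{u,v\}\in E(G)} |a_{\iota(u)\iota(v)}| \;\le\; \Bigl(\sum_{i\ne j}|a_{ij}|^2\Bigr)^{p}.
$$
This is true, but it is not a two-line consequence of ``pairing up the edges.'' A pair of edges sharing a vertex does \emph{not} telescope into a factor of $\sum_{i\ne j}|a_{ij}|^2$ after summing the shared label: Cauchy--Schwarz over the shared vertex gives column-norm factors, not Hilbert--Schmidt factors, and closing the estimate requires either an inductive contraction scheme that is careful about vertex degrees, or a spectral argument (for cyclic pieces one uses $\tr |A|^{m}\le\|A\|_{HS}^{m}$, but general min-degree-$2$ graphs are not cycles). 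This graph lemma carries essentially all the difficulty that the paper's decoupling step was designed to avoid, and it deserves a real proof rather than an assertion. So your proposal is a legitimate alternative route in outline, but as written the crucial step for \eqref{lde 3} is under-justified; the paper's approach of reducing \eqref{lde 3} to the bilinear case via decoupling, and the bilinear case to the linear one by conditioning, trades your hard combinatorial lemma for two soft, well-known reductions.
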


{\it Sketch of the proof.}
The estimates \eqref{lde 1}, \eqref{lde 2}, and \eqref{lde 3} follow from estimating
high moments of the left hand sides  combined with Chebyshev's inequality. The high moments
of \eqref{lde 1} directly follow from the Marcinkiewicz-Zygmund martingale inequality.
High moments of \eqref{lde 2}, and \eqref{lde 3} are computed by reducing 
them to \eqref{lde 1} with a decoupling argument. The details are found in Lemmas
 B.2, B.3, and B.4 of \cite{EKYfluc}.
\qed

\subsection{Self-consistent equations on three levels}\label{sec:three}

By the Schur complement formula \eqref{schur}, we have
\be\label{schur1}
   G_{ii} \;=\; \frac{1}{h_{ii}- z -  \sum_{k,l}^{(i)} h_{ik} G_{kl}^{(i)} h_{li}}\,.
\ee
The partial expectation with respect to the index $i$ gives
$$
   P_i  \sum_{k,l}^{(i)} h_{ik} G_{kl}^{(i)} h_{li} \;=\; \sum_{k}^{(i)} s_{ik}   G_{kk}^{(i)} 
  \;=\; \sum_{k}^{(i)} s_{ik} G_{kk} +  \sum_{k}^{(i)} s_{ik}  \frac{G_{ik} G_{ki}}{G_{ii}},
$$
where in the second step we 
used \eqref{resolvent expansion type 1}.
Introducing  the notation
\begin{equation*}
v_i \;\deq\; G_{ii} - m
\end{equation*}
and recalling \eqref{S is stochastic}, we get the following
self-consistent equation for $v_i$:
\begin{equation}\label{vself}
v_{i} \;=\; \frac{1}{-z - m - \pb{\sum_k s_{ik} v_k - \Upsilon_i}} - m\,,
\end{equation}
where
\begin{align}\label{def:Ups}
\Upsilon_i \;\deq\; A_i + h_{ii} - Z_i\,, \qquad
A_i \;\deq\; \sum_{k}s_{ik} \frac{G_{ik} G_{ki}}{G_{ii}}\,, \qquad 
Z_i \;\deq\; Q_i \sum_{k,l}^{(i)} h_{ik} G_{kl}^{(i)} h_{li}\,.
\end{align}
All these quantities depend on $z$, which fact is suppressed in the notation.
We will show that $\Upsilon$ is a lower order error term. 
This is clear about $h_{ii}$ by \eqref{hsmallerW}. The term $A_i$ will
be small since off-diagonal resolvent entries are small. Finally, $Z_i$ will
be small by a large deviation estimate Theorem~\ref{thm: LDE}.
Before we present more details, we heuristically show the power
of the self-consistent equations.

\subsubsection{A scalar self-consistent equation}

Introduce the notation
\be\label{ava}
[a]=\frac{1}{N}\sum_i a_i
\ee
for the {\it average} of a vector $(a_i)_{i=1}^N$.
Consider  the standard Wigner case, $s_{ij}=1/N$. Then
$$
  \sum_k s_{ik} v_k = \frac{1}{N}\sum_k v_k =  [v] \qquad  \big( =  m_N-m\big).
$$
Neglecting $\Upsilon_i$ in  \eqref{vself} and taking the average of
this relation for each $i$,
 we get
\be
\label{vv}
  [v] \approx \frac{1}{-z - m - [v] } - m\, .
\ee
Since 
$$
  m = \frac{1}{-z - m} 
$$
by the defining equation of $m$, see \eqref{identity for msc}, and
this equation is stable under small perturbations, at least away from the
spectral edges $z=\pm 2$, we obtain from \eqref{vv} that $[v]\approx 0$.
This means that $m_N\approx m$ and hence
 $\varrho_N \approx \varrho$, i.e.
we obtained the Wigner's original semicircle law.

Historically the semicircle law was first found via the
moment method \cite{W} by computing $\frac{1}{N}\E \tr H^k$, $k=1,2,\ldots$
in the $N\to\infty$ limit,  and identifying
them with the moments of the semicircle measure $\varrho(x)\rd x$. 
In this approach the semicircle law emerges as a result of a somewhat
tedious, albeit elementary calculation. A more direct approach is to take the average in $i$
 of the Schur's formula \eqref{schur1} which immediately gives
$$
  m_N \approx \frac{1}{-z-m_N}
$$
after neglecting the error terms $\Upsilon_i$. This identifies the limit of $m_N$
immediately with $m$, the (unique) solution to  \eqref{identity for msc}.
Taking the inverse Stieltjes transform then yields the semicircle law
in a very direct way.

\subsubsection{A vector self-consistent equation}\label{sec:vec}

If the variances $s_{ij}$ are not constant or we are interested in
individual resolvent matrix elements $G_{ii}$ instead of their
average, $\frac{1}{N} \tr G$, then
the scalar equation \eqref{vv}  discussed in the previous section is not sufficient.
We have to consider \eqref{vself} as a 
 system of equations for the components of the vector $\bv=(v_1, \ldots ,  v_N)$.

{F}rom the explicit formula for $m$ \eqref{explicit m}, we know that $|m+z|\ge 1$.
Assuming temporarily that
\be\label{smallups}
\absbb{\sum_k s_{ik} v_k - \Upsilon_i} \;\le\; \frac{1}{2}\,,
\ee
we can expand the right-hand side of \eqref{vself} around $-z-m$ up to second order and 
using the identity \eqref{identity for msc} we  obtain
\be\label{viexp}
   v_i \;=\; m^2 \pbb{\sum_k s_{ik} v_k - \Upsilon_i} 
+ O\qBB{\pbb{\sum_k s_{ik} v_k - \Upsilon_i}^2}\,.
\ee
This is the key equation to study $v_i$. Considering all higher order terms and $\Upsilon_i$
as errors we get a self-consistent
equation 
$$
    v_i =  m^2 \sum_k s_{ik} v_k + \mbox{Error}
$$ 
or, with matrix notation for the vector  $\bv$:
\be\label{vselfv}
\bv = m^2 S\bv +\cE,
\ee
where $\cE$ represent the vector of error terms. Thus
$$
    \bv = \frac{1}{1-m^2 S} \cE, \qquad \mbox{hence}\quad  \|\bv\|_\infty \le 
\Big\| \frac{1}{1-m^2 S}\Big\|_{\infty\to\infty} \|\cE\|_\infty  = \Gamma  \|\cE\|_\infty,
$$
and this relation shows how the quantity $\Gamma$ emerges. If the error term is indeed small
and $\Gamma$ is bounded, then we obtain that $\|\bv\|_\infty = \max |G_{ii}-m|$ is small.

\subsubsection{A matrix self-consistent equation}\label{sec:mat}

The off-diagonal resolvent matrix elements, $G_{ij}$, are strongly oscillating quantities and they are
not expected to have a deterministic limit. However the local averages of their squares,
$$
   T_{xy} :  =\sum_i s_{xi} |G_{iy}|^2,
$$
are expected to behave regularly. Note that in the Wigner case, using the identity \eqref{ward}, the quantity
$$
 T_{xy}=  \sum_i s_{xi} |G_{iy}|^2 = \frac{1}{N} \sum_i G_{yi} G_{iy}^* =
 \frac{1}{N} \big( |G|^2\big)_{yy} = \frac{1}{N\eta} \im G_{yy}
$$
is independent of $x$, but in the general case $T_{xy}$ carries information on the localization length 
of the eigenfunctions. In particular, if $T_{xy}$ decays only beyond a scale $\ell$, i.e. $T_{xy}$ remains comparable
with $T_{xx}$  for $|x-y|\ll \ell$, then most eigenfunctions have a localization length at least $\ell$.

The self-consistent equation for $T_{xy}$ can be derived from \eqref{resolvent expansion type 2}:
$$
   G_{iy} = - G_{ii} \sum_k^{(i)} h_{ik}G_{ky}^{(i)}.
$$
Replacing $G_{ii}$ with $m$ and taking the square, we have
$$
  |G_{iy}|^2 \approx |m|^2 \sum_{m,k}^{(i)}  h_{ik} G_{ky}^{(i)} \ov{h}_{im}\ov{G_{my}^{(i)}}.
$$
Taking partial expectation yields
$$
   P_i  |G_{iy}|^2 \approx |m|^2 \sum_k^{(i)} s_{ik}|G_{ky}^{(i)}|^2 \approx  |m|^2 \sum_k s_{ik} |G_{ky}|^2  = |m|^2 T_{iy},
$$
where in the second step we used \eqref{resolvent expansion type 1} to remove the upper index $i$
and we used that the off-diagonal elements are of smaller order. This formula holds for $i\ne y$, 
for the special case $i=y$ we have a diagonal element, i.e.
$$
   P_i  |G_{iy}|^2 \approx  |m|^2 T_{iy} + |m|^2 \delta_{iy}.
$$
Thus we have
\be\label{wtT}
   T_{xy} = \sum_i s_{xi} |G_{iy}|^2 = \sum_i s_{xi} P_i |G_{iy}|^2  + \wt T_{xy}, \qquad
 \wt T_{xy}:=\sum_i  s_{xi} Q_i |G_{iy}|^2 .
\ee
The term  $\wt T_{xy}$ is lower order by a {\it fluctation averaging mechanism},
see the explanation after Theorem~\ref{thm: averaging}.
Hence, neglecting this term we have
$$ 
   T_{xy} \approx \sum_i s_{xi} P_i |G_{iy}|^2 = |m|^2 \sum_i s_{xi} T_{iy} +|m|^2 s_{xy},
$$
i.e. the matrix $T$ satisfies the self-consistent matrix equation
\be
    T = |m|^2 ST + |m|^2 S + \cE,
\label{Tself}
\ee
where $\cE$ is an error matrix.  The solution is
\be\label{TT}
   T = \frac{|m|^2 S}{1-|m|^2 S}  + \frac{1}{1-|m|^2 S}\cE,
\ee
where the first term, given explicitly in terms of the variance matrix $S$, gives
the leading order behavior for $T$:
$$
  T_{xy}\approx \Theta_{xy}, \qquad \Theta: = \frac{|m|^2 S}{1-|m|^2 S}.
$$

\subsubsection{Application: diffusion profile for random band matrices}\label{sec:appl}

Depending on the structure of $S$, in some cases $\Theta$ can be computed.
Consider, for example,  case of the random band matrices, \eqref{bandvar}.
Here $s_{xy}$ and 
hence $\Theta_{xy}$ are translation invariant, $\Theta_{xy} = \theta_{x-y}$, and the
Fourier transform of $\theta$ is approximately given by
\be\label{thet}
  \theta(p) \approx \frac{1}{\al \eta + W^2Dp^2}, \qquad D:= \frac{1}{2} \int x^2 f(x)\rd x,
\qquad \al : = \frac{2}{\sqrt{4- E^2}}.
\ee
This means that the profile of $\Theta$ is approximately given by a diffusion profile on scale $W$
with diffusion constant $D$.

The analysis of the error term in \eqref{TT}
requires estimating the norm of $(1-|m|^2S)^{-1}$. Note that unlike 
in the analysis of \eqref{vselfv}, here $1-|m|^2S$ and not $1-m^2S$ has
to be inverted. Since $|m|^2\sim 1-C\eta$ for small $\eta$ and
$S$ has eigenvalue 1, the inverse of  $1-|m|^2S$ is very unstable.
Fortunately, one can subtract the constant mode in \eqref{Tself}
before solving the equation, thus eventually only 
the norm of $(1-|m|^2S)^{-1}$ on the subspace orthogonal to the
constants is relevant. Thus the spectral gap of $S$ plays an important role
and we use that for band matrices the gap is of order $(W/N)^2$.
The details are found in \cite{EKYY3}, where,
among other results, the following theorem was shown:

\begin{theorem}[Diffusion profile]\cite[Theorem 2.4]{EKYY3}\label{lm:withprof}
Let $H$ be a random band matrix with band width $W$, i.e. the variances are given by \eqref{bandvar}.
Suppose that $ N \ll W^{5/4}$ and $(W/N)^2 \le \eta \le 1$.
Then
\be\label{Tprec}
 |T_{xy} -\Theta_{xy}| \;\prec\; \frac{1}{N\eta}\,, \qquad
  \absB{P_x |G_{xy}|^2 - \delta_{xy}|m|^2- \abs{m}^2 \Theta_{xy}} \;\prec\; \frac{1}{N\eta}
 + \frac{\delta_{xy}}{\sqrt{W}}\,.
\ee
All estimates are uniform in $x,y \in \bb T$ and in the spectral parameter
$z=E+\ii \eta$ for $|E|\le 2-\kappa$ and 
for any fixed $\kappa>0$.
\end{theorem}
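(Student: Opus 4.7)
The strategy is precisely the three-step program sketched in Section~\ref{sec:mat}: derive a self-consistent equation for $T_{xy}$, quantify the error terms using the local semicircle law together with a fluctuation-averaging estimate, and finally invert $1-|m|^2 S$ in a stable way by isolating its near-unstable constant mode. First I would apply Schur's formula and \eqref{resolvent expansion type 2} to write $G_{iy}=-G_{ii}\sum_k^{(i)} h_{ik}G_{ky}^{(i)}$; taking partial expectation in the index~$i$ and using \eqref{resolvent expansion type 1} to remove the upper index, one gets $P_i|G_{iy}|^2 = |G_{ii}|^2 \sum_k s_{ik}|G_{ky}|^2 + \delta_{iy}|G_{ii}|^2 + \mathrm{err}_i$. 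Substituting $G_{ii}=m+v_i$, where by Theorem~\ref{thm: with gap} $|v_i|\prec \Pi$ with $\Pi = \sqrt{\im m/(N\eta)} + (N\eta)^{-1}$ (here $M\asymp W$ is replaced by $N$ since $T_{xy}$ is a sum in $i$ of weight $\sim 1/W$ over all $i$), and then multiplying by $s_{xi}$ and summing in $i$, I obtain the matrix identity
\begin{equation}\label{pp:self}
 T \;=\; |m|^2 S T + |m|^2 S + \mathcal{E},
\end{equation}
where $\mathcal{E}_{xy}$ collects (a) the diagonal correction terms $(|m|^2-|G_{ii}|^2)\sum_k s_{ik}|G_{ky}|^2$, (b) the off-diagonal decoupling terms from \eqref{resolvent expansion type 1}, (c) the large-deviation terms produced by $Q_i$-ing the quadratic form, and (d) the fluctuation term $\wt T_{xy} = \sum_i s_{xi} Q_i|G_{iy}|^2$ defined in \eqref{wtT}.

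Next I would bound each constituent of $\mathcal{E}$. For (a) one uses $|m-G_{ii}|\prec \Pi$ and $T_{xy}\le \|T\|_\infty$, which eventually enters self-consistently; in practice one sets up a bootstrap on $\|T-\Theta\|_\infty$ starting from the a priori bound $T_{xy}\prec (N\eta)^{-1}\im G_{yy}$ from \eqref{ward}. For (b) and (c) one uses Theorem~\ref{thm: LDE} together with the Ward identity and the diagonal resolvent bounds from Theorem~\ref{thm: with gap}. The main technical obstacle is (d): a \emph{fluctuation averaging} estimate of the form $|\wt T_{xy}|\prec \Pi^2$, which is one order better than the trivial bound $\Pi$ produced by the individual $Q_i|G_{iy}|^2$. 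This is the analogue for $|G_{iy}|^2$ of the averaging lemma used for $G_{ii}-m$ in \cite{EYY2, EYYrigi}, and one proves it by computing a high moment $\E|\wt T_{xy}|^{2p}$ and exploiting the independence of $Q_i$ and $Q_j$ for $i\neq j$ after expanding the resolvents via \eqref{resolvent expansion type 1} to make them independent of the relevant indices. The combined bound is $\|\mathcal{E}\|_{\infty\to\infty}\prec 1/(N\eta)$.

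Finally I would invert $1-|m|^2 S$ in \eqref{pp:self}, writing $T = \Theta + (1-|m|^2 S)^{-1}\mathcal{E}$. Here the obstacle is that $1-|m|^2 S$ has a near-zero eigenvalue on the constant vector $\mathbf{e}$: since $|m|^2 = 1 - c\eta + O(\eta^2)$ and $S\mathbf{e}=\mathbf{e}$, the eigenvalue on $\mathbf{e}$ is of size $\eta$, while on $\mathbf{e}^\perp$ the restricted inverse has norm $\lesssim (W/N)^{-2}$ because the spectral gap of $S$ for band matrices is $\asymp (W/N)^2$. I would therefore split $\mathcal{E}=\mathcal{E}_\parallel + \mathcal{E}_\perp$. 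The parallel part is fixed by a separate identity obtained by averaging \eqref{pp:self} over $x$: using the Ward identity one shows $\frac{1}{N}\sum_x T_{xy} = (N\eta)^{-1}\im G_{yy}$, which together with the local semicircle law pins down the constant mode exactly and shows $\mathcal{E}_\parallel$ is essentially zero up to order $1/(N\eta)$. The perpendicular part then satisfies $\|(1-|m|^2S)^{-1}\mathcal{E}_\perp\|_\infty \lesssim (N/W)^{2}\cdot 1/(N\eta) \le 1/(N\eta)$ precisely under the hypothesis $\eta\ge (W/N)^2$, and $N \ll W^{5/4}$ enters in closing the bootstrap for $\mathcal{E}_{\mathrm{a}}$, which depends quadratically on $\Pi$.

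For the second bound, I return one step earlier to $P_x|G_{xy}|^2 = |m|^2 T_{xy} + \delta_{xy}|m|^2 + \mathrm{err}$, so the claimed estimate follows from the first bound plus the error control of $P_x|G_{xy}|^2 - |m|^2\sum_i s_{xi}|G_{iy}|^2 - \delta_{xy}|m|^2$, where the diagonal term $\delta_{xy}/\sqrt{W}$ comes from the fact that $|G_{xx}|^2$ fluctuates on scale $\Pi\sim 1/\sqrt{W\eta}$, while $\delta_{xy}$ makes the $s_{xx}=1/W$ weight explicit. The hard step, as anticipated, is the fluctuation averaging of $\wt T_{xy}$; everything else is a careful but standard implementation of the resolvent machinery of Lemma~\ref{lemma: res id} together with Theorems~\ref{thm: with gap} and \ref{thm: LDE}.
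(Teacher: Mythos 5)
Your proposal follows the same overall blueprint as the paper's own sketch in Sections~\ref{sec:mat}--\ref{sec:appl}: derive the matrix self-consistent equation $T = |m|^2 S T + |m|^2 S + \mathcal E$, bound the error by resolvent expansion together with fluctuation averaging of $\wt T_{xy}=\sum_i s_{xi}Q_i|G_{iy}|^2$, and then invert $1-|m|^2 S$ after projecting out the near-unstable constant mode using the spectral gap $\asymp (W/N)^2$. The identity $\frac1N\sum_x T_{xy}=(N\eta)^{-1}\im G_{yy}$ via the Ward identity to pin the parallel component is also the right move.

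However, the quantitative bookkeeping has a genuine gap. For a band matrix one has $M\asymp W$, so the local semicircle law gives $|v_i|\prec\Pi$ with $\Pi\asymp\sqrt{\im m/(W\eta)}+(W\eta)^{-1}$, \emph{not} the $\Pi\asymp 1/\sqrt{N\eta}$ you wrote. Consequently the naive fluctuation-averaging output $|\wt T_{xy}|\prec\Pi^2\asymp(W\eta)^{-1}$ is strictly weaker than the claimed error $(N\eta)^{-1}$ as soon as $N>W$, i.e.\ in most of the allowed regime $W\le N\ll W^{5/4}$. Your parenthetical remark that ``$M\asymp W$ is replaced by $N$ since $T_{xy}$ is a sum\ldots over all $i$'' does not produce this improvement: the sum in $T_{xy}$ has only $\sim W$ non-negligible terms, so the averaging heuristic by itself yields $W$, not $N$. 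The missing ingredient is precisely the \emph{additional} cancellation for the specific quantity $Q_i|G_{ij}|^2$ established in \cite{EKYfluc}, which the paper explicitly singles out (``This improvement plays a key role in the proof of Theorem~\ref{lm:withprof}''). That stronger averaging estimate, rather than the $G_{ii}$-type averaging of \cite{EYY2,EYYrigi}, is what produces an error genuinely of order $(N\eta)^{-1}$ after inversion.

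The inversion step as written also does not close: you conclude $\|(1-|m|^2S)^{-1}\mathcal E_\perp\|_\infty\lesssim (N/W)^2\cdot(N\eta)^{-1}\le (N\eta)^{-1}$, but the last inequality requires $N\le W$, which is false in the regime of interest. The correct argument must exploit the Fourier structure of $(1-|m|^2 S)^{-1}$ on $\mathbf e^\perp$ --- the inverse behaves like $(\alpha\eta+W^2Dp^2)^{-1}$ in Fourier, so only long-wavelength modes pay the $(N/W)^2$ price --- together with the sharper bound on $\mathcal E$; a blanket $\ell^\infty\to\ell^\infty$ estimate of size $(N/W)^2$ is too crude. The hypotheses $\eta\ge(W/N)^2$ and $N\ll W^{5/4}$ enter in balancing these two effects, not merely in ``closing a bootstrap on the quadratic term.''
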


This theorem identifies $|G_{xy}|^2$ in two different senses of averaging; $T_{xy}$
averages in one of the indices, while $P_x$ takes partial expectation. In
both cases the result is essentially $\Theta_{xy}$.

\medskip

The behavior of $\theta_{x-y}=\Theta_{xy}$ can be analyzed by inverse Fourier transform
from \eqref{thet}. If $\eta \ll (W/N)^2$ then $\theta$ 
is essentially a constant, i.e.\ the profile is flat.
Conversely, if $\eta \gg (W/N)^2$ then we get 
 an exponentially decaying profile on the scale $\abs{x} \sim W \eta^{-1/2}$. 
The shape of the profile is therefore nontrivial if and only if $\eta \gg (W/N)^2$. 
The total mass of the profile 
\be\label{totmass}
\sum_{x \in \bb T} \theta_x = \frac{\im m}{\eta}(1+O(\eta)) = O(\eta^{-1}),
\ee
and the average height of the profile is of order $(N \eta)^{-1}$. The peak of the exponential profile
 has height of order $(W \sqrt{\eta})^{-1}$, which dominates over the average height if and only if 
$\eta \gg (W/N)^2$. The regime $\eta \gg (W/N)^2$ corresponds to the regime where $\eta$ is 
sufficiently large that the complete delocalization has not taken place, and
the profile is mostly concentrated in the region
$|x-y| \leq W\eta^{-1/2} \ll N$.

These scenarios are best understood in a dynamical picture in which $\eta$ is decreased down from $1$.  The ensuing dynamics of 
$\theta$ corresponds to the diffusion approximation, where the
quantum problem is replaced with a random walk of step-size of order $W$.
On a configuration space consisting of $N$ sites, such a random walk
will reach an equilibrium beyond time scales $(N/W)^2$. 
Here $\eta^{-1}$ plays essentially the role of time $t$, so that in this dynamical picture equilibrium is 
reached for $t \sim \eta^{-1} \gg (N/W)^2$. Figure \ref{figure: profile} illustrates this diffusive spreading of the profile for
different values of $\eta$.

\begin{figure}[ht!]
\begin{center}
\includegraphics{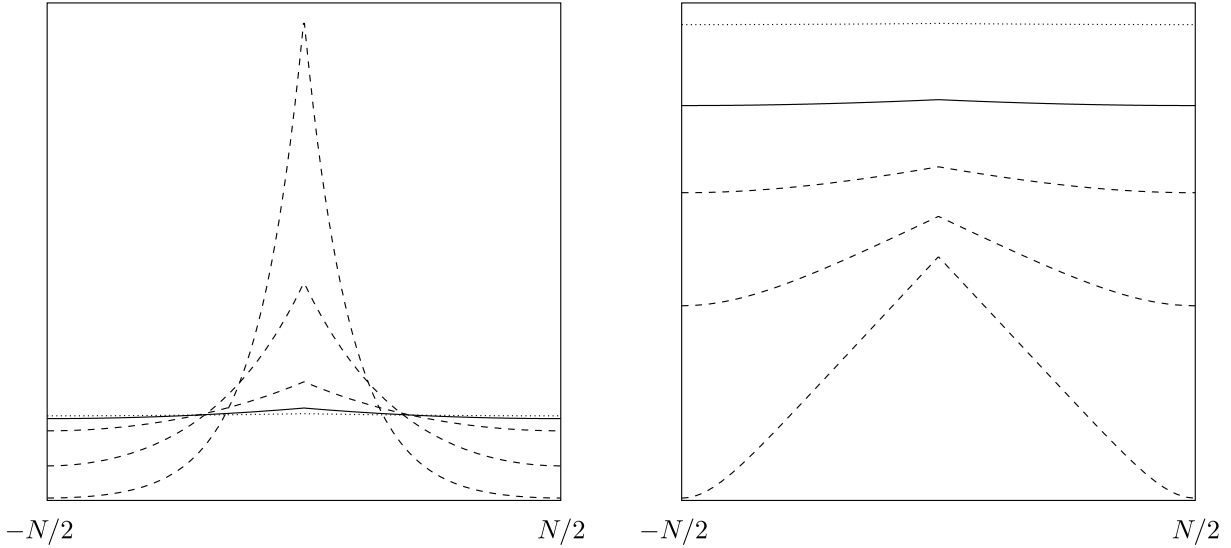}
\end{center}
\caption{A plot of the diffusion profile function at five different values of $\eta$, where the argument $x$ ranges over the torus $\bb T$. Left: the graph $x \mapsto \eta \theta_x$ (see \eqref{totmass} for the choice of normalization). Right: the graph $x \mapsto \log \theta_x$. Here we chose $N = 25 W$ and $\eta = 5^{-k}$ for $k = 1,2,3,4,5$. The cases $k = 1,2,3$ (where $\eta > (W/N)^2$) are drawn using dashed lines, the case $k = 4$ (where $\eta = (W/N)^2$) using solid lines, and the case $k = 5$ (where $\eta < (W/N)^2$) using dotted lines.
\label{figure: profile}}
\end{figure}

One important consequence of Theorem~\ref{lm:withprof} is that it proves delocalization 
for band matrices with band width $W\gg N^{4/5}$ (see Corollary 2.3 of \cite{EKYY3} for 
the precise statement). This improves the earlier result from \cite{EK, EK2}
where delocalization for $W\gg N^{6/7}$ was proved with very different methods.
We remark that from the other side it is known that narrow band matrices
with $W\ll N^{1/8}$  are in the localized regime \cite{Sche}. The conjectured
threshold for the phase transition is $W\sim \sqrt{N}$, see \cite{Fy}.

\subsection{Proof of the local semicircle law without using the spectral gap}\label{sec:nogap}

In this section we sketch the proof of a weaker version of Theorem~\ref{thm: with gap}, namely 
we replace threshold $\wt\eta_E$ with a larger threshold $\eta_E$ defined as
\begin{equation} \label{def eta E}
\eta_E \;\deq\; \min \hBB{\eta \;\col\; \frac{1}{M \eta} \leq
 \min \hbb{\frac{M^{-\gamma}}{\Gamma(z)^3} \,, \,
\frac{M^{-2 \gamma}}{\Gamma(z)^4 \im m(z)}} \text{ for all }
 z \in [E + \ii \eta, E + 10 \ii] }\,.
\end{equation}
This definition is exactly the same as \eqref{def wt eta E}, but $\wt\Gamma$ is
replaced with the larger quantity $\Gamma$, in other words we do not make use
of the spectral gap in $S$. This will pedagogically simplify the presentation
and in Section~\ref{sec:withgap} we will comment on the modifications
for the stronger result.

 We recall that here is no difference between  $\Gamma$ and $\wt\Gamma$ away
from the edges (both are of order 1), so readers interested 
in the local  semicircle law only in the bulk should be content with the simpler proof.
Near the spectral edges, however, there is a substantial difference.
Note that even in the Wigner case (see \eqref{Gammawigner}),
$\eta_E$ is much larger  near the spectral edges
 than the optimal threshold $\wt\eta_E\sim 1/N$.

\begin{definition} \label{def: domain}
 We call a deterministic nonnegative
function $\Psi \equiv \Psi^{(N)}(z)$ an \emph{admissible control parameter} if we have
\begin{equation*}
c M^{-1/2} \;\leq\; \Psi \;\leq\; M^{-c}
\end{equation*}
for some constant $c > 0$ and large enough $N$. Moreover, we call 
any (possibly $N$-dependent) subset 
$$
  \bD =\bD^{(N)} \subset \Big\{ z\; : \; |E|\le 10, \eta\ge M^{-1+\gamma}\Big\}
$$
\emph{a spectral domain}.
\end{definition}

In this section we will mostly use the
 spectral domain
$$
  {\bf S}: = \Big\{ z\, : \, |E|\le 10, \; \eta\in [\eta_E, 10]\Big\}.
$$
Define the random control parameters
\begin{equation}\label{def:Lambda}
\Lambda_o \;\deq\; \max_{i \neq j} \abs{G_{ij}}\,, \qquad
\Lambda_d \;\deq\; \max_i \abs{G_{ii} - m}\,,
\qquad \Lambda \;\deq\; \max (\Lambda_o, \Lambda_d)\,,
\qquad \Theta \;\deq\; \abs{m_N - m}\,.
\end{equation}
In the typical regime we will work, all these quantities are small.
The key quantity is $\Lambda$ and we will develop an iterative
argument to control it. The first step is an apriori bound:

\begin{proposition} \label{prop: rough bound 1}
We have $\Lambda \prec M^{-\gamma / 3} \Gamma^{-1}$ uniformly in $\bS$.
\end{proposition}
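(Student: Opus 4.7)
The plan is to establish the a priori bound by a continuity (bootstrap) argument in $\eta = \im z$, starting from the trivial regime $\eta = 10$ and decreasing $\eta$ down to $\eta_E$. The key point is that $G(z)$ is Lipschitz in $z$ with bound $\eta^{-2}$, so one can discretize $\bS$ on an $M^{-C}$-net and use a union bound to turn pointwise probabilistic statements into uniform ones; this is standard, and I will work at a fixed $z \in \bS$.

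First I would derive the probabilistic estimates for the error term $\Upsilon_i$ in \eqref{def:Ups} conditional on an a priori assumption $\Lambda \le M^{-\gamma/3}\Gamma^{-1}$ (which in particular implies $\Lambda \le (\log M)^{-1}$, since $\Gamma \ge c$ on $\bS$). The bound $h_{ii} \prec M^{-1/2}$ is immediate from \eqref{hsmallerW}. For $Z_i$, the large deviation bounds \eqref{lde 2}--\eqref{lde 3} give
\[
|Z_i| \;\prec\; \pBB{\sum_{k,l}^{(i)} s_{ik} s_{il} |G_{kl}^{(i)}|^2}^{1/2}
\;\le\; \pbb{\frac{1}{M}\sum_l^{(i)} s_{il}\frac{\im G_{ll}^{(i)}}{\eta}}^{1/2}
\;\prec\; \sqrt{\frac{\im m + \Lambda}{M\eta}},
\]
using the Ward identity \eqref{ward} and the resolvent identity \eqref{resolvent expansion type 1} to pass from $G^{(i)}$ to $G$. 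An analogous argument using \eqref{resolvent expansion type 2} shows that the off-diagonal entries satisfy
\[
\Lambda_o \;\prec\; \sqrt{\frac{\im m + \Lambda}{M\eta}},
\]
and consequently $|A_i| \prec (\im m + \Lambda)/(M\eta)$. Thus the overall error term obeys $\|\bUpsilon\|_\infty \prec \sqrt{(\im m + \Lambda)/(M\eta)} =: \Psi(\Lambda)$.

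Next I would linearize the self-consistent equation \eqref{vself}. Since $\Lambda \le (\log M)^{-1}$, the assumption \eqref{smallups} holds and the expansion \eqref{viexp} is legitimate. Writing this in vector form,
\[
(1 - m^2 S)\bv \;=\; m^2 \bUpsilon + O_\prec\pb{\Psi(\Lambda)^2 + \Lambda^2},
\]
so by the very definition \eqref{def:rho} of $\Gamma$,
\[
\Lambda_d \;=\; \|\bv\|_\infty \;\leq\; \Gamma\,\Psi(\Lambda) + C\,\Gamma\,\pb{\Psi(\Lambda)^2 + \Lambda^2}.
\]
Combined with the bound $\Lambda_o \prec \Psi(\Lambda)$ above, this yields $\Lambda \prec \Gamma\,\Psi(\Lambda) + \Gamma\Lambda^2$. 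Now the definition \eqref{def eta E} of $\eta_E$ gives, on $\bS$, both $\frac{1}{M\eta} \leq M^{-\gamma}\Gamma^{-3}$ and $\frac{\im m}{M\eta} \leq M^{-2\gamma}\Gamma^{-4}$, so
\[
\Gamma\,\Psi(\Lambda) \;\leq\; \Gamma\sqrt{\frac{\im m}{M\eta}} + \Gamma\sqrt{\frac{\Lambda}{M\eta}}
\;\leq\; M^{-\gamma}\Gamma^{-1} + M^{-\gamma/2}\Gamma^{-1/2}\sqrt{\Lambda}.
\]
Plugging the bootstrap hypothesis $\Lambda \le M^{-\gamma/3}\Gamma^{-1}$ into the right-hand side and into the quadratic term $\Gamma \Lambda^2 \le M^{-2\gamma/3}\Gamma^{-1}$, everything is bounded by $\tfrac{1}{2} M^{-\gamma/3}\Gamma^{-1}$ for $N$ large, which improves the hypothesis by a factor of $1/2$.

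Finally, the continuity step. At $\eta = 10$ we have the trivial bounds $\Lambda_o \le \eta^{-1}$ and $\Lambda_d \le \eta^{-1} + |m| \le 2$, and a direct estimate shows $\Lambda \prec M^{-1/2}$ there, well inside the regime $\Lambda \le M^{-\gamma/3}\Gamma^{-1}$. Since both $\Lambda$ and the right-hand side $M^{-\gamma/3}\Gamma^{-1}$ are continuous in $\eta$, and the bootstrap above excludes the possibility of $\Lambda$ taking values in the intermediate range $[\tfrac{1}{2}M^{-\gamma/3}\Gamma^{-1}, M^{-\gamma/3}\Gamma^{-1}]$ on the good event, the bound must persist for all $\eta \in [\eta_E, 10]$.

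The main obstacle will be extracting the factor $\im m$ (rather than just $1$) inside the estimate on $\Upsilon_i$ and $\Lambda_o$; without this refinement, the bootstrap cannot close near the spectral edge where $\im m \ll 1$, and indeed the asymmetry $\Gamma^3$ versus $\Gamma^4 \im m$ in \eqref{def eta E} is tuned precisely to this dichotomy. The Ward identity \eqref{ward}, which produces $\im G_{ii}/\eta$ rather than $\|G\|^2 \sim \eta^{-2}$, is the critical mechanism. A secondary technical nuisance is that the estimates on $Z_i$ and $\Lambda_o$ involve $G^{(i)}$, and one must absorb the error in the interlacing bound $|G_{ii}^{(i)} - G_{ii}|$ into the control parameter $\Lambda$ without circularity — this is handled by treating $\Lambda^{(\bb T)}$ for small $|\bb T|$ on the same footing.
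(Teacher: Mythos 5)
Your proof follows the paper's route exactly: the bound on $\Upsilon_i$ and $\Lambda_o$ via the Ward identity and the large-deviation bounds of Theorem~\ref{thm: LDE}, the inversion of $1 - m^2 S$ using the definition of $\Gamma$, and a continuity/bootstrap argument in $\eta$ starting from the deterministic estimate at large $\eta$ (the paper's Claims~1 and~2 in Section~\ref{sec:gapp}).

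One caution about the phrasing at the end: the ``improvement by a factor of $1/2$'' is not what closes the bootstrap, and stated that way it could mislead. Since $\prec$ conceals a factor of $N^\e$, a merely constant-factor improvement on the bootstrap hypothesis would be absorbed by the $N^\e$ and produce no forbidden interval for $\Lambda(\eta)$ to avoid. What your own computation actually delivers is the polynomial gain $\Lambda \prec M^{-2\gamma/3}\Gamma^{-1}$ under the hypothesis $\Lambda \le M^{-\gamma/3}\Gamma^{-1}$; the two thresholds are separated by $M^{\gamma/3}$, which survives the $N^\e$ from $\prec$ once $\e < \delta\gamma/3$. The paper phrases its Claim~1 precisely in this form (with the exponent pair $\gamma/4$ and $\gamma/2$), and it is safer to state your conclusion the same way rather than in terms of a factor of $1/2$.
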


The main estimate behind the proof of Theorem~\ref{thm: with gap} for $\eta\ge \eta_E$ is the following
iteration statement:

\begin{proposition} \label{prop: optimal simple}
Let $\Psi$ be a control parameter
satisfying 
\begin{equation} \label{condition on Psi 1}
c M^{-1/2} \;\leq\; \Psi \;\leq\; M^{-\gamma/3} \Gamma^{-1}\,.
\end{equation}
 and
 fix $\e \in (0,\gamma/3)$. Then on the domain $\bS$ we have the implication
\begin{equation} \label{iteration step}
\Lambda \;\prec\; \Psi \qquad \Longrightarrow \qquad \Lambda \;\prec\; F(\Psi)\,,
\end{equation}
where we defined
\begin{equation*}
F(\Psi) \;\deq\; M^{-\e} \Psi + \sqrt{\frac{\im m}{M \eta}} + \frac{M^{\e}}{M \eta}\,.
\end{equation*}
\end{proposition}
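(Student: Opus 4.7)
The plan is to use the hypothesis $\Lambda \prec \Psi$ to control the random error term $\Upsilon_i = A_i + h_{ii} - Z_i$ in the self-consistent equation \eqref{vself}, and then to combine two complementary arguments: for $\Lambda_o$, I derive the desired bound directly from the off-diagonal resolvent identity \eqref{resolvent expansion type 2} together with the Ward identity \eqref{ward}, without invoking any stability estimate; for $\Lambda_d$, I invert the operator $1-m^2 S$ in the resulting vector self-consistent equation, paying a factor $\Gamma$ which is then tamed by the hypothesis $\Psi \le M^{-\gamma/3}\Gamma^{-1}$ and by the defining inequalities of $\eta_E$.

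First I estimate $\Upsilon_i$. The term $h_{ii} \prec M^{-1/2}$ is immediate from \eqref{polydecay}, and $|A_i| \prec \Psi^2$ follows from $|G_{ik} G_{ki}| \prec \Psi^2$ together with $|G_{ii}| \asymp 1$ (since $|v_i| \prec \Psi \ll 1$ and $m$ is bounded away from $0$). For $Z_i$, the large deviation bound \eqref{lde 3} applied to the off-diagonal quadratic form combined with \eqref{ward} yields
\begin{equation*}
\sum_{k \neq l} s_{ik} s_{il} |G_{kl}^{(i)}|^2 \;\le\; M^{-1} \sum_k s_{ik}\, \frac{\im G_{kk}^{(i)}}{\eta} \;\prec\; \frac{\im m + \Psi}{M \eta},
\end{equation*}
while the diagonal part of $Z_i$ is bounded by $\prec M^{-1/2}$ via \eqref{lde 1} and the apriori $|G_{kk}^{(i)}| \prec 1$. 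Altogether
\begin{equation*}
\max_i |\Upsilon_i| \;\prec\; \Psi^2 + M^{-1/2} + \sqrt{\tfrac{\im m}{M\eta}} + \sqrt{\tfrac{\Psi}{M\eta}}.
\end{equation*}

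For the off-diagonal entries, substituting the large deviation bound \eqref{lde 1} and the Ward identity into \eqref{resolvent expansion type 2} gives
\begin{equation*}
|G_{ij}| \;\prec\; |m| \pbb{\tfrac{\im G_{jj}^{(i)}}{M\eta}}^{1/2} \;\prec\; \sqrt{\tfrac{\im m + \Psi}{M \eta}}.
\end{equation*}
The elementary AM-GM inequality $\sqrt{\Psi/(M\eta)} = \sqrt{(M^{-\e}\Psi)(M^\e/(M\eta))} \le \tfrac{1}{2}(M^{-\e}\Psi + M^\e/(M\eta))$ then directly yields $\Lambda_o \prec F(\Psi)$, with no factor of $\Gamma$ at all.

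For the diagonal entries, the expansion \eqref{viexp} is justified because $|\sum_k s_{ik} v_k - \Upsilon_i| \le \Lambda + \max_i|\Upsilon_i| \prec \Psi \ll 1$. This yields the vector equation $(1 - m^2 S)\bv = -m^2 \Upsilon + O_\prec(\Psi \max_i|\Upsilon_i| + \max_i|\Upsilon_i|^2)$, and inverting via \eqref{def:rho} gives $\Lambda_d \le \Gamma\,(\max_i|\Upsilon_i| + \text{higher order})$. The main obstacle, and the technical heart of the argument, is that $\Gamma$ can be large near the spectral edges, so each of the four contributions to $\max_i|\Upsilon_i|$ must be multiplied by $\Gamma$ and then re-expressed in terms of $F(\Psi)$. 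The hypothesis $\Psi \le M^{-\gamma/3}\Gamma^{-1}$ absorbs the quadratic gain: $\Gamma \Psi^2 \le M^{-\gamma/3}\Psi \le M^{-\e}\Psi$ for any $\e < \gamma/3$. The inequalities $\Gamma^3/(M\eta) \le M^{-\gamma}$ and $\Gamma^4 \im m/(M\eta) \le M^{-2\gamma}$ built into \eqref{def eta E} are tuned precisely so that $\Gamma \sqrt{\im m/(M\eta)}$, $\Gamma \sqrt{\Psi/(M\eta)}$, and $\Gamma M^{-1/2}$ are all dominated by $F(\Psi)$ on $\bS$, again after a second application of the AM-GM splitting. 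Combining the diagonal bound with the off-diagonal bound yields the implication \eqref{iteration step}.
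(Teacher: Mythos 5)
Your estimate of $\Lambda_o$ is correct and matches the paper's; the AM-GM split at the end is fine. The gap is in the bound on $\Lambda_d$, and it is a genuine one.

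Writing $(1-m^2 S)\bv = -m^2\Upsilon + O_\prec(\cdots)$ and inverting the operator gives $\Lambda_d \prec \Gamma\,\max_i|\Upsilon_i|$ plus higher order, hence a term of the form $\Gamma\sqrt{(\im m + \Psi)/(M\eta)}$ in your final bound on $\Lambda$. This is exactly the estimate \eqref{phi Lambda} used in the proof of Proposition~\ref{prop: rough bound 1}; it is consistent with the fixed apriori bound $\Lambda\prec M^{-\gamma/2}\Gamma^{-1}$, but it does \emph{not} yield the self-improving implication $\Lambda\prec\Psi \Rightarrow \Lambda\prec F(\Psi)$. The reason is that $\Gamma\sqrt{\im m/(M\eta)}$ is a $\Psi$-independent quantity which genuinely exceeds $F(\Psi)$ once $\Psi$ is small: from the definition of $\eta_E$ one can only extract $\Gamma\sqrt{\im m/(M\eta)} \le M^{-\gamma}\Gamma^{-1}$, which is larger than $\Pi(z)$ by a factor of order $\Gamma$. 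Concretely, for a standard Wigner matrix at the spectral edge ($\kappa_E=0$, so $\Gamma\asymp\eta^{-1/2}$ and $\im m\asymp\sqrt\eta$), at the threshold $\eta\asymp\eta_E\asymp M^{-2/5}$ one gets $\Gamma\sqrt{\im m/(M\eta)}\asymp M^{-1/5}$ while $F(\Psi)\asymp\Pi\asymp M^{-2/5}$ (up to $M^{O(\gamma+\e)}$ corrections). No AM-GM splitting can rescue this: the offending term is independent of $\Psi$ and simply larger than the target. (It \emph{is} dominated by $F(\Psi)$ at the top of the admissible range, $\Psi\sim M^{-\gamma/3}\Gamma^{-1}$, which is why the same calculation suffices for Proposition~\ref{prop: rough bound 1}, but the proposition must hold for all admissible $\Psi$ down to $cM^{-1/2}$.)

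The paper's proof avoids the inversion entirely. After expanding to $v_i = m^2\sum_k s_{ik}v_k - m^2\Upsilon_i + O(\Lambda^2)$, it bounds $\sum_k s_{ik}v_k$ directly by the fluctuation averaging estimate \eqref{averaging without Q} of Theorem~\ref{thm: averaging}, which gives $\sum_k s_{ik}v_k = O_\prec(\Gamma\Psi^2)$. The factor of $\Gamma$ then attaches only to the quadratic-in-$\Psi$ term, where $\Gamma\Psi^2\le M^{-\gamma/3}\Psi\le M^{-\e}\Psi$ by hypothesis; the pointwise error $|\Upsilon_i|\prec\sqrt{(\im m+\Psi)/(M\eta)}$ enters with a bounded prefactor $|m|^2$ and is absorbed into $F(\Psi)$ by exactly the AM-GM split you already performed for $\Lambda_o$. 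The fluctuation averaging input is indispensable here: it encodes a cancellation among the weakly correlated $Q_k(G_{kk})^{-1}$ that is strictly stronger than the pointwise large deviation bound, and it is what allows the iteration to push $\Lambda$ down to the optimal scale $\Pi(z)$.
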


The proofs of these two propositions are postponed, we first complete the proof of the local
semicircle law.

It is easy to check that, on the domain $\bS$,  if $\Psi$ satisfies \eqref{condition on Psi 1} then so does $F(\Psi)$.
We may therefore iterate \eqref{iteration step}. This  yields
a bound on $\Lambda$ that is essentially the fixed point of the map $\Psi \mapsto F(\Psi)$,
 which is given by
$\Pi$, defined in \eqref{def:Pi},  (up to the factor $M^\e$). More precisely,
the iteration is started with $\Psi_0 \deq M^{-\gamma/3} \Gamma^{-1}$; the initial hypothesis $\Lambda \prec \Psi_0$ is provided 
by  Proposition \ref{prop: rough bound 1}. For $k \geq 1$ we set $\Psi_{k+1} \deq F(\Psi_k)$. Hence
 from \eqref{iteration step} we conclude that $\Lambda \prec \Psi_k$ for all $k$. Choosing $k \deq \ceil{\e^{-1}}$ yields
\begin{equation*}
\Lambda \;\prec\; \sqrt{\frac{\im m}{M \eta}} + \frac{M^{\e}}{M \eta}\,.
\end{equation*}
Since $\e$ was arbitrary, we have proved that
\begin{equation} \label{Gijest in proof}
\Lambda \;\prec\; \Pi\,,
\end{equation}
which is \eqref{Gijest 2}.

To prove \eqref{m-mest 2}, i.e.\ to estimate $\Theta$, we
rewrite  \eqref{vself} as
\begin{equation} \label{vself for exp}
-\sum_k s_{ik} v_k + \Upsilon_i \;=\; \frac{1}{m + v_i} - \frac{1}{m}\,,
\end{equation}
and expand the right hand side. 
Since $|m|\ge c$ and $|v_i|\le \Lambda$, the expansion is possible on the event
where $\Lambda\ll 1$, which occurs with very high probability by  Proposition~\ref{prop: rough bound 1}.
On this event we get
\begin{equation} \label{expanded vself 2}
 m^2 \pbb{-\sum_k s_{ik} v_k + \Upsilon_i} \;=\; - v_i + O( \Lambda^2)\,.
\end{equation}
Averaging in \eqref{expanded vself 2}       yields
\begin{equation}\label{averr}
m^2 \pb{-[v] + [\Upsilon]} \;=\; -  [v] + O_\prec( \Lambda^2)\,.
\end{equation}

We will show in Lemma~\ref{lemma: Lambdao} in the next section that $|\Upsilon_i|\prec \Pi$, but
in fact the average $[\Upsilon]$ is one order better.  This is due 
to the fluctation averaging phenomenon, and we have
\begin{proposition} \label{lem: avg Upsilon}
Suppose that $\Lambda_o \prec \Psi_o$ for some deterministic 
control parameter $\Psi_o$ satisfying $M^{-1/2}\le \Psi\le M^{-c}$. Then $[\Upsilon] = O_\prec(\Psi_o^2)$.
\end{proposition}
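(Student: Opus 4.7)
The plan is to decompose $[\Upsilon] = [A] + [h] - [Z]$ as in \eqref{def:Ups} and bound each average separately. The terms $[A]$ and $[h]$ admit straightforward pointwise bounds. The a priori estimate $\Lambda \prec M^{-\gamma/3}\Gamma^{-1} \ll 1$ from Proposition~\ref{prop: rough bound 1} ensures $|G_{ii}| \asymp |m|$ is bounded below, so $\sum_k s_{ik} = 1$ together with $\Lambda_o \prec \Psi_o$ gives $|A_i| \le C\sum_k s_{ik}|G_{ik}|^2 \prec \Psi_o^2$ uniformly in $i$, hence $|[A]| \prec \Psi_o^2$. For $[h]$, the $h_{ii}$ are independent, mean zero, with variances $s_{ii} \le M^{-1}$, so applying the large deviation bound \eqref{lde 1} to $[h] = \frac{1}{N}\sum_i h_{ii}$ yields
\begin{equation*}
|[h]| \;\prec\; \Big(\frac{1}{N^2}\sum_i s_{ii}\Big)^{1/2} \;\le\; \frac{1}{\sqrt{NM}} \;\le\; \frac{1}{M} \;\le\; \Psi_o^2,
\end{equation*}
using $M \le N$ and the hypothesis $\Psi_o \ge cM^{-1/2}$.

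The crux is the bound $|[Z]| \prec \Psi_o^2$. Writing out $Q_i$ explicitly,
\begin{equation*}
Z_i \;=\; \sum_k^{(i)} \bigl(|h_{ik}|^2 - s_{ik}\bigr) G_{kk}^{(i)} \;+\; \sum_{k\ne l}^{(i)} h_{ik}\, G_{kl}^{(i)}\, h_{li},
\end{equation*}
the large deviation bounds \eqref{lde 1} and \eqref{lde 3}, combined with $\Lambda_o \prec \Psi_o$ and $\sum_k s_{ik} = 1$, yield only the pointwise bound $|Z_i| \prec \Psi_o$. The extra factor of $\Psi_o$ in the average is obtained by fluctuation averaging, exploiting the structural property $Z_i = Q_i Z_i$: each $Z_i$ is centered with respect to the $i$-th row. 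I would compute
\begin{equation*}
\E|[Z]|^{2p} \;=\; \frac{1}{N^{2p}} \sum_{i_1,\ldots,i_{2p}} \E \prod_{s=1}^{2p} Z_{i_s}^{\sharp_s}
\end{equation*}
for arbitrary integer $p\ge 1$, where $\sharp_s$ denotes either $Z$ or $\bar Z$ depending on $s$. Whenever some value $i_t$ appears only once in the multi-index $(i_1,\ldots,i_{2p})$, the factor $Z_{i_t} = Q_{i_t} Z_{i_t}$ vanishes under $P_{i_t}$, so only the dependence of the other factors $Z_{i_s}$, $s\ne t$, on the $i_t$-th row can produce a nonzero contribution. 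Each such dependence enters through the minor resolvents $G_{kl}^{(i_s)}$ and can be stripped by the identity \eqref{resolvent expansion type 1}, namely
\begin{equation*}
G_{kl}^{(i_s)} \;=\; G_{kl}^{(i_s i_t)} + \frac{G_{k i_t}^{(i_s)} G_{i_t l}^{(i_s)}}{G_{i_t i_t}^{(i_s)}},
\end{equation*}
where the first summand is independent of row $i_t$ and the correction is $\prec \Psi_o^2$ (since $|G_{i_t i_t}^{(i_s)}| \asymp 1$). Combinatorial bookkeeping then shows that each ``singleton'' index contributes an extra $\Psi_o^2$ beyond the naive $\Psi_o$, so multi-indices in which some value appears exactly once cost the same as those in which every value is doubled. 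The net outcome is $\E|[Z]|^{2p} \prec \Psi_o^{4p}$, and Markov's inequality (applied for each $p$) gives $|[Z]| \prec \Psi_o^2$.

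The main obstacle is precisely this combinatorial accounting in the fluctuation averaging step: one must organize the $2p$-fold sum by the partition it induces on the index set, verify that the number of resolvent expansions \eqref{resolvent expansion type 1} needed to decouple a given partition class matches the power of $\Psi_o$ required to offset the cardinality of that class, and keep the polynomial-in-$p$ overhead under control so that the bound $\prec \Psi_o^2$ survives as $p\to \infty$. This is the content of the fluctuation averaging lemma as developed in \cite{EYY2, EYYrigi, EKY3}; once it is in place, $|[\Upsilon]| \le |[A]| + |[h]| + |[Z]| \prec \Psi_o^2$, which is the proposition.
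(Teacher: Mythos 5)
Your proof is correct and relies on the same key mechanism, fluctuation averaging via high moments exploiting the $Q_i$-structure of $Z_i$, but you organize the argument differently from the paper, which is worth noting. The paper's proof is a one-line reduction: from Schur's formula one has the algebraic identity $Q_k\,(G_{kk})^{-1} = h_{kk} - Z_k = \Upsilon_k - A_k$, and since $A_k = O_\prec(\Psi_o^2)$ pointwise, $\Upsilon_k = Q_k(G_{kk})^{-1} + O_\prec(\Psi_o^2)$; averaging and invoking the first bound of Theorem~\ref{thm: averaging} with $t_{ik}=1/N$ immediately gives $[\Upsilon]=O_\prec(\Psi_o^2)$. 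You instead split $[\Upsilon]=[A]+[h]-[Z]$, bound $[A]$ pointwise as the paper does, treat $[h]$ by the scalar large deviation bound \eqref{lde 1} (getting $(NM)^{-1/2}\le M^{-1}\le\Psi_o^2$, which is correct), and then re-derive the fluctuation averaging estimate for $[Z]$ from scratch via the $2p$-th moment and the resolvent expansion \eqref{resolvent expansion type 1}. Both decompositions work: the object $Q_k(G_{kk})^{-1}$ and your $Z_k$ differ only by the centered $h_{kk}$, so the high-moment calculation is essentially identical. The paper's packaging is more economical because it reuses Theorem~\ref{thm: averaging} as a black box (and that theorem is needed elsewhere, e.g.\ in the proof of Proposition~\ref{prop: optimal simple}), whereas you effectively reprove a special case of it. One small caveat in your sketch of the $[Z]$ bound: when you strip the $i_t$-dependence via $G^{(i_s)}_{kl}=G^{(i_s i_t)}_{kl}+G^{(i_s)}_{k i_t}G^{(i_s)}_{i_t l}/G^{(i_s)}_{i_t i_t}$, the correction is $\prec\Psi_o^2$ only for $k,l\neq i_t$; the terms with $k=i_t$ or $l=i_t$ give a diagonal factor of size $O(1)$ and contribute only $\prec\Psi_o$, and the full bookkeeping (as in \cite{EKY3}, Appendix B) must track these carefully. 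Your sketch correctly identifies the gain but leaves this case to the cited references, which is reasonable at this level of detail.
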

We will explain the proof in Section~\ref{sec:fluc}.
Using this proposition and \eqref{Gijest in proof},  we get
\begin{equation*}
[v] \;=\; m^2  [v] + O_\prec(\Pi^2)\,.
\end{equation*}
 Therefore
\begin{equation*}
\abs{[v]} \;\prec\; \frac{\Pi^2}{\abs{1 - m^2}} \;\leq\; \pbb{\frac{\im m}{\abs{1 - m^2}} + \frac{1}{\abs{1 - m^2} M \eta}} \frac{2}{M \eta} \;\leq\; \pbb{C + \frac{\Gamma}{M \eta}} \frac{2}{M \eta} \;\leq\; \frac{C}{M \eta}\,.
\end{equation*}
Here in the third step we used the elementary explicit bound $\im m\le C|1-m^2|$,
 and the bound
 $\Gamma \geq \abs{1 - m^2}^{-1}$ which follows from the definition of $\Gamma$ by applying the
 matrix $(1 - m^2 S)^{-1}$ to the constant vector.
 The last step follows from the definition of $\bS$. Since $\Theta = \abs{[v]}$, 
this concludes the proof of \eqref{m-mest 2}, and hence of Theorem \ref{thm: with gap} in the regime $\bS$, i.e.
for $\eta\ge \eta_E$. \qed

\medskip

In the next sections we explain the proofs of the three propositions used in this argument.
We first control the off-diagonal elements, i.e. $\Lambda_o$, then we turn to
the proof of Propositions \ref{prop: rough bound 1},  \ref{prop: optimal simple} and \ref{lem: avg Upsilon}.

\subsubsection{Basic estimates for  $\Lambda_o$ and $\Upsilon_i$}

\begin{lemma} \label{lemma: Lambdao}
The following statements hold for any spectral domain $\bD$ and admissible control parameter $\Psi$.
If $\Lambda \prec \Psi$ then
\begin{equation} \label{Lambdao 1}
\Lambda_o + \abs{\Upsilon_i} \;\prec\; \sqrt{\frac{\im m + \Psi}{M \eta}}\,.
\end{equation}
Moreover, for any fixed ($N$-independent) $\eta > 0$ we have
\begin{equation} \label{Lambdao 2}
 \Lambda_o + \abs{\Upsilon_i} \;\prec\; M^{-1/2}\,,
\end{equation}
uniformly in $z\in \{w\in \bD\; : \;\im w=\eta\}$.
\end{lemma}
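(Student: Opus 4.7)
The plan is to express $G_{ij}$ (for $i\ne j$) and the pieces $A_i$, $Z_i$ of the error term $\Upsilon_i$ defined in \eqref{def:Ups} as linear or quadratic forms in the independent entries $\{h_{ik}:k\ne i\}$ via the resolvent identities of Lemma~\ref{lemma: res id}, and then control these forms using the large deviation estimates of Theorem~\ref{thm: LDE} combined with the Ward identity \eqref{ward}. Throughout, the hypothesis $\Lambda\prec\Psi$ with admissible $\Psi\le M^{-c}$ will be used to ensure $\abs{G_{ii}}$ is bounded above and below by positive constants with high probability and that $\im G_{jj}\le \im m+\Psi$ up to corrections of order $\Psi^2$, which are absorbed by admissibility.

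To bound $\Lambda_o$, apply \eqref{resolvent expansion type 2} to write $G_{ij}=-G_{ii}\sum_k^{(i)} h_{ik} G^{(i)}_{kj}$. Since $\abs{G_{ii}}\prec 1$, using \eqref{lde 1} together with $s_{ik}\le M^{-1}$ and the Ward identity for $H^{(i)}$ yields
\begin{equation*}
\abs{G_{ij}} \;\prec\; \pbb{\sum_k s_{ik}\abs{G^{(i)}_{kj}}^2}^{1/2} \;\le\; \pbb{\frac{\im G^{(i)}_{jj}}{M\eta}}^{1/2}.
\end{equation*}
Converting $G^{(i)}_{jj}$ back to $G_{jj}$ via \eqref{resolvent expansion type 1} produces only $O(\Lambda_o^2)=O(\Psi^2)$ corrections, so $\im G^{(i)}_{jj}\le \im m+\Psi$ to leading order, giving the $\Lambda_o$-part of \eqref{Lambdao 1}. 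For $\Upsilon_i=A_i+h_{ii}-Z_i$ I treat the three pieces separately. The entry $h_{ii}$ is $O_\prec(M^{-1/2})$ by \eqref{polydecay}. The term $A_i=\sum_k s_{ik}\abs{G_{ik}}^2/G_{ii}$ is dominated by $C\im G_{ii}/(M\eta)\prec(\im m+\Psi)/(M\eta)$ by Ward applied to $G$. For $Z_i$, decompose
\begin{equation*}
Z_i \;=\; \sum_{k\ne l}^{(i)} h_{ik}\,G^{(i)}_{kl}\,\bar h_{il} \;+\; \sum_k^{(i)} \pb{\abs{h_{ik}}^2-s_{ik}}\,G^{(i)}_{kk};
\end{equation*}
bound the off-diagonal sum by \eqref{lde 3} together with a Ward estimate on $\sum s_{ik}s_{il}\abs{G^{(i)}_{kl}}^2$ to obtain $\prec\sqrt{(\im m+\Psi)/(M\eta)}$, and the diagonal centered sum by \eqref{lde 1} to obtain $\prec M^{-1/2}$. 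Since $\im m(z)\ge c\eta$ uniformly on $\abs{E}\le 10$, the latter is dominated by the former, completing \eqref{Lambdao 1}. Statement \eqref{Lambdao 2} is obtained by the same computation once the deterministic bound $\abs{G^{(\bb T)}_{ij}}\le \eta^{-1}$ is substituted for the probabilistic estimates: for fixed $\eta>0$ every $\eta$-dependent factor becomes a harmless constant and the large deviation bounds yield $M^{-1/2}$ directly.

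The main subtlety I anticipate is the self-referential structure in the $\Lambda_o$ estimate: the right-hand side contains $\im G^{(i)}_{jj}$, which through \eqref{resolvent expansion type 1} brings $\Lambda_o^2$ back into the bound. The admissibility condition $\Psi\le M^{-c}$ is precisely what is needed to make the $O(\Psi^2)$ correction strictly lower order than the target, closing the loop. All remaining steps are essentially mechanical applications of the Ward identity and the large deviation bounds.
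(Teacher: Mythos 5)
Your proposal is correct and rests on the same pillars as the paper's proof: Schur's decomposition of $\Upsilon_i$ into $A_i$, $h_{ii}$, $Z_i$, the resolvent expansions of Lemma~\ref{lemma: res id}, the large deviation bounds of Theorem~\ref{thm: LDE} applied conditionally on $H^{(i)}$ (or $H^{(ij)}$), and the Ward identity \eqref{ward} to convert quadratic sums into $\im G^{(\bb T)}_{jj}/\eta$. The one place where you deviate from the paper is the $\Lambda_o$ bound: the paper iterates \eqref{resolvent expansion type 2} \emph{twice} to reach $G_{ij}=-G_{ii}G^{(i)}_{jj}\big(h_{ij}-\sum_{k,l}^{(ij)}h_{ik}G^{(ij)}_{kl}h_{lj}\big)$, isolating the $O_\prec(M^{-1/2})$ term $h_{ij}$ explicitly and then applying the bilinear bound \eqref{lde 2}; you stop after one iteration, treat $\sum_k^{(i)}h_{ik}G^{(i)}_{kj}$ as a linear form in $\{\zeta_{ik}\}$ with coefficients $s_{ik}^{1/2}G^{(i)}_{kj}$, and invoke \eqref{lde 1}. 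This is a mild streamlining and is valid: the troublesome $k=j$ term, whose coefficient $G^{(i)}_{jj}$ is $O(1)$, contributes only $O(M^{-1})$ to $\sum_k s_{ik}|G^{(i)}_{kj}|^2$ and is therefore automatically absorbed by the Ward bound $\eta^{-1}\im G^{(i)}_{jj}\cdot M^{-1}$, since $\im G^{(i)}_{jj}\gtrsim\eta$. One minor slip: in the Hermitian case $A_i=\sum_k s_{ik}G_{ik}G_{ki}/G_{ii}$ is not literally $\sum_k s_{ik}|G_{ik}|^2/G_{ii}$, but the moduli are controlled the same way (Cauchy--Schwarz plus Ward on both $\sum_k|G_{ik}|^2$ and $\sum_k|G_{ki}|^2$), so this does not affect the estimate. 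The paper instead uses the simpler observation $|A_i|\prec\Lambda_o^2+M^{-1}$; both routes close. For \eqref{Lambdao 2} your replacement of all probabilistic resolvent bounds by the deterministic $|G^{(\bb T)}_{ij}|\le\eta^{-1}$ is exactly what the paper does as well.
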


We remark that we could have written \eqref{Lambdao 1} as
\begin{equation} \label{Lambdao 1 lousy}
\Lambda_o + \abs{\Upsilon_i} \;\prec\; \sqrt{\frac{\im m + \Lambda}{M \eta}}\,,
\end{equation}
but this formulation, while it carries the essence, is literally incorrect 
since it holds only if $\Lambda \prec M^{-c}$ has been apriori established.

\medskip

{\it Proof of Lemma~\ref{lemma: Lambdao}.}
We first observe that $\Lambda\prec \Psi\ll 1$ and the positive lower bound $|m(z)|\ge c$
implies that
\be\label{1/G}
    \frac{1}{|G_{ii}|}\prec 1.
\ee
A simple iteration of the expansion formulas \eqref{resolvent expansion type 1} concludes
that
\be\label{GT}
   |G_{ij}^{(\bb T)}|\prec \Psi, \quad \mbox{for $i\ne j$}, \qquad |G_{ii}^{(\bb T)}|\prec 1,
\qquad \frac{1}{|G_{ii}^{(\bb T)}|}\prec 1
\ee
for any subset $\bb T$ of fixed cardinality.

We begin with the first statement in Lemma ~\ref{lemma: Lambdao}.
First we estimate $Z_i$, which we split as
\begin{equation} \label{Zi split}
\abs{Z_i} \;\le\; \absBB{\sum_{k}^{(i)} \pb{\abs{h_{ik}}^2 - s_{ik}} G_{kk}^{(i)}} + \absBB{\sum_{k \neq l}^{(i)} h_{ik} G_{kl}^{(i)} h_{li}}\,.
\end{equation}
We estimate each term using Theorem \ref{thm: LDE}  by conditioning on $G^{(i)}$ and
using the fact that the family $(h_{ik})_{k=1}^N$ is independent of $G^{(i)}$.
By \eqref{lde 1} the first term of \eqref{Zi split} is stochastically dominated by
$$
 \Big[ \sum_k^{(i)} s_{ik}^2 |G_{kk}^{(i)}|^2\Big]^{1/2}\prec M^{-1/2},
$$
where \eqref{GT}, \eqref{s leq W} and \eqref{S is stochastic} were used.
 For the second term of \eqref{Zi split}
we apply Theorem \ref{thm: LDE} (ii) with $a_{kl} = s_{ik}^{1/2} G_{kl}^{(i)} s_{li}^{1/2}$ 
and $X_k = \zeta_{ik}$ (see \eqref{def:zeta}). We find
\begin{equation} \label{estimate for sum Gkl}
\sum_{k,l}^{(i)} s_{ik} \absb{G_{kl}^{(i)}}^2 s_{li} \;\leq\; \frac{1}{M} \sum_{k,l} s_{ik} \absb{G_{kl}^{(i)}}^2 
\;=\; \frac{1}{M \eta} \sum_k^{(i)} s_{ik} \im G_{kk}^{(i)} \;\prec\; 
\frac{\im m + \Psi}{M \eta}\,,
\end{equation}
where in the last step we used \eqref{resolvent expansion type 1} and the estimate $1/G_{ii} \prec 1$.
Thus we get
\begin{equation} \label{Zi estimate 1}
\abs{Z_i} \;\prec\; \sqrt{\frac{\im m + \Psi}{M \eta}}\,,
\end{equation}
where we absorbed the bound $M^{-1/2}$ on the first term of \eqref{Zi split} into the
 right-hand side of \eqref{Zi estimate 1}, using $\im m \geq c\eta$ as follows from an explicit estimate.

Next, we estimate $\Lambda_o$. We can iterate \eqref{resolvent expansion type 2} once 
to get, for $i \neq j$,
\begin{equation} \label{iterated identity}
G_{ij} \;=\; - G_{ii}\sum_k^{(i)} h_{ik} G_{kj}^{(i)} \;=\; - G_{ii} G_{jj}^{(i)} \pBB{h_{ij} - \sum_{k,l}^{(ij)} h_{ik} G_{kl}^{(ij)} h_{lj}}\,.
\end{equation}
The term $h_{ij}$ is trivially $O_\prec(M^{-1/2})$. In order to estimate the other term, we
 invoke Theorem \ref{thm: LDE} (iii) with $a_{kl} = s_{ik}^{1/2} G_{kl}^{(ij)} s_{lj}^{1/2}$, 
$X_k = \zeta_{ik}$, and $Y_l = \zeta_{lj}$. As in \eqref{estimate for sum Gkl}, we find
\begin{equation*}
\sum_{k,l}^{(i)} s_{ik} \absb{G_{kl}^{(ij)}}^2 s_{lj} \;\prec\; \frac{\im m + \Psi}{M \eta}\,,
\end{equation*}
and thus
\begin{equation} \label{Lambdao estimate 1}
\Lambda_o \;\prec\; \sqrt{\frac{\im m + \Psi}{M \eta}}\,,
\end{equation}
where we again absorbed the term $h_{ij} \prec M^{-1/2}$ into the right-hand side.

In order to estimate $A_i$ and $h_{ii}$ in the definition of $\Upsilon_i$, we use 
\eqref{GT} to estimate 
\begin{equation*}
\abs{A_i} + \abs{h_{ii}} \;\prec\; \Lambda_o^2 + M^{-1/2} \;\leq\; 
\Lambda_o + C \sqrt{\frac{\im m}{M \eta}} \;\prec\; \sqrt{\frac{\im m + \Lambda}{M \eta}}\,,
\end{equation*}
where the second step follows from $\im m \ge c\eta$. Collecting
\eqref{Zi estimate 1}, \eqref{Lambdao estimate 1}, this completes the proof of \eqref{Lambdao 1}.

The proof of \eqref{Lambdao 2} is almost identical to that of \eqref{Lambdao 1}.
 The quantities $\absb{G^{(i)}_{kk}}$ and $\absb{G^{(ij)}_{kk}}$ are estimated by 
the trivial deterministic bound $\eta^{-1} = O(1)$. We omit the details. \qed

\medskip

\subsubsection{Sketch of the proof of Proposition \ref{prop: rough bound 1}}\label{sec:gapp}

The core of the proof is a {\it continuity argument.}
 Its basic idea is to establish a \emph{gap} in the range of  $\Lambda$ 
by proving 

\medskip

{\bf Claim 1.} On the event $\Lambda \leq M^{-\gamma/4} \Gamma^{-1}$ we
actually have the stronger bound  $\Lambda \prec M^{-\gamma/2} \Gamma^{-1}$.

\medskip

\noindent
 In other words, for all $z \in \bS$, with high probability either $\Lambda \leq M^{-\gamma/2} \Gamma^{-1}$
 or $\Lambda \geq M^{-\gamma/4} \Gamma^{-1}$.  The second step is to show that  $\Lambda \leq M^{-\gamma/2} \Gamma^{-1}$
holds for $z$ with a large imaginary part $\eta$:

\medskip

{\bf Claim 2.} We have $\Lambda \prec M^{-1/2}$ uniformly in $z \in [-10,10] + 2 \ii$.

\medskip

Thus, for large $\eta$ the parameter $\Lambda$ is below the gap. 
Using the fact that $\Lambda$ is continuous in $\eta = \im z$ and hence cannot jump from one side of the
 gap to the other, we then conclude that $\Lambda$ is below the gap for all $z \in \bS$ 
and this is Proposition~\ref{prop: rough bound 1}.
See Figure \ref{fig: gap lambda} for an illustration of this argument. 

\begin{figure}[ht!]
\begin{center}
\includegraphics{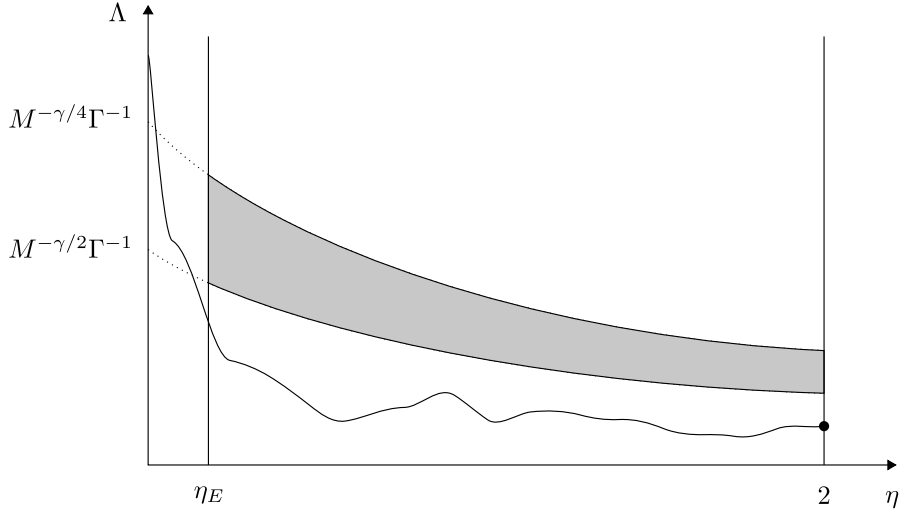}
\end{center}
\caption{The $(\eta, \Lambda)$-plane for a fixed $E$ with the graph of $\eta\to \Lambda(E+\ii \eta)$.
The shaded region is forbidden with high probability
 by Claim 1. The initial estimate, Claim 2, is marked with a black dot.
The graph of $\Lambda = \Lambda(E + \ii \eta)$ is continuous and lies beneath the shaded region.
Note that this method does not control $\Lambda(E+i\eta)$
in the regime  $\eta\le \eta_E$.  \label{fig: gap lambda}
}
\end{figure}

Now we explain Claim 1. We will work on the event $\Lambda \leq M^{-\gamma/4} \Gamma^{-1} \le M^{-c}$,
where  we may invoke \eqref{Lambdao 1}
 to estimate $\Lambda_o$ and $\Upsilon_i$.
 In order to estimate $\Lambda_d$, we expand the right-hand side of \eqref{vself for exp} in $v_i$ to get
\eqref{expanded vself 2}.
Using \eqref{Lambdao 1} to estimate $\Upsilon_i$,  we therefore have
\begin{equation}\label{toinvert}
v_i - m^2 \sum_{k} s_{ik} v_k \;=\; O_\prec \pbb{\Lambda^2 + \sqrt{\frac{\im m + \Lambda}{M \eta}}}\,.
\end{equation}
We write the left-hand side as $ [(1-m^2S)\bv]_i$ with the vector $\bv = (v_i)_{i = 1}^N$. Inverting
the operator $1-m^2 S$, we
therefore conclude that
\begin{equation}\label{Lambdadd}
\Lambda_d \;=\;  \max_i \abs{v_i} \;\prec\; \Gamma \pbb{\Lambda^2 + \sqrt{\frac{\im m + \Lambda}{M \eta}}}\,.
\end{equation}
Together with \eqref{Lambdao 1} and $\Gamma\ge c$, we therefore get
\begin{equation} \label{phi Lambda}
\Lambda \;\prec\;  \Gamma \pbb{\Lambda^2 + \sqrt{\frac{\im m + \Lambda}{M \eta}}}\,.
\end{equation}
On the event  $\Lambda \leq M^{-\gamma/4} \Gamma^{-1}$
we may estimate
\begin{equation*}
\Gamma \Lambda^2 \;\leq\; M^{-\gamma/2} \Gamma^{-1}\,.
\end{equation*}
Moreover, by definition of $\bS$, we have $\Gamma^3 \Pi \leq M^{-\gamma}$. 
Using the definition of $\Pi$ \eqref{def:Pi}, we therefore get
\begin{equation*}
 \Gamma \sqrt{\frac{\im m + \Lambda}{M \eta}} \;\leq\; 
\Gamma \Pi + \Gamma \sqrt{\Gamma^{-1} \Pi} \;\leq\; C M^{-\gamma /2} \Gamma^{-1}\,.
\end{equation*}
Plugging this into \eqref{phi Lambda} yields $ \Lambda \prec M^{-\gamma/2} \Gamma^{-1}$, which is Claim 1.

\medskip

Finally, we explain Claim 2. 
We write \eqref{vself for exp} as
\begin{equation}\label{vi}
v_i \;=\; \frac{m \pb{\sum_k s_{ik} v_k - \Upsilon_i}}{\pb{m^{-1} - \sum_k s_{ik} v_k + \Upsilon_i}}\,.
\end{equation}
In the regime $\eta=2$, all resolvent entries are bounded by $1/2$, thus $|v_i|\le 1$.
Since  $\abs{\Upsilon_i} \prec M^{-1/2}$ by \eqref{Lambdao 2} and
$\abs{m^{-1}} \geq 2$  we find 
\begin{equation*}
\absbb{m^{-1} + \sum_k s_{ik} v_k - \Upsilon_i} \;\geq\; 1 + O_\prec(M^{-1/2})\,.
\end{equation*}
Using $\abs{m} \leq 1/2$ we therefore conclude from \eqref{vi} that
\begin{equation*}
\Lambda_d \;\leq\; \frac{\Lambda_d + O_\prec(M^{-1/2})}{2 + O_\prec(M^{-1/2})} 
 \;=\;  \frac{\Lambda_d}{2} + O_\prec(M^{-1/2})\,,
\end{equation*}
i.e. $\Lambda_d= O_\prec(M^{-1/2})$. Together with 
the estimate
on $\Lambda_o$ from \eqref{Lambdao 2}, we obtained Claim 2. \qed

\subsubsection{Sketch of the proof of Proposition  \ref{prop: optimal simple}}

This argument is very similar to the proof of Claim 1 in Section~\ref{sec:gapp}.
We can work on the event  $\Lambda\le M^{-\gamma/4}$, then
the bound 
\eqref{Lambdao 1} is available to estimate $\Lambda_o$ and $\Upsilon_i$ .
Next, we estimate $\Lambda_d$.  
We expand the right-hand side of \eqref{vself for exp}, we get    
\begin{equation*}
\psi \abs{v_i} \;\leq\; C \psi \absbb{\sum_k s_{ik} v_k - \Upsilon_i} + C \psi \Lambda^2\,.
\end{equation*}
Using the fluctuation averaging estimate \eqref{averaging without Q} explained in the
next section, as well as 
\eqref{Lambdao 1}, we find
\begin{equation} \label{bound for Lambdad 1}
\Lambda_d \;\prec\; \Gamma \Psi^2 + \sqrt{\frac{\im m + \Psi}{M \eta}}\,,
\end{equation}
which, combined with \eqref{Lambdao 1},
 yields
\begin{equation} \label{bound Lambda 1}
\Lambda \;\prec\; \Gamma \Psi^2 + \sqrt{\frac{\im m + \Psi}{M \eta}} \le  
M^{-\e} \Psi + \sqrt{\frac{\im m}{M \eta}} + \frac{M^{\e}}{M \eta}= F(\Psi),
\end{equation}
where in the last step
we used the assumption $\Psi \leq M^{-\gamma/3} \Gamma^{-1}$ and $\e\le\gamma/3$.
\qed

\subsubsection{Fluctuation averaging: proof of Proposition~\ref{lem: avg Upsilon}}\label{sec:fluc}

The leading error in the self-consistent equation \eqref{expanded vself 2} for $v_i$ is  $\Upsilon_i$.
Among the three summands of $\Upsilon_i$, see \eqref{def:Ups}, typically $Z_i$ is the largest, thus
we typically have
\be\label{vZ}
   v_i\prec Z_i
\ee
in the regime where $\Gamma$ is bounded.
The large deviation bounds in Theorem~\ref{thm: LDE} show that $Z_i\prec \Lambda_o$ and
a simple second moment calculation shows that this bound is essentially optimal. On the other hand,
\eqref{ward} shows that
the typical size of the off-diagonal resolvent matrix elements is at least of order $(N\eta)^{-1/2}$, thus
the estimate
$$
   Z_i \prec \Lambda_o \prec \frac{1}{\sqrt{N\eta}}
$$
is essentially  optimal in the standard Wigner case $(M=N)$. Together with  \eqref{vZ}
this shows that the natural bound for $\Lambda$ is $(N\eta)^{-1/2}$,  which  is also
reflected in the bound \eqref{Gijest 2}.  

However, the bound \eqref{m-mest 2} for the average, $m_N-m = [v]$, is of order $\Lambda^2\sim (N\eta)^{-1}$,
i.e. it is one order better than the bound for $v_i$. For the purpose of $[v]$, it is the average $[\Upsilon]$
of the leading errors $\Upsilon_i$ that matters, see \eqref{averr}. Since $Z_i$, the leading term in
$\Upsilon_i$,  is a fluctuating quantity with zero expectation,
 the improvement comes from the fact that fluctuations cancel out
in the average. The basic example of this phenomenon is the central limit theorem.
In our case, however, $Z_i$ are not independent. In fact, their
correlations do not decay, at least in the Wigner case where all indices $i$ play
symmetric role. Thus standard results on central limit theorems for weakly correlated
random variables do not apply.

Here we formulate a version of the fluctuation averaging mechanism, taken from \cite{EKY3},
 that is the most useful 
for this discussion and we comment on the history afterwards. 

 We shall perform the averaging with respect to a family of weights $T = (t_{ik})$ satisfying
\begin{equation} \label{condition on weights}
0 \;\leq\; t_{ik} \;\leq\; M^{-1} \,, \qquad \sum_{k} t_{ik} \;=\; 1\,.
\end{equation}
Typical example  weights are $t_{ik} = s_{ik}$ and $t_{ik} = N^{-1}$. Note that in both of these cases $T$ commutes with $S$.

\begin{theorem}[Fluctuation averaging] \label{thm: averaging}
Fix a spectral domain $\bD$ and  deterministic admissible control parameters $\Psi, \Psi_o\le M^{-c}$.
Suppose that $\Lambda \prec \Psi$, $\Lambda_o\prec \Psi_o$
and the weight $T = (t_{ik})$ satisfies \eqref{condition on weights}.
Then we have
\begin{equation} \label{averaging with Q}
 \sum_{k} t_{ik} Q_k \frac{1}{G_{kk}} \;=\; O_\prec(\Psi^2_o)\,, \qquad \sum_{k} t_{ik} Q_k G_{kk} \;=\; O_\prec(\Psi^2)\,.
\end{equation}
If in addition $T$ commutes with $S$ then
\begin{equation} \label{averaging without Q}
\sum_{k} t_{ik} v_k \;=\;  O_\prec(\Gamma \Psi^2)\,, \qquad \sum_{k} t_{ik} (v_k - [v]) \;=\;  O_\prec(\wt \Gamma \Psi^2)\,.
\end{equation}
The estimates \eqref{averaging with Q} and \eqref{averaging without Q} are uniform in the index $i$.
\end{theorem}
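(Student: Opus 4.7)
The plan is to prove \eqref{averaging with Q} by a high-moment computation and then to deduce \eqref{averaging without Q} by inverting the linearised self-consistent equation. For the first estimate I would apply Schur's formula \eqref{schur} to write
$$
Q_k\bigl(1/G_{kk}\bigr) \;=\; h_{kk} - Z_k, \qquad Z_k \;:=\; Q_k\sum_{a,b}^{(k)} h_{ka}\, G_{ab}^{(k)}\, h_{bk},
$$
so that each summand is a mean-zero linear-plus-quadratic functional of the $k$-th row of $H$. A direct application of Theorem \ref{thm: LDE}, together with the Ward identity \eqref{ward} and the a priori hypothesis $\Lambda_o\prec\Psi_o$, already gives $Z_k=O_\prec(\Psi_o)$ per index; the content of the lemma is to exhibit the extra factor $\Psi_o$ that is gained after averaging in $k$. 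The $Q_k G_{kk}$ version is obtained from the $Q_k(1/G_{kk})$ version by writing $G_{kk}-m = -m\,G_{kk}\,(1/G_{kk}-1/m)$ and using the bound $1/|G_{kk}|\prec 1$, which follows from $\Lambda\prec\Psi\le M^{-c}$; the relevant control parameter is now $\Psi$ rather than $\Psi_o$, since the leading term in $G_{kk}-m$ is the diagonal deviation $v_k$.

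The combinatorial core is the $2p$-th absolute moment of $X:=\sum_k t_{ik}Q_k(1/G_{kk})$, for $p$ an arbitrary positive integer. Expanding $|X|^{2p}$ into a multiple sum over indices $(k_1,\dots,k_{2p})$, I would first iterate \eqref{resolvent expansion type 1} to express every $G^{(k_j)}_{ab}$ that occurs as a polynomial in $G^{(\bb T)}$-entries with $\bb T=\{k_1,\ldots,k_{2p}\}$, paying a factor $O_\prec(\Psi_o)$ for each off-diagonal element generated. After this preparation the $j$-th factor depends only on the $k_j$-th row of $H$ and on the matrix $G^{(\bb T)}$, which is independent of that row. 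Hence $P_{k_j}$ annihilates the product whenever the index $k_j$ appears in only one slot, forcing each distinct value of $k_j$ to appear at least twice. Counting the surviving configurations using $\sum_k t_{ik}^2\le M^{-1}$ (which follows from $t_{ik}\le M^{-1}$ and $\sum_k t_{ik}=1$) and estimating each paired group via Theorem \ref{thm: LDE} yields $\E|X|^{2p}\le C_p (M^{-1}\Psi_o^2)^{p}$. Combined with the admissibility $\Psi_o\ge cM^{-1/2}$, we have $M^{-1/2}\Psi_o\le\Psi_o^2$, so that Chebyshev and the freedom to take $p$ arbitrarily large give $X\prec\Psi_o^2$. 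The $Q_k G_{kk}$ estimate is obtained by the same moment argument applied to the reduction of the previous paragraph, with $\Psi$ in place of $\Psi_o$.

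For \eqref{averaging without Q} I would start from the scalar self-consistent equation \eqref{expanded vself 2}, rewritten entrywise as $(1-m^2 S)\bv = m^2\boldsymbol{\Upsilon}+O_\prec(\Psi^2)$. Applying $T$, which commutes with $S$ and hence with $(1-m^2 S)^{-1}$, gives
$$
T\bv \;=\; (1-m^2 S)^{-1}\bigl(m^2\,T\boldsymbol{\Upsilon}+O_\prec(\Psi^2)\bigr).
$$
The $i$-th entry of $T\boldsymbol{\Upsilon}$ is $\sum_k t_{ik}\Upsilon_k$, and decomposing $\Upsilon_k = A_k+h_{kk}-Z_k$, the $Z_k$-part is exactly $\sum_k t_{ik}Q_k(1/G_{kk})$ (up to the harmless $h_{kk}$ terms), controlled by the first half of the theorem, while the $A_k+h_{kk}$ part is already pointwise $O_\prec(\Psi_o^2)\le O_\prec(\Psi^2)$ by Lemma \ref{lemma: Lambdao}. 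Bounding $(1-m^2 S)^{-1}$ in the $\ell^\infty\to\ell^\infty$ norm by $\Gamma$ yields the first bound in \eqref{averaging without Q}. The sharper bound with $\wt\Gamma$ follows identically after first subtracting the constant mode $[v]\,\mathbf{e}$, which projects the inversion onto $\mathbf{e}^\perp$, where the inverse has operator norm $\wt\Gamma$ instead of $\Gamma$.

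The principal obstacle will be the combinatorial organisation of the moment expansion. Naively iterating \eqref{resolvent expansion type 1} doubles the number of terms at each step, so the depth of expansion must be coupled to $p$ in such a way that the residual (unexpanded) monomials still contribute at most $\Psi_o^{4p}$ while the total number of monomials remains polynomial in $p$. The cleanest realisation associates to each multi-index $(k_1,\dots,k_{2p})$ a coincidence graph on $\{1,\ldots,2p\}$, bounds contributions component-by-component along this graph, and repeatedly uses that stochastic domination is closed under sub-polynomial sums. A subsidiary but essential point is that the crude a priori bounds $|G^{(\bb T)}_{ab}|\prec\Psi_o$ (for $a\neq b$) and $1/|G^{(\bb T)}_{aa}|\prec 1$ must hold uniformly for $|\bb T|\le 2p$; this follows from $\Lambda\prec\Psi\le M^{-c}$ by iterating \eqref{resolvent expansion type 1}, but requires some care to preserve uniformity in the $2p$ indices that enter the expansion.
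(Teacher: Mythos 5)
The overall architecture of your proposal --- high-moment computation plus Chebyshev for \eqref{averaging with Q}, then commuting $T$ through $(1-m^2S)^{-1}$ in the linearised self-consistent equation to get \eqref{averaging without Q} --- is indeed the route taken in the paper (which only displays the second moment and refers to Appendix~B of \cite{EKY3} for the full argument). Your treatment of \eqref{averaging without Q}, including the projection to $\mathbf{e}^\perp$ and replacing $\Gamma$ by $\wt\Gamma$, is sound.

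However, your description of the combinatorial mechanism in the moment estimate contains a genuine error which would lead to a false (too strong) bound. You claim that after expanding to $G^{(\bb T)}$ with $\bb T=\{k_1,\dots,k_{2p}\}$ ``$P_{k_j}$ annihilates the product whenever the index $k_j$ appears in only one slot, forcing each distinct value of $k_j$ to appear at least twice,'' and you use this to claim $\E|X|^{2p}\le C_p(M^{-1}\Psi_o^2)^p$. This is not correct. The cross term $\E\, X_k\ov{X_l}$ for $k\neq l$ does \emph{not} vanish: after inserting $G_{kk}^{-1}=[G_{kk}^{(l)}]^{-1}-G_{kl}G_{lk}/(G_{kk}G_{kk}^{(l)}G_{ll})$ and its $l$-analogue, the three mixed products vanish by the identity $\E\,Q_i(X)Y=\E\,Q_i(XY)=0$ for $Y$ independent of row $i$, but the fourth term, containing $G_{kl}G_{lk}\cdot G_{lk}G_{kl}$, survives and is $\prec\Psi_o^4$. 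In other words, the gain of $\Psi_o^2$ per index comes from \emph{off-diagonal resolvent entries coupling distinct indices}, not from enforced index coincidences. Your claimed bound $(M^{-1}\Psi_o^2)^p=(M^{-1/2}\Psi_o)^{2p}$ is strictly smaller than the true $\Psi_o^{4p}$ whenever $\Psi_o\gg M^{-1/2}$, and would give $X\prec M^{-1/2}\Psi_o$; in the standard Wigner case with $\Psi_o\asymp(N\eta)^{-1/2}$ this would be $N^{-1}\eta^{-1/2}$, contradicting the fact that the averaged fluctuation of $[v]$ is of order $(N\eta)^{-1}=\Psi_o^2$. The coincidence graph you propose should therefore be organised around \emph{resolvent-chain connections} produced by the expansion, not index coincidences; otherwise the dominant (generic, all-distinct-$k_j$) configurations are thrown away.

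Two lesser points: (i) the claim that ``$A_k+h_{kk}$ is already pointwise $O_\prec(\Psi_o^2)$'' is wrong for the $h_{kk}$ piece, since $h_{kk}\prec M^{-1/2}$ and $\Psi_o^2\gtrsim M^{-1}$, so $M^{-1/2}\gg\Psi_o^2$; only the average $\sum_k t_{ik}h_{kk}\prec(\sum_k t_{ik}^2 s_{kk})^{1/2}\lesssim M^{-1}\lesssim\Psi_o^2$ is small. (In fact the cleanest route is to note $Q_k(1/G_{kk})=h_{kk}-Z_k=\Upsilon_k-A_k$, so only $A_k=O_\prec(\Psi_o^2)$ must be handled pointwise.) (ii) The identity $G_{kk}-m=-m\,G_{kk}(1/G_{kk}-1/m)$ is circular as a reduction --- it literally reproduces $Q_kG_{kk}=Q_kG_{kk}$; you should instead either expand $v_k=-m^2(1/G_{kk}-1/m)-mv_k(1/G_{kk}-1/m)$ and treat the quadratic piece pointwise, or simply repeat the moment computation directly for $Q_kG_{kk}$, which works verbatim.
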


The first version of the fluctuation averaging mechanism appeared in \cite{EYY2} for the Wigner case,
where $[Z]= N^{-1}\sum_k Z_k$ was bounded by $\Lambda_o^2$. Since $Q_k[G_{kk}]^{-1}$ is essentially
$Z_k$, see \eqref{schur}, this corresponds to the first bound in \eqref{averaging with Q}. 
A different proof (with a better bound on the constants) was given in \cite{EYYrigi}.
A conceptually streamlined  version of the original proof was extended to
sparse matrices \cite{EKYY1} and to sample covariance matrices \cite{PY1}. 
Finally, an extensive analysis in \cite{EKYfluc} treated the fluctuation averaging of general polynomials
of resolvent entries and identified the order of cancellations depending on the
algebraic structure of the polynomial.
 Moreover, in \cite{EKYfluc} an  additional cancellation effect was found 
for the quantity $Q_i|G_{ij}|^2$. This improvement plays a key role  in the proof of Theorem~\ref{lm:withprof},
see \eqref{wtT}. 

All proofs of the fluctuation averaging  theorems rely on computing expectations of high moments
of the averages and carefully estimating various terms of different combinatorial structure. 
In  \cite{EKYfluc} we have  developed a Feynman diagrammatic representation for bookkeeping the terms,
but this is necessary only for the case of general polynomials. For the special cases stated in Theorem
\ref{thm: averaging}, the proof is relatively simple and it is presented in Appendix B of \cite{EKY3}.
Here we will not repeat the proof, 
 we only indicate the main mechanism by estimating the second moment of the first term 
in \eqref{averaging with Q}. The actual proof requires estimating all moments and
then use Chebysev inequality to translate the moment estimates to probabilistic
estimates standing behind the notation $O_\prec(\Psi_o^2)$ in \eqref{averaging with Q}.

\medskip

{\it Second moment calculation.}
First we claim that
\begin{equation} \label{Xi estimate}
\absbb{Q_k \frac{1}{G_{kk}}} \;\prec\; \Psi_o.
\end{equation}
Indeed, from Schur's complement formula \eqref{schur} we get $\abs{Q_k  (G_{kk})^{-1}} \leq \abs{h_{kk}} + \abs{Z_k}$.
The first term is estimated by $\abs{h_{kk}} \prec M^{-1/2} \leq \Psi_o$. The second term 
is estimated exactly as in \eqref{Zi split} and \eqref{estimate for sum Gkl}, giving $|Z_k| \prec \Psi_o$.
 In fact, the same bound \eqref{Xi estimate} holds if $G_{kk}$ is
 replaced with  $G_{kk}^{({\bb T})}$ as
long as $|{\bb T}|$ is bounded.

Abbreviate $X_k \deq Q_k (G_{kk})^{-1}$ and compute the variance
\begin{equation} \label{variance of average of X}
\E \absbb{\sum_k t_{ik} X_k}^2 \;=\; \sum_{k,l} t_{ik} t_{il} \E X_{k} \ov{X_{l}} \;=\; \sum_{k} t_{ik}^2 \E X_{k} \ov{X_{k}}
 + \sum_{k \neq l} t_{ik} t_{il}\E X_{k} \ov{X_{l}}\,.
\end{equation}
Using the bounds \eqref{condition on weights} on $t_{ik}$ and \eqref{Xi estimate},
 we find that the first term on the right-hand
 side of \eqref{variance of average of X} is $O_\prec(M^{-1} \Psi_o^2) = O_\prec(\Psi_o^4)$, 
where we used that $\Psi_o$ is admissible.
 Let us therefore focus on the second term of \eqref{variance of average of X}.
 Using the fact that $k \neq l$, we apply \eqref{resolvent expansion type 1} to $X_{k}$ and $X_{l}$ to get
\begin{equation} \label{variance calculation}
\E X_{k} \ov{X_{l}} \;=\; \E Q_{k} \pbb{\frac{1}{G_{k k}}} Q_{l} \ov{\pbb{\frac{1}{G_{l l}}}} \;=\;
\E Q_{k} \pbb{\frac{1}{G_{kk}^{(l)}} - \frac{G_{kl} G_{lk}}{G_{kk} G_{kk}^{(l)} G_{ll}}} Q_{l} \ov{\pbb{\frac{1}{G_{ll}^{(k)}} - \frac{G_{lk} G_{kl}}{G_{ll} G_{ll}^{(k)} G_{kk}}}}\,.
\end{equation}
We multiply out the parentheses on the right-hand side. The crucial observation is that if the random variable $Y$ is independent of $i$ (see Definition \ref{definition: P Q}) then $\E Q_i(X) Y = \E Q_i(XY) = 0$. Hence out of the four terms obtained from the right-hand side of \eqref{variance calculation}, the only nonvanishing one is
\begin{equation*}
\E Q_{k} \pbb{\frac{G_{kl} G_{lk}}{G_{kk} G_{kk}^{(l)} G_{ll}}} Q_{l} \ov{\pbb{\frac{G_{lk} G_{kl}}{G_{ll} G_{ll}^{(k)} G_{kk}}}} \;\prec\; \Psi_o^4\,,
\end{equation*}
where we used that the denominators are harmless, see \eqref{1/G}.
Together with \eqref{condition on weights}, this concludes the proof of $\E \absb{\sum_{k}t_{ik} X_k}^2 \prec \Psi_o^4$,
which means that $\sum_{k}t_{ik} X_k$ is bounded by $\Psi_o^2$ in second moment sense.  \qed

\medskip

Finally, Proposition~\ref{lem: avg Upsilon} directly follows from the first estimate in
\eqref{averaging with Q} with $t_{ik}=1//N$,  since from \eqref{schur} we have
$$
 Q_k \frac{1}{G_{kk}} = h_{kk} - Z_k = \Upsilon_k - A_k = \Upsilon_k + O_\prec(\Psi_o^2).
$$
Taking the average over $k$, we get $[\Upsilon] =  O_\prec(\Psi_o^2)$. \qed

\subsection{Remark on the proof of the local semicircle law with using the spectral gap}\label{sec:withgap}

In Section~\ref{sec:nogap} we proved the local semicircle law, Theorem~\ref{thm: with gap}, uniformly
for $\eta\ge \eta_E$ instead of the larger regime $\eta\ge \wt\eta_E$. The difference between
these two thresholds stems from the difference between $\Gamma$ and $\wt\Gamma$, see
 \eqref{def wt eta E} and \eqref{def eta E}. 

The bound $\Gamma$ on the norm of $(1-m^2S)^{-1}$ entered the proof when the self-consistent equation
\eqref{toinvert} was solved.  The key  idea is
 to solve the self-consistent equation \eqref{toinvert} separately on the subspace 
of constants (the span of the vector ${\bf e} $) and on its orthogonal complement ${\bf e}^\perp$.
Roughly speaking, we obtain
\be\label{viv}
  |v_i - [v]|\prec \wt \Gamma \Big(\Lambda^2 + r(\Lambda)\Big), \qquad r(\Lambda):=
 \sqrt{\frac{\im m +\Lambda}{M\eta}},
\ee
instead of \eqref{Lambdadd}. In fact, we can improve this by using the fluctuation averaging.
Subtracting the average over $i$ from the self-consistent equation \eqref{expanded vself 2}
and estimating $|m|^2\le C$, we have
\begin{equation} \label{Lambdad estimate 2}
\absb{v_i - [v]} \;\leq\; C \absbb{\sum_k s_{ik} \pb{v_k - [v]} - \pb{\Upsilon_i - [\Upsilon]}} + O_\prec(\Lambda^2)
 \;\prec\; \wt \Gamma \Lambda^2 + r(\Lambda)\,,
\end{equation}
where in the last step we used the fluctuation averaging estimate 
\eqref{averaging without Q} with $s_{ik}= t_{ik}$, and $\abs{\Upsilon_i} \prec r(\Lambda)$ from \eqref{Lambdao 1}.

On the space of constant vectors, \eqref{vself for exp} becomes a scalar equation for
the average $[v]$, which can be expanded up to second order.
More precisely, assuming $|v_i|\ll 1$, we can expand \eqref{vself for exp} up to second order:
\begin{equation} \label{selfc expanded deg 2}
-\sum_k s_{ik} v_k + \Upsilon_i \;=\; - \frac{1}{m^2} v_i + \frac{1}{m^3} v_i^2 + O(\Lambda^3)\,.
\end{equation}
In order to get a closed equation for $[v]$, we
take the average over $i$:
\begin{equation} \label{averaged vself expanded}
(1 - m^2) [v] - m^{-1} \frac{1}{N} \sum_i v_i^2 \;=\; -m^2 [\Upsilon] + O(\Lambda^3)\, .
\end{equation}
The nonlinear term is estimated by 
\begin{equation*}
\frac{1}{N} \sum_i v_i^2 \;=\; [v]^2 + \frac{1}{N} \sum_i \pb{v_i - [v]}^2 \;=\; [v]^2 + O_\prec \pB{\pb{\wt \Gamma 
\Lambda^2 + r(\Lambda)}^2}, 
\end{equation*}
where \eqref{Lambdad estimate 2} was used. The average $[\Upsilon]$ can be estimated by $O_\prec(\Lambda_o^2)$
as in Proposition~\ref{lem: avg Upsilon}.  Moreover, $\Lambda_o$ is estimated by $r(\Lambda)$ as
in Lemma~\ref{lemma: Lambdao}. We thus obtain a quadratic equation for the scalar
quantity $[v]$:
\be\label{scl}
(1 - m^2) [v] - m^{-1} [v]^2 =  O_\prec\pB{\Lambda^3+ \pb{\wt \Gamma 
\Lambda^2 + r(\Lambda)}^2}.
\ee
The main control parameter in this proof is $\Theta = \abs{[v]}$, and the key iterative
scheme 
 is formulated in terms of $\Theta$. However, many intermediate estimates still involve $\Lambda$. In particular, 
the self-consistent equation \eqref{vself for exp} is effective only 
in the regime where $v_i$ is already small and in the calculation above
we tacitly used that $|v_i|\ll 1$.  Hence we need a preparatory
step to  prove an apriori
bound on $\Lambda$, essentially showing that $\Lambda \ll 1$, in fact
we will need  $\Lambda \ll \wt\Gamma^{-1}$ (compare with Proposition~\ref{prop: rough bound 1}). This
proof itself is a continuity argument  similar to the proof of 
Proposition \ref{prop: rough bound 1}; now, however, we have to follow $\Lambda$ and 
$\Theta$ in tandem. The main reason why $\Theta$ is already involved in
this part is that we work in larger spectral domain $\eta\ge \wt\eta_E$, defined  by using
$\wt\Gamma$. Thus, already in this preparatory step,
 the self-consistent equation has to be solved separately 
on the subspace of constants and its orthogonal complement. We will omit here these details.

The second preparatory step is to control $\Lambda$
in terms of $\Theta$, which allows us to express the error terms in the
self-consistent equation \eqref{scl} in terms of $\Theta = |[v]|$.
We first notice that \eqref{Lambdad estimate 2} implies
$$
   |v_i|\prec \Theta +  \wt \Gamma \Lambda^2 + r(\Lambda),
$$
i.e.
$$
 \Lambda \prec \Theta +  \wt \Gamma \Lambda^2 + r(\Lambda).
$$
Using the apriori bound $\Lambda \ll \wt \Gamma^{-1}$, we get
$$
  \Lambda\prec \Theta + r(\Lambda).
$$
This equation is analogous to \eqref{iteration step} and can be iterated as
in the application of
 in Proposition~\ref{prop: optimal simple} leading to \eqref{Gijest in proof}
to obtain
\be
  \Lambda \prec \Theta + \sqrt{\frac{\im m}{M\eta}} +\frac{1}{M\eta}.
\label{LaTh}
\ee
Plugging this bound into \eqref{scl}, we have a self-consistent equation for
the scalar quantity $[v]$ since $\Theta=|[v]|$. An elementary calculation, using the apriori bound
$\Lambda \ll \wt \Gamma^{-1}$,  yields
 \be\label{scl1}
(1 - m^2) [v] - m^{-1} [v]^2 =  O_\prec\pB{ p(\Theta)^2 + M^{-\gamma/4} \Theta^2}, \qquad
 p(\Theta): = \sqrt{\frac{\im m+ \Theta}{M\eta}} + \frac{1}{M\eta}.
\ee

Finally, in the main step we solve the quadratic inequality \eqref{scl1}
for $\Theta$. If we neglect the error term in \eqref{scl1}, then the equation reduces to
\be\label{unpert}
   (1-m^2)[v] \;=\; m^3 [v]^2,
\ee
which has two solutions: either $[v]=0$ or $[v] =(1-m^2)/m^3$. Away from the spectral edge
we have $|1-m^2|\ge c$
with some positive constant $c$, so the two solutions are separated and they both are  stable
under small perturbations. 
The second solution would mean that $[v]$ is strictly separated
away from zero. But this can be excluded by a continuity argument: for large $\eta$,
say $\eta= 2$, it is easy to prove that $[v]$ is small. Since $[v]$ is a continuous
function of $\eta$, we find that as $\eta$ decreases continuously, $[v]$ cannot suddenly
jump from a value near zero to a value near $(1-m^2)/m^3$.
Thus $[v]$ must remain in the vicinity of the zero solution to \eqref{unpert}. 
This completes the sketch of the proof of Theorem~\ref{thm: with gap}  for the
more general case $\eta\ge \wt\eta_E$. \qed

\section{Universality of  the correlation functions for Wigner matrices}\label{sec:uniwig}

In this section we explain  the sketch of the proof of  Theorem~\ref{bulkWigner}.

\subsection{Dyson Brownian motion and the
local  relaxation flow}\label{sec:DBM}

\subsubsection{Concept and results}

The Dyson Brownian motion (DBM) describes the evolution of the eigenvalues
 of a Wigner matrix as an interacting point process
 if each matrix element $h_{ij}$ evolves according to independent
 (up to symmetry restriction) 
Brownian motions. We will slightly alter this definition by
 generating the dynamics of the matrix elements
by an Ornstein-Uhlenbeck (OU) process  which leaves the standard Gaussian
 distribution invariant. 
 In the Hermitian case, the OU process for the rescaled matrix elements 
 $\zeta_{ij}: = N^{1/2}h_{ij}$ is given by the 
stochastic differential equation 
\be
  \rd \zeta_{ij}= \rd \beta_{ij} - \frac{1}{2} \zeta_{ij}\rd t, \qquad
i,j=1,2,\ldots N,
\label{zij}
\ee
where $ \beta_{ij}$,  $i <  j$, are independent complex Brownian
motions with variance one and $ \beta_{ii}$ are real 
Brownian motions of the same variance. 
 Denote the distribution of 
the eigenvalues $\bla=(\lambda_1, \lambda_2,\ldots, \lambda_N)$
 of $H_t$ at  time $t$
by $f_t ( \bla)\mu (\rd \bla)$
where $\mu$ is given by \eqref{01} with the Gaussian potential $V(x) = x^2/2$.

Then $f_t=f_{t,N}$ satisfies \cite{DyB}
\be\label{dy}
\partial_{t} f_t =  \cL f_t,
\ee
where 
\be
\cL=\cL_N:=   \sum_{i=1}^N \frac{1}{2N}\partial_{i}^{2}  +\sum_{i=1}^N
\Bigg(- \frac{\beta}{4} \lambda_{i} +  \frac{\beta}{2N}\sum_{j\ne i}
\frac{1}{\lambda_i - \lambda_j}\Bigg) \partial_{i}, \quad 
\partial_i=\frac{\partial}{\partial\lambda_i}.
\label{L}
\ee
The parameter $\beta$  is chosen as follows: $\beta= 2$ for complex 
Hermitian matrices  and $\beta=1$ for symmetric real matrices.
Our formulation of the problem has already taken into account  
Dyson's observation  that the invariant measure for this dynamics is $\mu$. 
 A natural question regarding the DBM is how fast the dynamics reaches equilibrium.
  Dyson had already posed this question in 1962:

\medskip

\noindent 
{\bf Dyson's conjecture \cite{DyB}:}  The global equilibrium of DBM is
 reached in  time of order one 
and  the  local equilibrium (in the bulk)  is reached in  time of order  $1/N$. 
  Dyson further remarked,

\medskip
\begin{minipage}[c]{6in}
{\it ``The picture of the gas coming into
equilibrium in two well-separated stages, with microscopic
and macroscopic time scales, is  suggested
 with the help of physical intuition. A
rigorous proof that this picture is accurate would
require a much deeper mathematical analysis."}
\end{minipage}

\bigskip

We will prove that Dyson's conjecture is correct if the 
initial data of the flow is a Wigner ensemble, which was 
Dyson's  original interest. Our result in fact is
 valid for DBM  with much more general initial data that we now survey. 
Briefly, 
it will turn out that the {\it global} equilibrium is
indeed reached within a time of order one,
but {\it local} equilibrium is achieved much faster if  an a-priori
estimate  on the location of the eigenvalues (also called points) is satisfied. 
To formulate this estimate,  
let $\gamma_j =\gamma_{j,N}$ denote  the location of the $j$-th point
under the semicircle law, i.e., $\gamma_j$ is defined by 
\eqref{def:gamma}.

\bigskip
\noindent 
{\bf A-priori Estimate:} There exists an ${\frak a}>0$ such that
\be
 Q=Q_\fa:= \sup_{t\ge N^{- 2 { \frak a}}}   \frac{1}{N}
 \int \sum_{j=1}^N(\lambda_j-\gamma_j)^2
 f_t( \lambda )\mu(\rd \lambda) \le CN^{-1-2{\frak a}}
\label{assum3}
\ee
with a constant $C$ uniformly in $N$. This condition first appeared in
\cite{ESY4}.

The main result on the local ergodicity of Dyson Brownian motion states
 that if the a-priori estimate \eqref{assum3} is satisfied  
then the local correlation functions  of the measure  $f_t \mu$ 
are the same as the corresponding ones for the Gaussian measure, 
 $\mu = f_\infty\mu$, provided that
 $t$ is larger than $N^{-2{\frak a}}$.
The $n$-point
correlation functions of the probability measure $f_t\rd\mu$ are defined,
similarly to \eqref{pk}, by
\be
 p^{(n)}_{t,N}(x_1, x_2, \ldots,  x_n) = \int_{\R^{N-n}}
f_t(\bx) \mu(\bx) \rd x_{n+1}\ldots
\rd x_N, \qquad  \bx=(x_1, x_2,\ldots, x_N).
\label{corr}
\ee
Due to the convention that one can view the locations of eigenvalues as the coordinates of particles,  we have used $\bx$, 
instead of $\bl$,  in the last equation. From now on, we will use both conventions depending 
on which viewpoint we wish to emphasize. 
Notice that the probability distribution  of the eigenvalues at the 
time $t$, $f_t \mu$, 
 is the same as that of the Gaussian divisible  matrix: 
\be\label{matrixdbm}
H_t = e^{-t/2} H_0 + (1-e^{-t})^{1/2}\, U,
\ee
where $H_0$  is the initial generalized  Wigner matrix and 
$U$ is an independent standard GUE (or GOE) matrix. This establishes  the universality of 
the Gaussian divisible ensembles. The precise statement is the following
theorem:

\begin{theorem}\label{thm:DBM} \cite[Theorem 2.1]{ESYY} Suppose that
 the a-priori estimate \eqref{assum3}  holds
 for the solution $f_t$ of the forward equation \eqref{dy} with some exponent $\fa>0$.
 Let $E\in (-2, 2) $ and $b>0$
such that $[E-b,E+b] \subset (-2, 2)$. Then for any $s>0$, 
for any integer $n\ge 1$ and for any compactly supported continuous test function
$O:\bR^n\to \bR$, we have
\be
\begin{split}
\lim_{N\to \infty} \sup_{t\ge  N^{-2\fa+s} } \;
\int_{E-b}^{E+b}\frac{\rd E'}{2b}
\int_{\R^n} &  \rd\alpha_1
\ldots \rd\alpha_n \; O(\alpha_1,\ldots,\alpha_n) \\
&\times \frac{1}{\varrho_{sc}(E)^n} \Big ( p_{t,N}^{(n)}  - p_{G, N} ^{(n)} \Big )
\Big (E'+\frac{\alpha_1}{N\varrho_{sc}(E)},
\ldots, E'+\frac{\alpha_n}{ N\varrho_{sc}(E)}\Big) =0.
\label{abstrthm}
\end{split}
\ee
\end{theorem}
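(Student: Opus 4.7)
The plan is to prove the theorem by comparing the dynamical distribution $f_t\mu$ with the equilibrium of an auxiliary strongly convex flow, usually called the \emph{local relaxation flow}. Fix a parameter $\tau>0$ (to be chosen comparable to $t$) and introduce the perturbed Hamiltonian
\begin{equation*}
\wt\cH(\bla) \;\deq\; \cH(\bla) + \frac{1}{2\tau}\sum_{j=1}^N (\lambda_j-\gamma_j)^2,
\end{equation*}
together with the Gibbs measure $\wt\omega = \wt Z^{-1} e^{-\beta N\wt\cH}$ and the corresponding generator $\wt\cL = \cL - \frac{1}{2\tau}\sum_j (\lambda_j-\gamma_j)\pt_j$. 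The advantage of $\wt\cH$ is that its Hessian is bounded below by $1/\tau$ (the Vandermonde part contributes a positive semidefinite Hessian, so only the added quadratic well matters), so by the Bakry--\'Emery criterion $\wt\omega$ satisfies a logarithmic Sobolev inequality with constant $O(\tau)$ and the dynamics generated by $\wt\cL$ has spectral gap $\gtrsim \tau^{-1}$.

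The core step is to analyze the relative entropy $S_t \deq S(f_t\mu\,|\,\wt\omega)$. Writing $f_t\mu = g_t\wt\omega$ one has the identity
\begin{equation*}
\pt_t S_t \;=\; -\tfrac{1}{N}\,D_{\wt\omega}(\sqrt{g_t}) \;+\; \int (\cL-\wt\cL)(g_t)\,\wt\omega,
\end{equation*}
where $D_{\wt\omega}$ is the Dirichlet form associated with $\wt\omega$. The first term gives the gain from the LSI for $\wt\omega$, namely $-c\tau^{-1}S_t$. The correction term contains $\sum_j (\lambda_j-\gamma_j)\pt_j g_t$ and is estimated by Cauchy--Schwarz to be bounded by $\tfrac{1}{2N}D_{\wt\omega}(\sqrt{g_t}) + CN\tau^{-2}\int\sum_j(\lambda_j-\gamma_j)^2 f_t\mu$. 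After absorbing the Dirichlet term into the gain, Gronwall gives
\begin{equation*}
S_\tau \;\le\; e^{-c}\,S_0 + C N^2\tau^{-1}\sup_{0\le s\le \tau} Q_\fa,
\end{equation*}
so choosing $\tau = N^{-2\fa+s}$ and using the a-priori bound \eqref{assum3} yields $S_\tau / N \le N^{-s}$, i.e. the entropy per particle is small.

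Next I convert this entropy bound into a statement on local correlation functions. The entropy inequality $\int G\, f_t\mu \le \log\int e^G \wt\omega + S_t$, applied to observables $G$ that depend on eigenvalues in a small window, together with the local equivalence (at scale $1/N$) of $\wt\omega$ and the Gaussian equilibrium $\mu$, lets me replace $f_t\mu$ by $\wt\omega$ up to errors going to zero when integrating in $E'$ over $[E-b,E+b]$. Finally one compares $\wt\omega$ with $\mu$: the extra quadratic potential $(\lambda_j-\gamma_j)^2/(2\tau)$ is a bulk/slow perturbation that does not affect correlation functions on the microscopic scale $1/(N\varrho_{sc}(E))$ once averaged over $E'$. (Here the averaging is essential, as it smooths out the slow cosmetic deformation introduced by the harmonic well.) Combining these three reductions --- entropy $\Rightarrow$ proximity of $f_t\mu$ to $\wt\omega$, then $\wt\omega \approx \mu$ locally on average --- produces \eqref{abstrthm}.

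The main obstacle is the handling of the cross term $\int(\cL-\wt\cL)g_t\,\wt\omega$; one must squeeze it between $\tau^{-1}S_t$ (the gain from LSI) and the pointwise a-priori bound $Q_\fa$, and the $N^2\tau^{-2}Q_\fa$ scaling leaves only a narrow window $\tau \sim N^{-2\fa+s}$ in which both competing terms are simultaneously manageable. A secondary technical issue is that the entropy/LSI bound controls observables on scale larger than $1/N$ only after the averaging in $E'$ in \eqref{abstrthm}; passing from the entropy estimate to the weak convergence of correlation functions against smooth test functions $O$ therefore requires an approximation/mollification at the microscopic scale, combined with the rigidity consequences of \eqref{assum3} to ensure that the required test functions lie in the regime where the entropy inequality is effective.
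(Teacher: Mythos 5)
Your high-level framework matches the paper's: add the strongly convex quadratic confining potential $W(\bla) = \frac{1}{2\tau}\sum_j(\lambda_j-\gamma_j)^2$ to form the local relaxation measure $\omega\propto e^{-\beta N(\cH + W)}$, use Bakry--\'Emery to obtain an LSI with constant $\sim\tau$, and bound the relative entropy of $f_t\mu$ w.r.t.\ $\omega$ by $O(N^2 Q\,\tau^{-1})$ using the a-priori estimate \eqref{assum3}. The entropy/Dirichlet-form estimate you derive is essentially the paper's Theorem~\ref{thm1}. However, there is a genuine gap at the step ``entropy $\Rightarrow$ proximity of $f_t\mu$ to $\omega$ at the microscopic scale.''

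Your entropy bound gives $S(f_t\mu|\omega) \lesssim N^2 Q\,\tau^{-1}$, and with $Q\le CN^{-1-2\fa}$ and $\tau\sim N^{-2\fa+s}$ this is $O(N^{1-s})$. That is small \emph{per particle}, but the \emph{total} entropy, which is what governs closeness of expectations of $O(1)$ observables, is large. Pinsker yields $\int|g_t - 1|\,\rd\omega \le \sqrt{2S}\sim N^{(1-s)/2}$, and the entropy inequality $\int G f_t\mu \le \log\int e^G\omega + S$ is equally useless with $G=O(1)$ and $S\to\infty$. Averaging in $E'$ does not repair this; it is not where the gain comes from. So your scheme, as written, is off by a factor of order $\sqrt{N}$ and cannot produce \eqref{abstrthm}.

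The missing idea is the Dirichlet form inequality, Theorem~\ref{thm3}, and the point that powers it is exactly the term you discard in your Bakry--\'Emery bookkeeping. Writing the Hessian lower bound \eqref{convex4} in full, one retains, \emph{in addition to} $\tau^{-1}\sum_j(\partial_j h)^2$, the term $\frac{1}{2N}\sum_{i,j}\frac{(\partial_i h - \partial_j h)^2}{(x_i-x_j)^2}$ coming from the Vandermonde factor. Integrating the Bakry--\'Emery inequality in time yields the time-integrated bound \eqref{0.2} on this extra quantity. When the observable depends only on gaps $x_i - x_{i+1}$, one writes the difference of expectations as a time integral of $O'(N(x_i-x_{i+1}))[\partial_i q_s - \partial_{i+1}q_s]$ and applies Cauchy--Schwarz against \eqref{0.2}. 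This produces \eqref{diff}, in which the Dirichlet form is divided by $|J|\sim N$ --- Dirichlet form \emph{per particle} --- which is precisely the extra power of $N$ over the naive entropy/total-variation route and is what makes $\sqrt{N^\e N^2 Q/(|J|t)}\to 0$. The comparison $\omega\leftrightarrow\mu$ is then handled by running the identical argument with Gaussian initial data, not by an independent ``local equivalence'' claim, and the passage from averaged gap observables to correlation functions after energy averaging is the final routine step. Without the gap-observable exploit there is no way to recover the factor $N$ your argument is missing.
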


We can choose  $b=b_N$ depending on $N$.  In \cite{ESYY} explicit bounds on the  speed of convergence and the
 optimal range of $b$  were also established.  
In particular, thanks to the 
optimal rigidity estimate \eqref{rigwig} which implies that  \eqref{assum3} holds with any $\fa <1/2$, the
range of the energy averaging  in \eqref{abstrthm}  can be reduced
to $b_N\ge N^{-1+\xi}$, $\xi>0$, 
 but only for $t\ge N^{-\xi/8}$ (Theorem 2.3  of \cite{EYYrigi}).

Theorem \ref{thm:DBM} is a consequence of the following theorem which identifies the averaged gap distribution
of the eigenvalues. 

\begin{theorem}[Universality of the Dyson Brownian motion for short time]\label{thmM}
\cite[Theorem 4.1]{ESYY} \\
Suppose  $\beta\ge1$ and 
let  $O:\bR \to\bR$ be a  smooth function with compact 
support. 
Then for 
any sufficiently small $\e>0$, independent of $N$, there exist
constants $C, c>0$, depending only on $\e$ and $O$ such that
 for any $J\subset \{ 1, 2, \ldots , N-1\}$  
  we have
\be\label{GG}
\Big| \int \frac 1 {|J|} \sum_{i\in J} O( N(x_i-x_{i+1})) f_t  \rd \mu -
\int \frac 1 {|J|} \sum_{i\in J} O( N(x_i-x_{i+1})) \rd\mu \Big|
\le C N^{\e} \sqrt{  \frac {N^2 Q} { |J| t }}  + Ce^{-c N^\e } .
\ee
In particular,  if the a-priori estimate \eqref{assum3} holds with some
$\fa>0$ and $|J|$ is of order $N$, then for any  $t > N^{-2 \fa + 3\e}$
the right hand side converges to zero as $N\to\infty$, i.e. the 
gap distributions for $f_t\rd\mu$ and $\rd\mu$ coincide.
\end{theorem}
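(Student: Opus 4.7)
The plan is to introduce an auxiliary Gibbs measure carrying a spectral gap on scale $1/\tau$ and then run a relative-entropy argument comparing the DBM-evolved measure $f_t\mu$ to this reference. Fix a parameter $\tau>0$ (to be optimized around $\tau\asymp t$) and set
\begin{equation*}
W(\bla)\;:=\;\frac{1}{2\tau}\sum_{i=1}^N(\lambda_i-\gamma_i)^2,\qquad \omega(\bla)\;:=\;Z^{-1}e^{-N\beta\cH(\bla)-N\beta W(\bla)}.
\end{equation*}
The generator reversible with respect to $\omega$ is $\widetilde{\cL}:=\cL-\tau^{-1}\sum_i(\lambda_i-\gamma_i)\pt_i$. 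Since the Coulomb interaction $-\log(\lambda_j-\lambda_i)$ is convex on the Weyl chamber and $W$ contributes Hessian $\tau^{-1}$, the Bakry--Emery criterion furnishes a logarithmic Sobolev inequality (hence spectral gap) for $\omega$ with constant of order $\tau$. The hypothesis $\beta\ge 1$ enters here to ensure the required convexity after including the logarithmic repulsion.

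Second, I would run the entropy estimate. Writing $q_t:=f_t\mu/\omega$ and $S_t:=\int q_t\log q_t\,\rd\omega$, direct computation using that $\cL$ and $\widetilde{\cL}$ differ only by the drift $\tau^{-1}(\lambda_i-\gamma_i)\pt_i$ yields
\begin{equation*}
\pt_t S_t\;\le\;-D_\omega(\sqrt{q_t})+\frac{1}{\tau}\bigl(D_\omega(\sqrt{q_t})\bigr)^{1/2}\biggl(\int\sum_i(\lambda_i-\gamma_i)^2 f_t\,\rd\mu\biggr)^{1/2}
\end{equation*}
by Cauchy--Schwarz. Absorbing half the Dirichlet form and invoking the LSI gives $\pt_t S_t\le -c\tau^{-1}S_t+CN^2Q\tau^{-2}$, and Gronwall, combined with an initial bound $S_0\le CN^2Q/\tau$ obtained directly from the definition of $\omega$ together with the a priori assumption \eqref{assum3}, produces $S_t\le CN^2Q/\tau$ once $t\ge \tau N^{\e}$.

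Third, I would convert this bound into control of the gap observable. Writing $G(\bx):=|J|^{-1}\sum_{i\in J}O(N(x_i-x_{i+1}))$, the entropy inequality gives
\begin{equation*}
\biggl|\int G\,f_t\,\rd\mu-\int G\,\rd\omega\biggr|\;\le\;\sqrt{2\,\mathrm{Var}_\omega(G)\,S_t},
\end{equation*}
and under the strongly log-concave $\omega$ the gap observables at distant indices decorrelate well enough that $\mathrm{Var}_\omega(G)\le C/|J|$. Choosing $\tau\asymp t$ then yields the factor $\sqrt{N^2Q/(|J|t)}$ up to the $N^{\e}$. Separately, one must establish $\int G\,\rd\omega\approx\int G\,\rd\mu$: since $\omega$ differs from $\mu$ only through the quadratic $NW$, which perturbs each coordinate on a scale much larger than $1/N$, the local gap distribution is preserved in the limit, which can be quantified by an interpolation/entropy argument between $\omega$ and $\mu$. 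The exponential term $Ce^{-cN^\e}$ comes from restricting to the high-probability rigidity set throughout.

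The principal obstacle is the entropy-production bookkeeping in the second step: the cross term arising in $\pt_t S_t$ must be controlled \emph{precisely} by $\sum_i(\lambda_i-\gamma_i)^2$, so that the a priori bound $Q$ enters with exactly the right power to produce the $\sqrt{N^2Q/(|J|t)}$ scaling and no worse. A secondary but delicate difficulty is the comparison $\int G\,\rd\omega\approx\int G\,\rd\mu$: although intuitively immediate because $NW$ is a mesoscopic perturbation, making this microscopically quantitative on the scale $1/N$ of the gap observable requires its own short universality-type argument, most naturally handled by a Hamilton--Jacobi or interpolation estimate between $\omega$ and $\mu$.
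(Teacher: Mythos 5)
Your first two steps (the auxiliary measure $\omega$ with quadratic confinement $W$ and the Bakry--\'Emery LSI on scale $\tau$, then the entropy bookkeeping giving $S(f_t\mu\,|\,\omega)\lesssim N^2Q/\tau$ and $D_\omega(\sqrt{f_t/\psi})\lesssim N^2Q/\tau^2$ for $t\gtrsim\tau N^\e$) track the paper's Theorems~\ref{thm2} and \ref{thm1} closely and are essentially correct.

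The gap is in the third step, and it is the decisive one. You propose to pass from the entropy bound to the gap observable via $\bigl|\int G(q-1)\,\rd\omega\bigr|\le\sqrt{2\,\mathrm{Var}_\omega(G)\,S_\omega(q)}$ together with the claim $\mathrm{Var}_\omega(G)\le C/|J|$. Neither half of this is available. The standard entropy inequality \eqref{entropyneq} gives only $\bigl|\int G(q-1)\,\rd\omega\bigr|\le\|G\|_\infty\sqrt{2S_\omega(q)}$; there is no generic bound with $\mathrm{Var}_\omega(G)$ in place of $\|G\|_\infty^2$ without some concentration input for $G$ under $\omega$, which you do not establish. More seriously, $\mathrm{Var}_\omega(G)\le C/|J|$ amounts to a quantitative \emph{decay of gap--gap correlations} in the log-gas $\omega$. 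That decay is genuinely hard: as the paper's discussion around \eqref{corrr}--\eqref{gappart} explains, point correlations in a log-gas decay only logarithmically, and proving the faster decay for gap observables is the content of the De Giorgi--Nash--Moser machinery in Section~\ref{sec:sg}, i.e.\ at least as hard as the theorem you are trying to prove. Log-concavity of $\omega$ gives a spectral gap of order $1/\tau$ but not the $|J|^{-1}$ variance scaling you want.

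What the paper does instead is the Dirichlet form inequality, Theorem~\ref{thm3}. The trick is \emph{dynamical} rather than a static variance bound: one evolves an auxiliary density $q_s$ by $\pt_s q_s=\widetilde{\cL}q_s$, writes $\int G\,(q-q_t)\,\rd\omega$ as a time integral, and lets the chain rule act on $G$. Because $G$ depends only on \emph{differences} $x_i-x_{i+1}$, the derivative produces $\pt_i q_s-\pt_{i+1}q_s$, and Cauchy--Schwarz against the extra Bakry--\'Emery term \eqref{0.2}, namely $\int_0^\infty\!\rd s\int\sum_{i,j}\frac{(\pt_i\sqrt{q_s}-\pt_j\sqrt{q_s})^2}{(x_i-x_j)^2}\rd\omega\le D_\omega(\sqrt{q_0})$, converts the sum over $J$ into a Dirichlet form \emph{per particle}, yielding the decisive $|J|^{-1/2}$ gain. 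That extra term in the Hessian is something your step two has available but your step three never uses. Your ``secondary difficulty,'' showing $\int G\,\rd\omega\approx\int G\,\rd\mu$, also dissolves once one has the Dirichlet form inequality: the paper simply applies the same bound to the trivial initial data $f_0\equiv 1$, comparing $\mu$ to $\omega$ by the identical estimate, so no separate Hamilton--Jacobi or interpolation argument is needed.
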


The test functions can be generalized to 
\be
O\Big( N(x_i-x_{i+1}), N(x_{i+1}-x_{i+2}), \ldots, N(x_{i+{n-1}}-x_{i+n})\Big)
\label{cG}
\ee
for any $n$ fixed  which is needed to identify higher order correlation functions.
 In applications, $J$ is chosen to be the
 indices of the eigenvalues in the interval $[E-b, E+ b]$
and thus $|J| \sim N b$. 
This identifies the averaged gap distributions of eigenvalues  and thus also identifies 
the correlation functions after energy averaging.  We will not explain here in detail how to
pass information from gap distribution to correlation functions (see Section 7 in \cite{ESYY}),
but we note that this transfer is relatively easy if both statistics are averaged on a scale
larger than the typical fluctuation of a single eigenvalue (which is smaller than $N^{-1+\e}$
in the bulk by \eqref{rigwig}). This  concludes Theorem \ref{thm:DBM}.
Note that the input of this theorem, the apriori estimate \eqref{assum3}, identifies
the location of the eigenvalues only on a scale $N^{-1/2-\fa}$ which is much weaker 
than the $1/N$ precision for the eigenvalue differences in \eqref{GG}.

 By the rigidity estimate \eqref{rigwig},
  the a-priori estimate \eqref{assum3} holds for any $\fa < 1/2$
if the initial data of the DBM is a generalized Wigner ensemble.  
Therefore, Theorem \ref{thmM}  holds for any $t\ge N^{-1 + \e}$ 
for any $\e > 0$ and {\it this establishes Dyson's  conjecture }
for any generalized Wigner matrices.

\subsubsection{Main ideas behind the proof of Theorem \ref{thmM}}

The key method is to  analyze the relaxation to equilibrium of the dynamics \eqref{dy}.
This approach  was first introduced  in Section 5.1 of \cite{ESY4};
the presentation here follows \cite{ESYY}.

 We start with a short review of
the logarithmic Sobolev inequality for a general measure.
Let the probability measure $\mu$ on $\bR^N$ be given by  a general 
Hamiltonian $\cH$:
\be
  \rd\mu(\bx) = \frac{e^{-N \cH(\bx)}}{Z}\rd \bx, \qquad 
\label{convBE}
\ee
In applications $\mu$ will be the Gaussian equilibrium measure, \eqref{01} with $V(x)=x^2/2$,
so we use the same notation $\mu$, but the  statements in the beginning of this section hold
for a general measure. 
Let $\cL$ be the generator, symmetric with respect to the measure $\rd\mu$,
 defined by the associated   Dirichlet form
\be\label{D}
   D(f)= D_\mu(f) = -\int f \cL f \rd \mu := \frac{1}{2N}\sum_j 
  \int(\pt_j f)^2 \rd \mu, \qquad \pt_j = \pt_{x_j}.
\ee
Recall  the relative entropy of two  probability measures:
$$
   S (\nu| \mu ): =   
\int   \frac {\rd \nu}  {\rd \mu} \log \left ( \frac {\rd \nu}  {\rd \mu} \right ) \rd\mu.
$$
If $\rd\nu = f\rd\mu$, then we will sometimes use the notation $S_\mu(f):= S(f\mu|\mu)$.
The entropy can be used to control the total variation norm via the well known inequality 
\be
   \int |f-1|\rd \mu \le \sqrt{ 2 S_\mu(f)}.
\label{entropyneq}
\ee

Let $f_t$ be the solution to  the evolution equation
\be
  \pt_t f_t = \cL f_t,  \qquad t>0,
\label{dybe}
\ee
with a given initial condition $f_0$. The
evolution of the entropy $S_\mu(f_t)=S(f_t \mu | \mu )$
satisfies
\be
  \pt_t S_\mu(f_t )  = -4 D_\mu (\sqrt{f_t}).
\label{derS}
\ee
Following Bakry and \'Emery \cite{BakEme},  the evolution of the Dirichlet form
satisfies  the inequality 
\be\label{eq:BE}
   \pt_t D_\mu(\sqrt{f_t}) 
  \le   -  \frac{1}{2 N} \int
 (\nabla \sqrt f_t)(\nabla^2\cH)\nabla \sqrt  f_t \rd \mu.
\ee
If the Hamiltonian is convex, i.e., 
\be\label{convexham}
 \nabla^2\cH(\bx)=\mbox{Hess} \, \cH(\bx) \ge \varpi 
 \qquad \mbox{for all $\bx\in \bR^N$}
\ee 
 with some constant $\varpi>0$,
then  we have
\be
    \pt_t D_\mu (\sqrt{f_t}) \le - \varpi D_\mu(\sqrt{f_t}).
\label{derD}
\ee
Integrating \eqref{derS} and \eqref{derD} back from infinity to 0,
we obtain
the  {\it logarithmic Sobolev inequality (LSI)}
\be
 S_\mu(f) \le \frac{4}{\varpi}  D_\mu (\sqrt{f}), \quad f = f_0
\label{lsi1}
\ee
and the  {\it exponential relaxation
of the entropy and Dirichlet form on time scale $t\sim 1/\varpi$}
\be
    S_\mu(f_t )  \le e^{-t \varpi } S_\mu(f_0  ),
 \quad D_\mu (\sqrt{f_t}) \le e^{-t \varpi } D_\mu (\sqrt{f_0}) .
\label{Sdec}
\ee
As a consequence of the logarithmic Sobolev inequality, 
we also have the concentration inequality 
for any $k$ and $a>0$
\be\label{concen}
\P^\mu\left(|x_k-\E_\mu(x_k)|> a\right)\leq 2e^{-\varpi N a^2/2 }.
\ee
We will not use this inequality in this section, but it will become important in Section \ref{beta}.

 Returning to the  classical ensembles, we  assume from now on
that $\cH$ is given by \eqref{01}  with $V(x)=x^2/2$ and the
equilibrium measure $\mu$ is the Gaussian one.
We then have  the convexity inequality
\be\label{convex}
\Big\langle \bv , \nabla^2  \cH(\bx)\bv\Big\rangle
\ge   \frac{1}{2} \,  \|\bv\|^2 + \frac{1}{N}
 \sum_{i<j} \frac{(v_i - v_j)^2}{(x_i-x_j)^2} \ge  \frac{1 }{2} \,  \|\bv\|^2,
 \qquad \bv\in\bR^N.
\ee
This  guarantees that $\mu$ satisfies the LSI with $\varpi=1/2$ and the relaxation time to 
equilibrium is of order one.

The key idea is that the  relaxation time is in fact
much shorter than order one for local observables that
depend only on the eigenvalue {\it differences}. Equation \eqref{convex} shows that 
the relaxation in the direction $v_i-v_j$ is much faster than order
 one provided that $x_i- x_j$ are close.
However, this effect is hard to exploit directly  due to that all modes of different 
wavelengths are coupled.
Our idea is to add an auxiliary {strongly convex}  potential $W(\bx)$ to the Hamiltonian
to  ``speed up''
the convergence to local equilibrium. On the other hand, we will also 
show that the cost of this  speeding up 
can be effectively controlled if the a-priori estimate \eqref{assum3} holds.

The auxiliary  potential $W(\bx)$ is defined by 
\be
    W(\bx): =    \sum_{j=1}^N
W_j (x_j)   , \qquad W_j (x) := \frac{1}{2 \tau } (x_j -\gamma_j)^2,
\label{defW}
\ee
i.e. it is a quadratic confinement on scale $\sqrt \tau $ for each eigenvalue
near its classical location, where the parameter $\tau>0$ will be chosen later. 
The total Hamiltonian is given by 
\be
\wt \cH: = \cH +W,
\label{def:wth}
\ee
where $\cH$ is the Gaussian Hamiltonian given by \eqref{01}.
The measure with Hamiltonian $\wt \cH$, 
\be\label{omdef}
   \rd\om: =\om (\bx)\rd\bx, \quad \om: = e^{-N\wt\cH}/\wt Z,
\ee
will be called  the  {\it local relaxation  measure}.

The  {\it local relaxation flow} is defined to be the flow with 
the  generator characterized by  the natural Dirichlet form w.r.t. $\om$, explicitly,  
$\wt \cL$:
\be\label{tl}
\wt \cL = \cL - \sum_j b_j \partial_j, \quad
b_j = W_j'(x_j)= \frac{x_j -\gamma_j}{\tau}.
\ee
 We will choose $\tau \ll 1$ 
 so that the additional term $W$  substantially
increases the lower bound \eqref{convexham} on the Hessian, hence speeding up the
dynamics so that the relaxation  time is at most  $\tau$.

\medskip

 The idea of adding an artificial potential $W$ to speed up the convergence appears to be unnatural 
here. The current formulation is a {streamlined version} of a much more complicated approach
that   appeared in 
\cite{ESY4} and which 
 took ideas from the earlier work \cite{ERSY}.  Roughly speaking, 
in hydrodynamical limit, 
the  short wavelength modes always have shorter  relaxation times than the long wavelength modes.
A direct implementation of this idea is extremely complicated 
due to the logarithmic interaction that couples short and long wavelength modes.
 Adding a strongly convex  auxiliary  potential $W(\bx)$
shortens the relaxation time of  the long wavelength modes, 
but it does not affect the short modes, i.e. the local statistics, which are our main interest. 
The analysis of the new system is much simpler since now the relaxation is faster, uniform for
all modes. Finally, we need to compare the local statistics of the original system
with those of the modified one. It turns out that the difference is governed
by  $(\nabla W)^2$ which can be directly controlled  by the 
a-priori estimate \eqref{assum3}.

Our method for enhancing the convexity of $\cH$ is reminiscent of a standard
convexification idea concerning metastable states.
To explain the similarity, consider a particle near one of the
local minima of a double well potential separated by a local maximum,
or energy barrier. Although the potential is not convex globally,
one may still study a reference problem defined by convexifying
 the potential along with the  well in which the particle initially resides.  Before the 
particle reaches the energy barrier, there is no difference between 
these two problems. Thus questions concerning time scales
shorter than the typical escape time can be conveniently answered 
by considering the convexified problem; in particular the
escape time in the metastability problem itself can be estimated
by using convex analysis.
Our DBM problem  is already convex, but not sufficiently convex.
The modification by adding $W$ enhances convexity without altering 
the local statistics. This is similar to  the convexification in the metastability
problem which does not alter events before the escape time.

\subsubsection{Some details on the proof of  Theorem \ref{thmM}  }

The core of the proof is divided into three theorems.
For the flow with generator $\wt \cL$, 
we have the following estimates on the entropy and Dirichlet form.

\begin{theorem}\label{thm2} 
Consider the forward equation
\be
\partial_t q_t=\wt \cL q_t, \qquad t\ge 0,
\label{dytilde}
\ee
with the reversible  measure $\omega$ defined in \eqref{omdef}. The initial condition $q_0$ satisfies   $\int q_0 \rd \om=1$.
Then we have the following estimates
\be\label{0.1}
\partial_t D_{\omega}( \sqrt {q_t}) \le - \frac{1}{2\tau}D_{\omega}( \sqrt {q_t}) -
\frac{1}{2N^2}  \int   \sum_{i,j=1}^N
\frac{  ( \pt_i \sqrt{ q_t} - \pt_j\sqrt {q_t} )^2}{(x_i-x_j)^2} \rd \omega ,
\ee
\be\label{0.2}
\frac{1}{2N^2} \int_0^\infty  \rd s  \int    \sum_{ i,j=1}^N
\frac{(\pt_i\sqrt {q_s} - \pt_j\sqrt {q_s} )^2 }{(x_i-x_j)^2}\rd \omega
\le D_{\omega}( \sqrt {q_0})
\ee
and the logarithmic Sobolev inequality
\be\label{lsi}
S_\om(q)\le C \tau  D_{\omega}( \sqrt {q_0})
\ee
with a universal constant $C$.
Thus the relaxation time to equilibrium is of order $\tau$:
\be\label{Sdecay}
 S_{\omega}(q_t)\le e^{-Ct/\tau} S_\omega(q_0), \qquad D_{\omega}(q_t)\le e^{-Ct/\tau} D_\omega(q_0).
\ee
\end{theorem}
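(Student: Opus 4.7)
The plan is to apply the Bakry--\'Emery calculus to the total Hamiltonian $\wt\cH = \cH + W$ and its reversible Gibbs measure $\omega$. Since $W(\bx) = \sum_j \tfrac{1}{2\tau}(x_j-\gamma_j)^2$ has Hessian $\nabla^2 W = \tau^{-1}\mathbbm{1}$, combining \eqref{convex} for $\nabla^2\cH$ with this identity yields the enhanced pointwise convexity bound
\begin{equation*}
\langle \bv, (\nabla^2\wt\cH)(\bx)\,\bv\rangle \;\ge\; \bke{\frac{1}{2} + \frac{1}{\tau}}\|\bv\|^2 + \frac{1}{N}\sum_{i<j}\frac{(v_i-v_j)^2}{(x_i-x_j)^2}, \qquad \bv\in\bR^N,
\end{equation*}
valid uniformly in $\bx$. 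Thus $\omega$ is uniformly log-concave with convexity constant of order $\tau^{-1}$, while \emph{crucially} still retaining the logarithmic interaction term in the Hessian; this is the only structural input the proof really needs.

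The first step is to insert this Hessian bound into the general Bakry--\'Emery inequality \eqref{eq:BE}, applied to the flow \eqref{dytilde} with $\mu, \cH$ replaced by $\omega, \wt\cH$. The uniform term in the Hessian contributes a multiple of $D_\omega(\sqrt{q_t})$ on the right-hand side (after identifying $\tfrac{1}{2N}\int\|\nabla\sqrt{q_t}\|^2 \rd\omega = D_\omega(\sqrt{q_t})$), while the $(v_i-v_j)^2/(x_i-x_j)^2$ term survives intact as the second negative contribution in \eqref{0.1}. Since $\tfrac{1}{2}+\tfrac{1}{\tau}\ge \tfrac{1}{2\tau}$ we obtain \eqref{0.1} in the form stated.

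Once \eqref{0.1} is in hand, \eqref{0.2} follows by integrating in $t$ from $0$ to $\infty$ and using that $D_\omega(\sqrt{q_t})\to 0$ as $t\to\infty$; the latter is itself a Grönwall consequence of \eqref{0.1} (after discarding the nonpositive second term), which simultaneously produces the Dirichlet-form half of \eqref{Sdecay}. For the LSI \eqref{lsi} I would bypass the flow entirely and invoke the classical Bakry--\'Emery LSI criterion \eqref{lsi1} directly for $\omega$: the uniform lower bound $\nabla^2\wt\cH \ge \tau^{-1}\mathbbm{1}$ (valid for $\tau \le 2$, say) gives $S_\omega(q)\le 4\tau\,D_\omega(\sqrt{q})$ for any density $q$, which is \eqref{lsi} with $C=4$. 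The entropy decay in \eqref{Sdecay} is then immediate: by the $\omega$-analog of \eqref{derS} one has $\pt_t S_\omega(q_t) = -4 D_\omega(\sqrt{q_t})$, and combining with the LSI gives $\pt_t S_\omega(q_t) \le -\tau^{-1} S_\omega(q_t)$.

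There is no genuine obstacle in the proof; it is essentially a textbook Bakry--\'Emery argument. The real novelty was committed \emph{before} the statement of the theorem, in the design of the auxiliary potential $W$: adding a strongly convex single-particle confinement at scale $\sqrt\tau$ enhances convexity of the Hamiltonian without disturbing the logarithmic coupling, which is precisely why the interaction dissipation term in \eqref{0.1}--\eqref{0.2} survives the manipulation. That surviving term is exactly what controls the relaxation of the gap observables driving Theorem \ref{thmM} downstream, and its preservation under the convexification is the entire point of analyzing $\omega$ rather than $\mu$ directly.
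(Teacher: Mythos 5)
Your proof is correct and follows essentially the same route as the paper: lower-bound the Hessian of $\wt\cH$ by combining \eqref{convex} with the $\tau^{-1}$-convexity of $W$, insert this into the Bakry--\'Emery inequality \eqref{eq:BE} for the flow \eqref{dytilde} to obtain \eqref{0.1}, integrate in time for \eqref{0.2}, and deduce the LSI and entropy decay by the standard Bakry--\'Emery scheme \eqref{derS}--\eqref{Sdec}. The paper's proof is exactly this, only more terse (it records the Hessian bound as its equation \eqref{convex4} and then declares the remaining steps ``straightforward arguments analogous to \eqref{derS}--\eqref{Sdec}''), so your write-out matches both in content and in emphasis -- in particular your closing remark that the entire purpose of $W$ is to enhance convexity \emph{without} destroying the interaction term in the Hessian is precisely the point the paper stresses.
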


\begin{proof} Denote by $h=\sqrt{q}$ and we have the equation 
\be\label{1.5}
\pt_t D_\om ( h_t) =   \pt_t \frac{1}{2N}\int 
(\nabla h)^2 e^{- N \wt\cH} \rd \bx
  \le  - \frac{1}{2 N }\int
\nabla h (\nabla^2 \wt\cH)\nabla h  e^{- N \wt\cH} \rd\bx.
\ee
In our case, \eqref{convex} and \eqref{defW} imply  that the Hessian of $\wt \cH$ is bounded
from below as
\be\label{convex4}
\nabla h (\nabla^2 \wt\cH)\nabla h
\ge        \frac C { \tau} \sum_j  (\partial_j h)^2 +
\frac{1}{2N   }  \sum_{ i,j}  \frac 1 {(x_i - x_j)^2} (\pt_i h -
\pt_j h)^2
\ee
with some positive constant $C$.
This proves \eqref{0.1} and \eqref{0.2}. The  rest can be proved by straightforward arguments
analogously to \eqref{derS}--\eqref{Sdec}.
\end{proof}

\medskip

Notice that the estimate \eqref{0.2} is an additional information
that we extracted from the Bakry-\'Emery argument by using the second
term in the Hessian estimate \eqref{convex}. 
It plays a key role in the next theorem.

\begin{theorem}[Dirichlet form inequality]\label{thm3}
Let $q$ be a probability density $\int q\rd\om=1$ and
let  $O:\bR\to\bR$ be a  smooth function with compact support.
Then  for any $J\subset \{ 1, 2, \ldots , N-1\}$ and any $t>0$ we have
\be\label{diff}
\Big| \int \frac 1 {|J|} \sum_{i\in J} O(N(x_i - x_{i+1})) q \rd \omega -
\int \frac 1 {|J|} \sum_{i\in J} O(N(x_i - x_{i+1}) ) \rd \omega \Big|
\le C \Big( t  \frac { D_\omega (\sqrt {q})  }{|J|}  \Big)^{1/2}  + C \sqrt
{S_\om(q)} e^{-c t/\tau} .
\ee
\end{theorem}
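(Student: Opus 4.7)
The plan is to run the local relaxation flow \eqref{dytilde} starting from $q_0 = q$, write the target observable as $\Phi(\bx) := \frac{1}{|J|}\sum_{i\in J} O(N(x_i-x_{i+1}))$, and split the quantity of interest as
\begin{equation*}
\int \Phi\, q\,\rd\om - \int \Phi\,\rd\om \;=\; \pB{\int \Phi\, q_0\,\rd\om - \int \Phi\, q_t\,\rd\om} + \pB{\int \Phi\, q_t\,\rd\om - \int \Phi\,\rd\om}\,.
\end{equation*}
The first bracket will produce the Dirichlet-form term by time-integrating along the flow, while the second will produce the entropic term through exponential relaxation to equilibrium.

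For the second (easier) bracket, I would use Pinsker's inequality \eqref{entropyneq} to get $|\int \Phi(q_t - 1)\rd\om| \le \|O\|_\infty \sqrt{2 S_\om(q_t)}$, and then invoke the exponential decay of entropy \eqref{Sdecay} from Theorem~\ref{thm2} to conclude $|\int \Phi(q_t-1)\rd\om| \le C\|O\|_\infty e^{-ct/\tau}\sqrt{S_\om(q_0)}$. This already matches the second term in the claim.

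For the first bracket, I would write it as $-\int_0^t \rd s\int \Phi\,\wt\cL q_s\,\rd\om$ and integrate by parts against $\rd\om$, using $\pt_i q_s = 2\sqrt{q_s}\,\pt_i\sqrt{q_s}$, to obtain
\begin{equation*}
\int \Phi\,\wt\cL q_s\,\rd\om \;=\; -\frac{1}{N}\sum_i \int \pt_i\Phi \cdot \sqrt{q_s}\,\pt_i\sqrt{q_s}\,\rd\om\,.
\end{equation*}
The crucial point is that $\Phi$ depends only on the \emph{gaps} $x_i - x_{i+1}$, so on each summand $O(N(x_i-x_{i+1}))$ the operators $\pt_i$ and $\pt_{i+1}$ act with opposite signs. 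Reorganizing the sum so the derivatives appear in the antisymmetric combinations $\pt_i\sqrt{q_s} - \pt_{i+1}\sqrt{q_s}$, and inserting the artificial factor $(x_i - x_{i+1})/(x_i-x_{i+1})$, Cauchy--Schwarz gives
\begin{equation*}
\absbb{\int \Phi\,\wt\cL q_s\,\rd\om} \;\le\; \frac{C}{|J|}\pbb{\sum_{i\in J}\int (O')^2 (x_i-x_{i+1})^2 q_s\,\rd\om}^{1/2}\pbb{\int \sum_{i\in J}\frac{(\pt_i\sqrt{q_s}-\pt_{i+1}\sqrt{q_s})^2}{(x_i-x_{i+1})^2}\,\rd\om}^{1/2}.
\end{equation*}
The compact support of $O'$ forces $|x_i - x_{i+1}| \le C/N$ on its support, so the first factor is bounded by $C\|O'\|_\infty |J|^{1/2}/N$. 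After a further Cauchy--Schwarz in $s\in[0,t]$, the second factor is controlled globally by the refined Dirichlet-form estimate \eqref{0.2}, which bounds $\int_0^t\int \sum_{i,j}(\pt_i\sqrt{q_s}-\pt_j\sqrt{q_s})^2/(x_i-x_j)^2\,\rd\om\,\rd s \le 2N^2 D_\om(\sqrt{q_0})$. Collecting powers of $N$ and $|J|$ yields the claimed $C\sqrt{t\,D_\om(\sqrt{q_0})/|J|}$.

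The main obstacle -- and the reason the argument must invoke \eqref{0.2} rather than only the logarithmic Sobolev inequality \eqref{lsi} -- is the apparent mismatch of scales: the weights $(x_i-x_j)^{-2}$ in \eqref{0.2} are typically of size $N^2$ by eigenvalue rigidity, which would seem to destroy any estimate. The resolution is the matching of two structural features: since $\Phi$ depends only on gaps, its derivatives automatically organize themselves into the antisymmetric combinations $\pt_i - \pt_{i+1}$ that \eqref{0.2} controls; and since $O'$ has compact support, the factor $(x_i-x_{i+1})^2 \le C/N^2$ it supplies exactly absorbs the dangerous $N^2$. This cancellation is the heart of the local ergodicity mechanism and is precisely why the relaxation time of \emph{gap} observables is $\tau$ rather than of order one.
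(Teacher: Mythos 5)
Your proposal is correct and follows exactly the same route as the paper: identical splitting into a flow term and an equilibrium term, Pinsker plus the exponential entropy decay \eqref{Sdecay} for the second bracket, and integration by parts followed by Cauchy--Schwarz with the artificial factor $(x_i-x_{i+1})/(x_i-x_{i+1})$, the compact support of $O'$, and the refined Bakry--\'Emery bound \eqref{0.2} for the first bracket. The only difference is presentational: you perform the Cauchy--Schwarz in two stages and add an explanatory paragraph on the $N^2$-cancellation, while the paper does it in one display.
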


{\it Proof.} For simplicity, we  assume that $J =  \{ 1, 2, \ldots , N-1\}$. 
Let $q_t$ satisfy
\[
\partial_t q_t = \wt \cL q_t, \qquad t\ge 0,
\]
with an initial condition $q_0=q$.
We write
\begin{align}\label{splitt}
\int \Big[  \frac 1 {|J|} & \sum_{i\in J} O(N(x_i - x_{i+1}))\Big] (q-1)\rd\om \\
&  =  \int \Big[  \frac 1 {|J|} \sum_{i\in J}O(N(x_i - x_{i+1}))\Big] (q-q_t)\rd\om
 +   \int \Big[  \frac 1 {|J|} \sum_{i\in J} O(N(x_i - x_{i+1})) \Big] (q_t-1)\rd\om.
\nonumber
\end{align}
The second term in \eqref{splitt} can be estimated by  \eqref{entropyneq},
 the decay of the entropy \eqref{Sdecay} and the boundedness of $O$; 
this gives the second term in \eqref{diff}.

To estimate the first term in \eqref{splitt},
 by the evolution equation  $\pt q_t =\wt \cL q_t$ and the definition of $\wt \cL$ we have 
\begin{align}
\int  \frac 1 {|J|} \sum_{i\in J} &O(N(x_i - x_{i+1}))q_t \rd \omega  -
\int  \frac 1 {|J|} \sum_{i\in J} O(N(x_i - x_{i+1}))q_0 \rd \omega \non\\
&= \int_0^t  \rd s \int   \frac 1 {|J|} \sum_{i\in J}
   O'( N(x_i-x_{i+1}))
[\pt_{i} q_s - \pt_{i+1}q_s]  \rd \omega. \non
\end{align}
{F}rom the Schwarz inequality and $\pt q = 2 \sqrt{q}\pt\sqrt{q}$,
the last term is bounded by
\begin{align}\label{4.1}
2 \Bigg[   \int_0^t  \rd s \int_{\bR^N}  &
 \frac {N^2} {|J|^2} \sum_{i\in J} \Big[ O' (N(x_i - x_{i +1}))\Big] ^2 
(x_{i}-x_{i+1})^2  \, q_s \rd \omega
\Bigg]^{1/2}  \\
 &\times \left [ \int_0^t  \rd s \int_{\bR^N}  \frac 1 {N^2 } \sum_i
\frac{1}{(x_{i}-x_{i+1})^2}  [ \pt_{i}\sqrt {q_s} -
\pt_{i+1}\sqrt {q_s}]^2  \rd \omega \right ]^{1/2} 
\le   \; C \Big( \frac{D_\omega(\sqrt {q_0}) t}{|J|}\Big)^{1/2}, \nonumber
\end{align}
where we have used \eqref{0.2} and that
$ \Big[ O' (N(x_i - x_{i +1}))\Big]^2
 (x_{i}-x_{i+1})^2 \le CN^{-2}$
due to  $O$ being smooth and  compactly
supported.
\qed

\medskip

Alternatively, we could have directly estimated  the left hand side 
of \eqref{diff} by using the total variation norm between $q \om$ and $\om$, 
which  in turn could be estimated by the entropy 
  \eqref{entropyneq} and the Dirichlet form using the logarithmic Sobolev
inequality,  i.e., 
by 
\be\label{entbound}
C \int | q - 1| \rd\om \le  C \sqrt { S_\om(q) } \le C \sqrt { \tau D_\om (\sqrt q)}.
\ee
However, compared with this simple bound, the estimate
  \eqref{diff} gains an extra factor $|J|\sim N$ in the denominator,
 i.e. it is in terms of Dirichlet form {\it per particle}.
The improvement
 is due to the fact that the observable in \eqref{diff} depends only on
the gap, i.e. {\it difference} of points. This allows us to exploit the additional
term \eqref{0.2} gained in the Bakry-\'Emery argument. This is a
manifestation of the general observation that gap-observables behave much
better than point-observables.

\bigskip

The final  ingredient in proving Theorem \ref{thmM} is the following entropy and
Dirichlet form estimates.

\begin{theorem}\label{thm1}
Suppose that  \eqref{convex} holds. Let $\fa>0$ be fixed and
recall the definition of $Q=Q_\fa$ from \eqref{assum3}.
Fix a constant $\tau \ge  N^{-2 \fa}$ and consider the local relaxation 
measure $\omega$ with this $\tau$. 
Set $\psi:=\om/\mu$ and
let  $g_t: = f_t/\psi$.
Suppose there is a constant $m$ such that 
\be\label{entA}
S(f_{\tau} \om | \om )\le CN^m.
\ee
Then for any $ t \ge \tau N^\e$
the entropy and the Dirichlet form satisfy the estimates:
\be\label{1.3}
S(g_t \omega | \omega) \le
 C   N^2    Q \tau^{-1}, \qquad
D_\omega (\sqrt{g_t})
\le CN^2  Q \tau^{-2}
\ee
 where the constants depend on $\e$ and $m$.
\end{theorem}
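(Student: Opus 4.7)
The plan is to follow the evolution of the entropy $S(g_t\omega|\omega)$ and Dirichlet form $D_\omega(\sqrt{g_t})$ along the original DBM flow, using the local relaxation measure $\omega$ as reference. Since $g_t\omega=f_t\mu$ is the law of the eigenvalues at time $t$, writing $\mathcal{L}=\tilde{\mathcal{L}}+\mathcal{B}$ with the extra drift $\mathcal{B}=\sum_j b_j\partial_j$, $b_j=W_j'(x_j)=(x_j-\gamma_j)/\tau$, the density $g_t$ satisfies $\partial_t g_t=\tilde{\mathcal{L}}g_t+\mathcal{B}^\dagger g_t$, where $\mathcal{B}^\dagger$ is the $\omega$-adjoint of $\mathcal{B}$. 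The $\tilde{\mathcal{L}}$ piece is the generator of the local relaxation flow of Theorem~\ref{thm2} and provides strong dissipation, while the $\mathcal{B}^\dagger$ piece is an error whose size is controlled via the a-priori estimate \eqref{assum3}.

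Differentiating the entropy in $t$, the $\tilde{\mathcal{L}}$-contribution yields the classical dissipation $-4D_\omega(\sqrt{g_t})$, while the $\mathcal{B}^\dagger$-contribution reduces, after moving the adjoint back, to $\int g_t\mathcal{B}(\log g_t)\,d\omega=\sum_j\int b_j\partial_j g_t\,d\omega$. Writing $\partial_j g_t=2\sqrt{g_t}\,\partial_j\sqrt{g_t}$ and applying Cauchy--Schwarz gives
\begin{equation*}
\Big|\sum_j\int b_j\partial_j g_t\,d\omega\Big| \;\le\; 2\bigg(\int\sum_j b_j^2\, g_t\,d\omega\bigg)^{\!1/2}\sqrt{2N\,D_\omega(\sqrt{g_t})}.
\end{equation*}
Crucially, because $g_t\omega=f_t\mu$ and $\tau\ge N^{-2\fa}$, the a-priori estimate \eqref{assum3} gives $\int\sum_j b_j^2\, g_t\,d\omega=\tau^{-2}\int\sum_j(x_j-\gamma_j)^2 f_t\,d\mu\le NQ/\tau^2$. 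Young's inequality then absorbs half the dissipation and produces
\begin{equation*}
\partial_t S(g_t\omega|\omega)\;\le\;-2\, D_\omega(\sqrt{g_t})+\frac{CN^2Q}{\tau^2}.
\end{equation*}
Feeding the LSI \eqref{lsi} of Theorem~\ref{thm2} into this inequality to convert Dirichlet form into entropy, one obtains $\partial_t S\le -cS/\tau+CN^2Q/\tau^2$; Gr\"onwall then integrates to $S(g_t\omega|\omega)\le e^{-c(t-\tau)/\tau}S(g_\tau\omega|\omega)+CN^2Q/\tau$. Combined with the hypothesised polynomial bound on the initial entropy, the choice $t\ge\tau N^\e$ kills the exponential factor and delivers the first estimate in \eqref{1.3}.

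For the Dirichlet form, I would run the same scheme one derivative level higher. Differentiating $D_\omega(\sqrt{g_t})$ in time, the $\tilde{\mathcal{L}}$ contribution produces $-\frac{1}{2\tau}D_\omega(\sqrt{g_t})$ together with the extra non-positive term involving $\sum_{i,j}(\partial_i\sqrt{g_t}-\partial_j\sqrt{g_t})^2/(x_i-x_j)^2$ as in \eqref{0.1}; the $\mathcal{B}^\dagger$ contribution is again bounded by Cauchy--Schwarz against these dissipation terms and the moment $\sum_j b_j^2$. The resulting differential inequality $\partial_t D_\omega(\sqrt{g_t})\le -(c/\tau)D_\omega(\sqrt{g_t})+CN^2Q/\tau^3$ integrates to the desired bound, provided we initialise at a time $t^*\in[\tau N^\e/2,\tau N^\e]$ where the Dirichlet form is already under control; such a $t^*$ is produced by time-averaging the entropy dissipation against the entropy bound just proven. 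The main obstacle throughout is precisely this control of the $\mathcal{B}^\dagger$ error: its pointwise size may be large, and the argument succeeds only because the expectation of $\sum_j b_j^2$ under $f_t\mu$ is tamed by the second-moment a-priori estimate \eqref{assum3}, an input that is available exactly thanks to the choice $\tau\ge N^{-2\fa}$.
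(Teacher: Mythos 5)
Your argument for the entropy bound is essentially the paper's: you compute $\partial_t S(g_t\omega|\omega)$, isolate the $\tilde{\mathcal L}$-dissipation and the drift error $\sum_j\int b_j\partial_j g_t\,d\omega$, Cauchy--Schwarz the error against the Dirichlet form using $\int\sum_j b_j^2\,g_t\,d\omega=\tau^{-2}\int\sum_j(x_j-\gamma_j)^2 f_t\,d\mu\le NQ/\tau^2$, apply the LSI \eqref{lsi}, and Gr\"onwall from time $\tau$ using \eqref{entA}. The decomposition $\partial_t g_t=\tilde{\mathcal L}g_t+\mathcal B^\dagger g_t$ you use is a correct (and equivalent) bookkeeping of the paper's identity $\partial_t S(f_t\mu|\omega)=-\frac 2N\sum_j\int(\partial_j\sqrt{g_t})^2\,d\omega+\sum_j\int b_j\partial_j g_t\,d\omega$, so the two proofs of the first estimate in \eqref{1.3} coincide.

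The Dirichlet form bound is where you depart from the paper, and there is a real gap. You propose to differentiate $D_\omega(\sqrt{g_t})$ in time, asserting that ``the $\tilde{\mathcal L}$ contribution produces'' the dissipation of \eqref{0.1} while ``the $\mathcal B^\dagger$ contribution is bounded by Cauchy--Schwarz against the dissipation and $\sum_j b_j^2$.'' But \eqref{0.1} is the conclusion of the full Bakry--\'Emery argument, which is nonlinear in the density (through $q\mapsto\sqrt q$) and relies crucially on $q_t$ solving $\partial_t q_t=\tilde{\mathcal L}q_t$ \emph{exactly}; for $g_t$ solving the perturbed equation, the computation of $\partial_t D_\omega(\sqrt{g_t})$ produces cross terms of the schematic form $\int(\tilde{\mathcal L}\sqrt{g_t})(\mathcal B\sqrt{g_t})\,d\omega$ and $\int(\tilde{\mathcal L}\sqrt{g_t})\sqrt{g_t}\,V\,d\omega$ with $V=\mathcal L\psi/\psi$, which couple $b_j$ \emph{pointwise} with $|\nabla\sqrt{g_t}|^2$ and with $\tilde{\mathcal L}\sqrt{g_t}$; these are not controlled by the $L^1(\omega\,g_t)$-average $\int\sum_j b_j^2\,g_t\,d\omega$ alone, and your claimed differential inequality $\partial_t D_\omega(\sqrt{g_t})\le -(c/\tau)D_\omega(\sqrt{g_t})+CN^2Q/\tau^3$ is not established. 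The paper's route avoids this entirely: it time-integrates \eqref{1.1} to get $\int D_\omega(\sqrt{g_s})\,ds\le CN^2Q/\tau$, then uses that $D_\mu(\sqrt{f_t})$ is \emph{monotone decreasing} in $t$ (a clean Bakry--\'Emery fact, since $f_t$ genuinely solves $\partial_t f_t=\mathcal L f_t$ and $\mathcal H$ is convex by \eqref{convex}), and finally transfers between $D_\omega(\sqrt{g_s})$ and $D_\mu(\sqrt{f_s})$ via the \emph{static} two-sided estimate $D_\omega(\sqrt{g_s})\le CD_\mu(\sqrt{f_s})+CN^2Q/\tau^2$ (and its converse), which follows from $\partial_j\sqrt{g}=\psi^{-1/2}\partial_j\sqrt f-\tfrac{N}{2}b_j\sqrt g$ and the a-priori bound. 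Combining the time-averaged bound, the static comparison, and monotonicity gives the pointwise bound $D_\omega(\sqrt{g_t})\le CN^2Q/\tau^2$ without ever differentiating $D_\omega(\sqrt{g_t})$. You would do well to replace your second paragraph by this argument.
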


{\it Proof.}  
The evolution of the entropy $S(f_t \mu |\om)= S_{\omega} (g_t) $
 can be computed explicitly by the formula \cite{Y}
$$
\partial_t  S(f_t \mu |\om)  = -  \frac{2}{N}  \sum_{j} \int (\partial_j
\sqrt {g_t})^2  \, \psi \, \rd\mu
+\int g_t  \cL \psi \, \rd\mu.
$$
Hence
we have, by using \eqref{tl},
$$
\pt_t S(f_t \mu |\om)
= - \frac{ 2}{N}  \sum_{j} \int (\partial_j
\sqrt {g_t})^2  \, \rd\omega
+\int   \wt \cL g_t  \, \rd\omega+  \sum_{j} \int  b_j \partial_j
g_t   \, \rd\omega.
$$
Since $\om$ is $\wt \cL$-invariant and time independent,
 the middle term on the right hand side vanishes, and
from the Schwarz inequality
\be\label{1.1}
 \pt_t S(f_t \mu |\om) \le  -D_{\omega} (\sqrt {g_t})
+   C  N\sum_{j} \int  b_j^2  g_t   \, \rd\omega \le
-D_{\omega} (\sqrt {g_t})
+   C  N^2 Q \tau^{-2} .
\ee
Notice that \eqref{1.1} is reminiscent to \eqref{derS}
for the derivative of the entropy of the measure $g_t\om=f_t\mu$
with respect to $\om$.
The difference is, however, that $g_t$ does not satisfy
the evolution equation with the generator $\wt\cL$.
The last term in \eqref{1.1} expresses the error.

Together with the logarithmic Sobolev inequality  \eqref{lsi},  we have
\be\label{1.2new}
\partial_t S(f_t \mu |\om)\le  -D_{\omega} (\sqrt {g_t})
+   C  N^2 Q \tau^{-2}  \le  - C\tau^{-1}  S(f_t \mu |\om)+
 C  N^2  Q \tau^{-2}.
\ee
Integrating the last 
inequality  from $\tau$ to $t$ and using the assumption \eqref{entA}
and $t\ge \tau N^\e$, 
we have proved the first inequality of \eqref{1.3}. 
Using this result and integrating \eqref{1.1}, we have 
$$
\int_\tau^t  D_{\omega} (\sqrt {g_s})  \rd s  \le  C   N^2    Q \tau^{-1}. 
$$
By the convexity of the Hamiltonian, $ D_{\mu} (\sqrt {f_t}) $  is decreasing  in $t$. 
Since $ D_{\omega} (\sqrt {g_s}) \le  CD_{\mu} (\sqrt {f_s}) +  CN^2 Q \tau^{-2}$,
 this proves 
the second  inequality of \eqref{1.3}.
\qed

\bigskip
Finally, we complete the proof of Theorem \ref{thmM}. For any given $t>0$ we
 now choose 
$\tau: =  t N^{-\e}$ and we construct the local relaxation measure $\om$
with this $\tau$. Set $\psi= \om/\mu$ and
let $q:= g_{t}= f_{t}/\psi$  be the density $q$ in
Theorem \ref{thm3}.
 Then  Theorem \ref{thm1}, Theorem \ref{thm3}
 and an easy bound on the entropy $S_\om(q)\le CN^m$
imply  that
\be\label{diff1}
\Big| \int \frac 1 N \sum_{i\in J} O(N(x_i - x_{i+1}))(   f_t \rd \mu -
  \rd \omega) \Big| 
\le C \Big( t  \frac { D_\omega (\sqrt {q})  }{|J|}  \Big)^{1/2}  + C \sqrt
{S_\om(q)} e^{-c N^\e} .
\ee
\be\nonumber
\le C \Big( t  \frac { N^2 Q  }{|J| \tau^2 }  \Big)^{1/2}+  Ce^{-c N^{\e}}
    \le 
C N^{\e} \sqrt{  \frac {N^2 Q} { |J| t }}
+  Ce^{-c N^{\e}} , 
\ee
i.e.,  the local statistics of $f_t \mu$ and $\om$ are the same for any
 initial data  $f_\tau$ for which  
\eqref{entA} is satisfied. {Applying the same argument to 
the Gaussian initial data, $f_0=f_\tau=1$,
we can also compare $\mu$ and $\om$.}
   We have thus  proved \eqref{GG} and hence the 
universality. 
\qed

\subsection{The Green function comparison theorems and four moment matching}\label{sec:4mom}

We now state the Green function comparison theorem, Theorem~\ref{comparison}.
It will  quickly lead to Theorem~\ref{com} stating that 
the correlation functions  of eigenvalues of two matrix ensembles 
are identical  on a scale smaller than $1/N$  provided that the first four moments 
of all  matrix elements  of these two ensembles are almost the same.  We will state 
a limited version  for real Wigner matrices for simplicity of  presentation.

\begin{theorem}[Green function comparison]\label{comparison}\cite[Theorem 2.3]{EYY}  
 Suppose that we have  
two  $N\times N$ Wigner matrices, 
$H^{(v)}$ 
and $H^{(w)}$, with matrix elements $h_{ij}$
given by the random variables $N^{-1/2} v_{ij}$ and 
$N^{-1/2} w_{ij}$, respectively, with $v_{ij}$ and $w_{ij}$ satisfying
the uniform subexponential decay condition \eqref{subexp}.
We assume that the first four moments of
  $v_{ij}$ and $w_{ij}$ are  close to each other in the sense that 
\be\label{4}
    \big | \E  v_{ij}^s  -  \E  w_{ij}^s \big | \le N^{-\delta -2+ s/2},
  \qquad 1\le s\le 4,
\ee
holds for some $\delta > 0$. 
Then  there are positive constants $C_1$ and $\e$,
depending on $\ttau$ and $C_0$ from  \eqref{subexp} such that for any $\eta$ with
$N^{-1-\e}\le \eta\le N^{-1}$  and for any $z_1, z_2$ with $\im z_j = \pm \eta$, $j=1,2$,
we have 
\begin{align}\label{maincomp}
\lim_{N \to \infty} \Big [ \E \tr  G^{(v)}(z_1)  \tr G^{(v)}(z_2)    - \E \tr  G^{(w)}(z_1)  \tr G^{(w)}(z_2)\Big ] 
= 0,
\end{align}
 where $G^{(v)}$ and $G^{(w)}$ denotes the Green functions of $H^{(v)}$ and $H^{(w)}$.
\end{theorem}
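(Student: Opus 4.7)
The plan is to use a Lindeberg-type swapping argument. Enumerate the independent matrix entries $(a,b)$, $a \le b$, by $\gamma = 1, \ldots, \gamma_{\max}$ with $\gamma_{\max} \asymp N^2$, and define interpolating matrices $H_\gamma$ so that the first $\gamma$ entries come from $H^{(w)}$ and the remaining entries from $H^{(v)}$, giving $H_0 = H^{(v)}$ and $H_{\gamma_{\max}} = H^{(w)}$. Setting $F(H) \deq \tr G_H(z_1)\,\tr G_H(z_2)$ with $G_H(z) = (H-z)^{-1}$, the conclusion follows from the telescoping identity
\begin{equation*}
\E F(H^{(v)}) - \E F(H^{(w)}) \;=\; \sum_{\gamma=1}^{\gamma_{\max}}\bigl[\E F(H_{\gamma-1}) - \E F(H_\gamma)\bigr]
\end{equation*}
once one establishes the per-swap estimate $|\E F(H_{\gamma-1}) - \E F(H_\gamma)| = o(N^{-2})$ uniformly in $\gamma$.

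For a fixed swap at entry $(a,b)$, let $H_\gamma^0$ denote the matrix obtained from $H_{\gamma-1}$ by zeroing its $(a,b)$ and $(b,a)$ entries, and write $V \deq N^{-1/2} v_{ab}(E_{ab}+E_{ba})$ and $W \deq N^{-1/2} w_{ab}(E_{ab}+E_{ba})$, with the obvious modification at $a=b$, so that $H_{\gamma-1} = H_\gamma^0 + V$ and $H_\gamma = H_\gamma^0 + W$. Iterating the resolvent identity $G(H_\gamma^0+V) = G^0 - G^0 V G(H_\gamma^0+V)$, with $G^0 \deq (H_\gamma^0 - z)^{-1}$, expresses $F(H_\gamma^0 + V)$ as a polynomial in $v_{ab}$ of degree four with coefficients $A_k = A_k(z_1, z_2; G^0)$ depending only on $G^0$, plus a fifth-order remainder. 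Since $v_{ab}$ and $w_{ab}$ are independent of $G^0$, taking expectations and subtracting yields
\begin{equation*}
\E F(H_{\gamma-1}) - \E F(H_\gamma) \;=\; \sum_{k=3}^{4} \frac{\E v_{ab}^k - \E w_{ab}^k}{N^{k/2}}\,\E A_k \;+\; R_\gamma,
\end{equation*}
where the $k=1,2$ terms vanish identically because both ensembles share $\E v_{ab} = \E w_{ab} = 0$ and $\E v_{ab}^2 = \E w_{ab}^2 = 1$, and the moment matching hypothesis \eqref{4} bounds each surviving prefactor by $N^{-\delta - 2}$. It remains to show that $|\E A_k|$ for $k=3,4$ and $|R_\gamma|$ are small enough that the per-swap error is $o(N^{-2})$.

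The key analytic input is the local semicircle law of Theorem~\ref{thm: with gap}, applied at the slightly larger scale $\eta_0 = N^{-1+\e_0}$, where it gives $|G^0_{ij} - \delta_{ij} m(z)|\prec N^{-1/2+\e_0}$; combined with the Ward identity \eqref{ward} $\sum_j|G^0_{ij}|^2 = \eta^{-1}\im G^0_{ii}$ this controls the non-local resolvent sums $[(G^0)^\ell]_{ab} = \sum_i (G^0)^{\ell-1}_{ai} G^0_{ib}$ that appear in $A_k$ upon differentiating $\tr G$. To transfer the estimates down to the target regime $\eta \in [N^{-1-\e}, N^{-1}]$, one uses the identity $\partial_\eta \tr G(E+i\eta) = -i\tr G(E+i\eta)^2$ together with the eigenvalue rigidity of Corollary~\ref{cor:rig} to integrate from $\eta_0$ down to $\eta$ at the cost of only $N^{O(\e)}$ factors. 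The main obstacle is the very small spectral parameter: at $\eta \sim N^{-1}$ the pointwise bound on $|[(G^0)^\ell]_{ab}|$ can be as large as $\eta^{-(\ell-1)}$, so the coefficients $A_k$ are much too large pointwise to yield the required saving. The resolution is to estimate $\E A_k$ rather than $|A_k|$: taking expectation over the matrix entries other than $(a,b)$ and invoking the fluctuation averaging mechanism of Theorem~\ref{thm: averaging} produces cancellations that save an additional polynomial factor $N^{-c}$. Choosing $\e$ small enough relative to $\delta$ then closes the per-swap estimate and, after summing over $\gamma_{\max} \asymp N^2$ swaps, completes the proof.
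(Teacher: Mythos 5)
Your proposal has the right skeleton and matches the paper's high-level strategy: telescoping over one-entry swaps, fourth-order resolvent expansion, pairing the $k$-th order coefficient with $\E v_{ab}^k - \E w_{ab}^k$ from the moment matching hypothesis, and using the local semicircle law as input. The observation that the $k=1,2$ terms drop out and that the pointwise size of the coefficients is the real obstruction at $\eta\sim N^{-1}$ is also correct. However, the two analytic ingredients you propose to close the argument do not hold up, and one of them is a genuine gap.

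\textbf{Incorrect input bound and transfer mechanism.} You quote the local semicircle law at $\eta_0 = N^{-1+\e_0}$ as giving $|G^0_{ij}-\delta_{ij}m(z)|\prec N^{-1/2+\e_0}$. This is not what \eqref{Gijest 2} says: at that scale the control parameter is $\Pi\sim (N\eta_0)^{-1/2}\sim N^{-\e_0/2}$, which is much larger than $N^{-1/2+\e_0}$; the $N^{-1/2}$ bound holds only at $\eta\sim 1$. More importantly, to reach $\eta\in[N^{-1-\e},N^{-1}]$ the paper does \emph{not} integrate $\partial_\eta\tr G$ against rigidity; it uses the pointwise monotonicity bound $|G_{ij}(E+\ii\eta)|\le \max_k|G_{kk}(E+\ii\eta)| \le \frac{\eta'}{\eta}\max_k|G_{kk}(E+\ii\eta')|$ for $\eta<\eta'$, which controls \emph{individual} resolvent entries at the cost of only $\eta'/\eta\lesssim N^{2\e}$. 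Your integration of $\partial_\eta\tr G = \ii \tr G^2$ controls only the trace, not the entries $G^0_{ij}$ appearing in the expansion coefficients, and $\tr G^2$ itself is of order $\eta^{-2}\sim N^2$ at $\eta\sim N^{-1}$, so the integral does not simply produce ``$N^{O(\e)}$ factors.''

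\textbf{The fluctuation averaging step is a misapplication.} Theorem~\ref{thm: averaging} is a statement about averages $\sum_k t_{ik}Q_kX_k$ over the summation index $k$, and the gain comes from the decorrelation of the $Q_k$-projected terms across many indices. The coefficients $A_k$ in your Lindeberg expansion are single resolvent polynomials evaluated at a fixed pair $(a,b)$; ``taking expectation over the matrix entries other than $(a,b)$'' is not an average of the required form, and there is no $k$-index over which Theorem~\ref{thm: averaging} could be applied. The paper's proof of Theorem~\ref{comparison} does not invoke fluctuation averaging at this step at all: the smallness is obtained purely from the moment matching together with $N^{O(\e)}$ bounds on individual Green function entries, and the passage from the normalized single trace to the product $\tr G(z_1)\tr G(z_2)$ requires a separate, careful bookkeeping of the expansion (carried out in \cite{EYY}) — the additional cancellation you are reaching for is not the fluctuation averaging mechanism. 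As written, your argument does not establish the claimed $K^{-c}$ saving on $\E A_k$, so the summation over $\sim N^2$ swaps does not close.
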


Here we formulated Theorem \ref{comparison} for a product of two
traces of  the Green function, but  the result 
holds   for  a large class of
smooth functions  depending on several individual
matrix elements of the  Green functions as well,
 see \cite{EYY}  for the precise statement.  (The 
matching condition \eqref{4}  is slightly weaker than in \cite{EYY}, 
but the proof in \cite{EYY} without any change  yields this slightly stronger version.) 
This general version of
 Theorem \ref{comparison} implies the correlation functions 
  of the two ensembles at the scale $1/N$ are identical:

\begin{theorem}[Correlation function comparison]\label{com} \cite[Theorem 6.4]{EYY}
Suppose the assumptions of Theorem \ref{comparison} hold. 
Let $p_{v, N}^{(n)}$ and $p_{w, N}^{(n)}$
be the  $n-$point functions of the eigenvalues w.r.t. the probability law of the matrix $H^{(v)}$
and $H^{(w)}$, respectively. 
Then for any $|E| < 2$,  any
$n\ge 1$ and  any compactly supported continuous test function
$O:\bR^n\to \bR$ we have   
\be \label{6.3}
\lim_{N\to\infty}\int_{\R^n}  \rd\alpha_1 
\ldots \rd\alpha_n \; O(\alpha_1,\ldots,\alpha_n) 
   \Big ( p_{v, N}^{(n)}  - p_{w, N} ^{(n)} \Big )
  \Big (E+\frac{\alpha_1}{N}, 
\ldots, E+\frac{\alpha_n}{N }\Big) =0.
\ee
\end{theorem}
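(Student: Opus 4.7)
The plan is to reduce \eqref{6.3} to the multi-entry version of the Green function comparison, Theorem~\ref{comparison} (the extension to smooth functions of several resolvent entries noted in the remark after its statement), applied at spectral parameters $z_k = E + \alpha_k/N \pm \ii\eta$ with $\eta = N^{-1-\e}$ just below the mean spacing $1/N$. First, by \eqref{pk} and the change of variables $x_i = E + \alpha_i/N$, the left side of \eqref{6.3} equals, up to a factor $1 + O(N^{-1})$, the expectation difference
\begin{equation*}
\big(\E^v - \E^w\big)\sum_{\substack{j_1,\ldots,j_n\\ \text{distinct}}} O\big(N(\lambda_{j_1}-E),\ldots,N(\lambda_{j_n}-E)\big),
\end{equation*}
so it suffices to prove that this expectation difference tends to $0$.

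\textbf{Trace representation.} Introducing the Poisson kernel $\delta_\eta(y) := \pi^{-1}\eta/(y^2+\eta^2)$ and using $\sum_j \delta_\eta(y-\lambda_j) = \pi^{-1}\im\tr G(y+\ii\eta)$, together with a smoothing of $O$ to $C^\infty_c$ (justified by continuity and uniform approximation), the rigidity estimate of Corollary~\ref{cor:rig} combined with $\eta \ll 1/N$ shows that, with overwhelming probability,
\begin{equation*}
\sum_{\substack{j_1,\ldots,j_n\\ \text{distinct}}} O\big(N(\lambda_{j_1}-E),\ldots,N(\lambda_{j_n}-E)\big)
= \frac{1}{N^n}\int_{\R^n} O(\alpha)\sum_{\substack{j_1,\ldots,j_n\\ \text{distinct}}}\prod_{k=1}^n\delta_\eta\!\Big(\tfrac{\alpha_k}{N}-(\lambda_{j_k}-E)\Big)\,\dd\alpha + o(1).
\end{equation*}
By M\"obius inversion on the set-partition lattice of $\{1,\ldots,n\}$ combined with the partial-fraction identity expressing $\sum_j\prod_{k\in B}(z_k-\lambda_j)^{-1}$ as a linear combination of the traces $\{\tr G(z_k)\}_{k\in B}$ with coefficients rational in the differences $(z_k-z_\ell)^{-1}$, the integrand under $\int O(\alpha)\,\dd\alpha$ becomes a polynomial $P_{\alpha,\eta}$ in the $2n$ quantities $\{\tr G(E+\alpha_k/N\pm\ii\eta)\}_{k=1}^n$, with coefficients depending only on $\alpha,\eta,N$.

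\textbf{Application of the comparison theorem.} The multi-entry extension of Theorem~\ref{comparison} then yields
\begin{equation*}
\E^v P_{\alpha,\eta}\!\big(\tr G^{(v)}(z_1),\ldots,\tr G^{(v)}(z_{2n})\big) - \E^w P_{\alpha,\eta}\!\big(\tr G^{(w)}(z_1),\ldots,\tr G^{(w)}(z_{2n})\big) \longrightarrow 0
\end{equation*}
uniformly for $\alpha$ in the compact support of $O$, where $z_{2k-1} = E + \alpha_k/N + \ii\eta$ and $z_{2k} = E + \alpha_k/N - \ii\eta$. Dominated convergence, based on the a priori bound $|\tr G(E+\alpha/N+\ii\eta)|\le CN$ from the local semicircle law (Theorem~\ref{thm: with gap}), gives $(\E^v-\E^w)\int O \cdot P_{\alpha,\eta}\,\dd\alpha \to 0$, which by the two preceding steps closes the argument.

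\textbf{Main obstacle.} The delicate point is the bookkeeping in the trace-representation step at a scale strictly below the mean spacing. At $\eta = N^{-1-\e}$ each trace $\tr G(z_k)$ is of order $N$, so an $n$-fold product scales as $N^n$ and exactly cancels the $N^{-n}$ prefactor; coincidence partitions then produce denominators $(z_k-z_\ell)^{-1}\sim N/(\alpha_k-\alpha_\ell)$ that are singular on the diagonal $\alpha_k = \alpha_\ell$, and the only margin available to control them is the factor $(N\eta)^{-1} = N^{-\e}$ afforded by the local semicircle law. The balance is tight and requires both the optimal bulk control of $\tr G$ and the optimal eigenvalue rigidity of Section~\ref{sec:lsc}, precisely at the near-optimal scale $\eta = N^{-1-\e}$.
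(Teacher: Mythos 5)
Your proposal follows essentially the same route as the proof referenced in \cite[Theorem 6.4]{EYY}: re-express the correlation function integrated against $O$ via a Poisson-kernel smoothing at scale $\eta = N^{-1-\e}$, handle the distinct-index constraint by inclusion--exclusion over set partitions together with partial fractions, and then apply the multi-resolvent Green function comparison theorem at that near-optimal scale. The change of variables producing the $(1+O(N^{-1}))$ factor, the $2n$ spectral parameters $z_k^\pm = E + \alpha_k/N \pm \ii\eta$, and the reduction to a polynomial $P_{\alpha,\eta}$ in traces are all aligned with the standard argument.

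The final ``dominated convergence'' step, however, would not close the argument as written. The a priori bound $|\tr G| \le CN$ (more precisely $CN^{1+\e}$ at scale $\eta = N^{-1-\e}$) yields $|P_{\alpha,\eta}| \lesssim N^{n(1+\e)} \max_M |c_M(\alpha,\eta)|$, which is neither $N$-independent (so the dominated convergence theorem does not apply) nor $\alpha$-integrable, since the partial-fraction coefficients $c_M$ contain factors like $N/|\alpha_k - \alpha_\ell|$ to powers up to $n-1$ that are non-integrable on the diagonal hyperplanes. What is actually needed --- and what Theorem~\ref{comparison} and its multi-resolvent extension do supply --- is the \emph{quantitative} error bound $|\E^v M - \E^w M| \le N^{-c}$ for some $c>0$, uniformly in the real parts of the spectral parameters, which is strictly stronger than mere convergence to zero. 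One then splits the $\alpha$-integral into a region where $\min_{k \ne \ell}|\alpha_k - \alpha_\ell| \ge N^{-c'}$ with $c'(n-1) < c$, on which the coefficient growth is absorbed by the $N^{-c}$ error, and the complementary near-diagonal region, whose contribution is controlled by a uniform a priori upper bound on $p_N^{(n)}$ at scale $1/N$ (following from the local semicircle law) together with the $O(N^{-c'})$ volume of that region. Your ``main obstacle'' paragraph correctly flags that the balance is delicate; the missing device is this quantitative two-region argument rather than dominated convergence.
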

Notice that these comparison theorems hold for any fixed energy $E$, i.e. 
no averaging in energy is necessary.

The basic idea for proving Theorem \ref{comparison} is
similar to Lindeberg's proof of the central limit theorem, where
the random variables are replaced one by one with a Gaussian one.
We will replace  the matrix elements $v_{ij}$ with $w_{ij}$ one by one and
 estimate the effect of this change on the resolvent 
by a resolvent expansion. 
The idea of applying Lindeberg's method in random matrices
was recently used  by Chatterjee \cite{Ch}  for comparing  the traces of the Green functions;
the idea was also used by Tao and Vu \cite{TV}
in the context of comparing individual eigenvalue
distributions.

\medskip

The four moment matching condition  \eqref{4} with $\delta=0$ first appeared in \cite{TV}.
 For comparison with Theorem~\ref{comparison}, \nc we state here the main result of \cite{TV}.
Let $\lambda_1<\lambda_2 <\ldots < \la_N$ and
$\lambda_1'<\lambda_2' <\ldots < \la_N'$ denote the
eigenvalues of $H$ and $H'$, respectively. Then
 the joint distribution of any $k$-tuple of
eigenvalues on scale $1/N$ is very close to each other in the following sense:

\begin{theorem}[Four moment  theorem for eigenvalues]\cite[Theorem 15]{TV}\label{thm:TV}
Let $H$ and  $H'$ be two Wigner matrices.
 Assume that the first four moments of $h_{ij}$ and $h'_{ij}$ exactly match and 
 the subexponential decay \eqref{subexp}   holds for the single entry distributions.
 Then for any sufficiently small positive $\e$ and $\e'$
and for any function $F:\R^k\to\R$ satisfying $|\nabla^j F|\le N^\e$, $j\le 5$,
and for any selection of $k$-tuple of indices $i_1, i_2, \ldots, i_k\in[\e N,
(1-\e)N]$ away from the edge, we have
\be
   \Bigg| \E F\Big( N\la_{i_1}, N\la_{i_2}, \ldots, N\la_{i_k}\Big) 
  -  \E' F\Big( N\la'_{i_1}, N\la'_{i_2}, \ldots, N\la'_{i_k}\Big) \Bigg|\le N^{-c_0}
\label{FF}
\ee
with some $c_0>0$.
The exact moment matching condition  can be relaxed to \eqref{4},
but $c_0$ will depend on $\delta$.
\end{theorem}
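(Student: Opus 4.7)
The plan is to deduce Theorem~\ref{thm:TV} from the extended Green function comparison theorem (Theorem~\ref{comparison}, in the smooth-function form mentioned after \eqref{maincomp}) combined with the rigidity of eigenvalues (Corollary~\ref{cor:rig}). The idea is to rewrite $F(N\lambda_{i_1},\ldots,N\lambda_{i_k})$ as a smooth function of traces of the resolvent at finitely many spectral parameters with $|\im z|$ slightly below $N^{-1}$, and then invoke Theorem~\ref{comparison}.

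First I would localize each eigenvalue via rigidity: by \eqref{rigwig}, for any $\sigma>0$ we have $|\lambda_{i_j}-\gamma_{i_j}|\le N^{-1+\sigma}$ with overwhelming probability, and likewise for $H'$. Outside this good event, the subexponential decay \eqref{subexp} together with the polynomial growth hypothesis $|\nabla^j F|\le N^\e$ render the contribution negligible for both ensembles. On the good event we can recover $N\lambda_{i_j}$ from the empirical counting function $\mathfrak{n}_N(E) = N^{-1}\#\{\alpha:\lambda_\alpha\le E\}$ via the identity $\mathbf{1}\{\lambda_{i_j}\le E\} = \mathbf{1}\{N\mathfrak{n}_N(E)\ge i_j\}$.

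Next I would represent $\mathfrak{n}_N$ in terms of the Green function using the Helffer--Sj\"ostrand formula \eqref{1*}: applied to a smooth approximation $\chi_E$ of $\mathbf{1}_{(-\infty,E]}$ at scale $\rho$ and with the $y$-cutoff truncated at $|y|\ge\eta=N^{-1-\alpha}$ for a small $\alpha>0$, it writes $\sum_\beta\chi_E(\lambda_\beta)$ as an explicit integral of $\im m_N(x+iy)$. Discretizing this integral on a mesh of cardinality polynomial in $N$ expresses the smoothed counting function as a smooth function of finitely many traces $\tr G(z_l)$ with $|\im z_l|\ge\eta$. Composing with a $C^\infty$ cutoff $q$ satisfying $q(u)=0$ for $u\le-1/2$ and $q(u)=1$ for $u\ge 1/2$ converts the sharp indicator $\mathbf{1}\{N\mathfrak{n}_N(E)\ge i_j\}$ into a smooth object; the local spacing $\lambda_{i_j+1}-\lambda_{i_j}$ being of order $N^{-1}$ on the good event ensures the replacement error is negligible. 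The function $F(N\lambda_{i_1},\ldots,N\lambda_{i_k})$ thereby becomes, up to a controlled error, a smooth function $\widetilde F(\tr G(z_1),\ldots,\tr G(z_L))$, and the extended version of Theorem~\ref{comparison} yields $|\mathbb{E}\widetilde F(H) - \mathbb{E}\widetilde F(H')|\le N^{-c_0}$.

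The hard part will be the bookkeeping of three distinct scales: the smoothing width $\rho$ of $\chi_E$, which must satisfy $\rho\ll N^{-1}$ to resolve individual labels; the imaginary part $\eta=N^{-1-\alpha}$ probed by the Green function; and the rigidity window $N^{-1+\sigma}$. These must respect $\eta\ll\rho\ll N^{-1}$, and the smoothing errors must be controlled uniformly using the optimal rigidity \eqref{rigwig} (not merely the weaker apriori bound \eqref{assum3}). Furthermore, derivatives of $\chi_E$ scale as $\rho^{-1}$ and the composite $\widetilde F$ inherits factors that must stay within the polynomial growth allowed by the extended Theorem~\ref{comparison}. Fixing $\alpha$ and $\sigma$ sufficiently small in terms of the hypothesis exponent $\e$ balances the constraints; under the weaker moment match \eqref{4}, the attainable exponent $c_0$ depends on $\delta$ through the rate in Theorem~\ref{comparison}.
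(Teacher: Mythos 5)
The theorem is stated in the paper purely as a citation to Tao--Vu \cite{TV}; no proof is given in the text. The original proof in \cite{TV}, as the paper explains in the bulleted discussion after Theorem~\ref{thm:TV}, is a Lindeberg-type entry-by-entry swap carried out directly on the eigenvalues, and its central technical input is a level repulsion estimate that protects the perturbative expansion from eigenvalue resonances. Your proposal takes a genuinely different route: you express $F(N\lambda_{i_1},\ldots,N\lambda_{i_k})$ as a smooth functional of resolvent traces via Helffer--Sj\"ostrand plus the rigidity of Corollary~\ref{cor:rig}, and then invoke the extended Green function comparison theorem. This is precisely the alternative pathway the paper itself describes in item (iii) of that discussion and attributes to \cite{KY} (fixed-energy control of $\tr G$ at $\eta$ slightly below $1/N$ is upgraded to fixed-label information by integrating the counting function in energy). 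The trade-off is exactly as the paper notes: the swap-on-eigenvalues proof needs level repulsion as apriori input, while the Green-function route replaces it with the local semicircle law bounds on $G_{ij}$ and the optimal rigidity of eigenvalues, at the cost of careful multi-scale bookkeeping, which you correctly flag.

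One point in your write-up is misleading, though it does not actually break the argument. You assert that on the good event "the local spacing $\lambda_{i_j+1}-\lambda_{i_j}$ is of order $N^{-1}$" and use this to control the replacement of the sharp indicator by the smooth cutoff $q$. Rigidity \eqref{rigwig} gives no such lower bound on gaps; adjacent eigenvalues may be much closer than $N^{-1}$. The correct justification is different: $N\mathfrak{n}_N(E)$ is integer-valued, so the composition with $q$ is \emph{exact} for the true counting function. The error arises only when $\mathfrak{n}_N$ is replaced by its Helffer--Sj\"ostrand regularization $\widetilde{\mathfrak{n}}_N$, and it is supported on the set of energies $E$ lying within $O(\rho)$ of some eigenvalue near $\gamma_{i_j}$. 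Estimating the Lebesgue measure of this bad set using rigidity (which bounds the number of eigenvalues in a window of size $N^{-1+\sigma}$), and feeding that into the integral representation of $\lambda_{i_j}$, gives the required error bound without any gap lower bound. If you replace your level-spacing remark with this measure-of-bad-set argument, the proposal is a sound sketch of the \cite{KY}-type proof; the remaining work is the (real, but standard) bookkeeping needed to check that the polynomially many resolvent arguments and the $\rho^{-1}$-sized derivatives of the smoothing stay inside the hypotheses of the extended Theorem~\ref{comparison}.
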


Note that the arguments in \eqref{FF} are magnified by a factor $N$ so
the result is sufficiently precise to detect individual eigenvalue correlations.
Therefore Theorem~\ref{comparison} or 
\ref{thm:TV} can prove bulk universality for a Wigner matrix $H$
if another $H'$ is found, with matching four moments, for which universality is
already proved.  This will be explained in Section~\ref{sec:put}.

 Both  Theorem~\ref{comparison} and Theorem~\ref{thm:TV}  rely on some version of
the local semicircle law on the shortest possible scale. 
   There are, however, three main differences between them. 
\begin{itemize}
\item[(i)]   Theorem~\ref{comparison} compares the statistics of eigenvalues  of two different ensembles
 near {\it fixed energies} while  Theorem~\ref{thm:TV} 
 compares the statistics of the
$j_1, j_2, \ldots j_k$-th  eigenvalues for {\it fixed labels} $j_1, j_2, \ldots j_k$.
\item[(ii)] Both theorems are of perturbative nature that require some
apriori information.  Theorem~\ref{comparison} uses a bound on the
resolvent matrix entries, $|G_{ij}(z)|$, that has already been obtained in the
local semicircle law (see, e.g. \eqref{Gijest 2}). Theorem~\ref{thm:TV}
needs an apriori lower bound on the gaps 
 to exclude possible eigenvalue resonances
that may render the expansion unstable. This is achieved by a level repulsion
estimate that is the most complicated technical part of \cite{TV}.
Previously, even more precise level repulsion estimates were obtained in \cite{ESY3}
but only for smooth distributions.
\item[(iii)]  Theorem~\ref{comparison}  also compares off-diagonal Green function elements,
an information that cannot be obtained from Theorem~\ref{thm:TV}.
Hence it directly provides information on the eigenvectors as well, 
see \cite{KY} for the development.  In fact, once  Theorem~\ref{comparison} is proved
for all energies, it also implies  Theorem~\ref{thm:TV}. The reason is that we can integrate
correlation functions in energy with a precision smaller than the typical size of the gap,
hence eigenvalues with a {\it fixed label} can be identified. This was first
done near the edge in \cite{EKYY2}  and later in the bulk in \cite{KY}.
 \end{itemize}

\medskip
{\it Sketch of the proof of Theorem~\ref{comparison}.}
 We fix a bijective ordering map on the index set of
the independent matrix elements,
\[
\phi: \{(i, j): 1\le i\le  j \le N \} \to \Big\{1, \ldots, \gamma(N)\Big\} , 
\qquad \gamma(N): =\frac{N(N+1)}{2},
\] 
and denote by  $H_\gamma$  the  Wigner matrix whose matrix 
elements $h_{ij}$ follow
the $v$-distribution if $\phi(i,j)\le \gamma$ and they follow the $w$-distribution
otherwise; in particular $H^{(v)}= H_0$ and $H^{(w)}= H_{\gamma(N)}$. 

Consider the telescopic sum of 
differences of expectations (we present only  one resolvent for simplicity of the presentation):
\begin{align}\label{tel}
\E \, \left ( \frac{1}{N}\tr  \frac 1 {H^{(w)}-z} \right )   - 
 & \E \,   \left  (  \frac{1}{N}\tr  \frac  1 {H^{(v)}-z} 
\right )  \\
= & \sum_{\gamma=1}^{\gamma(N)}\left[  \E \, 
 \left (  \frac{1}{N}\tr \frac 1 { H_\gamma-z} \right ) 
-  \E \,  \left (  \frac{1}{N}\tr \frac  1 { H_{\gamma-1}-z} \right ) \right] . \non
\end{align}
Let $E^{(ij)}$ denote the matrix whose matrix elements are zero everywhere except
at the $(i,j)$ position, where it is 1, i.e.,  $E^{(ij)}_{k\ell}=\delta_{ik}\delta_{j\ell}$.
Fix a $\gamma\ge 1$ and let $(i,j)$ be determined by  $\phi (i, j) = \gamma$.
We will compare $H_{\gamma-1}$ with $H_\gamma$.
Note that these two matrices differ only in the $(i,j)$ and $(j,i)$ matrix elements 
and they can be written as
$$
    H_{\gamma-1} = Q + \frac{1}{\sqrt{N}}V, \qquad V:= v_{ij}E^{(ij)}
+ v_{ji}  E^{(ji)}, \qquad v_{ji}:= \ov v_{ij},
$$
$$
    H_\gamma = Q + \frac{1}{\sqrt{N}} W, \qquad W:= w_{ij}E^{(ij)} +
   w_{ji} E^{(ji)}, \qquad w_{ji}:= \ov w_{ij},
$$
with a matrix $Q$ that has zero matrix element at the $(i,j)$ and $(j,i)$ positions.

By the resolvent expansion, 
\[
  S_{\gamma-1}= R - N^{-1/2} RVR + \ldots  + N^{-2} (RV)^4R - N^{-5/2} (RV)^5 S,\quad 
    R := \frac{1}{Q-z}, \; \; S_{\gamma-1} :=  \frac{1}{H_{\gamma-1}-z},
\]
and a similar expression holds for the resolvent $S_\gamma$ of  by $H_{\gamma}$. 
 From the local semicircle law  for individual matrix elements \eqref{Gijest 2},
 the matrix elements of all Green functions $R$,  $S_{\gamma-1}, S_\gamma$  are bounded  
by $CN^{2\e}$. Although \eqref{Gijest 2} is not directly applicable
to $\eta\ge N^{-1+\e}$, it is easy to show that
$$
  |G_{ij}(E+i\eta)|\le \max_i |G_{ii}(E+i\eta)| \le \frac{\eta'}{\eta}  \max_i |G_{ii}(E+i\eta')|
$$
so choosing $\eta'\sim N^{-1+\e}$ we can prove a bound for $\eta$ slightly below
$1/N$ at the  expense of a factor $\eta/\eta'$. The estimates of the related resolvents
$R$,  $S_{\gamma-1}, S_\gamma$ are similar.

By assumption \eqref{4},  the difference between 
the expectation of matrix elements of  $S_{\gamma-1}$ and $S_\gamma$
 is of order $N^{-2-\delta + C \e}$.
 Since the number of steps, $\gamma(N)$ is of order $N^2$,  the
difference in \eqref{tel} is of order $N^{2}N^{-2 - \delta+ C \e}\ll 1$, and
this proves Theorem \ref{comparison}  for a single resolvent. 
It is very simple to turn this heuristic argument into a rigorous proof
and to generalize it to the product of several resolvents.
 The real difficulty is the input that the resolvent entries can
be bounded for a general class of Wigner matrices down to the almost optimal scale
$\eta\sim 1/N$.

\subsection{Universality for generalized Wigner matrices: putting it together}\label{sec:put}

In this short section we put the previous information together to prove 
Theorem \ref{bulkWigner}.  We first focus on the case when $b_N$ is independent of $N$. 
Recall that Theorem~\ref{thm:DBM} states that the correlation 
functions of the Gaussian divisible ensemble,
\be\label{matrixdbm1}
H_t = e^{-t/2} H_0 + (1-e^{-t})^{1/2}\, U,
\ee
where $H_0$  is the initial  Wigner matrix and 
$U$ is an independent standard GUE (or GOE) matrix,
are given by the corresponding GUE (or GOE) for $t\ge  N^{-2\fa+\e}$ 
provided that the a-priori estimate \eqref{assum3}  holds
for the solution $f_t$ of the forward equation \eqref{dy} with some exponent $\fa>0$.
Since the rigidity  of eigenvalues \eqref{rigwig} holds uniformly for all generalized Wigner matrices, 
we have proved  \eqref{assum3} for $\fa = 1/2 - \e$  with any $\e>0$.

{F}rom the evolution of the OU process \eqref{zij} for $v_{ij}  = N^{1/2} h_{ij}$ we have 
\be\label{smom}
\big | \E v_{ij}^s(t) -  \E v_{ij}^s(0) \big | \le C t = C N^{-1 + 3 \e}
\ee
for  $s=3, 4$ and with the choice of
 $t = N^{-1 + 3\e}$. 
 Furthermore, $\E h_{ij}^s(t)$ are independent of $t$ for $s=1, 2$
due to $\E v_{ij}(0) = 0$ and $\E v_{ij}^2(t) = 1$. 
Hence \eqref{4} is satisfied  for the matrix elements of $H_t$ and $H_0$ and 
we can thus use Theorem \ref{com} to conclude that the correlation functions of $H_t$ and $H_0$
are identical at the scale $1/N$. 
Since the correlation functions of $H_t$ are given by the corresponding Gaussian 
case, we have proved Theorem \ref{bulkWigner} under the condition
 that the probability distribution of the matrix elements  decay subexponentially. 
Finally, we need a technical cutoff argument to relax the decay condition
to \eqref{4+e} which  we omit here  (see Section 7 in \cite{EKYY2}).

 The argument for $N$-dependent $b=b_N$ in the range
 $b_N\ge N^{-1+\xi}$, $\xi>0$, is slightly different. For such a small 
$b_N$, \eqref{abstrthm} could be established only for relatively large times, $t\ge N^{-\xi/8}$.
We cannot therefore compare $H_0$ with $H_t$ directly, since the deviation of the third
moments of $v_{ij}(0)$ and $v_{ij}(t)$ in \eqref{smom} would not satisfy
\eqref{4}. Instead, we construct an auxiliary Wigner matrix $\wh H_0$ such that  
up to the third moment {\it its} time evolution $\wh H_t$ under the OU flow \eqref{matrixdbm1}
 matches exactly the {\it original} matrix $H_0$ and the fourth moments are close
 even for $t$ of order $N^{-\xi/8}$
(see Lemma 3.4 of \cite{EYY2}).
Theorem~\ref{thm:DBM} will then be
applied for $\wh H_t$, and Theorem~\ref{comparison} can be used to compare  $\wh H_t$
and $H_0$. This completes the sketch of the proof of Theorem \ref{bulkWigner}.

\section{Universality of the correlation functions for $\beta$-ensembles} \label{sec:unibeta}

In this section we outline the proof of Theorem~\ref{bulkbeta}.
We will use the notation $\mu=\mu_{\beta, V}^{(N)}$ for the probability measure defining
the  general $\beta$-ensemble with a potential $V$ on $N$ ordered real points
$\lambda_1\leq \ldots\leq \lambda_N$, see  \eqref{01}.
We let $\P^\mu$ and $\E^\mu$  denote the probability and the expectation with respect 
to $\mu$. The equilibrium density is denoted by $\varrho=\varrho_V$
and its Stieltjes transform by 
$$
  m(z)=m_V(z)= \int_\bR \frac{\varrho_V(x)}{x-z}\rd x.
$$
The classical location of the $k$-th point will be denoted by  $\gamma_k=\gamma_{k, V}$,
 see \eqref{defgammagen}. Note that in this section $\mu$, $\varrho$, $m$ and $\gamma_k$
refer to the quantities related to the general $V$ and not the Gaussian one
as in the previous sections.  This avoids carrying the $V$ subscripts all the time
as in Section~\ref{sec:lsc} we dropped the subscripts {\it sc} referring to
the semicircle law.

\subsection{Rigidity  estimates}   \label{beta}

For simplicity of  presentation we assume that 
 the potential $V$ is convex,  i.e., 
\begin{equation}\label{eqn:LSImu}
\varpi :=\frac{1}{2} \inf_{x\in\RR}V''(x) > 0,
\end{equation}
the equilibrium density  $\varrho(s)$ is supported on a single interval 
$[A,B]\subset \bR$ and satisfies \eqref{equilibrium}
(for the general case, see \cite{BEY2}). 
The  Gaussian case corresponds to 
$V(x)=x^2/2$,
in which case  the equilibrium density is
the semicircle law, $\varrho_{sc}$, given by  \eqref{sc}.
Our  main result concerning the universality is Theorem \ref{bulkbeta} and 
similar  statement holds for the 
universality of the averaged
gap distributions directly. In fact, the proof of Theorem \ref{bulkbeta} goes via 
the averaged gap distribution as we now  explain.

The first step to prove Theorem \ref{bulkbeta}  is the following theorem 
which provides a rigidity estimate  on the location of each individual point.
The precision in the bulk is almost down to the optimal scale $1/N$, the
estimate is weaker near the edges.
In the following, we will denote $\llbracket x,y\rrbracket=\NN\cap[x,y]$.

\begin{theorem}\label{thm:accuracy}\cite[Theorem 3.1 and Lemma 3.6]{BEY}
Fix any $\alpha, \e >0$ and assume that \eqref{eqn:LSImu} holds. Then there are constants
$\delta,c_1,c_2>0$ such that for any $N\geq 1$ and $k\in\llbracket \alpha N,(1-\alpha) N\rrbracket$,
\be\label{bulkacc}
\P^\mu\left(|\lambda_k-\gamma_{k}|> N^{-1+\e}\right)\leq c_1e^{-c_2N^\delta}.
\ee
The following weaker bound is valid close to the spectral edges:
\be\label{edgeacc}
\P^\mu \left ( |\lambda_k-\gamma_k|\ge  N^{-4/15+\epsilon}\right)  \le c_1 e^{ - c_2  N^\delta}  \; .
\ee
for  any
$ k\in \llbracket N^{3/5+ \e},  N - N^{3/5+ \e}\rrbracket $. Finally, the bound
\be\label{allacc}
\P^\mu \left ( |\lambda_k-\gamma_k|\ge \e \right)  \le c_1 e^{ - c_2  N^\delta}  \; .
\ee
holds for any $k\in \llbracket 1, N \rrbracket$.
\end{theorem}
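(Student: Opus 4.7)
The plan is to combine a Gaussian concentration bound coming from the convexity of the Hamiltonian with a loop-equation analysis of the empirical Stieltjes transform $m_N(z) = \frac{1}{N}\sum_k (\lambda_k - z)^{-1}$, and then bootstrap between the two to drive the rigidity down to the optimal bulk scale.

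Step 1 (initial concentration). The convexity \eqref{eqn:LSImu}, combined with the positive contribution from the logarithmic interaction as in \eqref{convex}, gives $\nabla^2 \cH \ge \varpi$ pointwise. By the Bakry-\'Emery criterion, $\mu$ satisfies a log-Sobolev inequality with constant $\varpi$, hence the Gaussian concentration bound \eqref{concen}. Applied to the $1$-Lipschitz functional $\bla \mapsto \lambda_k$ this immediately yields
\begin{equation*}
\P^\mu\pb{|\lambda_k - \E^\mu \lambda_k| > N^{-1/2+\e}} \;\le\; 2\,e^{-cN^{2\e}}.
\end{equation*}
Applied to $\bla \mapsto m_N(z)$ with $\im z = \eta$, the same inequality controls the fluctuations of the Stieltjes transform. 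The task is now to (i) identify $\E^\mu \lambda_k$ with $\gamma_k$ up to an error $N^{-1+\e}$, and (ii) upgrade the concentration scale from $N^{-1/2}$ to $N^{-1}$ in the bulk.

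Step 2 (deterministic control via the loop equation). Integration by parts with respect to $\mu$ yields the Johansson-Shcherbina loop equation, which can be written schematically as
\begin{equation*}
\E^\mu\qB{\pb{m_N(z) - m(z)}\pb{2m(z) - V'(z)}} \;=\; \tfrac{1}{N}R(z) - \var^\mu\pb{m_N(z)},
\end{equation*}
where $R(z)$ is an analytic remainder coming from the $\beta$-dependence and the boundary terms. Since $2m(z) - V'(z)$ is bounded away from $0$ in the bulk by \eqref{equilibrium}, this is a linear equation for $\E^\mu m_N - m$ with a source controlled by $R/N$ and by the variance bound from Step~1. Solving it gives $|\E^\mu m_N(z) - m(z)| \prec (N\eta)^{-2}$ above a mesoscopic threshold, and hence $|m_N(z) - m(z)| \prec (N\eta)^{-1}$ with the Gaussian tail from the LSI.

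Step 3 (transfer to eigenvalue positions, bulk and edge). The Helffer-Sj\"ostrand representation \eqref{1*}, applied to a smooth approximation of the characteristic function of $(-\infty,E]$, converts the bound on $m_N - m$ into $|\fn_N(E) - n(E)| \prec N^{-1}$ uniformly in $E$. In the bulk $\varrho(\gamma_k) \ge c > 0$, so inverting $n(\gamma_k) = k/N$ via the implicit function theorem gives $|\lambda_k - \gamma_k| \prec N^{-1}$, which with the Gaussian tail yields \eqref{bulkacc}. Near the edges $\varrho(x) \asymp \sqrt{|x-A|}$, so the same counting estimate degrades: inversion loses a factor $\kappa^{-1/2}$, which in turn demands taking $\eta$ larger than optimal, giving the weaker exponent $4/15$ in \eqref{edgeacc} after balancing. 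The global bound \eqref{allacc} follows from a one-shot large-deviation estimate using the confining potential $V$, which forces every $\lambda_k$ into an $\e$-neighborhood of $[A,B]$ with exponential cost.

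The main obstacle will be making the coupling between Steps 2 and 3 into a genuine self-improving bootstrap rather than circular reasoning. The loop equation source term $R(z)$ and the variance $\var^\mu(m_N(z))$ both depend on how well the eigenvalues are already controlled, so one must start from the crude $N^{-1/2+\e}$ bound of Step~1, use it to obtain a first nontrivial bound on $m_N - m$ on a relatively large scale $\eta_0$, convert it back into an improved rigidity via Helffer-Sj\"ostrand, and iterate, shrinking $\eta$ by a fixed factor at each step until $\eta \approx N^{-1+\e}$. Each iteration must preserve the Gaussian tails through LSI. The edge is where this bootstrap saturates first, producing the compromise exponent $4/15$; propagating the bootstrap further into the edge regime without losing the exponential tails is the most delicate point of the argument.
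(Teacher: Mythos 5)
You correctly identify the two main analytical ingredients — the loop equation for the averaged Stieltjes transform $\bar m_N := \E^\mu m_N$, and the log-Sobolev/Bakry--\'Emery framework coming from $\nabla^2\cH\ge\varpi$ — together with the Helffer--Sj\"ostrand transfer and the need for a bootstrap. However, the mechanism you propose for closing the bootstrap (shrinking $\eta$ and invoking ``the Gaussian tail from the LSI'') cannot work, because you are missing the structural idea the paper uses to upgrade the concentration scale.

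The global LSI with fixed constant $\varpi$, applied to the $1$-Lipschitz functional $\bla\mapsto\lambda_k$, gives $\P^\mu(|\lambda_k-\E^\mu\lambda_k|>a)\le C e^{-c\varpi Na^2}$ and nothing sharper; for $a=N^{-1+\e}$ the exponent $Na^2\sim N^{2\e-1}$ tends to zero when $\e<1/2$, so the bound is vacuous. Similarly, the uniform Lipschitz bound $\|\nabla m_N\|_\infty\le(\sqrt{N}\eta^2)^{-1}$ yields, via LSI, concentration of $m_N$ only at scale $(N\eta^2)^{-1}$, which is \emph{worse} than the $(N\eta)^{-1}$ you claim for every $\eta<1$ and degrades as $\eta$ shrinks. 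Thus the step that converts the expectation bound from the loop equation into a high-probability bound on $m_N-m$ (and then on $\lambda_k-\gamma_k$) is not supported. Shrinking $\eta$ improves the bound on $\bar m_N-m$ and hence on $\E^\mu\lambda_k-\gamma_k$, but it does not improve the concentration of $\lambda_k$ around its mean, and the loop equation genuinely requires high-probability (not mean) control on the $\lambda_k$'s to re-estimate the variance $\var^\mu(m_N)$. The two halves of your bootstrap therefore do not feed each other.

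The paper resolves this with an idea you do not use: the \emph{second} term in the Hessian bound \eqref{convex}, the one coming from the logarithmic interaction. Suppose the accuracy information already obtained implies $|\lambda_i-\lambda_j|\lesssim M/N$ for $i,j\in I_k^M:=\llbracket k-M,k+M\rrbracket$. Then for any $\bv$ supported on that block with $\sum_j v_j=0$, one has $\langle\bv,\nabla^2\cH(\bla)\bv\rangle\gtrsim (N/M)\|\bv\|^2$, i.e.\ the convexity improves from $\varpi$ to $N/M$ on that hyperplane. Writing $\lambda_k-\lambda_k^{[N^{1-\e}]}$ as a telescoping sum of differences of block averages $\lambda_k^{[M_j]}-\lambda_k^{[M_{j+1}]}$ and applying this improved concentration at each dyadic scale gives $\P^\mu(|\lambda_k-\E^\mu\lambda_k|>a)\le Ce^{-cN^2a^2/M}$, i.e.\ concentration at scale $\sqrt M/N$ rather than $N^{-1/2}$. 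This is what turns loop-equation accuracy on expectations into high-probability rigidity, which then feeds back into the variance estimate and lets the bootstrap descend from $M=N$ to $M=O(1)$. You flag the bootstrap as ``the main obstacle'' but do not supply this mechanism; without it the concentration scale cannot improve past $N^{-1/2}$.
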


\bigskip 

We explain some ideas of the proof of \eqref{bulkacc}, the arguments
for \eqref{edgeacc} are similar and \eqref{allacc} follows from an easy large deviation bound, see \cite{BEY}.
The first ingredient to prove \eqref{bulkacc} is an analysis of the loop equation
following Johansson \cite{Joh} and Shcherbina \cite{Sch}. 
 The equilibrium density $\varrho$, for a convex potential $V$,  is given by 
 \begin{equation}\label{eqn:rho}
\varrho(t) =\frac{1}{\pi}r(t)\sqrt{(t-A)(B-t)}\mathds{1}_{[A,B]}(t),
\end{equation}
where $r$ is a real function  that
can be extended to an analytic function in $\CC$  and $r$ 
has no zero in $\RR$.
Denote by  $s(z):=-2r(z)\sqrt{(A-z)(B-z)}$ where the square root is defined such
 that its asymptotic value 
is $z$ as $z\to\infty$. Recall that the density
is the one-point correlation function which is characterized  by
\be
\int_\bR \rd \lambda_1   O(\lambda_1)  p^{(1)}_N(\lambda_1) =
\int_{\bR^N}  O(\lambda_1) 
 \rd\mu_{\beta, V}^{(N)}(\lambda), \qquad \lambda = (\la_1, \la_2, \ldots, \la_N).
\label{def:dens}
\ee
  Let  $\bar m_N$ and $m$ be the Stieltjes transforms of the  density $p_N^{(1)}$
and the equilibrium density $\varrho$, respectively.  Notice that
in Section~\ref{sec:lsc}  we have used $m_N$ to denote 
the Stieltjes transform of the  empirical measure \eqref{mNdef}; 
here $\bar m_N$ denotes the ensemble average of the analogous quantity.

Define the analytic functions 
$$
b_N(z):=\int_{\RR}\frac{V'(z)-V'(t)}{z-t}(p_1^{(N)}-\varrho)(t)\ind t 
$$
and  
$$
   c_N(z):=\frac{1}{N^2}k_N(z)+\frac{1}{N}\left(\frac{2}{\beta}-1\right)\bar m_N'(z), \quad \mbox{with}
\quad k_N(z):=\var_\mu\left(\sum_{k=1}^N\frac{1}{z-\lambda_k}\right).
$$
Here for  complex random variables $X$ we use the definition that 
$\var(X)=\E(X^2)-\E(X)^2$.

 The equation used by Johansson 
(which can be obtained by a change of variables
in \eqref{def:dens}
 \cite{Joh} or by integration by parts \cite{Sch}), is
 a variation of the loop equation (see, e.g., \cite{Ey})
used in the physics literature and it takes the form
\begin{equation}\label{eqn:firstLoop}
( \bar m_N-m)^2+s(\bar  m_N-m)+b_N=c_N.
\end{equation}

Equation  \eqref{eqn:firstLoop} can be used to  express the difference $\bar m_N - m$ 
in terms of $(\bar m_N-m)^2$, $b_N$ and $c_N$.
In the  regime where $|\bar m_N - m|$  is small, we 
can neglect the quadratic term.  The term $b_N$ is of the same order as
$|\bar m_N-m|$ and is  difficult  to treat. As observed in \cite{APS,Sch}, for analytic  $V$,
  this term vanishes
when we perform a contour integration. So we have roughly the relation
\be\label{55}
( \bar m_N-m) \sim \frac 1 { N^2} \var_\mu\left(\sum_{k=1}^N\frac{1}{z-\lambda_k}\right),
\ee
where we dropped the less important error involving $\bar m_N'(z)/N $ due to the extra $1/N$ factor.
 In the convex setting,
the variance can be estimated by  the logarithmic Sobolev
inequality and we immediately obtain an estimate on $\bar m_N - m$.
We then use the Helffer-Sj\"ostrand formula, see \eqref{1*}, 
 to  estimate the locations of the particles. 
 This will provide us with an accuracy of order $N^{-1/2}$
for $\E^\mu \lambda_k-\gamma_k$.  This argument gives only
an estimate on the expectation of the locations of the particles  since 
we only have  information
 on the  averaged quantity, $\bar m_N$. 
 Although it is tempting
to use  this new accuracy information on the particles
 to estimate the variance again in \eqref{55}, the information on 
 the expectation on $\lambda_k$ alone is very difficult to use 
in a bootstrap argument.
To estimate the variance of a non-trivial function of $\lambda_k$
 we need high probability estimates 
on  $\lambda_k$.

The key idea in this section is the observation that the 
accuracy information on the $\lambda$'s can be used to  improve the local convexity
of the measure $\mu$ in  the direction involving the {\it differences} of $\lambda$'s.
To explain this idea, we compute the  Hessian of the Hamiltonian of $\mu$:
$$
\Big\langle \bv , \nabla^2  \cH(\la)\bv\Big\rangle
\ge   \varpi  \,   \|\bv\|^2 + \frac{1}{N}
 \sum_{i<j} \frac{(v_i - v_j)^2}{(\la_i-\la_j)^2}. 
$$
The naive lower bound on $\nabla^2  \cH$ is $\varpi$, but for a
 typical $\bla=(\la_1, \la_2, \ldots, \la_N)$ it is in fact much better
in most directions. To see this effect,  suppose  
we know $ |\la_i-\la_j| \lesssim M /N$ 
 with some $M$ for any  $i, j  \in I_k^M$, where
$I_k^M := \llbracket k-M, k+M\rrbracket$. 
Then for $\bv = (v_{k-M}, \ldots, v_{k+M})$
with $\sum_j v_j = 0$ we have 
\be\label{convex2}
\Big\langle \bv , \nabla^2  \cH(\bla)\bv\Big\rangle
\ge \frac{N}{M^2}
 \sum_{i, j \in I_k^M}  (v_i - v_j)^2 \ge C \frac N M \sum_j v_j^2.  
\ee
 This improves the convexity of the Hessian to $N/M$  on the hyperplane $\sum_j v_j = 0$.
Let 
$$
\lambda_k^{[M]}: = \frac{1}{ |I_k^M|}\sum_{j\in I_k^M}\lambda_j = \frac{1}{2M+1}\sum_{j\in I_k^M}\lambda_j
$$
denote the block average 
of the locations of particles and rewrite 
$$ 
 \lambda_k - \lambda_k^{[N^{1-\e}]}=  \sum_j \Big( \lambda_k^{[M_j]}- \lambda_k^{[M_{j+1}]}\Big)
$$
as a telescopic sum with an appropriate sequence of $M_1=0$, $M_2, \ldots$ with the property
that $M_j/M_{j-1}\le N^\e$.
We can now use the improved concentration on the hyperplane $\sum_j v_j = 0$ to the 
variables $\lambda_k^{[M_j]}- \lambda_k^{[M_{j+1}]}$ to
 control the fluctuation of $ \lambda_k - \lambda_k^{[N^{1-\e}]}$. 
Since the fluctuation of $  \lambda_k^{[N^{1-\e}]}$ is very small for small $\e$, 
we finally arrive at the estimate 
\be\label{concen1}
\P^\mu \left(|\la_k-\E^\mu(\la_k)|> a \right)\leq C e^{- C  N^2 a^2/ M }.
\ee

{F}rom \eqref{concen1} we thus  have that $|\la_k - \E^\mu \la_k| \lesssim \sqrt M /N$
with high probability. This improves the starting accuracy 
$ |\la_i-\la_j| \lesssim M /N$  for $i, j  \in I_k^M$  to $ |\la_i-\la_j| \lesssim M' /N$
with some $M'\ll M$,
 provided that we can prove  that  $ |\E^\mu (\la_i-\la_j )| \ll  M' /N$.
 But the last inequality 
involves only expectations and it
will follow from the analysis of the loop equation \eqref{eqn:firstLoop} 
we just mentioned  above.  Starting from $M=N$, this procedure can
 be repeated by decreasing $M$ step by step until we get the optimal 
accuracy, $M \sim O(1)$.
The implementation of this argument in \cite{BEY} is somewhat
different from this sketch due to various technical issues, but it follows the same basic idea.

\subsection{The local equilibrium measure }\label{sec:loceq}

Having completed the first step, the rigidity
  estimate, we now focus on the second
step, i.e. on the uniqueness of the local Gibbs measure.
Let $0<\kappa<1/2$. Choose $q\in [\kappa, 1-\kappa]$ and set $L=[Nq]$ (the integer part).
 Fix an integer  $K=N^k$ with $k<1$.
 We will study the local spacing statistics
of $K$ consecutive particles
$$
  \{ \lambda_j\; : \; j\in I\}, \qquad I=I_L:=
 \llbracket L+1, L+K \rrbracket.
$$
These particles are typically located near
$E_q$ determined by the relation
$$
  \int_{-\infty}^{E_q} \varrho(t) \rd t = q.
$$
Note that $|\gamma_L- E_q|\le C/N$.

We will distinguish the inside and outside particles
by renaming them as
\be\label{35}
(\lambda_1, \lambda_2, \ldots,
\lambda_N):=(y_{1}, \ldots y_{L}, x_{L+1},  \ldots, x_{L+K}, y_{L+K+1},
  \ldots y_{N}) \in \Xi^{(N)},
\ee
but note that they keep their original indices.
The notation $\Xi^{(N)}$ refers to the simplex
$\{\bz \; :\; z_1<z_2< \ldots < z_N\}$ in $\RR^N$.
In short we will write
$$
\bx=( x_{L+1},  \ldots, x_{L+K} ), \qquad \mbox{and}\qquad
 \by=
 (y_{1}, \ldots, y_{L}, y_{L+K+1},
  \ldots, y_{N}),
$$
all  in increasing order, i.e. $\bx\in \Xi^{(K)}$ and
$\by \in \Xi^{(N-K)}$.
We will refer to the $y$'s as   {\it external
points}  and to the $x$'s as  {\it internal points}.

We will fix the external points (also called
as boundary conditions) and study
conditional measures on the internal points.
We  define  the
{\it local equilibrium measure} on $\bx$ with fixed boundary condition  $\by$ by
\begin{equation}\label{eq:muydeold}
 \quad
\mu_{\by} (\rd\bx)  = \mu_\by(\bx) \rd \bx, \qquad
\mu_\by(\bx):=  \mu (\by, \bx) \left [ \int \mu (\by, \bx) \rd \bx \right ]^{-1}.
\end{equation}
Note that for any fixed $\by\in \Xi^{(N-K)}$,  
the measure $\mu_\by$ is supported on configurations of $K$ points 
$\bx=\{ x_j\}_{j\in I}$ 
 located in the interval $[y_{L}, y_{L+K+1}]$.

The Hamiltonian $\cH_\by$ of the measure $\mu_\by (\rd \bx) \sim \exp(-\beta N\cH_\by(\bx))\rd\bx$
 is given by 
\be\label{24}
\cH_{\by} (\bx) :=
 \sum_{i\in I}  \frac{1}{2}V_\by (x_i)
-  \frac{1}{N} \sum_{i,j\in I\atop i< j}
\log |x_{j} - x_{i}| 
\qquad \mbox{with}\qquad
V_\by (x): = V(x) - \frac{ 2 }{ N} \sum_{j \not \in I}
\log |x - y_{j}|.
\ee
We now define the set of {\it good boundary configurations}
with  a parameter $\delta=\delta(N)>0$
\begin{align}\label{goodset}
  \cG_{\delta}=\cG:=
\Big\{   \by \in \Xi^{(N-K)}\; :\; |y_j-\gamma_j|\le \delta, 
\; \forall \, j\in
\llbracket N\kappa/2, L\rrbracket \cup \llbracket L+K+1,
N(1-\kappa/2)\rrbracket
\Big\},
\end{align}
where $\kappa$ is a small constant to cutoff points near the spectral edges.  Some 
rather weak  additional conditions
 for $\by$ near the spectral edges will also be needed. They can be built
in the definition of $\cG$ based upon the bounds \eqref{edgeacc} and \eqref{allacc}
 but  we will neglect
this issue here. 

Let   $\sigma $ and $ \mu$ be two measures of the form \eqref{01}  
with  potentials $W$ and $V$ and 
 densities $\varrho=\varrho_W$ and $\varrho_V$, respectively.
For our purpose $W(x)=x^2/2$,  i.e., $\sigma$ 
is the Gaussian $\beta$-ensemble
and  its density $\varrho_W(t) =\frac{1}{2\pi}(4-t^2)^{1/2}_+$ is the Wigner semicircle law.
Let  the sequence $\gamma_j$ be the classical locations for $\mu$ and
the sequence $\theta_j$ be the classical locations for $\sigma$.
Similarly to the construction of the  measure $\mu_{\by}$,
 for any positive integer $L' \in \llbracket 1, N-K \rrbracket$  we can construct 
the measure $\sigma_\bt$ conditioned that the particles outside are given by 
the classical locations $\theta_j$ 
for $j \notin
\llbracket  L',  L'+K \rrbracket$. 
More precisely,  we define a {\it reference
local Gaussian  measure} $\sigma_\th \sim \exp(-\beta N\cH_\bt(\bx))\rd\bx$
 on the set $[\th_{L'}, \th_{L'+K+1}]$ via the  Hamiltonian
\be\label{241}
\cH_{\bt} (\bx) =
 \sum_{i \in I'}  \Big [  \frac{1}{4}   x_i^2   - \frac{ 1 }{N} \sum_{j \not \in I'}
\log |x_i - \th_{j} | \Big ]
-  \frac{ 1 }{N} \sum_{ i,j \in I'\atop i<j }
\log |x_{j} - x_{i}| ,
\ee
where  $I':= \llbracket L'+1, L'+K\rrbracket$.
Since $L'$ will not play an active role, we will 
abuse the notation and set $L'=L$. 

The measure $\mu_{\by}$ lives on the interval 
$[y_L, y_{L+K+1}]$ while the measure $\sigma_{\bt}$ lives on the interval 
$[\theta_L, \theta_{L+K+1}]$ and it is difficult to compare them. 
But after an appropriate translation and dilation, they will live on the same interval and from now on 
we assume that  $[y_L, y_{L+K+1}]= [\theta_L, \theta_{L+K+1}]$. 
The parameter $K=N^k$ has to be sufficiently small since  $\varrho_V$ and 
$\varrho_W$ are not constant functions
and we have 
to match these two densities quite precisely  in the whole interval. 
There are some other subtle  issues related to the rescaling, 
but  we will neglect them here to concentrate on the main ideas.  
Our main result is the following theorem
which is essentially a combination of Proposition 4.2 and Theorem 4.4 from  \cite{BEY}.

\begin{theorem}\label{thm:mi} 
Let $0<\varphi\le \frac{1}{38}$.
 Fix $K=N^k$, $\delta = N^{-1+\varphi}$
and $k=\frac{39}{2}\varphi$.
Then for $\by\in \cG_\delta$  we have 
\be\label{381}
\Bigg|  \E^{ \mu_{\by} } \frac 1 K \sum_{i \in I} O\Big( N(x_i-x_{i+1}) \Big)
   -\E^{ \sigma_{\th} }
 \frac 1 K \sum_{i \in I} O\Big( N(x_i-x_{i+1}) \Big)\Bigg|
\to 0
\ee
as $N \to \infty$
for any smooth and compactly supported test function $O$.
A similar formula holds for more complicated observables of the form
\eqref{cG}.
\end{theorem}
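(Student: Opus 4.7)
The plan is to adapt the local relaxation flow strategy of Section~\ref{sec:DBM} and combine it with an interpolation between the two local measures $\mu_\by$ and $\sigma_\bt$. Both are Gibbs measures on $K$ ordered points with the same logarithmic interaction, differing only through their external potentials $V_\by$ and $V_\bt$; the target is to show that this difference is washed out at the scale $1/N$ at which the gap observable lives.

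First, I would upgrade the global rigidity of Theorem~\ref{thm:accuracy} to a local rigidity for $\mu_\by$: for every $\by\in\cG_\delta$ and $i\in I$, the internal particle $x_i$ lies within $N^{-1+o(1)}$ of its classical location relative to $\mu_\by$. The main input is that on $\cG_\delta$ the external potential $V_\by(x)=V(x)-\frac{2}{N}\sum_{j\notin I}\log|x-y_j|$ is uniformly strongly convex on $[y_L,y_{L+K+1}]$: the boundary contribution to $V''_\by$ is $\frac{2}{N}\sum_{j\notin I}(x-y_j)^{-2}$, bounded above and below when the $y_j$ respect the classical spacing guaranteed by $\cG_\delta$. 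This allows us to run the loop-equation analysis of Section~\ref{beta} for $\mu_\by$ combined with the Bakry--\'Emery convexity of $\cH_\by$, yielding the optimal localization of each $x_i$. The same reasoning applies to $\sigma_\bt$.

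Second, following Section~\ref{sec:DBM}, I would introduce the local relaxation measure $\tilde\mu_\by\propto e^{-\beta N\tilde\cH_\by}$ with $\tilde\cH_\by=\cH_\by+\frac{1}{2\tau}\sum_{i\in I}(x_i-\gamma_{i,\by})^2$ and $\tau$ slightly smaller than $K/N$, and similarly $\tilde\sigma_\bt$. By Theorems~\ref{thm2}--\ref{thm3} applied to this $K$-particle flow, gap observables under $\tilde\mu_\by$ and $\mu_\by$ agree up to an error of order $N^\e\sqrt{K\,Q_{\mathrm{loc}}/\tau}$ where $Q_{\mathrm{loc}}$, the local analogue of the a-priori bound \eqref{assum3}, is of size $K\delta^2$ by Step~1. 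The $|J|=K$ in the denominator of \eqref{diff} is crucial, as it renders this error $o(1)$ for the chosen parameters. The same transfer works on the Gaussian side between $\tilde\sigma_\bt$ and $\sigma_\bt$.

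Third, I would compare $\tilde\mu_\by$ directly with $\tilde\sigma_\bt$ by interpolating the external potential: set $V_s=(1-s)V_\by+sV_\bt$ (after the rescaling that aligns the supports), consider the corresponding measures $\tilde\mu^{(s)}$, and differentiate $\E^{\tilde\mu^{(s)}}\frac{1}{K}\sum_{i\in I}O(N(x_i-x_{i+1}))$ in $s$. The derivative equals a covariance of the gap observable against the linear statistic $\frac{1}{2}\sum_i(V_\by-V_\bt)(x_i)$, which via the Helffer--Sj\"ostrand representation \eqref{HS} and the enhanced log-concavity from the pinning reduces to controlling covariances of the form $\langle O(N(x_i-x_{i+1}))\,;\,x_j\rangle_{\tilde\mu^{(s)}}$. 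The main obstacle is precisely this step: the perturbation $V_\by-V_\bt$ is only of size $\delta=N^{-1+\varphi}$ pointwise, but it is summed over $K=N^k$ particles, so the naive bound is of order $K\delta\gg 1$. The saving must come from the fact that gap-particle covariances decay much faster than position-position ones, as made explicit in \eqref{gappart} of Section~\ref{sec:sg}. Establishing even a weak quantitative version of such decay for the convexified local measures $\tilde\mu^{(s)}$ is the core technical point, and is exactly where the numerology $\varphi\le 1/38$ and $k=\tfrac{39}{2}\varphi$ arises, balancing the boundary rigidity scale, the window size $K$, and the strength of the pinning $1/\tau$.
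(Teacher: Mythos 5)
There are two genuine problems. First, your Step~1 proposes to establish rigidity for the internal points under $\mu_\by$ ``by running the loop-equation analysis of Section~\ref{beta} for $\mu_\by$''. This fails: the Johansson--Shcherbina loop-equation argument behind Theorem~\ref{thm:accuracy} needs the external potential to be \emph{analytic} so that the term $b_N$ in \eqref{eqn:firstLoop} can be removed by a contour integral, whereas $V_\by(x) = V(x) - \frac{2}{N}\sum_{j\notin I}\log|x-y_j|$ has logarithmic singularities at the $y_j$ and is not analytic; the paper states this explicitly in the discussion after Theorem~\ref{thm:omrig}. Second, your Step~3 defers the essential bound---a quantitative gap--particle covariance decay of the type \eqref{gappart} for the pinned interpolating measures $\tilde\mu^{(s)}$---to ``the core technical point'' without providing it. That decay is precisely the content of the Helffer--Sj\"ostrand/De~Giorgi--Nash--Moser analysis of Section~\ref{sec:sg}, which is developed for the much harder single-gap Theorems~\ref{thm:sg} and \ref{thm:beta}, not for the averaged-gap comparison in Theorem~\ref{thm:mi}. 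As written, the argument therefore does not close.

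The paper's own proof is considerably more elementary and avoids both issues: it needs no rigidity estimate for $\mu_\by$ and no covariance decay. It compares $\mu_\by$ and $\sigma_\bt$ directly through the Dirichlet-form inequality of Theorem~\ref{thm3} applied in the $K$-particle finite volume: with $\tau_\sigma^{-1}$ the convexity lower bound on $\nabla^2\cH_\bt$, the averaged gap difference is bounded as in \eqref{relax} by $\bigl(\tau_\sigma N^\e D(\mu_\by\mid\sigma_\bt)/K\bigr)^{1/2}$ plus exponentially small errors. The Dirichlet form $D(\mu_\by\mid\sigma_\bt)$ is then computed explicitly, $D(\mu_\by\mid\sigma_\bt)\lesssim N\int\sum_j Z_j^2\,\rd\mu_\by$ with $Z_j$ as in \eqref{Zdef}; using the equilibrium relation \eqref{equilibrium}, each $Z_j$ splits into a Riemann-sum error $A_j$ (\eqref{Adef}) and a density-mismatch term $B_j$ (\eqref{fiveom}), both made small by $\by\in\cG_\delta$ for indices away from the edges of $I$. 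The singular near-edge terms $T^k_j$ with $|j-k|\le N^\e$---which your covariance-decay route would also have to confront---are neutralized not by level repulsion but by the entropy inequality \eqref{Diff1}, i.e.\ by replacing the few offending boundary $y_k$'s with the classical locations $\theta_k$ at a total-variation cost controlled by a relative entropy. Your Step~2 (adding the quadratic pinning $W$ on both sides) does match the paper's final ingredient, but the surrounding structure in Steps~1 and 3 is not the one that actually carries the proof, and the numerology $\varphi\le 1/38$, $k=\frac{39}{2}\varphi$ comes from balancing $D(\mu_\by\mid\sigma_\bt)$ against $K/\tau_\sigma$ in \eqref{relax}, not from a correlation-decay exponent.
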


{F}rom the rigidity estimate, Theorem~\ref{thm:accuracy}, it
follows that $\cG_\delta$ has an overwhelming probability, so
the expectation $\E^{\mu_\by}$ can be changed to $\E^\mu$ 
and similarly for the reference measure. Once \eqref{381}
is proven for all observables of the form \eqref{cG}, we get
that the locally averaged gap statistics for $\mu$ coincide
with those of the reference Gaussian case, hence they are universal. 
Since averaged gap statistics identifies locally averaged correlation
functions, we obtain Theorem~\ref{bulkbeta}.

\medskip

It remains to prove Theorem \ref{thm:mi}.
The basic idea  is to use the 
 Dirichlet  form
 inequality \eqref{diff}.
Although  \eqref{diff}  was stated for an infinite volume measure, it holds for any measure 
with  repulsive logarithmic interactions in a finite volume 
and with  the parameter $\tau^{-1}$ being 
the lower bound on the Hessian of the Hamiltonian.
In our setting, 
we denote by $\tau_\sigma^{-1}$  the  lower bound for $\nabla^2\cH_\bt$,
 and the  
Dirichlet form inequality becomes 
\be\label{relax}
   \Bigg| \big [\E^{ \mu_{\by}} -\E^{ \sigma_{\bt}}  \big  ]
\frac 1 K \sum_{i \in I} O\Big( N(x_i-x_{i+1}) \Big)\Bigg|
\le C \Big(  \frac { \tau_\sigma  N^{\e}}{ K}
 D \big  (\mu_{\by} | \sigma_{\bt}   \big  ) \,    \Big)^{1/2}
+ Ce^{-cN^{\e}} \sqrt{S (\mu_{\by} | \sigma_{\bt}   \big  )},
\ee
where 
\be
  D(\mu_\by\mid \sigma_\bt) : =
 \frac{1}{2 N} \int \Big|\nabla \sqrt{ \frac{\rd\mu_\by}{\rd\sigma_\bt}}\Big|^2\rd \sigma_\bt.
\ee
Thus our task is to prove  that 
\be\label{d1}
 \tau_\sigma  N^{\e} \frac {   D(\mu_\by\mid \sigma_\bt)  }{K}  \to 0 .
\ee

By definition,
\[  \frac{\tau_\sigma}{K}  D \big(\mu_\by \mid \sigma_\bt)
 \le
\frac{\tau_\sigma N}{ K} \int  \sum_{ L+1\le  j \le L+ K }  Z_j^2 \rd  \mu_{\by},
\]
where $Z_j$ is defined as
\begin{align}
Z_j := &
 \frac{\beta}{2}V'(x_j) -
  \frac \beta N \sum_{    k < L\atop k >L+K }
   \frac 1 {x_j- y_{k}}
 -  \frac{\beta}{2}W'(x_j)  +
 \frac \beta N \sum_{  k <  L\atop k>L+K}   \frac 1 {x_j-  \th_{k}}.
 \label{Zdef}
\end{align}
Using  the equilibrium relation \eqref{equilibrium}
between the potentials $V$, $W$ and the densities $\varrho_V$, $\varrho_W$,
 we have 
\begin{align}
Z_j = &
 \beta\int_\bR  \frac{\varrho_V(y)}{x_j-y}\rd y -
  \frac \beta N \sum_{    k < L\atop k >L+K }
   \frac 1 {x_j- y_{k}}
 - \beta\int_\bR  \frac{\varrho_W(y)}{x_j-y}   \rd y +
 \frac \beta N \sum_{  k <  L\atop k>L+K}   \frac 1 {x_j-  \th_{k}}.
\nonumber
\end{align}
Hence  $Z_j$ is the sum of the error terms,
\begin{align}
A_j  : &= \int_{ y\not \in [y_{L}, y_{L+K+1} ] }    \frac {\varrho_V(y) } {x_j- y }  \rd y
- \frac{1}{N} \sum_{    k < L\atop k >L+K }
   \frac 1 {x_j- y_{k}} ,
\label{Adef}\\
B_j & : = \int_{y_{L}}^{ y_{L+K+1} }    \frac {\varrho_V(y) - \varrho_{W}(y)  } {x_j- y } \rd y,
\label{fiveom}
\end{align}
and there is a term  similar to $A_j$ with $y_j$ replaced by $\theta_j$ and $\varrho_V$
 replaced by $\varrho_W$.

 With our convention,  the total numbers of particles 
in the interval $[y_{L+K+1}, {y_{L}}]$ are equal and thus 
$$
\int_{y_{L}}^{ y_{L+K+1} }  \varrho_V(y)   \rd y = \int_{y_{L}}^{ y_{L+K+1} }  \varrho_{W}(y)  \rd y.
$$
Since the  densities  $\rho_V$ and $\rho_W$ are $C^1$ functions away from the endpoints  $A$ and $B$
and $  y_{L+K+1}- {y_{L}}$ is small, $ |\rho_V-\rho_W|$ is small in the interval $[y_{L+K+1}, {y_{L}}]$ and
thus   $B_j$ is small.
For estimating $A_j$, we can replace the integral 
$$
\int_{ -\infty}^{y_{L}}     \frac {\varrho_V(y) } {x_j- y }  \rd y \quad \mbox{by}\qquad 
\frac 1 N  \sum_{  k< L   }  \frac 1 {x_j- \gamma_{k}}
$$ with negligible errors,
 at least for $j$'s away from the edges, 
$j\in \llbracket L+N^\e, L+K-N^\e\rrbracket$.
 Thus
\be\label{b7}
|A_j |  \le   \frac C N  \Big | \sum_ { k <  L\atop k>L+K} T^k_j
  \Big|,  \qquad T^k_j :=  \frac 1 {x_j- y_{k}} -  \frac 1 {x_j- \gamma_{k}} ,
\ee
and $T^k_j$ can be estimated by the assumption
  $|y_k -\gamma_k |\le \delta$ from $\by\in \cG_\delta$. The same argument works if $j$ is 
close to the edge, but $k$ is away  from the edges, i.e. $k\le L-N^\e$ or $k\ge L+K+N^\e$.
The edge terms,  $ T^k_j$ for $|j-k|\le N^\e$,  are difficult to estimate 
due to the singularity in the denominator  and  the event that  many $y_k$'s
with $k<L$ 
may pile up near $y_L$. 
 To resolve this difficulty, 
 we show that   the averaged local statistics 
of the measure $\mu_\by$  
 are insensitive to the change of the 
boundary conditions for $\by$ near the edges.  This can be achieved  by 
the simple inequality 
\be\label{Diff1}
\Big|\frac 1 K \sum_{i \in I}   \int O\big( N(x_i-x_{i+1} )\big)   [ \rd \mu_{\by'} - \rd  \mu_{\by}]
\Big|
\le C   \int |\rd \mu_{\by'} - \rd  \mu_{\by} |  \le C  \sqrt {   S(\mu_{\by'}|   \mu_{\by} ) }
\ee
for any two boundary conditions
$\by$ and $\by'$.   Although we still have to estimate the entropy
that includes a logarithmic singularity, this can be done much more easily
since entropy is less sensitive to singularities than Dirichlet form.
Therefore, we can replace the boundary condition $y_k$ with $y_k'=\theta_k$ for 
$|j-k|\le N^\e$  and  then the most singular 
edge terms in \eqref{Zdef} cancel out.

We note that  we can 
perform this replacement only 
for a small number of index pairs $(j,k)$,
 since
estimating the gap distribution by 
the  total entropy, as noted in \eqref{entbound} in Section~\ref{sec:DBM},
is not as efficient as the estimate using  the Dirichlet form per particle. 
Thus we can afford to use this argument only for the edge terms, $|j-k|\le N^\e$.
For all other index pairs $(j,k)$ we still have
to estimate $T_j^k$ by exploiting that $\by$  is a good configuration,
i.e. $y_k-\gamma_k$ is small.

 Unfortunately,   even with the optimal 
accuracy $\delta\sim N^{-1 + \e'}$ in \eqref{goodset} as an input, 
the relation \eqref{d1} 
still cannot be satisfied  for any choice of $N^{c \e'}
 \le K \le N^{ 1 - c\e'}$. 
To understand this problem, we remark that while  the edge
terms become a smaller  percentage 
of the total  terms in \eqref{Diff1} as $K$ gets bigger, 
the relaxation time to equilibrium for $\sigma_\bt$,  determined by the convexity  of
$\cH_\bt''$,   increases at the same time. 
  At the end of our calculation, 
there is no good regime for the choice 
of $K$.  Fortunately, this can be resolved by  using the idea of
the local relaxation  measure as in \eqref{defW}, i.e., we add 
a quadratic term $\frac{1}{2 \tau } (x_j -\gamma_j)^2$ to the Hamiltonian of the measure
 $\mu_\by$ and $\frac{1}{2 \tau_\sigma } (x_j -\th_j)^2$
for the measure $\sigma_\bt$.  With these ideas, we can complete the proof of Theorem \ref{thm:mi}.

\section{Single gap universality}\label{sec:sg}

In this section we outline the proofs of Theorem~\ref{thm:sg} and \ref{thm:beta}
following closely \cite{EYsinglegap}.
Both proofs rely on the single gap universality for the locally conditioned
measure $\mu_\by$ introduced already in \eqref{eq:muydeold}.
This will be stated in  Theorem~\ref{thm:local},  whose
proof takes up most of this section. At the end, in Section~\ref{sec:loctoglob}
we complete the proofs of  Theorem~\ref{thm:sg} and \ref{thm:beta}.

\subsection{Statements on local equilibrium measures} 
\label{sec:loc}

\subsubsection{Definition}

We work in the bulk spectrum and we consider the local equilibrium
measure on $\cK:=2K+1$ points  which is the conditional measure
after fixing all other points. To define it precisely,
we fix two small positive numbers, $\al, \delta>0$ and choose two positive integer
 parameters $L, K$ such that
\be\label{K}
L\in \llbracket \alpha N,  (1-\alpha)N\rrbracket, \qquad
N^\delta\le K\le N^{1/4}.
\ee
All results will hold for any sufficiently small $\al, \delta$ 
and for any sufficiently large $N\ge N_0$, where  the threshold $N_0$ depends
on $\al, \delta$  and maybe on other parameters of the model.

Denote $ I = \nc I_{L, K}:= \llbracket L-K, L+K \rrbracket$ the set of  $\cK$ consecutive indices in the bulk.
As in Section~\ref{sec:loceq}, we will distinguish external and internal points
by renaming them as
\be\label{renamexy}
(\lambda_1, \lambda_2, \ldots,
\lambda_N):=(y_{1}, \ldots y_{L-K-1}, x_{L-K},  \ldots, x_{L+K}, y_{L+K+1},
  \ldots y_{N}) \in \Xi^{(N)},
\ee
the only  difference is that here the internal particles are labelled
symmetrically to $L$. This discrepancy is only notational, but we
prefer to follow the notations of the original papers.
In short we will write
$$
\bx=( x_{L-K},  \ldots x_{L+K} )\in \Xi^{(\cK)} , \qquad \mbox{and}\qquad
 \by=
 (y_{1}, \ldots y_{L-K-1}, y_{L+K+1},
  \ldots y_{N}) \in\Xi^{(N-\cK)}.
$$
As in \eqref{eq:muydeold} we again
 define  the
local equilibrium measure
 on $\bx$ with boundary condition  $\by$ by
\begin{equation}\label{eq:muyde}
 \quad
\mu_{\by} (\rd\bx) : = \mu_\by(\bx) \rd \bx, \qquad
\mu_\by(\bx):=  \mu (\by, \bx) \left [ \int \mu (\by, \bx) \rd \bx \right ]^{-1},
\end{equation}
where $\mu=\mu(\by, \bx)$ is the (global) equilibrium measure \eqref{01}.
For a fixed $\by$, this measure can  also be written as a Gibbs measure,
\be\label{muyext}
   \mu_\by = \mu_{\by, \beta,V}= Z_\by^{-1} e^{-N\beta \cH_\by},
\ee
with Hamiltonian
\be\label{Vyext}
   \cH_\by (\bx): = \sum_{i\in I} \frac{1}{2}  V_\by (x_i) -\frac{1}{N}
   \sum_{i,j\in I\atop i<j} \log |x_j-x_i|, \qquad
   V_\by(x) :=  V(x) - \frac{2}{N}\sum_{k\not\in I} \log |x-y_k|.
\ee
Here
 $V_\by(x)$ can be viewed as the external potential of
a $\beta$-log-gas of the points $\{ x_i \; : \; i\in I\} $
in the configuration interval $J=J_\by:=(y_{L-K-1}, y_{L+K+1}).$

\subsubsection{Universality of the local gap statistics for $\mu_\by$}

Our main technical result, Theorem~\ref{thm:local} below,  asserts that
the local gap statistics 
is essentially independent
of $V$ and $\by$ as long as  the boundary conditions $\by$ are regular.
This property is expressed by defining the following set of
  ``good'' boundary conditions 
 with some given positive parameters $\xi, \nc \al$
(thet set $\cG$ in \eqref{goodset} played exactly the same role)
\begin{align}\label{yrig}  
\cR= \cR_{L,K}(\xi,\al):= & \{ \by:  
     |y_k-\gamma_k|\le N^{-1}K^\xi,
 \quad k \in \llbracket\alpha N, (1-\alpha)N\rrbracket \setminus I_{L, K} \} \\ \nonumber
  & \cap 
  \{ \by:   |y_k-\gamma_k|\le N^{-4/15}K^\xi, 
 \quad k \in \llbracket N^{3/5}K^\xi,  N- N^{3/5}K^\xi\rrbracket 
  \} \\
 \nonumber  &  \cap 
  \{ \by: |y_k-\gamma_k|\le 1, \;\; k \in\llbracket 1, N\rrbracket\setminus I_{L,K} \} .
\end{align}
This definition is taylored to 
the rigidity bounds for the $\beta$-ensemble, see
Theorem~\ref{thm:accuracy}. 
 Note that $\cR$ has a 
key parameter, the exponent $\xi$, which will be chosen as an arbitrary
small positive number in the applications. We will not follow its dependence
precisely and we will often neglect it from the notation, i.e. we will talk
about ``good'' boundary conditions $\by\in \cR$.

Good boundary conditions give rise to a regular potential  $V_\by$. More precisely, if $\by\in\cR$,
then
\begin{align}\label{Jlength}
    |J_\by| & =   \frac{\cK}{N \varrho(\bar y) } + O\Big(\frac{K^\xi}{N}\Big), 
\\
\label{Vby1}
   V_\by'(x) & = \varrho(\bar y) \log \frac{d_+(x)}{d_-(x)}
   + O\Big(\frac{K^\xi}{N d(x)}\Big),   \qquad x\in J_\by,
\\
\label{Vbysec}
   V_\by''(x) & \ge \inf V'' +\frac{c}{ d(x)},   \qquad x\in J_\by.
\end{align}
Here 
$$
   \bar y: = \frac{1}{2}( y_{L-K-1}+y_{L+K+1})
$$
denotes the midpoint of   the configuration interval,
$d(x): = \min\{ |x-y_{L-K-1}|, |x- y_{L+K+1}|\}$ is the distance
of $x$ to the boundary of the configuration interval $J=(y_{L-K-1}, y_{L+K+1})$
and $d_-(x)$ and $d_+(x)$ are regularized versions of the distances
of $x$ to the closest and to the farthest endpoints
of  $J$, respectively. The key point is that the leading term of $V_\by'(x)$ 
depends on the boundary conditions only through the density in the center, $\varrho(\bar y)$.
We also introduce
\be\label{aldef}
\alpha_j: = \bar y + \frac{j-L}{\cK+1}|J|, \qquad j\in I_{L,K},
\ee
 to denote the $\cK$ equidistant points
within the interval $J$.

\medskip

\begin{theorem} [Gap universality for local measures]  \label{thm:local} 
Fix $L, \wt L$ and $\cK= 2 K+ 1$ satisfying \eqref{K}
with an exponent $\delta>0$. Consider two
boundary conditions $\by, \wt\by$ such that the configuration intervals coincide,
\be\label{J=J}
   J = (y_{L-K-1}, y_{L+K+1}) = (\wt y_{\wt L-K-1}, \wt y_{\wt L+ K+1}).
\ee
We consider the
measures $\mu = \mu_{ \by, \beta, V}$ and $\wt\mu = \mu_{ \wt\by, \beta, \wt V}$
defined as in \eqref{muyext},
with possibly two different external potentials $V$ and $\wt V$.
 Let $\xi>0$ be a small constant. 
 Assume that  $|J|$ satisfies 
\be\label{Jlen}
  |J|  =   \frac{\cK}{N \varrho(\bar y) } + O\Big(\frac{K^\xi}{N}\Big).
\ee
Suppose that   $\by, \wt\by\in \cR$  
and that 
\be\label{Ex}
  \max_{j\in I_{L,K}}  \Big| \E^{ \mu_\by} x_j -  \alpha_j\Big| +  
 \max_{j\in I_{\wt L,K}} 
\Big| \E^{\wt \mu_{\wt\by}} x_j -  \alpha_j\Big|\le CN^{-1}K^\xi 
\ee
holds.
Let the integer number $p$ satisfy
$|p| \le K-K^{1-\xi^*}$ for some   small $\xi^*>0$.  Then 
there exists $\xi_0 > 0$, depending on
$\delta$,   such that if $\xi,\xi^* \le \xi_0$ 
 then  for   any $n$ fixed and  any bounded smooth observable 
$O:\bR^n\to \bR$ with compact support we have 
\begin{align}\label{univ}
 \Bigg|  \E^{\mu_\by} 
  O\big(  N(x_{L+p}- x_{L+p+1}),  \ldots &  N(x_{L + p}-x_{L +p+n} ) \big) \\ \non
& -   \E^{ \wt\mu_{\tilde \by } } 
  O\big(  N(x_{\wt L + p}- x_{\wt L + p+1}), \ldots  N(x_{\wt L + p}-x_{\wt L +p+n} ) \big)
 \Bigg|\le C K^{-\e} 
\end{align}
for some $\e > 0$ depending on $\delta,\al$ 
and  for some $C$ depending
on $O$. This holds for any $N\ge N_0$ sufficiently large,
where $N_0$ depends on the parameters $\xi,\xi^*, \al$,  and  $C$ in \eqref{Ex}.
\end{theorem}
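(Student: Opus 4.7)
The plan is to compare $\mu_\by$ and $\wt\mu_{\wt\by}$ via a one-parameter interpolation. Since the two configuration intervals coincide by \eqref{J=J} and both measures are supported on $\cK=2K+1$ points in $J$, after relabeling the internal variables by a common index set one can consider the linear interpolation of Hamiltonians $\cH_s := s\cH_\by + (1-s)\wt\cH_{\wt\by}$ for $s\in[0,1]$ with associated Gibbs measure $\mu_s \propto \me^{-\beta N \cH_s}$. Differentiating in $s$ gives
\begin{equation*}
\frac{d}{ds}\E^{\mu_s} O(\bx) \;=\; -\beta N \, \big\langle O(\bx)\, ;\, \cH_\by(\bx) - \wt\cH_{\wt\by}(\bx) \big\rangle_{\mu_s}.
\end{equation*}
By \eqref{Vby1} the difference $\cH_\by - \wt\cH_{\wt\by}$ is a sum of one-body terms $F_j(x_j)$ whose derivatives are controlled by the boundary data, so the proof reduces to showing that each covariance $\langle O ; F_j(x_j)\rangle_{\mu_s}$ is $O(K^{-\e-1})$ uniformly in $j\in I$ and $s\in[0,1]$.

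These covariances will be analyzed via the Helffer--Sj\"ostrand representation
\begin{equation*}
\langle O ; F_j(x_j)\rangle_{\mu_s} \;=\; \int_0^\infty \E^{\mu_s} \big\langle \bh_t(\bx),\, \nabla O(\bx)\big\rangle \, \rd t, \qquad \partial_t \bh_t = -(\cL_s + \cH_s'')\bh_t, \quad \bh_0 = \nabla F_j,
\end{equation*}
where $\cL_s$ is the generator symmetric w.r.t.\ $\mu_s$ and $\cH_s'' = \mathrm{diag}(V_\by'') + A$ with $A_{k\ell} = N^{-1}(x_k - x_\ell)^{-2}$ for $k\ne\ell$. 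Because $O$ depends only on the $n$ gaps $N(x_{L+p} - x_{L+p+r})$, $r=1,\ldots,n$, the gradient $\nabla O$ has only $O(n)$ nonzero components concentrated near index $L+p$ with opposite signs on neighboring sites, so $\langle \bh_t, \nabla O\rangle$ reduces to a finite combination of discrete differences $\bh_t(L+p+r) - \bh_t(L+p+r+1)$. The required smallness therefore follows if $\bh_t$ is H\"older continuous as a function of its index at scale $K$.

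To establish this continuity one interprets the equation $\partial_t \bh_t = -(\cL_s + \cH_s'')\bh_t$, after conditioning on a realization of the process $\bx(\cdot)$ generated by $\cL_s$, as a discrete parabolic equation $\partial_t h_t(k) = \sum_\ell B_{k\ell}(t)[h_t(\ell)-h_t(k)] - V_\by''(x_k(t))h_t(k)$ with random time-dependent coefficients $B_{k\ell}(t) = N^{-1}(x_k(t)-x_\ell(t))^{-2}$. A preliminary step is to upgrade the first-moment bound \eqref{Ex} to full rigidity $|x_j - \alpha_j|\prec N^{-1}K^\xi$ for $\mu_\by$ and $\mu_{\wt\by}$, adapting the loop-equation/local-relaxation arguments of Section \ref{beta} to the (non-analytic) conditional potential $V_\by$. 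Once this is done, typical realizations of $B_{k\ell}(t)$ behave on mesoscopic scales like a discretization of $\sqrt{-\Delta}$, and a discrete De Giorgi--Nash--Moser (DGNM) type H\"older regularity estimate should yield $|\bh_t(k) - \bh_t(k+1)| \prec (Kt)^{-\e'}\|\nabla F_j\|_\infty$ for intermediate times, the short-time regime being handled by the initial-data bound and the large-time regime by the spectral gap of $\cL_s + \cH_s''$. Integrating in $t$ and summing over $j\in I$ yields the desired gain $K^{-\e}$, with the constraint $|p|\le K - K^{1-\xi^*}$ ensuring the target gap stays at macroscopic index distance from the boundary so that only the bulk regime of the regularity estimate is invoked.

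The main obstacle is unquestionably the DGNM H\"older regularity step in this setting: the kernel $B_{k\ell}$ is long-range (non-local), the equation is discrete, the coefficients are random and time-dependent, and $B_{k\ell}$ fails to be uniformly elliptic near the diagonal due to occasional small gaps. A Nash-type moment argument must be developed with cutoffs tailored to the rigidity scale so that only the highly-probable event of well-separated neighboring particles is used, and all estimates must be uniform in $s\in[0,1]$ and robust against the perturbation of the potential. This non-uniform ellipticity is also the reason behind the restriction $\beta\ge 1$ in Theorem \ref{thm:beta}: level repulsion of the log-gas at inverse temperature $\beta$ controls the probability of small gaps and feeds directly into the ellipticity constants required by the regularity theorem.
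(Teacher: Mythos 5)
Your proposal follows essentially the same route as the paper: interpolating the Hamiltonians, reducing to a covariance via the Helffer--Sj\"ostrand (random walk) representation, upgrading \eqref{Ex} to local rigidity and level repulsion, and then establishing H\"older regularity of the resulting discrete parabolic equation through a De~Giorgi--Nash--Moser argument adapted to the singular long-range kernel $B_{jk}\sim(x_j-x_k)^{-2}$. The one small inaccuracy is in the handling of the time regimes: since $\nabla h_0$ is large near the boundary of $I$, the smallness of the initial data near index $L+p$ alone does not control $\sigma\lesssim K^{1/4}$---the paper also needs a finite-speed-of-propagation estimate (Lemma~\ref{lem-finite}) for short times, and for long times it uses the $L^1\to L^\infty$ heat-kernel decay (Proposition in Section~\ref{sec:rw}) rather than a spectral gap, because the bad-path probability from \eqref{PQ} decays only polynomially.
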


\subsubsection{Rigidity and level repulsion of $\mu_\by$}

In the following two theorems  we establish
rigidity and level repulsion estimates for the local log-gas $\mu_\by$ 
with  good boundary conditions $\by$. 
 While both rigidity and level repulsion  are basic questions for log gases
and are interesting in themselves,
our main motivation to prove these theorems is to use them in the proof of Theorem \ref{thm:local}. 

We remark that an almost optimal rigidity estimate in the bulk
was given in Theorem~\ref{thm:accuracy} and some level repulsion bound
was given in (4.11) of \cite{BEY}, these results hold with respect
to the global measure $\mu$. For the proof  of Theorem \ref{thm:local}
we need their local versions with respect to $\mu_\by$, at least for most $\by$. 
Naively, this looks as a simple conditioning argument, but there is a subtle point.
{F}rom the estimates w.r.t. $\mu$,
one can  conclude that $\mu_\by$ has a good rigidity bound for 
a set of boundary conditions with high probability  w.r.t. the global measure $\mu$.  
This will be sufficient for the proof of Theorem~\ref{thm:beta}, but not for Theorem~\ref{thm:sg}.
In  the proof  for the gap universality of  Wigner matrices
we will need  a rigidity estimate  for  $\mu_\by$ 
for a set of $\by$'s with  high probability with respect to 
 by the time evolved 
measure $f_t \mu$  which may be
asymptotically singular to $\mu$ for large $N$. 
The following result asserts that a rigidity estimate holds 
for $\mu_\by$ provided that $\by$ itself satisfies a
 rigidity bound and  an extra  condition, \eqref{Exone},  holds. 
This condition will have to be verified
with different methods in the Wigner case.

\begin{theorem}  [Rigidity estimate for local measures]  \label{thm:omrig} 
For   $\by \in \cR $  consider
the local equilibrium measure $\mu_\by$ defined in \eqref{muyext}
  and  assume that
\be\label{Exone}
   \Big| \E^{ \mu_\by} x_k -  \alpha_k\Big| 
\le CN^{-1}K^\xi, \quad  k\in I=I_{L,K},
\ee
is satisfied.
Then there are positive constants $C, c$, depending on $\xi$,
such that  for any $k \in  I $ and $u>0$, 
\be\label{rig}
   \P^{\mu_\by}\Big( N\big| x_k - \al_k\big| \ge u K^{ \xi }\Big)\le C e^{-c u^2 }.
\ee
\end{theorem}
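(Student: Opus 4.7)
The strategy is to combine the strong convexity of $\cH_\by$ with the anchoring on means provided by (5.9). Since $\by \in \cR$, the bound (5.4) gives $V_\by''(x) \ge \inf V'' + c/d(x)$, and the log-interaction Hessian $\frac{1}{N}\sum_{i<j}(v_i-v_j)^2/(x_i-x_j)^2$ is nonnegative, so $\nabla^2 \cH_\by \ge \varpi$ pointwise for some constant $\varpi > 0$. A Bakry--\'Emery argument then yields a logarithmic Sobolev inequality for $\mu_\by$ with constant $c\beta N \varpi$, hence Gaussian concentration $\P^{\mu_\by}(|x_k - \E^{\mu_\by} x_k| > t) \le 2\exp(-cN\varpi t^2)$ in the spirit of (4.8). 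Because the uniform Hessian lower bound obtainable from $V_\by''$ alone is only of order $N/K$ (since $d(x) \lesssim K/N$ in the bulk of $J_\by$), this initial estimate gives rigidity on the unsatisfactory scale $\sqrt{K}/N$; nevertheless, it serves as a starting point.

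To reach the target scale $K^\xi/N$, I plan to implement the block-average telescopic bootstrap of Section~\ref{beta}. Setting $I_k^M := \llbracket k-M, k+M\rrbracket$ and $x_k^{[M]} := |I_k^M|^{-1}\sum_{j\in I_k^M} x_j$, I decompose
\begin{equation*}
x_k - \alpha_k \;=\; \bigl(x_k^{[K_0]} - \alpha_k\bigr) + \sum_{j=0}^{n-1} \bigl(x_k^{[M_j]} - x_k^{[M_{j+1}]}\bigr)
\end{equation*}
along a geometric sequence $1 = M_0 < M_1 < \cdots < M_n = K_0 \le K$ with $M_{j+1}/M_j \le N^{\xi'}$ for a small $\xi'$. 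The crucial ingredient is the improved Hessian estimate of type (4.10): on the high-probability event where $|x_i - x_j| \lesssim c_M\, M/N$ for all $i,j \in I_k^M$, the Hessian of $\cH_\by$ restricted to the zero-sum hyperplane $\{\sum_{j\in I_k^M} v_j = 0\}$ is bounded below by $cN/(c_M^2 M)$. Brascamp--Lieb (or an LSI applied to the conditional measure) then gives Gaussian concentration of the difference of consecutive block averages on scale $c_{M_j}\sqrt{M_j}/N$. Summing the telescopic decomposition (geometric in the scale), and using (5.9) to anchor $x_k^{[K_0]}$ to $\alpha_k$, produces the stated bound (5.10).

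The main obstacle is the bootstrap structure: the Hessian lower bound at scale $M$ presupposes an apriori gap control at that same scale, which is only produced by the preceding iteration. I would organize an induction through the geometric sequence of scales, starting from the crude $\sqrt{K}/N$-rigidity of the LSI step and feeding each improved gap estimate into the Hessian bound for the next scale; after $O(1/\xi')$ iterations the rigidity reaches the target $K^\xi/N$ with only subpolynomial losses. A further technical complication arises when $k$ lies near the boundary of $I_{L,K}$, where the block $I_k^M$ involves indices close to the external $y$-particles; here the divergent convexity $c/d(x)$ of $V_\by''$ from (5.4), together with the rigidity of $\by$ itself guaranteed by $\by \in \cR$, is used to close the estimates uniformly in $k \in I$.
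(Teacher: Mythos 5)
Your proposal takes essentially the same route as the paper: Theorem~\ref{thm:omrig} is proved by re-running the multiscale block-average concentration argument that yields \eqref{concen1} in Theorem~\ref{thm:accuracy} (telescoping over geometric scales, with the improved Hessian bound \eqref{convex2} on zero-sum directions supplying Gaussian concentration of block-average differences), while the loop-equation accuracy control $|\E^{\mu}\lambda_k-\gamma_k|$ used in the global case is replaced by the assumed mean-anchoring \eqref{Exone}, exactly because $V_\by$ is no longer analytic and the loop equation is unavailable. One small imprecision: your initial pointwise bound is not a uniform constant $\varpi>0$ but rather of order $N/K$ coming from the $c/d(x)$ term in \eqref{Vbysec}, as you in fact note one sentence later; this is what makes the starting fluctuation scale $\sqrt{K}/N$ rather than $O(1)$.
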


The proof of this result is similar to that  of the concentration estimate
 \eqref{concen1} in Theorem~\ref{thm:accuracy}. To estimate $x_k -\E^{ \mu_\by} x_k$, we again use
 a multiscale argument of local averages for which stronger
convexity bounds are available. The analogue of the
 accuracy estimate controlling $\E^{ \mu_\by} x_k -\gamma_k$
in  Theorem~\ref{thm:accuracy} is replaced by
the assumption \eqref{Exone}. Notice that, unlike for the global measure $\mu$, a direct accuracy control
via the  loop equation  is not available for
$\mu_\by$ since the potential $V_\by$ is not analytic.

\medskip

Now we state the level repulsion  estimates.

\begin{theorem}  [Level repulsion estimate for local measures] 
 \label{lr2} 
For  $\by \in \cR$ we have the  following estimates: 

\noindent 
i)  [Weak form of level repulsion]  For any $s>0$ 
 we have 
\be\label{k521}
\P^{ \mu_\by} [  N( x_{i+1} - x_{i}) \le s   ] \le
  C \left ( N  s \right ) ^{\beta + 1},  \qquad i\in\llbracket L-K-1, L+K\rrbracket .
\ee
\noindent 
ii)   [Strong form of level repulsion] 
Suppose that there exist  positive constants   $C, c$ such that the following rigidity estimate holds
for any $k\in I$:  
\be\label{weakrig}
   \P^{\mu_\by}\Big( N|x_k-\al_k|\ge CK^{\xi^{ 2}}\Big) \le C \exp{(-K^{c})}.
\ee
Then there exists small a constant $\theta$, depending on $C, c$ in \eqref{weakrig},
such that for any $s\ge \exp (- K^{\theta})$.
we have 
\be\label{k52}
\P^{ \mu_\by} [ N( x_{i+1} - x_{i}) \le s   ] \le
  C \left ( K^{\xi } s  \log N \right ) ^{\beta + 1}, \qquad i\in\llbracket L-K-1, L+K\rrbracket .
\ee
\end{theorem}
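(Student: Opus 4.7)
The plan is to exploit the explicit factor $|x_i-x_{i+1}|^\beta$ that the logarithmic repulsion contributes to the density of $\mu_\by$. Writing $u = x_{i+1}-x_i$, the Gibbs density factorizes schematically as $\mu_\by(\bx) \propto u^\beta\, \tilde h_\by(\bx)$, where $\tilde h_\by$ contains only regular factors: the remaining pairwise distances $|x_j-x_k|$ with $(j,k)\ne(i,i+1)$ stay bounded away from $0$ as $u\to 0$, and the external potential $V_\by$ is smooth inside $J_\by$ away from the fixed external points $\{y_k\}$. Integrating $u^\beta$ over $[0, s/N]$ produces the expected $s^{\beta+1}$ scaling; the two parts of the theorem differ only in the quality of the prefactor one can extract for the ``regular'' part.

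For the weak form (i), I would use only a crude pointwise bound on $\tilde h_\by$. Let $p(u)$ denote the marginal density of $u$ under $\mu_\by$; then $p(u) = u^\beta g_\by(u)$ with $g_\by$ continuous on $[0,1/N]$. Since $\mu_\by$ is a probability measure and the mass is concentrated at typical spacings of order $1/N$, one has $\int_0^{c/N} u^\beta g_\by(u)\, du \ge c'$ for some $c, c' > 0$; combined with the near-constancy of $g_\by$ on $[0, 1/N]$ (which follows from the continuity of the other log-factors as functions of $u$), this forces $g_\by(u) \le C N^{\beta+1}$ for $u\in[0,1/N]$, and integrating $p(u)$ from $0$ to $s/N$ yields \eqref{k521}.

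For the strong form (ii), the rigidity \eqref{weakrig} allows a much finer normalization. With probability $1 - O(e^{-K^c})$ the bracketing particles satisfy $|x_{i\pm 1} - \alpha_{i\pm 1}| \le CK^{\xi^2}/N$, so conditionally on all variables other than $(x_i, x_{i+1})$, the pair lives in the short interval $J_0 = (x_{i-1}, x_{i+2})$ of length $\sim 1/N$, with a two-particle Gibbs density $\propto u^\beta \exp(-N\beta \wt V_{\mathrm{eff}}(x_i, x_{i+1}))$. The change of variables $(u,v) = (x_{i+1}-x_i, (x_i+x_{i+1})/2)$ reduces the event $\{u \le s/N\}$ to a one-dimensional integral $\int_0^{s/N} u^\beta\, du$, giving a bound $C s^{\beta+1}$ times a prefactor that accounts for the oscillation of $\wt V_{\mathrm{eff}}$: the $K^\xi$ factor absorbs the rigidity window of neighboring particles (which shifts $\wt V_{\mathrm{eff}}$ across $J_0$ by $K^\xi/N$), while the $\log N$ factor controls the logarithmic singularities of $V_\by$ at the external points $y_k$ lying close to the endpoints of $J_0$. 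The restriction $s \ge e^{-K^\theta}$ ensures that the $e^{-K^c}$ rare-event contribution can be safely absorbed into the main bound $(K^\xi s \log N)^{\beta+1}$.

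The main obstacle, as I see it, will be producing a genuinely pointwise upper bound on the conditional density of $(x_i, x_{i+1})$ across the whole interval $J_0$, uniformly over good boundary conditions $\by \in \cR$, and in particular across the vast scale ratio between $s/N \sim e^{-K^\theta}/N$ and $|J_0|\sim 1/N$. A Hessian-based concentration argument in the style of \eqref{convex2}, applied to the two-particle conditional measure, combined with a separate analysis of $V_\by$ near its logarithmic singularities (where the $\log N$ in \eqref{k52} originates), seems the natural tool. Balancing the bookkeeping so that the ``bad-event'' contribution (where either rigidity fails or $\wt V_{\mathrm{eff}}$ is uncontrolled) stays strictly smaller than the target bound for every admissible $s$ is the delicate step; the exponent $\theta$ in the restriction $s\ge e^{-K^\theta}$ is precisely where this tradeoff gets quantified.
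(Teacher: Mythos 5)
Your proposal is genuinely different from the paper's proof, and the difference matters: the paper's argument is specifically engineered to avoid the pointwise control of marginal or conditional densities that your plan relies on. In part (i), your key step is the claimed near-constancy of $g_\by(u)$ on $[0,1/N]$, where $p(u)=u^\beta g_\by(u)$ is the gap marginal. You derive no such bound; it is a strong uniform statement about a $(2K-1)$-dimensional integral whose integrand still contains the factors $|x_{i\pm 1}-x_i|^\beta$, $|x_{i+2}-x_{i+1}|^\beta$, etc., all living at the same scale $1/N$ as $u$ itself. As $u$ varies over $[0,1/N]$ those nearby interactions change by order-one multiplicative amounts, and it is not clear how to argue the integrated quantity $g_\by$ is flat, or even bounded, to the required precision. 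The same difficulty resurfaces more acutely in your part (ii): the two-particle conditional density $\propto u^\beta e^{-N\beta \wt V_{\mathrm{eff}}}$ has logarithmic singularities at the random endpoints $x_{i-1}, x_{i+2}$, and both the numerator (integral over $\{u\le s/N\}$) and the normalizing denominator inherit them. A Hessian-based concentration bound in the spirit of \eqref{convex2} gives Gaussian tails, not the pointwise upper \emph{and} lower density control you need.

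The paper sidesteps density estimates entirely. It first proves a weak, linear-in-$s$ bound $\P^{\mu_\by}(x_{L-K}-y_{L-K-1}\le s/N)\le CKs\log N$ by a \emph{dilation} argument: compare the partition function $Z_\varphi$ over the shrunk configuration interval to $Z_0$ by rescaling the integration variables by $(1-\varphi)$, Taylor-expand the external potential $V_\by$ (here the $\log N$ enters, from the harmonic-type sum $\sum_k 1/|x-y_k|$ when the argument is dilated), and use the trivial monotonicity $(1-\varphi)x-y_k\ge (1-\varphi)(x-y_k)$ near the boundary. It then upgrades the exponent from $1$ to $\beta+1$ by passing to the auxiliary measure $\mu^{(0)}:=Z^{(0)}X^{-\beta}\mu_\by$ with $X=x_{L-K}-y_{L-K-1}$, writing
$\P^{\mu_\by}(X\le s/N)=\E^{\mu^{(0)}}[\mathbf 1(X\le s/N)X^\beta]/\E^{\mu^{(0)}}[X^\beta]$,
bounding the numerator via the weak bound applied to $\mu^{(0)}$, and lower-bounding the denominator by again invoking the weak bound with a fixed $s$ to show $X$ is typically $\gtrsim (NK\log N)^{-1}$. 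Finally, the prefactor $K$ is improved to $K^\xi$ in \eqref{k52} by \emph{re-conditioning}: freeze all but $K^\xi$ internal particles near the gap of interest, use the rigidity hypothesis \eqref{weakrig} to show the new boundary condition stays in $\cR$ with overwhelming probability, and apply the already established bound to the smaller conditional measure; the constraint $s\ge e^{-K^\theta}$ is there precisely to absorb the rigidity-failure tail, which is the one piece of your intuition that does match the paper. Note also that your attribution of $K^\xi$ and $\log N$ to properties of $\wt V_{\mathrm{eff}}$ does not agree with the paper: $\log N$ arises in the dilation step and $K^\xi$ comes from the re-conditioning, not from boundary singularities of a conditional potential.
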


 The level repulsion bounds  will mostly be used in the following estimate
which trivially follows from  Theorem~\ref{lr2}

\begin{corollary}\label{cor:mom} Let $\by\in \cR$, then for any  $p< \beta + 1$  we have
\be\label{expinv}
   \E^{\mu_\by} \frac{1}{\big[ N|x_i-x_{i+1}|\big]^p} \le C_p K^{C_3\xi}, \qquad   i \in \llbracket L-K-1, L+K\rrbracket.
\ee
\qed
\end{corollary}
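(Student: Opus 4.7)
The strategy is to combine the layer-cake representation of inverse moments with the two level repulsion estimates of Theorem~\ref{lr2}. Setting $Y \deq N(x_{i+1}-x_i)$ and using that $Y>0$ almost surely, the identity
\begin{equation*}
\E^{\mu_\by} Y^{-p} \;=\; p \int_0^\infty s^{-p-1} \, \P^{\mu_\by}(Y < s) \, \rd s
\end{equation*}
holds, and I would split the range of integration at two natural thresholds $s_\ast \deq \exp(-K^{\theta})$ (below which only the weak form \eqref{k521} is available) and $s_{\ast\ast} \deq (K^{\xi}\log N)^{-1}$ (beyond which the strong bound \eqref{k52} exceeds $1$ and is therefore useless).

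On the tiny scale $(0, s_\ast)$ the weak estimate \eqref{k521} contributes at most $\tfrac{p}{\beta+1-p}\,N^{\beta+1} s_\ast^{\beta+1-p} \le N^{\beta+1}\,e^{-cK^\theta(\beta+1-p)}$, which, since $K\ge N^\delta$, is superexponentially small in $N$ and hence negligible. On the intermediate scale $[s_\ast, s_{\ast\ast}]$ the strong estimate \eqref{k52} gives
\begin{equation*}
p\int_{s_\ast}^{s_{\ast\ast}} (K^\xi s \log N)^{\beta+1} s^{-p-1}\, \rd s \;\le\; \frac{p}{\beta+1-p}(K^\xi \log N)^p.
\end{equation*}
On the tail $s \ge s_{\ast\ast}$ I would discard the small-gap information and use only the trivial bound $\P^{\mu_\by}(Y<s)\le 1$, which produces $p \int_{s_{\ast\ast}}^\infty s^{-p-1} \rd s = s_{\ast\ast}^{-p} = (K^\xi \log N)^p$. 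Summing the three contributions yields $\E^{\mu_\by} Y^{-p} \le C_p (K^\xi \log N)^p$, and the logarithmic factor is absorbed by noting that $K \ge N^\delta$ forces $\log N \le C_\delta \log K$, so $(\log N)^p \le K^{\xi}$ for sufficiently large $N$; the bound then collapses to $C_p K^{C_3\xi}$ with an exponent $C_3$ depending on $p$ and $\beta$.

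The one step that is not purely mechanical is verifying that the strong repulsion estimate \eqref{k52} is actually available, i.e.\ that its hypothesis \eqref{weakrig} holds for the boundary conditions in question. This rigidity is exactly what Theorem~\ref{thm:omrig} delivers once the a~priori control \eqref{Exone} on the conditional means is in force; consequently the corollary must be read as assuming (explicitly or contextually) that \eqref{Exone} has been established for the $\by\in\cR$ at hand, which is automatic in all applications of the corollary in the sequel. Given that input, the dyadic splitting above is straightforward and the main obstacle is really bookkeeping: tracking the factors of $K^\xi$ and $\log N$ carefully so that all of them may be absorbed into the final $K^{C_3\xi}$ prefactor.
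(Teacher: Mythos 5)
Your proof is correct and is precisely the layer-cake computation the paper has in mind when it states this corollary with no written proof: the identity $\E^{\mu_\by} Y^{-p} = p\int_0^\infty s^{-p-1}\,\P^{\mu_\by}(Y<s)\,\rd s$, the three-way splitting at $\exp(-K^\theta)$ and $(K^\xi\log N)^{-1}$, and the absorption of $\log N$ into a power of $K^\xi$ via $K\ge N^\delta$ is exactly the intended bookkeeping. Your observation that the strong estimate \eqref{k52} is only available under the rigidity hypothesis \eqref{weakrig} is also well taken: the corollary as stated silently inherits that standing assumption (supplied by Theorem~\ref{thm:omrig} once \eqref{Exone} holds), and the weak bound \eqref{k521} alone would only yield the far weaker estimate $\E^{\mu_\by} Y^{-p}\le C_p N^p$, so the strong form is genuinely needed.
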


\subsubsection{Sketch of the proof of the level repulsion}

The proof of part (ii) of Theorem~\ref{lr2} goes in three steps. For simplicity, we consider
\eqref{k52} only for the first gap, i.e. $i= L-K-1$ and we also assume that $\bar y=0$
by a simple shift. 

\medskip

{\it Step 1.} In this  step we prove 
\be\label{weakerrep} 
  \P^{ \mu_{\by}} ( x_{L-K}-y_{L-K-1}\le s/N)\le CKs \log N,
\ee
which is essentially
\eqref{k52} but with factor $K$ instead of $K^\xi$
and with the exponent $\beta+1$ replaced with  one. The proof of \eqref{weakerrep}
 is dilation argument. For a nonnegative parameter
$\varphi$,  we define
\begin{align}
Z_\varphi :=  & \int\ldots\int_{-a+ a \varphi }^{a- a \varphi}   \rd \bx
 \prod_{i,j\in I\atop i < j} (x_i-x_j)^\beta
e^{- N\frac{\beta}{2} \sum_j V_\by (x_j)  } \nonumber \\
 & = (1-\varphi)^{ K+\beta K(K-1)/2} \int\ldots\int_{-a }^{a}   \rd \bw
\prod_{i < j} (w_i-w_j)^\beta e^{- N \frac{\beta}{2}\sum_j V_\by ((1-\varphi) w_j)},
\label{Zphi}
\end{align}
where we set 
\be\non
a:= -y_{L-K-1}, \qquad w_j:=(1-\varphi)^{-1}x_{L+j}, \qquad \rd \bx =  \prod_{|j|\le K} \rd x_{L+j}\qquad
\rd\bw =   \prod_{|j|\le K} \rd w_j.
\ee
Clearly $Z_{\varphi=0}$ is the normalization constant of the measure $\mu_\by$ and we have
\be\label{ZZ}
    \P^{ \mu_{\by}} ( x_{L-K}-(-a)\ge a\varphi )\ge \frac{Z_\varphi}{Z_0}.
\ee
The multiple integral on the r.h.s of \eqref{Zphi} is almost the same
as $Z_0$, except that the argument of $V_\by$ is rescaled by $1-\varphi$. 
This effect can be estimated from the explicit formula \eqref{Vyext} for $V_\by$.
The external potential $V$ in \eqref{Vyext} is unproblematic since it is smooth.
Due to $\by\in \cR$, the points $y_j$ are regularly spaced on  scales at least
$K^\xi/N$, thus the sum of the interaction terms $\log |x-y_k|$  
for $k$'s away from the edges of $I^c$, i.e. $k\le L-2K$ or $k\ge L+2K$, is a regular function of $x$ and the effect
of dilation  can be well approximated by Taylor expansion. For nearby $k$'s 
right below the lower edge, i.e. $L-2K\le k\le L-K$,
we use the trivial bound $(1-\varphi) x- y_k \ge (1-\varphi) (x - y_k)$. From
these estimates it follows that
\be\label{Z/Z}
\frac{Z_\varphi}{Z_0} \ge 1- CK^2\varphi\log N.
\ee
Since $a\sim K/N$, together with \eqref{ZZ} it implies \eqref{weakerrep}. 

\medskip

{\it Step 2.}
 Now we consider an auxiliary measure which 
are  slightly modified  version of the local equilibrium measures:   
\be\label{mu0}
 \mu^{(0)} :=
 Z^{(0)}   (x_{L-K} - y_{L-K-1})^{-\beta} \mu_\by; 
 \ee
where  $Z^{(0)}$  are chosen for normalization.  
In other words, we drop the term $(x_{L-K} - y_{L-K-1})^\beta$ from  
the measure $ \mu_\by$.
Setting $X:=  x_{L-K} - y_{L-K-1}$
for brevity, we have
\be\label{k4}
\P^{ \mu_\by} [ X \le s/N   ] =  
 \frac { \E^{ \mu^{(0)}}  [  1 ( X \le s/N  )  X^\beta ] }
{ \E^{ \mu^{(0)} }  [  X^\beta ]}.
\ee
The estimate \eqref{weakerrep} also holds for $ \mu^{(0)}$ and thus
$$
\E^{ \mu^{(0)} } [  {\bf 1}
 ( X \le s/N  )  X^\beta ] \le C  (s/N)^\beta K s  \log N 
$$
 and with the choice $s=cK^{-1}  (\log N)^{-1}$ in \eqref{weakerrep} we also have  
$$
\P^{ \mu^{(0)} }\left  (  X \ge    \frac{c }{N K \log N }  \right )  \ge 1/2 
$$
with some positive constant $c$.
This implies that  
$$
\E^{ \mu^{(0)} }  [  X^\beta ] \ge \frac 1 2  \left ( \frac c {N K \log N }\right )^\beta.
$$
Combining with \eqref{k4}, we have thus proved that 
\be\label{lrb}
\P^{ \mu_\by} [  X \le s/N   ] \le 
 C \left ( { Ks \log N}  \right ) ^{\beta + 1},
\ee
 i.e. we obtained \eqref{weakerrep} but with an exponent $\beta+1$ in the r.h.s.

\medskip

{\it Step 3.}  We now improve the constant $K$ to $K^\xi$ in the r.h.s of \eqref{weakerrep}.
The factor $K$ originated from the number of particles in $\mu_\by$. We can
further condition the measure $\mu_\by$ on the points
$$
    z_j:= x_j\; \qquad  j\ge L-K+K^\xi,
$$
and we let $\mu_{\by,\bz}$ denote the conditional measure
on the remaining $x$ variables $\{ x_j\; : \;  L-K \le j\le L-K+K^\xi\}$. From the rigidity estimate  \eqref{weakrig} 
 we have $(\by, \bz) \in \cR$ with a
very high probability w.r.t. $\mu_\by$.  
 We will now apply  \eqref{lrb}   to the measure $\mu_{\by,\bz}$ to
 obtain
\be\label{k51}
\P^{ \mu_{\by, \bz}} [ X \le s/N   ] \le
  C \left ( K^{\xi } s  \log N \right ) ^{\beta + 1}.
\ee
This holds for all $z$ 
with a high $\mu_\by$-probability.
 The subexponential lower bound on $s$, assumed in part ii) of Theorem~\ref{lr2}, 
 allows us to include the probability of
the complement of $\cR$ in the estimate, we thus have proved \eqref{k52}.

 The proofs of the weaker bound  \eqref{k521} 
 for any $s>0$
use similar arguments that have led to \eqref{weakerrep}, but 
without assuming $\by\in \cR$ which yields that one factor of $K$ has to be replaced with $N$
in \eqref{Z/Z}. The assumption that the boundary conditions are good needs to be dropped since in Step 3
of the above argument, \eqref{weakerrep} is also used after additional conditioning on $\bz$, distributed
according to $\mu_\by$,  and without  
\eqref{weakrig} there is no rigidity result available for $\mu_\by$.

\subsection{Proof of Theorem~\ref{thm:local}}

In this section, we start to compare gap distributions  of two local log-gases
 on the same configuration interval but with different external potential and 
boundary conditions.  For simplicity,
we consider only an observable of a single gap; a few consecutive gaps
can be handled similarly.
{F}rom now on, we use microscopic  coordinates, i.e. we replace $x_j$ with $x_j/N$,
  and we also relabel the indices so that the coordinates 
of  $x_j$ are  $j \in I=\{-K, \ldots, 0, 1, \ldots K\}$.
 This will have the advantage that $K$ remains
the only large parameter; $N$ disappears.

 The local equilibrium measures and their Hamiltonians 
will be denoted by the same symbols, $\mu_\by$ and $\cH_\by$, as before,
but with a slight abuse of notations we redefine them now  to the microscopic scaling, i.e. 
\be\label{Vz} 
   \cH_\by (\bx): = \sum_{i\in I} \frac{1}{2} V_{ \by} ( x_i) -
   \sum_{i,j\in I\atop i<j} \log |x_j-x_i|, \qquad
   V_\by (x) :=  N V(x/N) - 2\sum_{j\not\in I} \log |x-y_j|,
\ee
The other Hamiltonian $\wt H_{\wt \by}$ is defined in a similar way with $V$ in \eqref{Vz} replaced
 with another external potential  $\wt V$. 
We also rewrite \eqref{Ex}  in the microscopic coordinate as 
\be\label{Exm}
   | \E^{ \mu_\by} x_j -  \alpha_j| +  | \E^{\wt \mu_{\wt \by}} x_j -  \alpha_j |\le C  K^{\xi}, \quad 
\ee
where 
$\alpha_j: = \frac{j}{\cK +1} |J| $ 
is the rescaled version of the definition given in \eqref{aldef},
but we keep the same notation.  The concept of ``good'' set $\cR$ is also rescaled
accordingly.

Suppose that $\by,\wt\by\in \cR$ and define the interpolating  measures 
\be\label{omd}
\om_{\by, {\wt \by}}^r =    Z_r e^{-\beta r (\wt V_{\wt \by} (\bx) - V_\by (\bx) )}  \mu_\by,
 \qquad  r\in[0,1], 
\ee
so that $ \om_{\by, {\wt \by}}^1 =\wt\mu_{\wt \by}$ and $\om_{\by, {\wt \by}}^0= \mu_\by$
 ($Z_r$ is a normalization constant). 
This is again a local log-gas with Hamiltonian
\be\label{Hyy}
  \cH_{\by, \wt\by}^r (\bx)= \frac{1}{2}\sum_{i\in I} V_{\by, \wt\by}^{r}(x_i) 
  -\frac{1}{N}\sum_{i<j} \log |x_i-x_j|, \qquad V_{\by, \wt\by}^{r}(x) : =   (1-r)  V_\by(x)+ r \wt V_{\wt\by}(x).
\ee

For any fixed $r$,  the measure $\om_{\by, {\wt \by}}^r$ inherits all relevant properties
of $\mu_\by$. In particular the rigidity bound in the form
\be\label{rigi}
 \P^{\om}\big( \big| x_i- \alpha_i\big| \ge C K^{ C\xi }\big)\le C e^{- K^{\theta}}, 
 \quad i \in I,
\ee
the level repulsion bounds 
\eqref{k521}--\eqref{k52} and their consequence in \eqref{expinv} 
hold w.r.t. the measure $\om= \om_{\by, {\wt \by}}^r$ as well (in the new microscopic coordinates
there are no $N$ factors in the left hand sides of these inequalities).
The proofs  are basically parallel with the arguments for $\mu_\by$; the only
nontrivial step is to show that  \eqref{Exm} implies the analogous bound 
$$
  \big|\E^\om x_k -\al_k\big| \le CK^\xi
$$
w.r.t. $\om= \om_{\by, {\wt \by}}^r$ as well. Although $\om$ appears to be some easy combination of
$\mu_\by$ and $\wt\mu_{\wt\by}$, this conclusion is nontrivial. It requires
comparing $\om$ and $\mu_\by$ via the entropy inequality, which
 involves controlling the exponential moment of $|x_k-\al_k|$  w.r.t. $\mu_\by$.
At this point the Gaussian tail proven in \eqref{rig} is necessary.

The right hand side of \eqref{univ} with $n=1$,
in the rescaled coordinates and with $L=\wt L=0$, is estimated by
\be\label{rinteg}
  \Big| [\E^{\mu_\by} - \E^{\wt \mu_{\wt\by}}] O(x_{p}-x_{p+1} ) \Big|
  \le \int_0^1 \rd r  \frac{\rd}{\rd r}\E^{\om_{\by, {\wt \by}}^r } O(x_{p}-x_{p+1} ).
\ee
For any bounded smooth function $O$ with compact support 
\be\label{rcorr}
\frac{\rd}{\rd r} \E^{\om_{\by, {\wt \by}}^r }   O  (x_{p}-x_{p+1} ) =  
    \beta  \langle h_0;  O  (x_{p}-x_{p+1} )  \rangle_{\om_{\by, {\wt \by}}^r },
\ee
where
   \be\label{h0def}
h_0 = h_0(\bx)= \sum_{i \in I}   ( V_\by(x_i) - \wt V_{\wt \by}(x_i) )
\ee
and $\langle f ; g\rangle_\om : = \E^\om fg - (\E^\om f)(\E^\om g)$ denotes the covariance. 
Thus Theorem~\ref{thm:local} follows immediately from the 
following   estimate on the gap covariance function. \qed

\begin{theorem}\label{cor}   
 Consider two smooth potentials $V, \wt V$
and two good boundary conditions, $\by, \wt\by\in \cR$,   such that the configuration intervals coincide,
$J_\by = J_{\wt\by}$.  For any $r\in [0,1]$ let $\om = \om_{\by, {\wt \by}}^r$ be the interpolating measure
defined in \eqref{omd}.
 Assume that
 \eqref{Exm} holds
  for both boundary conditions $\by,\wt\by$. 
 Fix $\xi^*>0$. Then there exist  $\e> 0$ and $C>0$, 
depending on $\xi^*$, such that for
any sufficiently small $\xi$, 
for $ |p| \le K^{1-\xi^*}$  we have 
\be\label{eq:cor}
 | \langle h_0(\bx);  O(x_{p}-x_{p+1} ) 
 \rangle_\om      | \le   K^{C  \xi} K^{-\e}
\ee
 for any smooth function $O:\R \to \R$ with compact support provided that  
$K$  is large enough.
\end{theorem}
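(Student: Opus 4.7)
My plan is to apply the Helffer--Sj\"ostrand covariance representation \eqref{HS} to the measure $\om = \om_{\by,\wt\by}^r$ with $f = h_0$ from \eqref{h0def} and $g = O(x_p-x_{p+1})$. Since $\nabla h_0$ has components $\partial_i h_0 = V_\by'(x_i) - \wt V_{\wt\by}'(x_i)$, while $\nabla g$ is supported only on the two neighboring sites $p$ and $p+1$ with entries $\pm O'(x_p-x_{p+1})$, the formula collapses to
\begin{equation*}
\langle h_0;  O(x_p - x_{p+1})\rangle_\om \;=\; \int_0^\infty \E^\om \Big[\big((\bh_t)_p - (\bh_t)_{p+1}\big) \, O'(x_p-x_{p+1})\Big] \rd t,
\end{equation*}
where $\bh_t \in \R^\cK$ solves $\partial_t \bh_t = -(\cL_\om + M(t))\bh_t$ with initial data $\bh_0 = \nabla h_0$, and $M(t)$ is the Hessian of the Hamiltonian of $\om$ evaluated along the Markov trajectory $\bx(t)$ driven by $\cL_\om$. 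The entire problem thereby reduces to showing that the \emph{local increment} $(\bh_t)_p - (\bh_t)_{p+1}$ between neighboring components of $\bh_t$ is small, for typical times $t$ and typical realizations of $\bx(t)$.

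The matrix $M(t)$ has off-diagonal entries $-(x_i(t)-x_j(t))^{-2}$ and a diagonal bounded below thanks to the convexity bound \eqref{convex}. The rigidity estimate Theorem~\ref{thm:omrig}, which transfers from $\mu_\by$ and $\wt\mu_{\wt\by}$ to all interpolating measures $\om_{\by,\wt\by}^r$ via \eqref{Exm} and the entropy inequality, ensures that $|x_i(t) - \al_i| \le K^{C\xi}$ with overwhelming probability uniformly in $t$ over any polynomial window in $K$. Hence the coefficients $(x_i(t)-x_j(t))^{-2}$ are typically comparable to $|i-j|^{-2}$ up to mild factors, so that $-M(t)$ behaves like a bounded-coefficient discretization of the one-dimensional fractional Laplacian $\sqrt{-\Delta}$ on the index set $I$. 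I would then invoke a discrete De Giorgi--Nash--Moser H\"older regularity estimate for nonlocal parabolic equations, in the spirit of Caffarelli--Chan--Vasseur \cite{C}, to conclude that for $s$ in a suitable intermediate range,
\begin{equation*}
\big|(\bh_s)_p - (\bh_s)_{p+1}\big| \;\prec\; s^{-\e}\, \max_i \big|(\bh_0)_i\big|,
\end{equation*}
where $\max_i |(\bh_0)_i| \le K^{C\xi}$ follows from the explicit form \eqref{Vby1} of $V_\by'$ applied to both boundary conditions, after subtracting a $\bx$-independent constant from $h_0$ which does not affect the covariance.

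Finally, I would split the time integral into three regimes: short times $t \le t_0$, estimated via the pointwise a priori size of $\bh_t$ combined with Corollary~\ref{cor:mom} to control moments of $O'(x_p-x_{p+1})$; intermediate times where the H\"older decay produces the gain $K^{-\e}$; and long times $t \ge T$ where the Bakry--\'Emery exponential relaxation associated with \eqref{convex} renders the remaining contribution negligible. Tuning the thresholds $t_0, T$ as small powers of $K$ then yields \eqref{eq:cor}. The main obstacle is precisely the H\"older regularity step: establishing a quantitative, discrete, nonlocal De Giorgi--Nash--Moser estimate for a parabolic equation whose coefficients $(x_i(t)-x_j(t))^{-2}$ are singular near the diagonal, of long range, and random through the flow $\cL_\om$. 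One must propagate the rigidity bound uniformly along the flow over long time intervals, handle boundary effects near the endpoints of $I$ (which force the restriction $|p| \le K^{1-\xi^*}$), and adapt the classical local regularity machinery to the nonlocal hopping kernel while keeping a useful H\"older exponent. This adaptation is the technical heart of the single-gap universality result.
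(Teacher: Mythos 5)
Your proposal follows the same overall route as the paper: the Helffer--Sj\"ostrand/random-walk representation (the paper's Proposition~\ref{prop:repp} is exactly the Feynman--Kac unwinding of \eqref{HS} you invoke), the transfer of rigidity to the interpolating measure $\om^r_{\by,\wt\by}$, and the reduction of the correlation decay to a discrete, nonlocal De Giorgi--Nash--Moser H\"older bound for the gradient flow. That is the correct skeleton. However, two of the steps as you describe them would not go through, and they are precisely the places where the paper does its hardest work.

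First, the short-time regime. You propose to handle $t\le t_0$ ``via the pointwise a priori size of $\bh_t$ combined with Corollary~\ref{cor:mom} to control moments of $O'(x_p-x_{p+1})$.'' But $O'$ is a fixed bounded function, so there is no moment to control; the obstruction is elsewhere. By \eqref{gradh0}, $|\partial_j h_0|\le CK^\xi/d(x_j)$ is small in the bulk but of size $K^\xi$ near the endpoints of $I$; nothing about a pointwise a priori bound on $\bh_t$ prevents this boundary mass from being felt at position $p$ within time $t_0$. To make the short-time contribution negligible one must show the dynamics does not transport the boundary data to the bulk too fast --- this is the finite-speed-of-propagation estimate (Lemma~\ref{lem-finite}), combined with the restriction $|p|\le K^{1-\xi^*}$, and it is essential; without it your time splitting has no way to absorb the $\rd\sigma$-integral near $\sigma=0$.

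Second, the coefficient bounds. You write that the coefficients $(x_i(t)-x_j(t))^{-2}$ ``are typically comparable to $|i-j|^{-2}$ up to mild factors.'' Rigidity does give comparability \emph{from below}, and from above for distant indices, but for nearby indices --- especially $|i-j|=1$ --- rigidity gives nothing: gaps can (temporarily) close, $\cB_{i,i+1}(s)$ can blow up, and for $\beta=1$ even $\E^\om\cB_{i,i+1}$ is infinite by the level-repulsion estimate \eqref{expinv}. The only available upper bound is a weak one in space--time average along the trajectory (\eqref{PQ}), and only around a fixed reference index. Consequently one cannot invoke an off-the-shelf discrete Caffarelli--Chan--Vasseur theorem; the paper's Theorem~\ref{holderg} is a modified De Giorgi argument tailored to lower-bounded long-range coefficients whose short-range entries are controlled only in this averaged sense, and the probability of the bad event in \eqref{PQ} is only a small negative power of $K$, which in turn forces one to use the $L^1\!\to\!L^\infty$ decay of the heat kernel to make the time integral summable even against these bad events. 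You flag the singular-coefficient difficulty at the end, but the body of your argument treats the coefficients as if they were uniformly comparable, which they are not; this is a genuine gap rather than a matter of presentation.
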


Theorem~\ref{cor} is our key technical result. The main difficulty behind it
is due to the fact that 
the covariance function  of two points, $\langle x_i;  x_j  \rangle_{\om}$,  decays only logarithmically.
 In fact, for the GUE, Gustavsson 
proved that  (Theorem 1.3 in \cite{Gus}) 
\be\label{GUS}
\langle x_i; x_j \rangle_{GUE}  \sim 
     \log \frac {N}{  [ |i-j|+1]},
\ee
and a similar formula is expected for $\om$.    
Although $h_0(\bx)$ depends strongly only on points near the boundary 
and $x_p$ is away from the boundary, it is
still very difficult to prove 
Theorem \ref{cor} based on this slow logarithmic decay. However,
 the covariance function of the type 
\be
\langle g_1(x_i); g_2( x_j-x_{j+1}) \rangle_\om
\ee
decays much faster  in $|i-j|$.  Since  the second factor $g_2(x_j-x_{j+1})$ 
 depends only on the difference of two neighboring points, it is expected that the decay
is the (discrete) derivative in $j$ of the covariance \eqref{GUS}, i.e. it
is $|i-j|^{-1}$.  The actual result \eqref{eq:cor} is much weaker, but it still provides
a  power-law decay in $K$ instead of a logarithmic decay.
 Covariances of the form 
$\langle g_1(x_i-x_{i+1}); g_2( x_j-x_{j+1}) \rangle_\om$ are expected to decay even faster
but we have not pursued this direction further.

We point out that
the  fact that observables of differences 
of particles behave much nicer was a basic observation in  DBM analysis 
(Theorem~\ref{thm3}),
 see the explanation  around \eqref{entbound}.

\subsection{Decay of correlation functions: Proof of Theorem~\ref{cor}}

We will express the difference of gap distributions between two measures 
in terms of  random walks in  time dependent 
random environments. The  decay of correlation functions will be 
translated into a partial regularity property
 of the corresponding parabolic equation. This partial regularity
is a discrete version of the De Giorgi-Nash-Moser theory but
with a long range elliptic part.

\subsubsection{Random Walk Representation}\label{sec:rw}

 In this section we  derive a random walk  representation for  the gap correlation function
on the left hand side of \eqref{eq:cor}. We will apply it for the interpolating measure
$\om= \om^r_{\by, \wt\by}$ \eqref{omd} and for the
 function $h_0$ given in \eqref{h0def},
but the representation formula (Proposition~\ref{prop:repp} below) is valid for any $\om$ and $h_0$.

Let $  \cL^\om$   
be the reversible generator given by the Dirichlet form 
\be\label{LK}
D^\om(f)=  -\int f \cL^\om f \rd \om   =  \sum_{|j| \le K}  \int   (\partial_j f )^2 
\rd \om.
\ee
This process can also be characterized by the following SDE  
\be
  \rd x_i = \rd B_i + \beta \Big[ - \frac{1}{2}  (V_{\by,\wt\by}^{r})'   ( x_i ) +
 \frac{1}{2}\sum_{j\ne i} \frac{1}{(x_i-x_j)}\Big] \rd t,
\label{SDE}
\ee
where $\{ B_i\; : \; |i|\le K\}$ is a family of independent standard real Brownian motions.
 Let $\E_\bx$ denote the expectation for this process with initial point $\bx(0)=\bx$. 
 The expectation with respect to the process starting from equilibrium is
$\E^\om[\cdot ] = \int \E_\bx[\cdot]  \om (\rd\bx)$. 
 With a slight abuse of notations,  when we talk about the process, 
we will  use $\P^\om$ and $\E^\om$ also to denote the probability 
and expectation w.r.t. this dynamics with initial data distributed w.r.t. $\om$, i.e., in equilibrium.

Suppose $h(t)=h(t, \bx)$ is the solution of the equation $\partial_t h = \cL^\om h$
with an initial condition  $h_0$.
Introduce the notation
\be\label{vjdef}
\bv(t, \bx) = \nabla_\bx h(t, \bx), \quad \mbox{i.e.}
 \quad v_j (t,  \bx) : =  \partial_{x_j} h (t, \bx ).
\ee
By integrating the time derivative of $\langle h(t,\bx);  O(x_{p}-x_{p+1}) \rangle_{\om }$
and using the equation $h(t, \bx)$ satisfies, we
have 
\be\label{83}
\langle h_0(\bx);  O(x_{p}-x_{p+1}) \rangle_{\om }   =    \int_0^\infty   \rd \sigma \int  
    O'( x_p-x_{p+1})
[v_p (\sigma, \bx)  -  v_{p+1}  (\sigma, \bx)  ]  \rd \omega  (\bx).
\ee
For any fixed $\sigma$,
the inner integral on the right hand side can be expressed by a random walk representation.
Fix a path $\{ \bx(s)\; : \; s\in [0,\sigma]\}$.  
Define the following operators on $\bR^\cK$ 
\be\label{61}
\cA(s) := \cB(s) + \cW(s)
\ee
\be\label{Bdef}
  [\cB(s)\bv]_j =  - \sum_k  B_{jk}(\bx (\si-s)) (v_k- v_j), \quad 
  B _{jk}(\bx) = \frac 1 { (x_j-x_k)^2} \ge 0
\ee
$$
[\cW(s)\bv]_j = \cW_j(s)v_j, \qquad  \cW_j(s):= [V_{\by,\wt\by}^r]'' (x_j(\sigma- s)).
$$
Clearly $\cB(s)$ is diffusion operator with random rates and $\cW(s)$ is a potential representing
a random environment.  These operators depend on the whole path $\bx(s)$, but we omit this fact from the notation.

With these notations we have the following representation:
\begin{proposition}\label{prop:repp} For any smooth function $h_0: J^\cK \to \R$, 
for any  $p \in I$, $-K \le p \le  K-1 $,  we have   
\begin{align}\label{reppgen}
\langle h_0; & O(x_{p}-x_{p+1}) \rangle_{\om }  = \int_0^{\infty}   \rd \sigma \int  
    O'( x_p-x_{p+1})
\E_\bx  [w_p (\sigma, \bx(\cdot); \sigma )  -  w_{p+1} (\sigma,  \bx(\cdot); \sigma) ] \omega (\rd \bx). 
\end{align}
Here, for any $\sigma>0$ and for any fixed path  $\{ \bx(s)\; : \; s\in [0,\sigma]\}$
we let
$\bw$ denote the solution of the evolution equation 
\be\label{veq2}
\partial_s\bw(s; \bx(\cdot),  \sigma)  = - { \cA}(s) \bw(s; \bx(\cdot),  \sigma)  ,
\ee
with initial data 
$\bw (0; \bx(\cdot), \sigma) := \nabla h_0 (\bx(\si))$.  
\end{proposition}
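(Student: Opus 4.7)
The starting point is equation~\eqref{83}, which already reduces the gap covariance to a time integral of the gradient $\bv(\sigma,\bx):=\nabla_\bx h(\sigma,\bx)$, where $h$ solves $\partial_t h=\cL^\om h$ with $h(0,\cdot)=h_0$. The task is thus to produce a pathwise representation of $\bv(\sigma,\bx)$ in terms of the diffusion~\eqref{SDE}. The plan is first to identify the PDE satisfied by $\bv$ and then to invoke a matrix-valued Feynman--Kac formula along the $\cL^\om$-diffusion.

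For the PDE, I would write $\cL^\om$ (whose coefficients are fixed by reversibility with respect to $\om\propto e^{-\beta\cH}$ and by the SDE~\eqref{SDE}) in the form $\cL^\om=c_1\sum_j\partial_j^2-c_2\sum_j\partial_j\cH\,\partial_j$, differentiate $\partial_t h=\cL^\om h$ in $x_k$, and commute $\partial_{x_k}$ through the drift. The commutator produces an extra term contracted with the Hessian $\cH''$, yielding
\begin{equation*}
\partial_t \bv \;=\; \cL^\om \bv \;-\; \cA\,\bv,
\end{equation*}
after absorbing an overall constant into the definition of $\cA$. Indeed, a direct computation of $\cH''$ from the explicit Hamiltonian~\eqref{Vz}--\eqref{Hyy} gives $(\cH''\bv)_j$ as the sum of the multiplication operator $\cW_j=(V^r_{\by,\wt\by})''(x_j)$ and the discrete-Laplacian-type operator with rates $B_{jk}=(x_j-x_k)^{-2}$. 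The key algebraic fact is that the diagonal contribution $\sum_{k\ne j}B_{jk}$ of $\cH''$, arising from $\partial_j^2(-\log|x_j-x_k|)$, combines with the off-diagonals $-B_{jk}$ to produce exactly the zero-row-sum form $(\cB\bv)_j=\sum_k B_{jk}(v_j-v_k)$ of~\eqref{Bdef}, matching the $\cA$ of~\eqref{61}.

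The stochastic representation then follows from a matrix Feynman--Kac identity. Fixing $\sigma>0$ and a path $\{\bx(s)\}_{s\in[0,\sigma]}$ with $\bx(0)=\bx$, let $M(s)$ solve
\begin{equation*}
M'(s)\;=\;-M(s)\,\cA(\bx(s)), \qquad M(0)=I.
\end{equation*}
Then $F(s):=\E_\bx[M(s)\,\bv(\sigma-s,\bx(s))]$ is constant in $s$: the It\^o derivative of $\bv(\sigma-s,\bx(s))$ contributes $\cA\,\bv$ (from $-\partial_t\bv+\cL^\om\bv=\cA\bv$), which cancels against $M'=-M\cA$. Hence $\bv(\sigma,\bx)=F(0)=F(\sigma)=\E_\bx[M(\sigma)\nabla h_0(\bx(\sigma))]$. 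Reversing time $s\mapsto\sigma-s$ rewrites this time-ordered matrix product as the ODE $\partial_s\bw=-\cA(s)\bw$ with $\cA(s)$ evaluated at the time-reversed argument $\bx(\sigma-s)$ and initial datum $\bw(0)=\nabla h_0(\bx(\sigma))$, i.e.\ exactly the $\bw$ of the statement, so that $v_p(\sigma,\bx)-v_{p+1}(\sigma,\bx)=\E_\bx[w_p(\sigma;\bx(\cdot);\sigma)-w_{p+1}(\sigma;\bx(\cdot);\sigma)]$. Substituting into~\eqref{83} yields~\eqref{reppgen}.

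The main obstacle is not the algebra but the singularity of $\cA$ along the diffusion path: $B_{jk}$ diverges at particle collisions, and $\cW_j$ blows up through $V_\by''$ near the endpoints of $J_\by$. Justifying the matrix ODE for $M$ and the cancellation $F'(s)\equiv 0$ in expectation requires $\int_0^\sigma\|\cA(\bx(s))\|\,ds<\infty$ with overwhelming probability. This is precisely where the local-measure estimates enter: the strong level-repulsion bound~\eqref{k52} (and its moment version in Corollary~\ref{cor:mom}), together with the rigidity estimate~\eqref{rigi}, applied to $\om=\om_{\by,\wt\by}^r$, control the collision singularities and keep each $x_j(s)$ away from $\partial J_\by$ on exponentially long time scales. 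A standard regularization $\cA\mapsto\cA\wedge R$ followed by $R\to\infty$, combined with dominated convergence using these inputs, completes the rigorous derivation; uniqueness of the associated Cauchy problem identifies the resulting expectation with $\bv(\sigma,\bx)$, concluding the proof.
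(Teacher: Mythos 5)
Your proof is correct and follows the same strategy as the paper: differentiate the heat equation $\partial_t h=\cL^\om h$ in $x_k$, recognize the commutator as the Hessian $\cH''=\cW+\cB$ in its zero-row-sum form, and then invoke a matrix Feynman--Kac representation along the $\cL^\om$-diffusion (with the time reversal $s\mapsto\sigma-s$ to match the convention of \eqref{Bdef}), which is exactly the route indicated in the paper's terse reference to Helffer--Sj\"ostrand via \cite{DGI,GOS,HS,NS}. Your explicit constancy argument for $F(s)=\E_\bx[M(s)\bv(\sigma-s,\bx(s))]$ is a faithful unpacking of the paper's one-line appeal to Feynman--Kac, and the closing caveat about regularizing the singular coefficients is a reasonable (if not strictly necessary for the identity itself) remark that the paper addresses only later when it estimates the representation.
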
 
 This representation in a slightly different setting  already appeared in  Proposition 2.2 of
 \cite{DGI} (see also Proposition 3.1 in \cite{GOS}), which was a probabilistic formulation  of the idea 
 of Helffer and Sj\"ostrand \cite{HS} and Naddaf and Spencer \cite{NS}. The proof relies on
taking the gradient of the equation $\partial_t h = \cL^\om h$.
A direct computation of the commutator $[\nabla, \cL^\om]$ 
yields that 
\be\label{ge}
\partial_t \bv (t,  \bx)  = \cL^\om \bv (t,  \bx)  -  \wt \cA ( \bx ) \bv(t, \bx) , 
\ee
with initial condition  $\bv_0(\bx)=\bv(0, \bx)= \nabla h_0(\bx)$.
Here $\wt\cA(\bx) = \cB(\bx) + \cW(\bx)$, where
$\cB(\bx)$ is the operator given by the matrix $B_{jk}$ in \eqref{Bdef} and
 $\cW(\bx)$ is
the diagonal multiplication operator by 
$ [V_{\by,\wt\by}^r]''(x_j)$. Since $\cL^\om$ generates the process $\bx(t)$, 
we can represent the solution to \eqref{ge} by the Feynman-Kac formula
which can be written in the form \eqref{reppgen}.

When applying this proposition to our case, we will choose the initial condition
 $h_0$ be given by \eqref{h0def}. The initial condition for the random walk \eqref{veq2} is given
by $\nabla h_0$. Notice that the leading
term in $\partial_j h_0(\bx) =  V_\by'(x_j) -  \big[ \wt V_{\wt\by}\big]'(x_j)$ cancel; this is because
 the leading term
in \eqref{Vby1} depends only on the density $\varrho(\bar y)$ which
is matched for $\by$ and $\wt\by$ by $J_\by=J_{\wt\by}$, see \eqref{Jlength}.  We thus have 
\be\label{gradh0}
  | \partial_j h_0 (x)| \le \frac{CK^\xi}{d(x_j)},
\ee
i.e. initially $\bw$ is small away from the boundary and for the small
$\sigma$ regime the inner integral in the r.h.s. of \eqref{reppgen} is small.
After very long time $\bw$ becomes constant, but then the right hand side of \eqref{reppgen}
is zero.  The analysis of \eqref{reppgen} 
requires to monitor what happens to $\bw$ for coordinates $p$ away from the boundary at intermediate times.

In the following sections we make a few preparations that exclude
irrelevant regimes. First, 
it is easy to see that the regular spacing of $\by, \wt\by \in \cR$
implies that $W_j(s)\ge cK^{-1}$, which means that the $L^1$-norm of the solution
to \eqref{veq2} decays at a rate of order $K$. Thus the integral in \eqref{reppgen}
can be truncated at $\sigma\le CK\log K$.

\subsubsection{Preparation for the De Giorgi-Nash-Moser bound: Restriction to the good  paths}

 The representation \eqref{reppgen} expresses the
covariance function in terms of the discrete spatial derivative
of the solution to \eqref{veq2}.
To estimate  $w_p (\sigma, \bx(\cdot); \sigma )  -  w_{p+1}  (\sigma,  \bx(\cdot); \sigma)$
in \eqref{reppgen}, 
we will now study the H\"older continuity of the solution $\bw (s, \bx(\cdot); \sigma )$ to \eqref{veq2}
at time $s=\sigma$ and at the spatial point $p$. 
We will do it for each fixed
path $\bx(\cdot)$, with the exception of a set of ``bad'' paths that will have a  small probability. 

Notice that if all points $x_i$  were approximately regularly spaced in the interval $J$,
then the operator $\cB$ had a kernel $\cB_{ij}\sim (i-j)^{-2}$, i.e. it 
were essentially a discrete version of the operator $|p|=\sqrt{-\Delta}$ (in one dimension).
H\"older continuity will thus be the consequence of the De Giorgi-Nash-Moser bound
for the parabolic equation \eqref{veq2}. 
However, we need to control
the coefficients in this equation, which  depend on the random walk $\bx(\cdot )$.

For the De Giorgi-Nash-Moser theory we need both upper and lower bounds on
the time dependent kernel $\cB_{ij}(s)$. The rigidity bound  \eqref{rigi} guarantees 
a lower bound on $\cB_{ij}$, up to a factor $K^{-C\xi}$. Since the subexponential probabilistic estimate
in \eqref{rigi} is very strong, one can easily guarantee a very similar estimate uniformly in time, i.e.
\be\label{unifrig}
\P^\om \Big\{\bx (s)\; : \;  \sup_{0 \le s \le CK\log K} \;\sup_{ |j| \le K } |x_j(s)-\alpha_j| \le K^{C\xi}  \Big\} 
\ge 1- e^{-K^\theta}
\ee
(maybe after reducing $\theta$ from \eqref{rigi}). This follows from the fact that $\om$ is
invariant under the dynamics and $\bx(t)$ has some stochastic continuity.

The level repulsion estimate
implies  certain upper bounds  on $\cB_{ij}$, but these estimates not particularly
strong. Even in the $\beta>1$ case, the bound \eqref{expinv} implies only that
$$
   \E^\om \cB_{i, i+1}(s) = \E^\om \frac{1}{(x_{i+1}-x_i)^2} \le K^{C\xi}
$$
is finite. In the $\beta=1$ borderline case even the expectation of $\cB_{i,i+1}$ is
infinite. Such a weak control does not allow us to guarantee
an effective simultaneous bound on $\cB_{i,i+1}$ for all $i$ and for all time.
Instead of  supremum bounds, we  control these coefficients only
in an average sense and we can show that for any fixed index $Z\in I$,
time $s$ and parameter $M$, we have 
\be\label{PQ}
  \P^\om\Big\{\bx (s)\; : \;
 \frac{1}{1+s} \int_0^s \rd a \;\frac{1}{M} \sum_{|i-Z|\le M} \sum_j 
B_{ij}(\bx(\si-a)) \le K^\rho\Big\} \ge 1- K^{C\xi-\rho}.
\ee
Here $\rho$ will be chosen as large constant times $\xi$.
The summation over $j$ is harmless since for $|i-j|\ge K^\xi$ the rigidity estimate can be used to bound $\cB_{ij}$.
By a dyadic choice of the parameters $s, M$, it is easy to upgrade \eqref{PQ} to hold
for any $M\le K$ and $s\le CK\log K$. But it is essential that a reference point $Z$ be fixed,
one cannot guarantee that none of the gaps closes.

The expectation over the paths,
$\int \E_\bx[\, \cdot\,] \om(\rd \bx)$, in \eqref{reppgen} will be restricted to the sets 
given in \eqref{unifrig} and \eqref{PQ}.
Due to the strong subexponential bound, the restriction to the set in \eqref{unifrig} is unproblematic. 
However, the estimate \eqref{PQ} is quite weak; the probability of  the ''bad''  paths is bounded
only by a small negative power of $K$. This is not sufficient to 
compensate the time integration   in  \eqref{reppgen} even after the upper cutoff
$\si\le CK\log K$. We will need to use that the heat kernel of the equation \eqref{veq2} has an
  $L^1\to L^\infty$ decay of order $1/s$ after time $s$. Thus the solution $w_p(\sigma, \bx (\cdot); \sigma)$ decays
as $1/\sigma$ which renders the $\rd\sigma$ integration in \eqref{reppgen} harmless. 

For completeness, we state the $L^p\to L^q$ heat kernel decay estimate in a general form. Notice that we only assume
 a lower bound in $\cB_{ij}$ to guarantee sufficient ellipticity;
there is no upper bound required for these bounds.
\begin{proposition}
Consider the evolution equation 
\be\label{ve}
\partial_s  \bu (s) =  - \cA(s)  \bu (s), \qquad \bu(s)\in\R^\cK
\ee
and fix $\si>0$. 
Suppose that for
 some
constant $b$ we have 
\be\label{B}
  \cB_{jk}(s) \ge   \frac b   { (j-k)^2}, \quad 0 \le s \le \si, \quad j\ne k,
\ee  
and 
\be\label{W1}
\cW_j (s)  \ge  \frac b { d_j } ,   \qquad d_j := \big| |j|-K\big|+1,
\quad 0 \le s \le \si .
\ee
Then  for any $1\le p\le q\le \infty$  we have the decay estimate 
\be\label{decay}
\| \bu(s) \|_q  \le  ( sb)^{-(  \frac{1}{p}  - \frac 1 q)} 
    \| \bu(0) \|_{ p }, \qquad 0<s\le\si.
\ee
\end{proposition}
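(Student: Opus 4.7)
The plan is to prove this by a classical Nash-style argument, adapted to the time-dependent, discrete fractional Laplacian setting. First I would observe that, for each fixed $s$, the operator $\cA(s) = \cB(s)+\cW(s)$ is self-adjoint on $\ell^2(I)$; moreover $-\cA(s)$ has non-negative off-diagonal entries and non-positive row sums, so it is a sub-Markov generator. Consequently the time-ordered evolution operator $T(s_2, s_1)$ ($s_1\le s_2$) associated with \eqref{ve} is positivity preserving and is a contraction on $\ell^p(I)$ for every $p\in[1,\infty]$; in particular $\|\bu(s)\|_p \le \|\bu(0)\|_p$ throughout. The potential $\cW\ge 0$ only accelerates decay and may be discarded for an upper bound; what matters is the lower bound \eqref{B} on $\cB$.

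The core input is a discrete Nash inequality reflecting the fact that $\cB(s)$ behaves like $\sqrt{-\Delta}$ in one dimension. Concretely, I would prove that there is an absolute constant $c_0>0$ such that for every $\bu\in\ell^2(I)$,
\be\label{nashplan}
   \|\bu\|_2^4 \;\le\; c_0 \, D^{\cB}(\bu)\, \|\bu\|_1^2, \qquad
D^{\cB}(\bu) := \tfrac{1}{2}\sum_{j,k\in I} B_{jk}(s)\,(u_j-u_k)^2,
\ee
using \eqref{B} to replace $B_{jk}$ by $b/(j-k)^2$ in the lower bound; this reduces the statement to the standard fractional Nash inequality for the kernel $(j-k)^{-2}$ on $\Z$, which can be obtained by Fourier analysis (the symbol is comparable to $|\xi|$) or by extending $\bu$ by zero and invoking the Euclidean Nash inequality for $(-\Delta)^{1/2}$ in dimension one. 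The exponents $2+2\alpha/d = 4$ with $\alpha=d=1$ reproduce precisely the $(sb)^{-1}$ rate in the conclusion.

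Having \eqref{nashplan} in hand, I would run the standard Nash argument on $\|\bu(s)\|_2^2$. Since $\cA(s)$ is self-adjoint and $\cW\ge 0$,
\be\non
\partial_s \|\bu(s)\|_2^2 = -2\,\langle \bu(s), \cA(s)\bu(s)\rangle \le -2\,D^{\cB}(\bu(s)) \le -\frac{2b}{c_0}\,\frac{\|\bu(s)\|_2^4}{\|\bu(s)\|_1^2}
  \le -\frac{2b}{c_0}\,\frac{\|\bu(s)\|_2^4}{\|\bu(0)\|_1^2},
\ee
where in the last step I used the already-established $L^1$ contraction. Integrating this differential inequality yields the $L^1\to L^2$ bound $\|\bu(s)\|_2 \le (Cbs)^{-1/2}\|\bu(0)\|_1$. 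For the $L^2\to L^\infty$ bound I would use duality: since $\cA(r)$ is self-adjoint for each $r$, the adjoint of $T(s,0)$ is precisely the solution operator for the equation with coefficients $\cA(s-r)$, which satisfies the same structural bounds \eqref{B}--\eqref{W1}. Running the identical Nash argument on the adjoint equation gives $\|T(s,0)^*\|_{L^1\to L^2} \le (Cbs)^{-1/2}$, hence $\|T(s,0)\|_{L^2\to L^\infty} \le (Cbs)^{-1/2}$. Composing via the semigroup property at time $s/2$ produces the $L^1\to L^\infty$ bound $\|T(s,0)\|_{L^1\to L^\infty} \le C/(bs)$. Finally, Riesz–Thorin interpolation between this bound and the sub-Markov contractions $\|T(s,0)\|_{L^p\to L^p}\le 1$ for $p=1,\infty$ yields \eqref{decay} for all $1\le p\le q\le\infty$.

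The only subtlety I expect is the adjoint/duality step in the time-dependent setting: one must check carefully that the time-reversed evolution satisfies exactly the same hypotheses \eqref{B}--\eqref{W1} uniformly on $[0,\sigma]$, which is immediate here since those bounds are pointwise in $s$. Beyond that, verifying the discrete Nash inequality \eqref{nashplan} on the finite window $I$ is routine once the whole-line version is established, since restriction and zero-extension are bounded operations for all the norms involved, and the prefactor depends only on the universal Nash constant and $b$, as required by the statement.
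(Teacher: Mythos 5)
Your overall plan (Nash method, duality for $L^2\to L^\infty$, composition and Riesz--Thorin) is the same as the paper's, which reduces the proposition to a critical fractional Gagliardo--Nirenberg/Nash inequality for the discrete $\sqrt{-\Delta}$; the paper cites the $L^4$-form $\|f\|_4^4 \le C\|f\|_2^2\sum_{i\ne j}\frac{|f_i-f_j|^2}{|i-j|^2}$ on $\Z$ from \cite{OO}, while you use the equivalent $\ell^1$-based Nash form. So the route is right. But there is a genuine error in how you set up the functional inequality.

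You assert that the potential $\cW\ge 0$ ``only accelerates decay and may be discarded,'' and then propose the Nash inequality
$\|\bu\|_2^4 \le c_0 D^{\cB}(\bu)\|\bu\|_1^2$
with $D^{\cB}$ the Dirichlet form of $\cB$ alone on the finite window $I$. This inequality is \emph{false} on a finite interval: taking $\bu$ constant on $I$ gives $D^{\cB}(\bu)=0$ but $\|\bu\|_2^4>0$. Correspondingly, the evolution by $\cB$ alone is mass-conserving ($\sum_j [\cB\bv]_j=0$), so starting from a delta mass the $\ell^2$ norm converges to $\cK^{-1/2}>0$ and cannot obey a $(sb)^{-1/2}$ decay. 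The hypothesis \eqref{W1} is therefore not a convenience to be discarded; it is essential. Your zero-extension step also fails for the same reason: writing $\bar\bu$ for the extension by zero, one has
$D_\Z(\bar\bu) \;=\; D^{\cB}_I(\bu) \;+\; \sum_{j\in I} u_j^2 \sum_{k\notin I}\frac{1}{(j-k)^2}
\;\ge\; D^{\cB}_I(\bu),$
so the $\Z$-Nash inequality controls $\|\bu\|_2^4$ by $D_\Z(\bar\bu)$, which can be much larger than $D^{\cB}_I(\bu)$, not smaller. The missing boundary sum is precisely what $\cW$ supplies: since $\sum_{k\notin I}(j-k)^{-2}\asymp 1/d_j$ and \eqref{W1} gives $\cW_j\ge b/d_j$, one has $D_\Z(\bar\bu) \le C b^{-1}\bigl(D^{\cB}(\bu) + \sum_j \cW_j u_j^2\bigr) = Cb^{-1}\langle\bu,\cA\bu\rangle$. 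Running the Nash differential inequality with $\langle\bu,\cA\bu\rangle$ (not $D^{\cB}$) then yields \eqref{decay}. The rest of your argument --- $L^1$ contraction from the sub-Markov property, duality via time-reversal for $L^2\to L^\infty$, composition at $s/2$, and Riesz--Thorin --- is fine and matches the paper.
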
 
The proof  relies on the usual
Nash argument and uses the following critical Gagliardo-Nirenberg-type inequality 
for the discrete version of the operator $\sqrt{-\Delta}$:
\begin{proposition} There exists a positive constant $C$ such that 
\be\label{s}
\| f \|_{L^4(\Z)}^4  \le C \| f \|_{L^2(\Z)}^2 
\sum_{i \not = j  \in \Z}   \frac { |f_i - f_j|^2} { |i-j|^2}  
\ee 
holds for any function $f: \Z \to \R$.
\end{proposition}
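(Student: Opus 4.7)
The plan is to reduce this inequality to the one-dimensional critical Gagliardo--Nirenberg inequality for the homogeneous Sobolev seminorm $\dot H^{1/2}$, and then prove the latter by a Littlewood--Paley (paraproduct) argument on the Fourier side.

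\textbf{Step 1: Fourier reduction.} I would use the discrete Fourier transform $\hat f(\theta):=\sum_n f_n e^{-in\theta}$ on $[-\pi,\pi]$, for which Plancherel gives $\|f\|_{\ell^2(\Z)}^2 = \frac{1}{2\pi}\int|\hat f|^2$. Writing
\begin{equation*}
\sum_{i\ne j}\frac{(f_i-f_j)^2}{(i-j)^2} \;=\; \sum_{k\ne 0}\frac{1}{k^2}\sum_i(f_i-f_{i+k})^2
\end{equation*}
and applying Plancherel with the Fourier multiplier $|1-e^{-ik\theta}|^2=2(1-\cos k\theta)$, together with the classical Fourier series
\begin{equation*}
\sum_{k=1}^{\infty}\frac{1-\cos(k\theta)}{k^2} \;=\; \frac{\pi|\theta|}{2}-\frac{\theta^2}{4},\qquad |\theta|\le\pi,
\end{equation*}
one obtains
\begin{equation*}
\sum_{i\ne j}\frac{(f_i-f_j)^2}{(i-j)^2} \;=\; \frac{1}{2\pi}\int_{-\pi}^{\pi}m(\theta)|\hat f(\theta)|^2\,d\theta,\qquad m(\theta):=|\theta|(2\pi-|\theta|).
\end{equation*}
Since $m(\theta)\asymp|\theta|$ on $[-\pi,\pi]$, the right-hand side of the claimed inequality is equivalent to the homogeneous seminorm $\|f\|_{\dot H^{1/2}}^2:=\frac{1}{2\pi}\int_{-\pi}^{\pi}|\theta||\hat f(\theta)|^2\,d\theta$, and the problem reduces to showing the critical Gagliardo--Nirenberg bound $\|f\|_{\ell^4}^4\le C\|f\|_{\ell^2}^2\|f\|_{\dot H^{1/2}}^2$.

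\textbf{Step 2: Paraproduct decomposition.} I would fix a smooth dyadic partition of unity $\sum_M\psi_M\equiv 1$ on $(0,\pi]$ with $\psi_M$ supported on $\{|\theta|\asymp M\}$, $M\in\{2^{-j}\pi\}_{j\ge 0}$, and let $P_M f$ denote the frequency-localized projection defined by multiplying $\hat f$ by $\psi_M$. Bernstein's inequality in one dimension, $\|P_M f\|_{\ell^{\infty}(\Z)}\le C\sqrt{M}\|P_M f\|_{\ell^2(\Z)}$, follows from Cauchy--Schwarz applied to the inverse Fourier integral over the band $|\theta|\asymp M$. Using $\|f\|_{\dot H^{1/2}}^2\asymp\sum_M M\|P_M f\|_{\ell^2}^2$ and the paraproduct expansion
\begin{equation*}
f^2 \;=\; 2\sum_M(S_{<M}f)(P_M f)+\sum_M(P_M f)^2,\qquad S_{<M}f:=\sum_{M'<M}P_{M'}f,
\end{equation*}
together with the (Fourier-support-based) almost orthogonality of the pieces across distinct dyadic scales $M$ in $\ell^2(\Z)$, one arrives at
\begin{equation*}
\|f\|_{\ell^4}^4 \;=\; \|f^2\|_{\ell^2}^2 \;\lesssim\; \sum_M M\|f\|_{\ell^2}^2\|P_M f\|_{\ell^2}^2 \;\asymp\; \|f\|_{\ell^2}^2\|f\|_{\dot H^{1/2}}^2,
\end{equation*}
which is the desired inequality.

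\textbf{Main obstacle.} The difficulty is the critical (endpoint) nature of the inequality in dimension one: $\dot H^{1/2}$ just fails to embed into $L^{\infty}$ (failing by a logarithm), so the crude bound $\|f\|_4^4\le\|f\|_2^2\|f\|_{\infty}^2$ is unavailable. A naive triangle inequality $\|\sum_M(S_{<M}f)(P_M f)\|_{\ell^2}\le\sum_M\|(S_{<M}f)(P_M f)\|_{\ell^2}$, combined with Bernstein, produces the bound $\sum_M\sqrt{M}\|f\|_{\ell^2}\|P_M f\|_{\ell^2}$, which upon Cauchy--Schwarz yields a spurious divergent factor $\sum_M 1$. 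The essential input that eliminates this logarithmic loss is the observation that the paraproduct pieces at scale $M$ have Fourier energy concentrated near $|\theta|\sim M$, so they are almost orthogonal in $\ell^2(\Z)$ across distinct dyadic scales, permitting an $\ell^2$-type aggregation rather than an $\ell^1$-type triangle inequality. Establishing this almost orthogonality carefully is the technical heart of the argument.
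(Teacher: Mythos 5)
Your argument is correct in substance and takes a genuinely different route from the one the paper points to. The paper does not prove the discrete proposition at all: it states it and cites Ogawa--Ozawa \cite{OO} for the continuous analogue $\|\phi\|_4^4\le C\|\phi\|_2^2\langle\phi,|p|\phi\rangle$ on $\R$, the discrete-to-continuous transfer being left implicit (and carried out in the source paper \cite{EYsinglegap}). The natural way to perform that transfer is via the band-limited (Whittaker--Shannon) extension $F$ of $f$, whose Fourier transform is $\hat f\cdot\mathds{1}_{[-\pi,\pi]}$: Plancherel then makes the $\ell^2(\Z)$ and discrete $\dot H^{1/2}$ quantities equal to $\|F\|_{L^2(\R)}^2$ and $\|F\|_{\dot H^{1/2}(\R)}^2$, and a Plancherel--P\'olya sampling inequality for functions of exponential type gives $\|F\|_{L^4(\R)}^4\asymp\|f\|_{\ell^4(\Z)}^4$, so the continuous inequality pulls back. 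That route is shorter if one black-boxes \cite{OO} and sampling; yours re-proves everything directly on $\Z$ and exposes the mechanism (critical paraproduct, $\ell^2$-aggregation via almost-orthogonality) that lies behind the $\dot H^{1/2}\not\hookrightarrow L^\infty$ endpoint, which is more informative. Your Step~1 computation of the exact multiplier $m(\theta)=|\theta|(2\pi-|\theta|)\asymp|\theta|$ on $[-\pi,\pi]$ is correct and is the right way to set up the Fourier-side equivalence.

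Two details your writeup elides but should be added for completeness. First, the diagonal (high--high) term $\sum_M(P_Mf)^2$ does \emph{not} enjoy almost-orthogonality across scales: each $(P_Mf)^2$ has Fourier support filling all of $\{|\theta|\lesssim M\}$, so the supports at different $M$ are nested, not disjoint. It is nevertheless easy: $\|(P_Mf)^2\|_{\ell^2}\le\|P_Mf\|_{\ell^\infty}\|P_Mf\|_{\ell^2}\lesssim\sqrt{M}\,\|P_Mf\|_{\ell^2}^2$, and a single Cauchy--Schwarz in the dyadic index gives $\sum_M\sqrt{M}\,\|P_Mf\|_{\ell^2}^2\le\bigl(\sum_M\|P_Mf\|_{\ell^2}^2\bigr)^{1/2}\bigl(\sum_M M\|P_Mf\|_{\ell^2}^2\bigr)^{1/2}\lesssim\|f\|_{\ell^2}\|f\|_{\dot H^{1/2}}$ with no logarithmic loss, because this term carries \emph{two} small factors $\|P_Mf\|_{\ell^2}$ (unlike the low--high term, which carries only one and so genuinely needs almost-orthogonality). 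Second, on the torus the Minkowski sums of Fourier supports can wrap around $\pm\pi$ at the finitely many coarsest scales $M\asymp\pi$; this perturbs only a bounded number of pieces and is absorbed into the almost-orthogonality constant after splitting the dyadic scales into boundedly many arithmetic progressions.
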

The continuous version of this inequality,
$ \| \phi \|_4^4 \le C \| \phi \|_2^2  \langle\phi,  |p|  \phi\rangle$,
 was first proven in \cite{OO}.

\subsubsection{Preparation for the De Giorgi-Nash-Moser bound: Finite speed of propagation}

The H\"older continuity of the parabolic equation \eqref{veq2} emerges
only after a certain time, thus for the small $\sigma$ regime in
the integral \eqref{reppgen} we need a different argument.
Since we are interested in the H\"older continuity around
the middle of the interval $I$  (note that $|p|\le K^{1-\xi^*} $ in
Theorem~\ref{cor}), and
the initial condition $\nabla h_0$ is small in this region (see \eqref{gradh0}),
a finite speed of propagation estimate  guarantees
that $w_p(\si; \bx(0), \si)$ is small if $\si$ is not too large.

Since  \eqref{veq2} is linear, for the finite speed of propagation
it is sufficient to consider the fundamental solution.  For
 $a$ fixed, let $\bu^a(s)$ denote the solution  
\be\label{ve2}
\partial_s  \bu^a (s) = - \cA(s)  \bu^a (s) , \quad  u^a_j (0) = \delta_{aj}.
\ee
with a delta function as initial data.
We will assume that  the coefficients of $\cA$ satisfy, for some 
fixed   $|Z|\le  K/2$  
and  $\rho> 0$,  the bound 
\be\label{Kass}
   \sup_{0 \le s \le \si}\sup_{0 \le M \le K} \frac{1}{ 1+ s} \int_0^s \frac{1}{M}
 \sum_{i\in I\, : \, |i-Z| \le M}\sum_{j\in I\, : \, |j-Z| \le M} 
 \cB_{ij}(\si-a)
 \rd a \le CK^{\rho}.
\ee
 Notice that \eqref{Kass} is satisfied on the set of good path given by \eqref{unifrig} and \eqref{PQ}.
The following lemma provides a finite speed   of propagation  estimate for the equation \eqref{ve2}
under the condition of \eqref{Kass}. 
This estimate is not optimal, but it is sufficient for our purpose.

\begin{lemma}\label{lem-finite}  [Finite Speed of Propagation Estimate] 
 Fix $a\in I$ 
 and $\sigma\le CK\log K$. 
 We assume
 that the coefficients of $\cA$  satisfy
\eqref{B} and \eqref{W1} with $b= K^{-\xi}$.
Assume that \eqref{Kass} is satisfied for some fixed  $Z$, $|Z|\le K/2 $.
Then for the fundamental solution \eqref{ve2} we have the  estimate for any $s \le \si$ and $p\in I$
\be\label{finite}
|u_p^a(s)| \le  \frac {C K^{  \rho  +  2 \xi + 1/2 } \sqrt  {s  +1 }  } {   |p- a |}.
\ee
\end{lemma}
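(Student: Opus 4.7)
The plan is an $\ell^2$ energy argument with the quadratic weight $g_j := (j-a)^2$. Setting
\begin{equation*}
F(s) \;:=\; \sum_{j\in I} (j-a)^2\, u^a_j(s)^2,
\end{equation*}
the goal is to establish a polynomial bound $F(s) \le C K^{\alpha}(1+s)$ and then recover \eqref{finite} from the trivial inequality $|p-a|^2\,u^a_p(s)^2 \le F(s)$. As preliminaries, since $\cB$ is a Markov-type generator ($\cB\mathbf{1}=0$ with nonnegative off-diagonal entries) and $\cW(s)\ge 0$ acts as a killing rate, the solution stays non-negative and satisfies the contractions $\|u^a(s)\|_1,\|u^a(s)\|_\infty\le 1$ for all $s\in[0,\sigma]$; these are used throughout.

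Differentiating $F$, discarding the non-positive $\cW$-contribution, and using the symmetrization
\begin{equation*}
g_j u_j - g_k u_k \;=\; \tfrac{g_j+g_k}{2}(u_j-u_k) + \tfrac{g_j-g_k}{2}(u_j+u_k)
\end{equation*}
together with a Cauchy--Schwarz step that absorbs half of the cross term into the Dirichlet-type negative piece, I would obtain a commutator error controlled by $C\sum_{j,k}B_{jk}(g_j-g_k)^2(g_j+g_k)^{-1}(u_j+u_k)^2$. For the quadratic weight the identities $(g_j-g_k)^2 = (j-k)^2(j+k-2a)^2$ and $(j+k-2a)^2\le 2(g_j+g_k)$ combine to give $(g_j-g_k)^2/(g_j+g_k)\le 2(j-k)^2$, reducing the estimate to
\begin{equation*}
F'(s)\;\le\; C\sum_{j,k\in I} B_{jk}(s)\,(j-k)^2\,\bigl(u_j(s)^2 + u_k(s)^2\bigr).
\end{equation*}

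The main obstacle is that the jump rates $B_{jk}(\bx(\sigma-s)) = (x_j-x_k)^{-2}$ admit no pointwise upper bound---they spike whenever two particles almost collide---so a Gronwall argument pointwise in $s$ is impossible. I would split the error sum according to whether $|j-k|$ is larger or smaller than $M_0 := K^{C\xi}$ with $C$ chosen large enough. In the far regime $|j-k|\ge M_0$, the uniform-in-time rigidity \eqref{unifrig} on good paths forces $|x_j(s)-x_k(s)|\ge c|j-k|$ deterministically, yielding the pointwise bound $B_{jk}(j-k)^2\le C$, and hence this part contributes at most $CK\sum_j u_j^2 \le CK$ to $F'(s)$. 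In the near regime $|j-k|<M_0$, one crudely bounds $(j-k)^2\le M_0^2$ and $u_j^2\le 1$, reducing the integrand to $M_0^2\sum_{j,k\in I} B_{jk}(s)$; the pointwise singularities here are precisely what the averaged hypothesis \eqref{Kass} (applied with the given $Z$ and with $M=K$) is designed to cure, giving after time integration
\begin{equation*}
\int_0^s \sum_{j,k\in I} B_{jk}(\sigma-\tau)\,d\tau \;\le\; CK^{\rho+1}(1+s).
\end{equation*}
Combining both regimes yields $F(s)\le CKs + CM_0^2\, K^{\rho+1}(1+s)\le CK^{\rho+2C\xi+1}(1+s)$, and $|u^a_p(s)|\le \sqrt{F(s)}/|p-a|$ delivers \eqref{finite}. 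The conceptual point, which is the only delicate issue, is that the split matches the structure of the two available hypotheses: deterministic rigidity controls the long jumps, where the multiplier $(j-k)^2$ is large but $B_{jk}$ is tame, while the averaged bound \eqref{Kass} controls the short jumps, where $B_{jk}$ can blow up but $(j-k)^2$ is small.
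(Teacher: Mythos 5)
Your argument is correct and takes a genuinely different route from the paper's sketch. The paper decomposes $\cA=\cS+\cR$ into short- and long-range parts with a cutoff $\ell$, treats $\cR$ as an $O(1/\ell)$-bounded perturbation via Duhamel, and runs a Gronwall bound on the exponentially weighted energy $\sum_j e^{|j-a|/\theta}r_j(s)^2$ of the short-range flow, finishing by optimizing jointly over $\theta$ and $\ell$. You instead run a single energy argument with the polynomial weight $(j-a)^2$ and split the commutator error directly by the size of $|j-k|$, using rigidity for the far pairs and \eqref{Kass} with $M=K$ for the near pairs. This avoids both the Duhamel iteration and the two-parameter optimization; the quadratic weight is naturally matched to the $\sqrt{s}/|p-a|$ shape of the target estimate, and the exponent you obtain is in fact slightly sharper ($K^{\rho/2}$ in place of the $K^{\rho}$ in \eqref{finite}). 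One shared caveat: your far-regime step uses the pointwise upper bound $\cB_{jk}(s)\le C/(j-k)^2$ for $|j-k|\ge CK^{\xi}$, i.e.\ \eqref{far1}, which is not among the hypotheses listed in the lemma statement (the assumed \eqref{B} is only a lower bound); moreover \eqref{Kass} with $M=K$ covers only pairs with $|i-Z|,|j-Z|\le K$, which does not quite exhaust $I\times I$ when $Z\ne 0$. Since the paper's assertion that the long-range part $\cR$ has $L^p$-norm $O(\ell^{-1})$ makes the same tacit use of \eqref{far1}, these are imprecisions of the lemma's statement rather than gaps in your proof, and both are harmless on the good paths where the lemma is actually applied.
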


For the proof, we split the operator $\cA=\cS+\cR$  into a short range and a long range part, where
the short range part
$\cS(s)$ is defined by
\be
(\cS(s) \bv)_j :=  -  \sum_{k\; : \; |j-k| \le \ell } \cB_{jk}(s) (v_k-v_j)  + \cW_j(s) v_j 
\ee
with some cutoff parameter $\ell$. The norm of the long range part in any $L^p$
 is bounded by $\ell^{-1}$ and it is treated as a perturbation via Duhamel formula.
For the short range part, we  control the exponentially weighted norm
of the solution of $\pt_s \br(s) = - \cS(s)\br(s)$, i.e. we derive a
Gronwall bound for
$$
  f(s)  =    \sum_{j\in I}  e^{ | j - a |/\theta}  r_j^2 (s).
$$
The result is
$$
f(s) \le  \exp \left [ C\theta^{-2} \ell^2  \int_0^s  \sum_{k,  j:  |j-k| \le \ell } \cB_{k j}  (s')  \rd s'   \right ] f(0).
$$
The exponent is estimated by \eqref{Kass} with $M=K$. The optimization of
the lengthscale $\theta$ together with the cutoff parameter $\ell$  yields
 \eqref{finite}. \qed

\medskip

Inserting the estimate \eqref{finite} into \eqref{reppgen} and using 
the estimate \eqref{gradh0} on the initial data $\nabla h_0$, we obtain that the contribution
of the short time regime, $\sigma\le K^{1/4}$, is negligible if $p$
is away from edge, $|p|\le K^{1-\xi^*}$ for some $\xi^*>0$. This allows us
to disregard the $\sigma\le K^{1/4}$ regime in \eqref{reppgen}
and focus on $\sigma\in [K^{1/4}, CK\log K]$.

\subsubsection{A discrete De Giorgi-Nash-Moser bound}

We will now treat the main part of the integral
\eqref{reppgen}  by parabolic regularity. The preparations
in the previous sections ensure that is is sufficient to consider the
integration regime $\sigma\in [K^{1/4}, CK\log K]$ and we can assume
that the path $\bx(\cdot)$ is good in the sense of the estimates
\eqref{unifrig} and \eqref{PQ}. In particular, the rigidity estimate implies not only lower bounds but also upper
   bounds for distant indices; more precisely we have
\be\label{far1}
\cB_{ij} (s) \le \frac{C}{(i-j)^2}
\ee
for any $|i-j|\ge  CK^{\xi}$ and $ 0\le s\le CK\log K$; and similarly
 \be\label{g4}
\cW_i(s)   \le  \frac{K^{\xi}}{d_i}, \; \; \text { if } \quad d_i \ge K^{C \xi}.
\ee   
The following regularity theorem
combined with  \eqref{gradh0} completes the estimate of \eqref{reppgen}
and completes the proof of Theorem~\ref{cor}. \qed

\begin{theorem}  [Parabolic partial regularity with singular coefficients] 
 \label{holderg} 
 Let  $\bu$ be a solution to \eqref{ve2}, where $\bu= \bu^a$ for any choice of $a$. 
Suppose that   the coefficients of $\cA$
satisfy the lower bounds
\eqref{B} and \eqref{W1} with $b= K^{-\xi}$, the upper bounds \eqref{far1}, \eqref{g4}
for distant indices and the upper bound in
\eqref{Kass} in average sense for all indices.
Let $\si \in [K^{c_1}, C_1K\log K]$ be fixed, where $c_1>0$ is an arbitrary positive constant. 
Then for any  $0<q'<1$  there exists $q>0$ 
  so that for any 
 $|Z| \le  K/2$ 
\be\label{HC}
\sup_{ \max(|j-Z|,  |j'-Z|) \le \si^{ 1- q'} } 
   | u_j (\si)  -  u_{j'}  (\si)  |  \le C \sigma^{-1-q} ,
\ee
where $\bu= \bu^a$ for any choice of $a$. 
\end{theorem}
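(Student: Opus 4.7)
The plan is to adapt the classical De Giorgi iteration scheme to the discrete, non-local parabolic equation \eqref{ve2}, viewing $\cA(s)$ as a time-dependent discrete fractional Laplacian of order one (since the kernel $\cB_{ij}\sim (i-j)^{-2}$ is the symbol of $|p|=\sqrt{-\Delta}$ in one dimension). The lower bounds \eqref{B} and \eqref{W1} provide uniform ellipticity on the scale $K^{-\xi}$, while \eqref{far1} controls far-off-diagonal jumps pointwise; the averaged bound \eqref{Kass} plays the role of a replacement for the pointwise upper bound on nearby coefficients that is unavailable here.

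First I would normalize. By the $L^1\to L^\infty$ estimate \eqref{decay} applied at time $s\asymp\sigma$ with $b=K^{-\xi}$, any $\bu=\bu^a$ satisfies $\|\bu(s)\|_\infty\le C\sigma^{-1}K^\xi$ for $s\in[\sigma/2,\sigma]$, so the content of \eqref{HC} is an improvement of this sup-norm bound by an additional $\sigma^{-q}$ factor on the spatial window $|j-Z|\le\sigma^{1-q'}$. I would reformulate this as a diminution-of-oscillation statement: on the parabolic cylinders
\[
 Q_k \;:=\;\bigl\{(j,s)\;:\;|j-Z|\le r_k,\;\; \sigma-r_k\le s\le\sigma\bigr\},\qquad r_k=2^{-k}\sigma^{1-q'},
\]
the oscillations $\omega_k:=\mathrm{osc}_{Q_k}\bu$ decay geometrically, $\omega_{k+1}\le\theta\omega_k$ for some $\theta=\theta(q')<1$ independent of $k$, until $r_k$ reaches a small power of $K$. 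Iterating this across $\log\sigma$ many scales yields the required Hölder estimate with some $q>0$.

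The geometric decay rests on two ingredients. \emph{(i) A Caccioppoli-type energy inequality} for the truncation $w=(\bu-\ell)_+$ at a level $\ell$: multiplying the evolution by a spatially localized cutoff $\varphi$ and integrating gives
\[
 \sum_j\varphi_j^2 w_j(s)^2\Big|_{s_1}^{s_2}+c\int_{s_1}^{s_2}\sum_{i,j}\cB_{ij}(\varphi_iw_i-\varphi_jw_j)^2\,\rd a\;\le\;\int_{s_1}^{s_2}\sum_{i,j}\cB_{ij}(\varphi_i-\varphi_j)^2(w_i^2+w_j^2)\,\rd a,
\]
where the right side is precisely the quantity averaged in \eqref{Kass} when $\varphi$ cuts off at the spatial scale $M\asymp r_k$. \emph{(ii) A Sobolev-type gain:} the fractional $H^{1/2}$ seminorm on the left, combined with the critical Gagliardo-Nirenberg-type inequality \eqref{s}, allows one to close a Moser-style $L^2\to L^p$ iteration on the truncated solution. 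A standard De Giorgi lemma then converts smallness of the super-level sets into a pointwise reduction of the maximum on the smaller cylinder $Q_{k+1}$; the symmetric argument for sub-level sets yields reduction of oscillation.

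The main obstacle will be the non-local tail. Because $\cB_{ij}$ does not decay fast enough, the right-hand side of the energy inequality contains contributions from pairs $(i,j)$ with $i$ inside the cutoff region and $j$ arbitrarily far outside, and there is no pointwise bound on the short-range coefficients. I would handle this by splitting at distance $K^\xi$: for $|i-j|\ge K^\xi$ the pointwise upper bound \eqref{far1} applies and produces only a $\log K$-type loss, whereas for $|i-j|\le K^\xi$ the averaged bound \eqref{Kass} with the spatial window $M=r_k$ must be invoked, replacing the missing pointwise control by a time-averaged one. This is why the oscillation decay constant $\theta$ depends only on $q'$ and on $\xi$, $\rho$, and why each iteration step costs a harmless factor of $K^{C\xi}$. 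Since $\sigma\ge K^{c_1}$ and one needs only finitely many iterations (of order $q'\log\sigma$), choosing $\xi$ small relative to $c_1$ absorbs all these losses into the final exponent $q$, completing the proof of \eqref{HC}.
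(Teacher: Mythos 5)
Your plan correctly identifies the Caffarelli--Chan--Vasseur De Giorgi framework: the Caccioppoli-type local energy estimate, the critical Sobolev inequality \eqref{s} as the Moser-type gain, and the splitting of the elliptic kernel at distance $K^\xi$ into a long-range piece (controlled pointwise by \eqref{far1}) and a short-range piece (controlled only through the time-averaged bound \eqref{Kass}). This is the same general strategy the paper follows.

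The gap is in the quantitative accounting of the iteration, and it is exactly what the paper flags as its main new contribution. You claim each oscillation-decay step costs ``a harmless factor of $K^{C\xi}$'' and that, over the $\sim q'\log\sigma$ dyadic steps required to descend from scale $\sim\sigma$ to $\sigma^{1-q'}$, these are absorbed by taking $\xi$ small relative to $c_1$. This does not close: since $\sigma$ and $K$ are polynomially comparable, $q'\log\sigma\asymp\log K$, and a per-step multiplicative loss of $K^{C\xi}$ compounds to $K^{C\xi\log K}$, which is super-polynomial in $K$ for any fixed $\xi>0$. Equivalently, if the $K^{C\rho}$-degenerate Caccioppoli constant coming from \eqref{Kass} is fed directly into the first De Giorgi lemma, the smallness threshold $\delta_0$ shrinks to a negative power of $K$, the isoperimetric step then needs $\sim\log K$ halvings, and the resulting oscillation-decay factor degenerates to $1$ as $K\to\infty$; either way the exponent $q$ you extract tends to $0$ with $K$, contradicting the statement. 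The paper circumvents this by running the first De Giorgi lemma \emph{twice}: a first pass under only the weak averaged bound \eqref{Kass} produces an $L^2(\Z)$ gain, and this $L^2$ information is then used in a second pass to upgrade to an $L^\infty(\Z)$ bound with $K$-uniform constants. That two-step local energy-dissipation argument --- not the long-range tail, which is comparatively routine given \eqref{far1}, \eqref{g4} and the a priori $L^\infty$ decay from \eqref{decay} --- is the missing ingredient that makes the oscillation-decay factor, and hence $q$, uniform in $K$. Your phrase ``replacing the missing pointwise control by a time-averaged one'' names the problem but does not supply this mechanism.
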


Notice that this result is deterministic, all probability 
estimates are comprised in verifying  the conditions.   We also remark
 that if we define the rescaled function   $v(j/K, t):= t u_j (t)$, 
then \eqref{HC} can be interpreted as a type of H\"older regularity 
of $v$  on scale $\sigma^{1-q'}K^{-1}\ll 1$ at the point $Z/K$:
\be
 |v(x,  \sigma) - v(y,  \sigma  ) | \le  \sigma^{-q} \le |x-y|^{c_1q}
\label{ss}
\ee
for $ 1/K \le |x-y| \le \si^{ 1- q'}/K$ and $x, y$  near $Z/K$. 
The  H\"older exponent is thus  at least $c_1 q$.

Although the statement of Theorem~\ref{holderg} seems to be complicated, 
the underlying mechanism is that there is  a positive 
 exponent  $q$ in \eqref{HC}, 
 which to a great degree is an universal constant. 
This exponent provides an extra smallness factor in addition to
the natural size of $u_j(\sigma)$, which is $\sigma^{-1}$ from the
$L^1\to L^\infty$ decay. As \eqref{ss} indicates, this gain
comes from a H\"older regularity on the relevant scale.  

\medskip

Our equation
\eqref{ve2} is of the type considered in \cite{C}, but it
is discrete and in a finite interval. The key difference, however, is
that the coefficient $\cB_{ij} = (x_i- x_j)^{-2}$  in the elliptic part of \eqref{ve2} 
can be singular if gaps close, even temporarily, while \cite{C} assumed the uniform 
bound $\cB_{ij}\le C/|i-j|^2$.
The only control we have for the singular behavior of $\cB_{ij}$ is the
estimate \eqref{Kass} which is very weak. This estimate essentially says that 
the space-time maximum function of $\cB_{i, i+1}(t)$   at a fixed space-time point $(Z,0)$ 
is bounded by $K^\rho$. Our main task is to show that this condition
is sufficient for proving H\"older continuity at the same point. 
 Our strategy follows  the approach of  Caffarelli-Chan-Vasseur \cite{C}.
The main new feature of our argument is  the derivation 
of a local energy dissipation  estimate for parabolic equation with singular coefficients satisfying  \eqref{B}, \eqref{W1}
as lower bounds 
and only  \eqref{Kass} as an upper bound. The analogous result in \cite{C},
 called the {\it first De Giorgi lemma}, is proved under uniform bounds on the coefficients.
For our proof, roughly, we have to run the argument of the first De Giorgi lemma
 twice; first we get a bound only in $L^2(\Z)$ then
using this information we upgrade it to an $L^\infty(\Z)$ bound.
This concludes the sketch of the proof of  Theorem~\ref{holderg}. \qed

\subsection{From local measures to Wigner matrices and $\beta$-ensembles}\label{sec:loctoglob}

Given Theorem~\ref{thm:local}, the proofs of Theorem~\ref{thm:sg} and \ref{thm:beta}
follow relatively standard ideas from previous results,
some of them were reviewed in Sections~\ref{sec:uniwig} and \ref{sec:unibeta}.
  The key inputs are to
verify the condition \eqref{Ex} and to ensure 
that the configuration intervals coincide \eqref{J=J}.

For the $\beta$-ensemble,  \eqref{Ex} simply follows from conditioning
the global rigidity estimate in Theorem~\ref{thm:accuracy}. 
For matching the configuration intervals, first we match the local density
 by  scaling and translation that guarantees that $|J_\by|\sim |J_{\wt\by}|$, see \eqref{Jlen}.
 Then, with a second scaling, we fine tune the slight discrepancy
between the lengths of $J_\by$ and $J_{\wt\by}$. This finishes the proof of
Theorem~\ref{thm:beta}. \qed

\medskip

In the Wigner case, we always work on the same configuration interval,
so matching of $J$ is automatic.
The proof of \eqref{Ex}, however,  requires a bit more effort than
for the $\beta$-ensemble, but it will relatively easily follow
from other information we already collected along the
 three step strategy described in Section~\ref{sec:genrem}.
As we explained in the proof of Theorem~\ref{bulkWigner},
the averaging over the energy was really needed only in the second step,
where the closeness of the local statistics of $f_t\mu$ and $\mu$ was shown for small $t$,
where $f_t$ is the evolution of the DBM.
As a byproduct of this step, we obtain bounds on the global entropy and
 Dirichlet form, see \eqref{1.3}. In particular, the local Dirichlet
form w.r.t $\mu_\by$   can be estimated by the global one, which 
then can be used to compare expectations w.r.t. the conditional
measures $f_{t,\by}\mu_\by$ and $\mu_\by$;
\be
   \E^{f_{t,\by}\mu_\by}O(\bx) - \E^{\mu_\by}O(\bx).
\label{OOO}
\ee
 We are especially
interested in controlling the difference
\be\label{EEE}
  \big|\E^{f_{t,\by}\mu_\by}x_j - \E^{\mu_\by}x_j\big|\le CK^\xi N^{-1}.
\ee
Since $\E^{f_t\mu}x_j$ is close to its classical location $\gamma_j$ by 
rigidity \eqref{rig} for Wigner matrices, after conditioning, we
obtain that $\E^{f_{t,\by}\mu_\by}x_j$ is also close to $\gamma_j$, at least for
most $\by$ w.r.t. $f_t\mu$.
Combining this information with \eqref{EEE} yields \eqref{Ex}.
Therefore  Theorem~\ref{thm:local} applies and we will use
it for a Gaussian case, $V(x)=\wt V(x)= x^2/2$ but with two different
boundary conditions $\by, \wt\by\in \bR$. Since this holds for
most $\wt\by$ w.r.t the measure $\mu$, it also holds 
for $\mu$ itself, i.e. the gap statistics of $\mu_\by$ and $\mu$ 
coincide. On the other hand, the 
estimate \eqref{OOO} applied to the 
observable $O(x_j-x_{j+1})$ implies  directly that the single gap
distribution w.r.t. $f_{t,\by}\mu_\by$ and $\mu_\by$ coincide for most
of the $\by$ w.r.t. $f_t\mu$.
Finally, the gap statistics of $f_{t,\by}\mu_\by$ and $f_t\mu$ coincide
for most $\by$ w.r.t. $f_t\mu$ by conditioning. Putting these relations together
we obtain that the gap statistics of $f_t\mu$ and $\mu$ coincide,
i.e. the local measures, that played an important auxiliary role, are eliminated.

Finally, the small Gaussian component present in $f_t\mu$ for small but non-zero $t$ can be removed by the
Green function comparison theorem, Theorem~\ref{comparison}. Although the direct
application of the Green functions give information only on eigenvalues
around a fixed energy and not on an eigenvalue with a fixed label,
the estimates are strong enough to transfer fixed energy information
to fixed label. The main reason for this flexibility is that
 Theorem~\ref{comparison} allows for very small $\eta\sim N^{-1-\e}$, 
i.e. well below the typical spacing.
Indeed, Theorem 1.10 from \cite{KY} implies  that if the first four  moments of
two generalized Wigner ensembles, 
 $H^{\f v}$ and $H^{\f w}$, are the same, then  we have
\be\label{12}
 \lim_{N\to \infty} \big [ \E^{\f v} - \E^{{\f w}}   \big ] 
 O\big( N(x_j-x_{j+1}), N(x_{j}-x_{j+2}), \ldots , N(x_j - x_{j+n})\big)   \;=\; 0.
\ee
 Roughly speaking, the proof of \eqref{12} in \cite{KY} was based on  Theorem~\ref{comparison}.
  In order to convert fixed energy 
to  a fixed eigenvalue  index, one needs to know that the total number of eigenvalues up to a fixed energy is the same 
for the two ensembles. The total number of eigenvalues up to a fixed energy $E$ can be expressed in terms of integration of 
imaginary part of  the trace of  Green functions, i.e.,
$$
  \int^{E}_{-\infty} \rd y \; \im \, \tr \frac 1 { H -  (y +i\eta)}
$$
with  an $\eta$ slightly   smaller than $1/N$.  
 Thus the basic idea of the Green function comparison theorem can be employed and this  leads to \eqref{12}. 
 This completes the proof of Theorem~\ref{thm:sg}. \qed

\thebibliography{hhh}

\bibitem{AM}
Aizenman, M., and Molchanov, S.: Localization at large disorder and at
extreme energies: an elementary derivation, {\it Commun.
 Math. Phys.} {\bf 157},  245--278  (1993)

\bibitem{An} Anantharaman, N., Nonnenmacher, S.:
Half-delocalization of eigenfunctions for the Laplacian on an Anosov manifold.
{\it Annales de l'Institut Fourier} {\bf 57}, no. 7, 2465--2523 (2007)

\bibitem{AGZ}  Anderson, G., Guionnet, A., Zeitouni, O.:   An Introduction
to Random Matrices. Studies in advanced mathematics, {\bf 118}, Cambridge
University Press, 2009.

\bibitem{A}
Anderson, P.: Absences of diffusion in certain random lattices,
{\it Phys. Rev.}
{\bf 109}, 1492--1505 (1958)

\bibitem{APS} Albeverio, S., Pastur, L.,  Shcherbina, M.:
 On the $1/n$ expansion for some unitary invariant ensembles of random matrices,
 {\it Commun. Math. Phys.}  {\bf 224}, 271--305 (2001).

\bibitem{BakEme} Bakry, D., \'Emery, M.:  Diffusions hypercontractives.  In: {\it S\'eminaire
de probabilit\'es, XIX, 1983\slash 84, vol. 1123 of Lecture Notes in Math.}  Springer,
Berlin, 1985, pp. 177--206.

\bibitem{BP} Ben Arous, G., P\'ech\'e, S.: Universality of local
eigenvalue statistics for some sample covariance matrices.
{\it Comm. Pure Appl. Math.} {\bf LVIII.} (2005), 1--42.

\bibitem{BT}  Berry, M.V.,  Tabor, M.:  Level clustering in
 the regular spectrum, {\it Proc. Roy. Soc.}  {\bf A 356}  (1977) 375-394

\bibitem{BI} Bleher, P.,  Its, A.: Semiclassical asymptotics of 
orthogonal polynomials, Riemann-Hilbert problem, and universality
 in the matrix model. {\it Ann. of Math.} {\bf 150} (1999), 185--266.

\bibitem{BGS}
 Bohigas, O.; Giannoni, M.-J.; Schmit, C.: 
 Characterization of chaotic quantum spectra and universality of level 
fluctuation laws. {\it Phys. Rev. Lett.}
 {\bf  52}, no. 1, 1–4, (1984)

\bibitem{BEY} Bourgade, P., Erd{\H o}s, Yau, H.-T.:
Universality of General $\beta$-Ensembles,  arXiv:1104.2272

\bibitem{BEY2} Bourgade, P., Erd{\H o}s, Yau, H.-T.:
Bulk Universality of General $\beta$-Ensembles with Non-convex Potential,
{\it J. Math. Phys.} {\bf 53}, 095221 (2012)

\bibitem{BH} Br\'ezin, E., Hikami, S.: Correlations of nearby levels induced
by a random potential. {\it Nucl. Phys. B} {\bf 479} (1996), 697--706, and
Spectral form factor in a random matrix theory. {\it Phys. Rev. E}
{\bf 55}, 4067--4083 (1997)

\bibitem{C} Caffarelli, L., Chan, C.H., Vasseur, A.: Regularity theory for parabolic nonlinear
integral operators, {\it J. Amer. Math. Soc.} {\bf 24}, no. 3, 849--889 (2011)

\bibitem{Ch}  Chatterjee, S.:  A generalization of the Lindeberg principle.
 {\it Ann. Probab.}   {\bf 34}, no. 6, 2061--2076 (2006)

\bibitem{DGI} Deuschel, J.-D., Giacomin, G., Ioffe, D.: Large deviations and 
concentration properties for $\nabla\varphi$ interface models.
{\it Probab. Theor. Relat. Fields.} {\bf 117}, 49--111 (2000)

\bibitem{De1} Deift, P.: Orthogonal polynomials and
random matrices: a Riemann-Hilbert approach.
{\it Courant Lecture Notes in Mathematics} {\bf 3},
American Mathematical Society, Providence, RI, 1999

\bibitem{DG} Deift, P., Gioev, D.: Universality in random matrix theory for
 orthogonal and symplectic ensembles. {\it Int. Math. Res. Pap. IMRP} 2007, no. 2, Art. ID rpm004, 116 pp

\bibitem{DG1} Deift, P., Gioev, D.: Random Matrix Theory: Invariant
Ensembles and Universality. {\it Courant Lecture Notes in Mathematics} {\bf 18},
American Mathematical Society, Providence, RI, 2009

\bibitem{DKMVZ1} Deift, P., Kriecherbauer, T., McLaughlin, K.T-R,
 Venakides, S., Zhou, X.: Uniform asymptotics for polynomials 
orthogonal with respect to varying exponential weights and applications
 to universality questions in random matrix theory. 
{\it  Comm. Pure Appl. Math.} {\bf 52}, 1335--1425 (1999)

\bibitem{DKMVZ2} Deift, P., Kriecherbauer, T., McLaughlin, K.T-R,
 Venakides, S., Zhou, X.: Strong asymptotics of orthogonal polynomials 
with respect to exponential weights. 
{\it  Comm. Pure Appl. Math.} {\bf 52}, 1491--1552 (1999)

\bibitem{DumEde} Dumitriu, I.,  Edelman, A.:
Matrix Models for Beta Ensembles,
{\it Journal of Mathematical Physics}  {\bf 43} (11),  5830--5847 (2002)

\bibitem{Dy1} Dyson, F.J.: Statistical theory of energy levels of complex
systems, I, II, and III. {\it J. Math. Phys.} {\bf 3},
 140-156, 157-165, 166-175 (1962)

\bibitem{DyB} Dyson, F.J.: A Brownian-motion model for the eigenvalues
of a random matrix. {\it J. Math. Phys.} {\bf 3}, 1191-1198 (1962)

\bibitem{Dy2}  Dyson, F.J.: Correlations between eigenvalues of a random
matrix. {\it Commun. Math. Phys.} {\bf 19}, 235-250 (1970)

\bibitem{Etucs}  Erd{\H o}s, L.:
Universality of Wigner Random Matrices: a Survey of Recent Results.
{\it Russian Math. Surveys} {\bf 66} (3) 67--198.

\bibitem{EK} Erd{\H o}s, L., Knowles, A.:
Quantum Diffusion and Eigenfunction Delocalization in a
 Random Band Matrix Model.  {\it Commun. Math. Phys.}
{\bf 303} no. 2, 509--554 (2011)

\bibitem{EK2}  Erd{\H o}s, L.,  A. Knowles, A.:
Quantum Diffusion and Delocalization for Band Matrices
 with General Distribution. 
{\it Annales Inst. H. Poincar\'e}, {\bf 12} (7), 1227-1319 (2011)

\bibitem{EKYfluc}  Erd{\H o}s, L.,  A. Knowles, A., Yau, H.-T.:
 Averaging Fluctuations in Resolvents of Random Band Matrices.
Preprint. arXiv:1205.5664.

\bibitem{EKY3}   Erd{\H o}s, L.,  A. Knowles, A., Yau, H.-T., J. Yin.:
 The local semicircle law for a general class of random matrices.
Preprint. arXiv:1212.0164.

\bibitem{EKYY1} Erd{\H o}s, L.,  Knowles, A.,  Yau, H.-T.,  Yin, J.:
 Spectral Statistics of Erd\H{o}s-R\'enyi Graphs I: Local Semicircle Law.
To appear in {\it Annals Probab.} Preprint: arXiv:1103.1919

\bibitem{EKYY2} Erd{\H o}s, L.,  Knowles, A.,  Yau, H.-T.,  Yin, J.:
Spectral Statistics of Erd{\H o}s-R\'enyi Graphs II:
 Eigenvalue Spacing and the Extreme Eigenvalues.
{\it Comm. Math. Phys.} {\bf 314} no. 3. 587--640 (2012)

\bibitem{EKYY3}  Erd{\H o}s, L.,  Knowles, A.,  Yau, H.-T.,  Yin, J.:
Delocalization and Diffusion Profile for Random Band Matrices.
Preprint: arXiv:1205.5669

\bibitem{EPRSY}
Erd\H{o}s, L.,  P\'ech\'e, G.,  Ram\'irez, J.,  Schlein,  B.,
and Yau, H.-T., Bulk universality 
for Wigner matrices. 
{\it Commun. Pure Appl. Math.} {\bf 63}, No. 7,  895--925 (2010)

\bibitem{ERSTVY}  Erd{\H o}s, L.,  Ramirez, J.,  Schlein, B.,  Tao, T., 
Vu, V., Yau, H.-T.:
Bulk Universality for Wigner  Hermitian matrices with subexponential
 decay. {\it Math. Res. Lett.} {\bf 17} (2010), no. 4, 667--674.

\bibitem{ERSY}  Erd{\H o}s, L., Ramirez, J., Schlein, B., Yau, H.-T.:
 Universality of sine-kernel for Wigner matrices with a small Gaussian
 perturbation. {\it Electron. J. Prob.} {\bf 15},  Paper 18, 526--604 (2010)

\bibitem{ESY1} Erd{\H o}s, L., Schlein, B., Yau, H.-T.:
Semicircle law on short scales and delocalization
of eigenvectors for Wigner random matrices.
{\it Ann. Probab.} {\bf 37}, No. 3, 815--852 (2009)

\bibitem{ESY2} Erd{\H o}s, L., Schlein, B., Yau, H.-T.:
Local semicircle law  and complete delocalization
for Wigner random matrices. {\it Commun.
Math. Phys.} {\bf 287}, 641--655 (2009)

\bibitem{ESY3} Erd{\H o}s, L., Schlein, B., Yau, H.-T.:
Wegner estimate and level repulsion for Wigner random matrices.
{\it Int. Math. Res. Notices.} {\bf 2010}, No. 3, 436-479 (2010)

\bibitem{ESY4} Erd{\H o}s, L., Schlein, B., Yau, H.-T.: Universality
of random matrices and local relaxation flow. 
{\it Invent. Math.} {\bf 185} (2011), no.1, 75--119.

\bibitem{ESYY} Erd{\H o}s, L., Schlein, B., Yau, H.-T., Yin, J.:
The local relaxation flow approach to universality of the local
statistics for random matrices. 
{\it Annales Inst. H. Poincar\'e (B),  Probability and Statistics}
{\bf 48}, no. 1, 1--46 (2012)

\bibitem{EYBull} Erd{\H o}s, L.,  Yau, H.-T.: 
 Universality of local spectral statistics of random matrices.
{\it Bull. Amer. Math. Soc.} {\bf 49}, no.3 (2012), 377--414.

\bibitem{EY} Erd{\H o}s, L.,  Yau, H.-T.:  A  comment  
on the Wigner-Dyson-Mehta bulk universality conjecture for Wigner matrices.
{\it Electron. J. Probab.} {\bf 17}, no 28. 1--5 (2012)

\bibitem{EYsinglegap} Erd{\H o}s, L.,  Yau, H.-T.: 
Gap universality of generalized Wigner and $\beta$-ensembles.
Preprint arXiv:1211.3786

\bibitem{EYY} Erd{\H o}s, L.,  Yau, H.-T., Yin, J.: 
Bulk universality for generalized Wigner matrices. 
To appear in {\it  Prob. Theor. Rel. Fields.}
 Preprint arXiv:1001.3453

\bibitem{EYY2}  Erd{\H o}s, L.,  Yau, H.-T., Yin, J.: 
Universality for generalized Wigner matrices with Bernoulli
distribution.  {\it J. of Combinatorics,} {\bf 1}, no. 2, 15--85 (2011)

\bibitem{EYYrigi}  Erd{\H o}s, L.,  Yau, H.-T., Yin, J.: 
    Rigidity of Eigenvalues of Generalized Wigner Matrices.
{\it Adv. Math.}
 {\bf 229}, no. 3, 1435--1515 (2012)

\bibitem{Ey} Eynard, B.:   Master loop equations,
free energy and correlations for the chain of matrices. {\it
J. High Energy Phys.} {\bf 11}, 018 (2003)

\bibitem{FIK} Fokas, A. S., Its, A. R., Kitaev, A. V.:  The isomonodromy approach to
 matrix models in 2D quantum gravity. {\it Comm. Math. Phys.}  {\bf 147},  395–-430 (1992)

\bibitem{FS}
Fr\"ohlich, J.,  Spencer, T.:
 Absence of diffusion in the Anderson tight
binding model for large disorder or low energy,
{\it Commun. Math. Phys.} {\bf 88},
  151--184 (1983)

\bibitem{Fy} Fyodorov, Y.V. and Mirlin, A.D.:
Scaling properties of localization in random band matrices: a $\sigma$-model approach.
{\it Phys. Rev. Lett.}  {\bf 67}, 2405--2409 (1991)

\bibitem{Gau} Gaudin, M.: Sur la loi limit de l'espacement des valeurs
propres d'une matrice al\'eatoire. {\it Nucl. Phys.} {\bf 25}, 447-458.

\bibitem{GOS} Giacomin, G., Olla, S., Spohn, H.: Equilibrium
fluctuations for $\nabla\varphi$ interface model. 
{\it Ann. Probab.} {\bf 29}, no.3., 1138--1172 (2001)

\bibitem{Gus}
Gustavsson, J. :  Gaussian fluctuations of eigenvalues in the GUE.
 {\it Ann. Inst. H. Poincar\'e Probab. Statist.} 
 {\bf 41}, no. 2, 151–-178 (2005)

\bibitem{HS} Helffer, B., Sj\"ostrand, J.: On the correlation for Kac-like models in the convex case.
{\it J. Statis. Phys.} {\bf 74}, no.1-2, 349--409 (1994)

\bibitem{J} Johansson, K.: Universality of the local spacing
distribution in certain ensembles of Hermitian Wigner matrices.
{\it Comm. Math. Phys.} {\bf 215}, no.3. 683--705 (2001)

\bibitem{Joh} Johansson, K.: On the fluctuations of eigenvalues of random Hermitian matrices.
{\it Duke Math. J.} {\bf 91}, 151--204 (1998)

\bibitem{KY} Knowles, A., Yin, J.: Eigenvector distribution of Wigner matrices. Preprint arXiv:1102.0057.

\bibitem{KS}  Kriecherbauer, T.,  Shcherbina, M.:
 Fluctuations of eigenvalues of matrix models and their applications.
 Preprint {\tt arXiv:1003.6121}

\bibitem{LY} Lee, J. O., Yin, J.: A Necessary and Sufficient Condition for Edge Universality of Wigner matrices.
Preprint. arXiv:1206.2251

\bibitem{Li} Lindenstrauss, E.: Invariant measures and arithmetic quantum ergodicity
{\it Ann. Math.} {\bf 163}, 165-–219 (2006)

\bibitem{Lub} Lubinsky, D.S.: A New Approach to 
Universality Limits Involving Orthogonal 
Polynomials, {\it Ann.  Math.}, {\bf 170}, 915-939 (2009)

\bibitem{Mar} Marklof, J.: Energy level statistics, lattice point problems and almost modular functions, 
{\it in} Cartier, Julia, Moussa, Vanhove (Herausgeber): Frontiers in Number Theory, Physics and Geometry 
(Les Houches Lectures 2003), Band 1, Springer Verlag 2006, S. 163--181

\bibitem{M} Mehta, M.L.: {\it Random Matrices.} Third Edition, Academic Press, New York, 1991.

\bibitem{M2} Mehta, M.L.: A note on correlations between eigenvalues of a random matrix.
{\it Commun. Math. Phys.} {\bf 20} no.3. 245--250 (1971)

\bibitem{MG} Mehta, M.L., Gaudin, M.: On the density of eigenvalues
of a random matrix. {\it Nuclear Phys.} {\bf 18}, 420-427 (1960).

\bibitem{Min} Minami, N.: Local fluctuation of the spectrum of a multidimensional
Anderson tight binding model. {\it Commun. 
Math. Phys.} {\bf 177}, 709--725 (1996)

\bibitem{Mont} Montgomery, H.L.: The pair correlation of zeros of the zeta
function. Analytic number theory, Proc. of Sympos. in Pure Math. {\bf 24}),
Amer. Math. Soc. Providence, R.I., 181--193 (1973).

\bibitem{NS} Naddaf, A., Spencer, T.: On homogenization and scaling limit of some
gradient perturbations of a massless free field, {\it Commun. Math. Phys.} {\bf 183},
no.1., 55--84 (1997)

\bibitem{OO}
Ogawa, T.; Ozawa, T.: Trudinger type inequalities and uniqueness of weak solutions for the 
nonlinear Schr\"odinger mixed problem. 
{\it J. Math. Anal. Appl.} {\bf  155}, no. 2, 531-–540 (1991)

\bibitem{PS:97} Pastur, L., Shcherbina, M.: Universality of the local
eigenvalue statistics for a class of unitary invariant random
matrix ensembles. {\it J. Stat. Phys.} \textbf{86}, 109-147
(1997)

\bibitem{PS} Pastur, L., Shcherbina M.:
Bulk universality and related properties of Hermitian matrix models.
{\it J. Stat. Phys.} {\bf 130}, no.2., 205-250 (2008)

\bibitem{PY1} Pillai, N.S. and Yin, J.: Universality of covariance matrices.
Preprint arXiv:1110.2501

\bibitem{PY2} Pillai, N.S. and Yin, J.: Edge universality of covariance matrices.
Preprint arXiv:1112.2381

\bibitem{RRV} Ramirez, J., Rider, B., Vir\'ag, B.: 
 Beta ensembles, stochastic Airy spectrum, and a diffusion. 
{\it J. Amer. Math. Soc.} {\bf 24}, 919-944 (2011)

\bibitem{RS} Rudnick, Z. and Sarnak, P.: The pair correlation function of fractional parts of
polynomials, {\it Comm. Math. Phys.} {\bf 194}, 61--70 (1998)

\bibitem{Sche} Schenker, J.:   Eigenvector localization for random
band matrices with power law band width. {\it Commun.\ Math.\ Phys.}
{\bf 290}, 1065--1097 (2009)

\bibitem{Sch}  Shcherbina, M.:
Orthogonal and symplectic matrix models: universality and other properties. 
{\it Comm. Math. Phys.} {\bf 307}, no.3., 761--790 (2011)

\bibitem{Si} Sinai, Y.: Poisson distribution in a geometrical problem, {\it Adv. Sov. Math.}
AMS Publ. {\bf 3}, 199--215 (1991)

\bibitem{Spe} Spencer, T.: Random banded and sparse matrices (Chapter 23), to appear
  in ``{O}xford Handbook of Random Matrix Theory'' edited by G.\ Akemann, J.\
  Baik, and P.\ Di Francesco

\bibitem{TV} Tao, T. and Vu, V.: Random matrices: Universality of the 
local eigenvalue statistics.  {\it Acta Math.},
 {\bf 206}, no. 1, 127–-204 (2011)

\bibitem{TV5} Tao, T. and Vu, V.:
The Wigner-Dyson-Mehta bulk universality conjecture for Wigner matrices.
{\it Electron. J. Probab.} {\bf 16}, no 77, 2104--2121 (2011)

\bibitem{Taogap} Tao, T.: The asymptotic distribution of a single eigenvalue gap of a Wigner matrix.
Preprint. arxiv:1203.1605

\bibitem{VV} Valk\'o, B.; Vir\'ag, B.:
  Continuum limits of random matrices and the Brownian carousel. {\it Invent. Math.}
{\bf  177}, no. 3, 463-508 (2009)

\bibitem{Wid} Widom H.:  On the relation between orthogonal, symplectic and
 unitary matrix ensembles. {\it J. Statist. Phys.} {\bf 94}, no. 3-4, 347--363 (1999)

\bibitem{W} Wigner, E.: Characteristic vectors of bordered matrices 
with infinite dimensions. {\it Ann. of Math.} {\bf 62}, 548-564 (1955)

\bibitem{Y} Yau, H. T.: Relative entropy and the hydrodynamics
of Ginzburg-Landau models, {\it Lett. Math. Phys}. {\bf 22}, 63--80 (1991)

\end{document}